\definecolor{TSUYUKUSA}{RGB}{46, 169, 223}
\definecolor{KURENAI}{RGB}{203, 27, 69}
\newtheorem{theorem}{Theorem}[section]
\newtheorem{fact}[theorem]{Fact}
\newtheorem{lemma}[theorem]{Lemma}
\newtheorem{corollary}[theorem]{Corollary}
\newtheorem{claim}[theorem]{Claim}
\newtheorem{definition}[theorem]{Definition}
\newtheorem{notation}[theorem]{Notation}
\newcommand{\fig}[1]{\hyperref[fig:#1]{Figure~\ref*{fig:#1}}}
\newcommand{\eq}[1]{\hyperref[eq:#1]{(\ref*{eq:#1})}}
\newcommand{\lem}[1]{\hyperref[lem:#1]{Lemma~\ref*{lem:#1}}}
\newcommand{\thm}[1]{\hyperref[thm:#1]{Theorem~\ref*{thm:#1}}}
\newcommand{\defi}[1]{\hyperref[def:#1]{Definition~\ref*{def:#1}}}
\newcommand{\app}[1]{\hyperref[sec:#1]{Appendix~\ref*{sec:#1}}}
\newcommand{\fct}[1]{\hyperref[fact:#1]{Fact~\ref*{fact:#1}}}
\newcommand{\clr}[1]{\hyperref[clr:#1]{Corollary~\ref*{clr:#1}}}
\newcommand{\sct}[1]{\hyperref[sec:#1]{Section~\ref*{sec:#1}}}
\newcommand{\subsec}[1]{\hyperref[subsec:#1]{Subsection~\ref*{subsec:#1}}}
\newcommand{\itm}[2]{\hyperref[itm:#1]{#2}}
\newcommand{\clm}[1]{\hyperref[clm:#1]{Claim~\ref*{clm:#1}}}
\newcommand{\rmk}[1]{\hyperref[rmk:#1]{Remark~\ref*{rmk:#1}}}
\definecolor{lightcyan}{RGB}{0.88,1,1}
\definecolor{darkgreen}{RGB}{0, 128, 0}
\definecolor{darkblue}{RGB}{0, 0, 128}
\newtheorem*{rep@theorem}{\rep@title}
\newcommand{\newreptheorem}[2]{%
\newenvironment{rep#1}[1]{%
 \def\rep@title{#2 \ref*{##1}}%
 \begin{rep@theorem}}%
 {\end{rep@theorem}}}
\newcommand{\N}{\mathbb{N}}  %
\newcommand{\C}{\mathbb{C}} %
\DeclareMathOperator*{\E}{\mathbb{E}}  %
\renewcommand{\i}{\mathrm{i}} %
\newcommand{\A}{\mathcal{A}}  %
\newcommand{\D}{\mathcal{D}} %
\renewcommand{\H}{\mathcal{H}} %
\newcommand{\K}{\mathcal{K}} %
\newcommand{\OO}{\mathcal{O}} %
\newcommand{\X}{\mathcal{X}} %
\newcommand{\Y}{\mathcal{Y}} %
\renewcommand{\S}{\mathcal{S}} %
\newcommand{\RR}{\mathfrak{R}} %
\newcommand{\bit}[1]{\{0,1\}^{#1}} %
\newcommand{\expect}[2]{\E_{\substack{#1}}\!\Br{#2}} %
\newcommand{\prob}[2]{\underset{#1}{\mathrm{Pr}}\!\Br{#2}} %
\newcommand{\br}[1]{\left(#1\right)} %
\newcommand{\Br}[1]{\left[#1\right]} %
\newcommand{\st}[1]{\left\{#1\right\}} %
\newcommand{\abs}[1]{\left|#1 \right|} %
\newcommand{\norm}[1]{\left\lVert #1 \right\rVert} %
\newcommand{\poly}[1]{\mathrm{poly}\!\br{#1}} %
\newcommand{\negl}[1]{\mathrm{negl}\!\br{#1}} %
\newcommand{\val}[1]{\mathrm{val}\!\br{#1}} %
\newcommand{\ketbratwo}[2]{\ket{#1} \hspace{-0.4em}\bra{#2}} %
\newcommand{\ketbra}[1]{\ketbratwo{#1}{#1}} %
\newcommand{\id}{\ensuremath{\mathds{1}}} %
\newcommand{\ugroup}[1]{\mathrm{U}\!\br{#1}} %
\newcommand{\td}{\mathrm{TD}} 
\newcommand{\ncopy}{\ell} %
\newcommand{\haar}{\ensuremath{\mu}} %
\newcommand{\tp}{\otimes} %
\newcommand{\secpar}{\lambda} %
\newcommand{\cprim}[1]{\textup{\textsf{#1}}} %
\newcommand{\prf}{\cprim{PRF}} %
\newcommand{\prp}{\cprim{PRP}} %
\newcommand{\pru}{\cprim{PRU}} %
\newcommand{\prs}{\cprim{PRS}} %
\newcommand{\rss}{\cprim{RSS}} %
\newcommand{\prg}{\cprim{PRG}} %
\newcommand{\qprf}{\cprim{QPRF}} %
\newcommand{\qprp}{\cprim{QPRP}} %
\newcommand{\owf}{\cprim{OWF}} %
\newcommand{\oracle}{\mathcal{O}} %
\newcommand{\newDB}{\mathsf{DB}}
\newcommand{\HP}{\ensuremath{\mathsf{HP}}} %
\newcommand{\kac}{\ensuremath{\mathsf{Kac}}} %
\newcommand{\hkac}{\ensuremath{\widehat{\mathsf{Kac}}}} %
\newcommand{\HPC}{\ensuremath{\mathsf{HP}}} %
\newcommand{\hHPC}{\ensuremath{\widehat{\mathsf{HP}}}} %
\newcommand{\HPO}{\ensuremath{\mathsf{HPO}}} %
\newcommand{\reg}[1]{\ensuremath{_{\textcolor{RawSienna}{\mathsf{#1}}}}}
\newcommand{\DB}{\newDB_t}
\newcommand{\DBRt}{\ensuremath{\mathfrak{R}_t^{\mathsf{DB}}}}
\newcommand{\DBR}{\ensuremath{\mathfrak{R}^{\mathrm{DB}}}}
\newcommand{\yDBRt}{\ensuremath{\mathfrak{R}_t^{\mathsf{yDB}}}}
\newcommand{\yDBR}{\ensuremath{\mathfrak{R}^{\mathrm{yDB}}}}
\newcommand{\CPS}{\ensuremath{\mathsf{Compress}}}
\newcommand{\PR}{\ensuremath{\mathsf{PR}}}
\newcommand{\DBproj}{\ensuremath{\Pi\reg{X}^{(t)}}}
\newcommand{\TD}[2]{\ensuremath{\mathsf{TD}\br{#1,#2}}}
\newcommand{\Tr}{\mathrm{Tr}} %
\newcommand{\DBRproj}{\ensuremath{\widetilde{\Pi}^{(t)}\reg{HP}}}
\renewcommand{\set}[1]{\ensuremath{\left\{#1\right\}}}
\newcommand{\BImR}{\ensuremath{\mathrm{BIm}\br{R}}}
\newcommand{\BDomR}{\ensuremath{\mathrm{BDom}\br{R}}}
\newcommand{\BImS}{\ensuremath{\mathrm{BIm}\br{S}}}
\newcommand{\ImR}{\ensuremath{\mathrm{Im}\br{R}}}
\newcommand{\ImS}{\ensuremath{\mathrm{Im}\br{S}}}
\newcommand{\DomR}{\ensuremath{\mathrm{Dom}\br{R}}}
\newcommand{\DomS}{\ensuremath{\mathrm{Dom}\br{S}}}
\newcommand{\BDom}[1]{\ensuremath{\mathrm{BDom}\br{#1}}}
\newcommand{\BIm}[1]{\ensuremath{\mathrm{BIm}\br{#1}}}
\newcommand{\dom}[1]{\ensuremath{\mathrm{Dom}\br{#1}}}
\newcommand{\im}[1]{\ensuremath{\mathrm{Im}\br{#1}}}
\newcommand{\domProj}[1]{\ensuremath{\Pi^{\dom{#1}}}}
\newcommand{\imProj}[1]{\ensuremath{\Pi^{\im{#1}}}}
\newcommand{\num}[1]{\ensuremath{\mathrm{num}\!\br{#1}}}
\newcommand{\e}{\mathrm{e}} %
\newcommand{\eqProj}{\ensuremath{\Pi^{\mathsf{eq}}}}
\newcommand{\ffbProj}{\ensuremath{\Pi^{\mathsf{ffb}}}}
\newcommand{\rssd}{\ensuremath{\mathcal{R}}}
\newcommand{\re}{\mathrm{Re}}
\newcommand{\cQ}{\ensuremath{\mathsf{cQ}}}
\newcommand{\cC}{\ensuremath{\mathsf{cC}}}
\newcommand{\cD}{\ensuremath{\mathsf{cD}}}
\newcommand{\init}{\ensuremath{\mathsf{init}}}
\newcommand{\HPOp}{\ensuremath{\widetilde{\mathsf{HPO}}}}
\begin{document}

\title{Parallel Kac's Walk Generates \pru}

\author{Chuhan Lu\thanks{Computer Science Department,
    Portland State University, USA.  Email:
    {chuhan@pdx.edu}.} %
  \and Minglong Qin\thanks{Centre for Quantum Technologies, 
  	National University of Singapore, Singapore. Email:
  	{mlqin6@gmail.com}.}  %
  \and Fang Song\thanks{Computer Science Department,
  	Portland State University, USA. Email:
  	{crissong@gmail.com}.} %
  \and Penghui Yao\thanks{State Key Laboratory for Novel Software Technology, New Cornerstone Science Laboratory, Nanjing University, Nanjing 210023, China. Email:
    {phyao1985@gmail.com}.}~\thanks{Hefei National
    Laboratory, Hefei 230088, China.} %
  \and Mingnan Zhao\thanks{State Key Laboratory for Novel Software Technology, New Cornerstone Science Laboratory, Nanjing University, Nanjing 210023, China. Email:
    {mingnanzh@gmail.com}.}  }
\date{\today}

\maketitle              

\begin{abstract}
Ma and Huang recently proved that the \textsf{PFC}
construction, introduced by Metger, Poremba, Sinha and
Yuen~\cite{MPSY24}, gives an adaptive-secure
pseudorandom unitary family (\pru). Their proof
developed a new \emph{path recording}
technique \cite{MH24}.

In this work, we show that a linear number of
sequential repetitions of the \emph{parallel Kac's
	Walk}, introduced by Lu, Qin, Song, Yao and
Zhao~\cite{LQSY+24}, also forms an adaptive-secure $\pru$,
confirming a conjecture therein. Moreover, it
additionally satisfies strong security against
adversaries making inverse queries. This gives an
alternative $\pru$ construction, and provides another
instance demonstrating the power of the path recording technique. We also discuss some further
simplifications and implications.

\end{abstract}

\allowdisplaybreaks

\section{Introduction}
\label{sec:intro}
\emph{Pseudorandomness} is a fundamental concept in cryptography. The basic pseudorandom objects, including pseudorandom functions ($\prf$s), pseudorandom permutation ($\prp$s), pseudorandom generator ($\prg$s), have served as primitives in modern classical cryptography. 

Pseudorandom objects in quantum information have witnessed increasing
influences in recent years. The first example is \emph{pseudorandom states}
($\prs$s), introduced by Ji, Liu and Song~\cite{JLS18}, which are a set
of states that can be prepared by polynomial-sized quantum circuits
and look indistinguishable from Haar random states for any
polynomial-time quantum distinguisher. They gave the first
construction of $\prs$s. $\prs$s have found applications in various areas
including quantum cryptography~\cite{AQY22}, quantum learning
theory~\cite{doi:10.1126/science.abn7293} and quantum
gravity~\cite{BFV20,YE23}. Since their work, a number of different
constructions have been
discovered~\cite{BS19,brakerski2024realvalued,giurgicatiron2023pseudorandomnesssubsetstates,jeronimo2024pseudorandompseudoentangledstatessubset}.
Ji, Liu and Song further introduced the concept of \emph{pseudorandom
	unitaries} ($\pru$s), which are ensembles of unitaries that are efficient
to implement, but are indistinguishable from Haar random unitaries by
any quantum polynomial-time distinguisher. The construction of $\pru$
turns out to be challenging and a large body of works have been
devoted to constructing pseudorandom objects that partially realize
$\pru$'s functions. Examples include scalable pseudorandom
states~\cite{BS20}, pseudorandom function-like quantum state
generators~\cite{AGQY22}, pseudorandom isometries~\cite{AGKL23},
pseudorandom scramblers~\cite{LQSY+24}.

A giant leap was achieved by Metger, Poremba, Sinha and
Yuen~\cite{MPSY24} where a pseudorandom unitary family is constructed
against \emph{non-adaptive} adversaries. Their construction is a \cprim{PFC}
ensemble, where the circuits sequentially apply a uniformly random
Clifford gate, a diagonal unitary with a (pseudorandom) random phase
$f:\set{0,1}^n\rightarrow\set{-1,1}$, and then a (pseudorandom)
permutation matrix on $n$-qubit computational basis states. Soon
after, Ma and Huang in their very recent work~\cite{MH24} proved that
\cprim{PFC} ensembles are indeed secure against \emph{adaptive} adversaries as well,
provably establishing the feasibility of $\pru$ for the first time.

\subsection{Main result}

In this work, we carry on the exciting advancement on $\pru$ lately. We
show that the \emph{parallel Kac's walk} construction, introduced
in~\cite{LQSY+24}, also produces an \emph{adaptive-secure} $\pru$.  
The original \emph{Kac's walk},
introduced by Kac~\cite{Kac56} in 1956, is a random walk on unitary
groups, which has been extensively studied by mathematical
physicists. Kac's walk has also found applications in construction of
polynomial designs~\cite{BHH16} and memory-optimal dimension
reduction~\cite{JPSSS22}. To put it in the language of quantum
computing, in one step of Kac's walk, the algorithm randomly selects two
elements of computational basis $\set{\ket{i},\ket{j}}$ and implements
a random $2\times 2$ unitary in the space spanned by these two basis
elements. Kac's walk can also be viewed as a random walk on a unit
sphere, where all the random unitaries sampled by the walk are applied
to a initial unit vector, sequentially. Pillai and Smith~\cite{PS17} showed a
tight $\Theta(N\log N)$ mixing time for the Kac's walk on an
$N$-dimensional unit sphere.

Lu, Qin, Song, Yao and Zhao ~\cite{LQSY+24} introduced a variant
called \emph{parallel Kac's walk}. It randomly samples a random matching, say
$\set{\br{\ket{i_1},\ket{j_1}},\ldots,\br{\ket{i_{N/2}},\ket{j_{N/2}}}}$,
among all $N$ computational basis elements, 
which is done in a single step via a random permutation of all basis elements.
For each pair $\br{\ket{i},\ket{j}}$, the algorithm implements a 
$2\times 2$ Haar random unitary. By replacing the random functions and
random permutations with their quantum-secure pseudorandom
counterparts, one obtains an efficient implementation of one step of
the parallel Kac's walk. It was shown that a parallel Kac's walk reduces
the mixing time by a factor $N$. Consequently, $O(\log N)$ steps of parallel
Kac's walk, which is linear in the number of qubits $n$, map any input
pure state to a family of pseudorandom states. It was left open in
their paper if such a construction can generate a $\pru$. In this
paper, we show an affirmative answer that \emph{linear} steps of parallel
Kac's walk indeed form a $\pru$.

\begin{theorem}[Main Theorem, Informal]
	The distribution of the unitary corresponding to a $O(n)$-step random parallel Kac's walk is computationally indistinguishable from Haar distribution against adaptive adversaries. 
	Moreover, without asymptotically increasing the number of steps, it also remains secure against adversaries capable of making inverse queries.
\end{theorem}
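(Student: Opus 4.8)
The plan is to first reduce to an information-theoretic statement and then deploy the path recording machinery of \cite{MH24}. By a standard sequence of hybrids --- replacing the quantum-secure pseudorandom permutation that selects the matching in each step, and the quantum-secure pseudorandom function that generates the parameters of the $2\times 2$ unitaries, by a truly random permutation and a truly random function, and reverting the $\epsilon$-net approximation of the $2\times 2$ Haar unitaries to exact Haar-random ones as in \cite{LQSY+24} --- the distinguishing advantage of any polynomial-time adversary changes by only a negligible amount. It therefore suffices to show that the information-theoretic $C$-step parallel Kac's walk, with $C = O(n)$ and $N = 2^n$, is $\negl{n}$-indistinguishable from a Haar random unitary (and its inverse) for any adversary making $q = \poly{n}$ adaptive, possibly two-sided, queries.

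For the information-theoretic part I would import, essentially verbatim, the path recording oracle of \cite{MH24} and its two-sided refinement: up to trace-distance error $O(q^2/2^n)$, a $q$-query interaction with a Haar random $U$ together with $U^{-1}$ is perfectly simulated by a quantum oracle that carries an auxiliary register holding a partial injective database $D$ --- a fresh forward query appends a uniformly random unused output, a fresh inverse query appends a uniformly random unused input, and consistent queries are answered from $D$. The task is thereby reduced to bounding the trace distance between the channel implementing $q$ two-sided queries to the parallel Kac's walk and the channel implementing $q$ queries to this path recording oracle.

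The heart of the argument is a hybrid over the $C$ steps. Purifying the randomness of the walk (the permutations $\pi_1,\dots,\pi_C$ that fix the matchings together with the $2\times 2$ Haar unitaries on the matched pairs), and relating these purifying registers to the abstract database register through an isometry as in \cite{MH24}, I would establish a single per-step contraction lemma: applying one step of the walk to a database state of size at most $q$, followed by the projection back onto the valid-database subspace, brings it a constant factor closer, in an appropriate norm, to the ideal uniform-injective distribution on databases --- the block-local $2\times 2$ Haar twirl suppresses the components inconsistent with any database, while the uniformly random re-matching re-randomizes which basis elements carry the recorded inputs and outputs, and a second-moment computation in the spirit of the mixing-time analyses of \cite{PS17,LQSY+24} quantifies the gain, valid whenever $q \ll \sqrt N$. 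Iterating over $C = O(n) = O(\log N)$ steps drives the distance below $2^{-\Omega(n)}$; this is exactly the point at which the $N$-fold mixing speedup of the \emph{parallel} walk over the ordinary Kac's walk makes a \emph{linear} number of steps suffice, instead of the $\widetilde{O}(N)$ steps of \cite{PS17}. Summing the per-step errors together with the collision bad-event probabilities by the triangle and union bounds, and combining with the Haar $\approx$ path recording bound, yields the theorem.

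For the ``moreover'' clause, I would exploit that one step of the parallel Kac's walk is self-inverse in form: its inverse is the inverse matching permutation composed with the tensor product of the (again Haar-distributed) inverse $2\times 2$ blocks, which has exactly the distribution of a forward step. Hence the two-sided path recording oracle can be analyzed with the very same per-step contraction lemma, forward and inverse queries being handled symmetrically, so no asymptotic increase in the number of steps is incurred. I expect the main obstacle to be the per-step contraction lemma itself. Unlike the \cprim{PFC} ensemble of \cite{MPSY24}, where essentially a single layer already does most of the work and the other two layers merely correct it, one step of the parallel Kac's walk is very far from Haar random --- it mixes only within $2\times 2$ blocks --- so one must show that the interplay of the block-local twirl with the global re-matching genuinely decreases the distance to the ideal database distribution at \emph{every} step, \emph{uniformly} over all adversarially prepared database states and over all interleavings of forward and inverse queries, and that the accumulated bad events (database collisions, a recorded input or output landing on an already-occupied basis element) remain negligible throughout all $O(n)$ steps.
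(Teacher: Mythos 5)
Your reduction to the information-theoretic statement (replace \qprp/\qprf\ by truly random objects, undo the approximation of the $2\times2$ Haar rotations) matches the paper, and so does the decision to deploy the path-recording framework of \cite{MH24}. But the heart of your argument --- the ``per-step contraction lemma'' asserting that each step of the walk, acting on an adversarially prepared database state of size $q$ under adaptive (even two-sided) queries, moves it a constant factor closer to the ideal injective-database distribution --- is a genuine gap. The mixing-time analyses of \cite{PS17,LQSY+24} that you invoke are coupling/second-moment arguments for the \emph{non-adaptive} action of the walk on a fixed initial state (or its tensor powers); nothing in them controls the oracle as a channel queried adaptively, and you correctly identify this lifting as ``the main obstacle'' without resolving it. The paper does not prove any such per-step statement. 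Instead it decomposes $\HPC_{n,T+1}$ as (first $T$ steps) $\cdot$ (one final step $H_fP_\sigma$): the final step alone is purified into $\HPO$ and analyzed via relation states and a compression isometry onto the path-recording oracle $\PR$, while the first $T$ steps are used \emph{only} as a black-box $O(1/N^2)$-$\rss$ (a non-adaptive, state-mixing property already proved in \cite{LQSY+24}). The bridge between the two regimes is the \emph{right-invariance} of $\PR$ (\lem{right_inv}): it transports the $T$-step scrambler from the query register onto the $\mathsf{X}$-registers of the relation state, converting the adaptive-query requirement into exactly the non-adaptive scrambling statement that is known. This is the idea your proposal is missing, and without it (or a proof of your contraction lemma, which would in fact be a stronger result --- the paper's discussion notes that reducing the round count ``appears difficult with the current analysis'') the argument does not close.

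Two further concrete mismatches. First, the relevant good event is not mere injectivity of the database: the $\mathsf{HP}$-relation states $\ket{\phi_R}$ are orthogonal (hence compressible to a clean database) only when $R$ lies in the \emph{distinct block} relation set $\DBR$, which additionally forbids two recorded points sharing a $2\times2$ block; your proposal never isolates this condition, yet it is what the $\rss$-plus-random-permutation step is engineered to guarantee (\lem{DBproj}). Second, for inverse queries the paper needs the sandwich $D\cdot H_fP_\sigma\cdot C$ with scramblers on \emph{both} sides (hence $2T+1$ steps rather than $T+1$), because backward queries reach the purified middle step from the other side; ``the walk is self-inverse in form'' does not by itself supply the two-sided twirling (\lem{twirl}, \lem{inv}) that controls this case.
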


\paragraph{Proof overview.}  
The entire proof comprises two phases. First we show that
$O(n)$ iterations of parallel Kac's walk effectively project the
adversary's state to what we term the \emph{distinct block
  subspace}. We divide $\bit{n}$ into $N/2$ blocks, each
containing two bit strings. We define $(x_1,\dots,x_t)\in(\bit{n})^t$
to be in the distinct block subspace if they belong to different
blocks. Once the state is promised to reside in the distinct block
subspace, the second phase employs a single step of parallel Kac's
walk to ensure that the entire construction is indistinguishable from a
Haar random unitary.

More specifically, in the initial phase, we leverage the random state
scrambling property of parallel Kac's walk as introduced in
\cite{LQSY+24} to ensure that the adversary's state is approximately in the 
\emph{distinct subspace} after $O(n)$ steps. 
To further show the projection into the distinct block subspace, 
imagine inserting an additional random permutation that does not affect the construction 
(since every iteration inherently includes a random permutation). 
Given that the adversary's state already resides in the
distinct subspace, applying a random permutation will cause it to fall into the
distinct block subspace with high probability.
Noting that 2-designs, such as random Cliffords, can also project states onto distinct block subspaces, one approach would be to show that the parallel Kac’s walk alone forms a 2-design—as our goal is to build a $\pru$ purely based on the parallel Kac’s walk—which, however, was unknown prior to this work. Therefore, we instead directly prove that the parallel Kac’s walk suffices for this purpose.

In the second phase of the proof, we borrow the techniques in \cite{MH24}.
It consists of three steps: (1) we establish a
straightforward purification of the adversary's output state using
a large environment register; (2) we compress
the environment register so as to make the environment state conform to
the \emph{relation state}, which is feasible under the condition
that the adversary's state resides in the distinct block subspace; 
(3) we employ the path-recording technique. 
In the final step, we can apply the \emph{right invariance} property of the
path-recording oracle as the state in the environment register 
resembles the relation state.
This effectively shifts the adaptive queries to
the environment register without altering the state. Since the queries
are redirected to the environment, they do not affect the adversary's
state, regardless of whether they stem from parallel Kac's walk or
Haar measure.

\subsection{Discussions}

While our \pru~construction does not provide efficiency advantages over
the \cprim{PFC} construction~\cite{MPSY24, MH24}, there are potential benefits in other
regards. First, it is usually valuable to have multiple candidates for
a primitive available, to cater different use cases and to mitigate
the risk in case of unexpected vulnerabilities in some candidates. It
is also preferable in cryptography, for practical implementation
concerns, to base the construction on as few primitives as
possible. For instance, the popular \cprim{HMAC} instantiates the
Hash-then-\cprim{MAC} paradigm using hash functions only as opposed to the
vanilla instantiation by a hash function and a \cprim{MAC} scheme
separately. We can view each step of the parallel Kac's walk as a basic
module and then the $\pru$ just constitutes repetitions of this basic
module. In another recent work, Schuster, Haferkamp and
Huang~\cite{schuster2024randomunitariesextremelylow} exhibited an
alternate construction of $\pru$ which glues together
$\omega(\log n)$-qubit pseudorandom unitaries in a two-layer brickwork
manner. Here each small pseudorandom unitary can also be viewed as a
basic module. However, we still do not know the construction of small
pseudorandom unitaries other than \cprim{PFC}. Finally, since $\pru$s
are the quantum analogue of pseudorandom permutations (i.e., block
ciphers), our \pru~is also reminiscent of the famous Luby-Rackoff
construction and variants in practical block ciphers such as \cprim{AES},
where a basic unit is iterated in multiple rounds to achieve desirable
security properties.

The discussion above naturally leads to a few open questions. Can we
reduce the number of rounds of the parallel Kac's walk, ideally to
constant rounds?  This appears difficult with the current analysis, and
new techniques may be needed. Following the analogue we draw between
our construction and classical constructions of block ciphers {\` a}
la Feistel network, it is worth exploring quantum analogues of the
wide variations on Feistel network (e.g., unbalanced Luby-Rackoff). 
Can our construction be further simplified? Can we replace all i.i.d. 
random rotations in one step of parallel Kac's walk by the same random 
rotation? Recent research on orthogonal repeated averaging, which is a 
simplified Kac's walk, alludes to an affirmative answer. The other possible 
simplification is replacing the $\prp$s in the construction by random local 
permutations, like practical architecture of \cprim{DES}[2]-brickwork circuits. 
If both simplifications were plausible, we would obtain a construction of local 
random circuits, which is also a $\pru$, answering a longstanding open 
problem in quantum complexity theory. A more technical note, 
\cite{LQSY+24} identifies a strong \emph{dispersing} property of the parallel Kac's walk, does it pass
along and equip our $\pru$ with additional properties? This is both
interesting for the sake of Kac's walk and possible applications of the
resulting $\pru$.

\paragraph{Acknowledgment.} CL and FS were supported by the US National Science
Foundation grants CCF-2054758 (CAREER) and CCF-2224131.
MQ was supported by the National Research Foundation, Singapore through the National Quantum Office, hosted in A*STAR, under its Centre for Quantum Technologies Funding Initiative (S24Q2d0009).
PY and MZ were supported by National Natural Science Foundation of China (Grant No. 62332009 and 12347104), Innovation Program for Quantum Science and Technology (Grant No. 2021ZD0302901), NSFC/RGC Joint Research Scheme (Grant No. 12461160276), Natural Science Foundation of Jiangsu Province (Grant No. BK20243060) and the New Cornerstone Science Foundation.

\section{Preliminaries}
\subsection{Notations}
Unless stated otherwise, we use $n$ to denote the number of qubits and $N = 2^n$ to denote the dimension.
We denote the set of unitaries of dimension $N$ by $\ugroup{N}$. The symbol $\mu$ represents the Haar random distribution over quantum states or unitaries, depending on the context.
For finite sets $\mathcal{X}$ and
$\mathcal{Y}$, we use $\mathcal{X}^\mathcal{Y}$ to denote the set of
all functions $\{f:\mathcal{X}\to \mathcal{Y}\}$.
We generally refer to the permutation group over elements in set $\X$ as $\S_\X$. We often write $\S_{N}$ instead of
$\S_{\bit{n}}$ to denote the permutation group over elements in
$\st{0,1}^n$. In the context of unitary matrices, $\S_N$ refers to the group of permutation unitaries of dimension $N$, and $S \gets \S_N$ indicates sampling a permutation unitary uniformly at random.
Given two density operators $\rho, \eta$, the trace distance between them is $\td\br{\rho, \eta} = \norm{\rho-\eta}_1$.

For $x\in\bit{n}$,
we define
$\val{x} = \sum_{i=1}^n 2^{-i}x_i$ and
use $\overline{x}\in \bit{n}$ to denote the binary string
obtained by flipping the first bit of $x$.
We divide the set $\bit{n}$ into $2^{n-1}$ blocks
according to the suffix of each string.
For any $x,y\in\bit{n}$,
we say that $x$ and $y$ belong to the same block
if $x$ and $y$ share the same suffix of length $n-1$
(i.e., $x_2=y_2,\dots,x_n=y_n$).
Conversely, we say that $x$ and $y$ are in \emph{distinct blocks}
if they have different suffixes.
For $t\in \N$, we use $\DB$ to denote the set of all $t$-tuples
consisting of strings from distinct blocks. That is,
\[
	\newDB_t \coloneq \set{(x_1,\dots,x_t)\in(\bit{n})^t:\forall\ i\ne j,\ x_i\text{ and }x_j\text{ are in distinct blocks}}.
\]

We also need the following lemmas:
\begin{lemma}\cite[Lemma 2.2]{MH24}\label{lem:projdist}
Let $\rho\reg{CD}$ be a density matrix on registers $\mathsf{C},\mathsf{D}$ and let $\Pi\reg{CD}$ be a projector of the form $\Pi\reg{CD}=\mathsf{Id}\reg{C}\otimes\Pi'\reg{D}$, where $\Pi'\reg{D}$ is a projector that acts on register $\mathsf{D}$. Then
\[\TD{\Tr\reg{D}(\rho\reg{CD})}{\Tr\reg{D}\br{\Pi\reg{CD}\cdot\rho\reg{CD}\cdot\Pi\reg{CD}}}=1-\Tr\br{\Pi\reg{CD}\rho\reg{CD}}.\]
\end{lemma}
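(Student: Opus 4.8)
The plan is to reduce the claim to a single observation: that tracing out $\mathsf{D}$ annihilates the off-diagonal blocks of $\rho\reg{CD}$ with respect to the two-outcome projective measurement $\set{\Pi\reg{CD},\,\idOp\reg{CD}-\Pi\reg{CD}}$. Abbreviating $\rho\reg{CD}$ by $\rho$ and $\Pi\reg{CD}$ by $\Pi$, set $\overline{\Pi}\coloneq\idOp\reg{CD}-\Pi$; since $\Pi=\idOp\reg{C}\otimes\Pi'\reg{D}$, we also have $\overline{\Pi}=\idOp\reg{C}\otimes\br{\idOp\reg{D}-\Pi'\reg{D}}$. Expanding
\[
\rho=\Pi\rho\Pi+\overline{\Pi}\rho\overline{\Pi}+\Pi\rho\overline{\Pi}+\overline{\Pi}\rho\Pi
\]
and applying $\Tr\reg{D}$, the first step I would carry out is to show that the last two terms contribute nothing. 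This is the only place the product structure of $\Pi$ is used: since $\Pi$ and $\overline{\Pi}$ both act as the identity on $\mathsf{C}$, one has the ``partial cyclicity'' identity $\Tr\reg{D}\Br{\br{\idOp\reg{C}\otimes A}X\br{\idOp\reg{C}\otimes B}}=\Tr\reg{D}\Br{\br{\idOp\reg{C}\otimes BA}X}$ for arbitrary operators $A,B$ on $\mathsf{D}$, which is a one-line check in components. Taking $\br{A,B}$ to be $\br{\Pi'\reg{D},\,\idOp\reg{D}-\Pi'\reg{D}}$ and then $\br{\idOp\reg{D}-\Pi'\reg{D},\,\Pi'\reg{D}}$, the product $BA$ vanishes in both cases because $\Pi'\reg{D}$ is a projector; hence $\Tr\reg{D}\br{\Pi\rho\overline{\Pi}}=\Tr\reg{D}\br{\overline{\Pi}\rho\Pi}=0$, and therefore $\Tr\reg{D}\br{\rho}=\Tr\reg{D}\br{\Pi\rho\Pi}+\Tr\reg{D}\br{\overline{\Pi}\rho\overline{\Pi}}$.

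Given this identity, the remainder is arithmetic. The difference of the two states in the statement is $\Tr\reg{D}\br{\rho}-\Tr\reg{D}\br{\Pi\rho\Pi}=\Tr\reg{D}\br{\overline{\Pi}\rho\overline{\Pi}}$, which is positive semidefinite: $\overline{\Pi}\rho\overline{\Pi}\succeq0$ because $\overline{\Pi}$ is self-adjoint and $\rho\succeq0$, and the partial trace preserves positivity. For a positive semidefinite operator the trace norm equals the trace, so
\[
\TD{\Tr\reg{D}\br{\rho}}{\Tr\reg{D}\br{\Pi\rho\Pi}}=\norm{\Tr\reg{D}\br{\overline{\Pi}\rho\overline{\Pi}}}_1=\Tr\br{\overline{\Pi}\rho\overline{\Pi}}=1-\Tr\br{\Pi\rho},
\]
where the last equality uses $\overline{\Pi}^2=\overline{\Pi}$ together with cyclicity of the (full) trace, and $\Tr\br{\rho}=1$; substituting back $\rho\reg{CD},\Pi\reg{CD}$ yields the stated formula.

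I do not expect a genuine obstacle here: this is an elementary fact about projective measurements, and once the cross-term cancellation is in place the computation is essentially two lines. The two points that deserve a moment's care are (i) that the cancellation genuinely relies on $\Pi\reg{CD}$ acting as the identity on $\mathsf{C}$ — the analogous identity is false for a general projector on $\mathsf{CD}$ — and (ii) the normalization of the trace distance: with the convention $\td\br{\cdot,\cdot}=\norm{\cdot-\cdot}_1$ (no factor $\tfrac12$) fixed in the preliminaries, and since $\Tr\reg{D}\br{\overline{\Pi}\rho\overline{\Pi}}\succeq0$, the right-hand side comes out as exactly $1-\Tr\br{\Pi\reg{CD}\rho\reg{CD}}$ rather than half of it.
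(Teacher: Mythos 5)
Your proof is correct, and it is the canonical argument for this fact. The paper itself does not reprove the lemma (it is cited verbatim from \cite{MH24}), so there is no in-paper proof to compare against; what you have written is precisely the argument one would expect behind the citation. The three ingredients — (i) the partial-cyclicity identity $\Tr\reg{D}\br{\br{\idOp\reg{C}\otimes A}X\br{\idOp\reg{C}\otimes B}}=\Tr\reg{D}\br{\br{\idOp\reg{C}\otimes BA}X}$, used to kill the cross terms $\Tr\reg{D}\br{\Pi\rho\overline{\Pi}}$ and $\Tr\reg{D}\br{\overline{\Pi}\rho\Pi}$ via $\overline{\Pi'}\Pi'=0$; (ii) positivity of $\Tr\reg{D}\br{\overline{\Pi}\rho\overline{\Pi}}$ so that its trace norm equals its trace; and (iii) the paper's normalization $\td\br{\rho,\eta}=\norm{\rho-\eta}_1$ without the factor $\frac12$ — are each correctly identified and correctly deployed. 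You are also right to flag that the cross-term cancellation genuinely requires $\Pi\reg{CD}$ to act trivially on $\mathsf{C}$: for a general projector on the joint space, $\Tr\reg{D}\br{\Pi\rho\overline{\Pi}}$ need not vanish, and one would only obtain an inequality rather than the stated equality.
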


\begin{lemma}[Gentle Measurement Lemma]
\cite[Lemma 2.3]{MH24}\label{lem:gentle_measurement}
Let $\ket{\psi}$ be a quantum state,
$U_1,\dots,U_t$ are unitary operators,
and $\Pi_1,\dots,\Pi_t$ are projectors.
We have
\[
\norm{U_t\cdots U_1\ket{\psi} - \Pi_t U_t\cdots \Pi_1 U_1\ket{\psi}}_2
\leq
t\cdot\sqrt{1-\norm{\Pi_t U_t\cdots \Pi_1 U_1\ket{\psi}}_2^2}\enspace.
\]
\end{lemma}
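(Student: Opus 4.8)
The plan is to prove the bound by a standard hybrid (telescoping) argument: interpolate between the fully unitary evolution $U_t\cdots U_1\ket{\psi}$ and the projected evolution $\Pi_t U_t\cdots\Pi_1 U_1\ket{\psi}$ by inserting the projectors one at a time, and control the error introduced at each insertion. Since $\ket{\psi}$ is a normalized state, all norms below are genuinely Euclidean norms of subnormalized vectors.

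First I would fix notation. Write $\ket{\psi_0}=\ket{\psi}$ and, for $1\le k\le t$, let $\ket{\psi_k}=\Pi_k U_k\cdots\Pi_1 U_1\ket{\psi}$ be the state after $k$ projected steps, and set $p_k=\norm{\ket{\psi_k}}_2^2$, so that $p_0=1$ and $p_t=\norm{\Pi_t U_t\cdots\Pi_1 U_1\ket{\psi}}_2^2$ is exactly the quantity under the square root in the statement. The key structural observation is that $p_0\ge p_1\ge\cdots\ge p_t$: each $\Pi_k$ is a projector, so $\norm{\ket{\psi_k}}_2=\norm{\Pi_k U_k\ket{\psi_{k-1}}}_2\le\norm{U_k\ket{\psi_{k-1}}}_2=\norm{\ket{\psi_{k-1}}}_2$. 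Consequently every increment $p_{k-1}-p_k$ is nonnegative, and the increments telescope: $\sum_{k=1}^{t}(p_{k-1}-p_k)=p_0-p_t=1-p_t$.

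Next I would define the hybrids $\ket{\xi_k}=U_t\cdots U_{k+1}\,\Pi_k U_k\cdots\Pi_1 U_1\ket{\psi}$ for $0\le k\le t$, so that $\ket{\xi_0}=U_t\cdots U_1\ket{\psi}$ and $\ket{\xi_t}=\Pi_t U_t\cdots\Pi_1 U_1\ket{\psi}$ are precisely the two vectors to be compared. By the triangle inequality,
\[
\norm{\ket{\xi_0}-\ket{\xi_t}}_2\le\sum_{k=1}^{t}\norm{\ket{\xi_{k-1}}-\ket{\xi_k}}_2 .
\]
For each $k$ one has $\ket{\xi_{k-1}}-\ket{\xi_k}=U_t\cdots U_{k+1}\,(\mathsf{Id}-\Pi_k)\,U_k\ket{\psi_{k-1}}$, and since $U_t\cdots U_{k+1}$ is unitary and $\mathsf{Id}=\Pi_k+(\mathsf{Id}-\Pi_k)$ is an orthogonal decomposition, the Pythagorean identity gives
\[
\norm{\ket{\xi_{k-1}}-\ket{\xi_k}}_2^2=\norm{(\mathsf{Id}-\Pi_k)U_k\ket{\psi_{k-1}}}_2^2=\norm{U_k\ket{\psi_{k-1}}}_2^2-\norm{\Pi_k U_k\ket{\psi_{k-1}}}_2^2=p_{k-1}-p_k .
\]
Because the nonnegative increments $p_{k-1}-p_k$ sum to $1-p_t$, each individual one satisfies $p_{k-1}-p_k\le 1-p_t$, hence $\norm{\ket{\xi_{k-1}}-\ket{\xi_k}}_2\le\sqrt{1-p_t}$, and summing the $t$ terms yields $\norm{\ket{\xi_0}-\ket{\xi_t}}_2\le t\sqrt{1-p_t}$, which is the claimed inequality.

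I do not expect a real obstacle here: this is a routine computation, and the only point that needs care is the monotonicity $p_0\ge\cdots\ge p_t$ together with the telescoping identity $\sum_k(p_{k-1}-p_k)=1-p_t$, which is exactly what licenses bounding each per-step error $\sqrt{p_{k-1}-p_k}$ by the global quantity $\sqrt{1-p_t}$. I would also remark that applying the Cauchy--Schwarz inequality to $\sum_k\sqrt{p_{k-1}-p_k}$ instead of bounding each term separately gives the sharper estimate $\sqrt{t(1-p_t)}$; the weaker form $t\sqrt{1-p_t}$ stated here is already more than enough for how the lemma is invoked later, so I would simply record the improvement in passing and not pursue it.
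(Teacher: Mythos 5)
Your proof is correct, and since the paper only cites this as \cite[Lemma 2.3]{MH24} without reproducing a proof, there is nothing in the present paper to compare against; your hybrid/telescoping argument is the same one used in Ma--Huang. The chain $\ket{\xi_{k-1}}-\ket{\xi_k}=U_t\cdots U_{k+1}(\id-\Pi_k)U_k\ket{\psi_{k-1}}$, the Pythagorean identity giving $\norm{\ket{\xi_{k-1}}-\ket{\xi_k}}_2^2=p_{k-1}-p_k$, the monotonicity $p_0\ge\cdots\ge p_t$, and the per-step bound $p_{k-1}-p_k\le 1-p_t$ all check out, and your closing remark that Cauchy--Schwarz sharpens the bound to $\sqrt{t(1-p_t)}$ is likewise accurate (though, as you say, immaterial for how the lemma is used).
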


\subsection{Adversary with Access to Oracle}
We adopt the model for adversaries with oracle access in \cite{MH24}.
\begin{definition}[Adversary with Access to Oracle]
	An adversary $\A$ with oracle access is a quantum algorithm
	which queries an oracle $\OO$ on its first $n$-qubit register $\mathsf{A}$ without knowing the description of $\OO$. 
	The adversary own another ancillary register of $m$-qubit, denoted by $\mathsf{B}$.
	
	A $t$-query adversary $\A$ with oracle access is specified by a $t$-tuple of unitaries $(A^{(1)}\reg{AB},\cdots, A^{(t)}\reg{AB})$.
	The view of the adversary after all the queries is
	\[
		\ket{\A^{\OO}_t}\reg{AB} \coloneq
		\prod_{i=1}^t
		\br{\OO\reg{A}\cdot A^{(i)}\reg{AB}}
		\ket{0}\reg{AB} \enspace.
	\]
	
	We also allow adversary $\A$ make both forward queries (i.e., to $\OO$) and inverse queries (i.e., to $\OO^\dag$).
	In this case,
	a $t$-query adversary $\A$
	is specified by a $t$-tuple of unitaries $(A^{(1)}\reg{AB},\cdots, A^{(t)}\reg{AB})$ and a Boolean string $b\in\bit{t}$.
	The view of the adversary after all the queries is
	\[
		\ket{\A^{\OO}_t}\reg{AB} \coloneq
		\prod_{i=1}^t
		\br{\br{(1-b_i)\cdot\OO+b_i\cdot\OO^\dag}\reg{A}\cdot A^{(i)}\reg{AB}}
		\ket{0}\reg{AB} \enspace.
	\]
	Unless otherwise specified,
	we assume that the adversary makes
	only forward queries.
\end{definition}

\begin{definition}[Computational Indistinguishibility]
	We say two distributions $\D_1$ and $\D_2$ over $\ugroup{N}$ is computationally indistinguishable if for any $\poly{n}$-time adversary $\A$ with oracle access who makes $t=\poly{n}$ queries, we have
	\[
		\abs{
			\prob{U\sim \D_1}{ \A^{U} \text{ outputs } 1.}
			- \prob{V\sim \D_2}{ \A^{V} \text{ outputs } 1.}
		} = \negl{n} \enspace .
	\]
\end{definition}

\subsection{Relation States}
For $t\in \N$, $\RR_t$ represents the set of
all size-$t$ relations
$R=\st{(x_1,y_1),\dots,(x_t,y_t)} \subseteq \st{0,1}^n\times \st{0,1}^n$.
We allow the relations to be a multiset.
And let $\RR \coloneq \cup_{t=0}^{N} \RR_t$ be the set of
all relations with size at most $N$.
For $R\in\RR_t$,
the corresponding \emph{relation state} is defined by
	\[
		\ket{R}\reg{XY} \coloneq \frac{1}{\gamma_R} \sum_{\sigma\in \S_t}
		S_\sigma \ket{x_1,\dots,x_t}\reg{X}
		\otimes S_\sigma \ket{y_1, \dots, y_t}\reg{Y} \enspace,
	\]
	where $S_\sigma$ is a permutation operator on $\br{\mathbb{C}^{2^n}}^{\otimes t}$ defined as\[S_\sigma:\ket{x_1,\dots,x_t}\mapsto\ket{x_{\sigma^{-1}(1)},\dots,x_{\sigma^{-1}(t)}}\]
and the normalizer is given by
	$		
		\gamma_R \coloneq t!\cdot \sum_{x,y\in\bit{n}}
		\br{ \sum_{i=1}^t \delta_{(x_i,y_i) = (x,y)} }!.
 	$
\begin{fact}
	$\st{\ket{R}\reg{XY}}_{R\in\RR}$ forms an orthogonal basis.
\end{fact}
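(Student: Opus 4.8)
The plan is to reduce the statement to the standard description of the symmetric subspace in terms of symmetrized computational-basis tensors. First I would regroup the registers: identify $\br{\C^N}^{\otimes t}\reg{X}\otimes\br{\C^N}^{\otimes t}\reg{Y}$ with $V^{\otimes t}$, where $V\coloneq\C^N\otimes\C^N$, by pairing the $i$-th slot of $\mathsf{X}$ with the $i$-th slot of $\mathsf{Y}$. Under this identification the operator $S_\sigma\reg{X}\otimes S_\sigma\reg{Y}$ becomes a single slot-permutation operator $P_\sigma$ on $V^{\otimes t}$, a size-$t$ relation $R=\set{(x_1,y_1),\dots,(x_t,y_t)}$ becomes a size-$t$ multiset $\set{v_1,\dots,v_t}$ of computational-basis vectors $v_i=\ket{x_i}\ket{y_i}$ of $V$, and
\[
	\ket{R}\reg{XY}=\frac{1}{\gamma_R}\sum_{\sigma\in\S_t}P_\sigma\br{v_1\otimes\cdots\otimes v_t}
\]
is a strictly positive multiple of the symmetrization of a computational-basis tensor of $V^{\otimes t}$. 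Note that $R\mapsto\set{v_1,\dots,v_t}$ is a bijection between $\RR_t$ and the set of size-$t$ multisets of basis vectors of $V$.

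For orthogonality, observe that the set of computational-basis vectors of $V^{\otimes t}$ occurring with nonzero coefficient in $\ket{R}$ is exactly the orbit $\set{P_\sigma(v_1\otimes\cdots\otimes v_t):\sigma\in\S_t}$, and this set depends only on the multiset $\set{v_1,\dots,v_t}$, i.e.\ only on $R$. Two distinct relations $R\ne R'$ in $\RR_t$ give distinct multisets and hence disjoint orbits, so $\langle R|R'\rangle=0$; relations of different sizes live in different Hilbert spaces (equivalently, the size $t$ is part of the data recorded in the environment register), so the whole family $\set{\ket{R}\reg{XY}}_{R\in\RR}$ is pairwise orthogonal. Moreover each $\ket{R}$ is a sum of computational-basis vectors with strictly positive coefficients, hence nonzero (a direct count gives $\langle R|R\rangle=t!\cdot\prod_{(x,y)}m_{(x,y)}!\,/\,\gamma_R^2$, where $m_{(x,y)}$ is the multiplicity of $(x,y)$ in $R$; this confirms nonvanishing but is not needed for the statement).

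It remains to show spanning. The orthogonal projector onto the symmetric subspace $\mathrm{Sym}^t(V)$ is $\symProj=\frac{1}{t!}\sum_{\sigma\in\S_t}P_\sigma$, and applying it to the basis tensor $v_1\otimes\cdots\otimes v_t$ gives $\frac{\gamma_R}{t!}\ket{R}$. Since the basis tensors span $V^{\otimes t}$, their images span $\symProj\, V^{\otimes t}=\mathrm{Sym}^t(V)$, so $\set{\ket{R}\reg{XY}:R\in\RR_t}$ spans $\mathrm{Sym}^t(V)$; being orthogonal it is an orthogonal basis of $\mathrm{Sym}^t(V)$ (consistently, the number of size-$t$ multisets over the $N^2$ basis vectors of $V$ equals $\dim\mathrm{Sym}^t(V)$). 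Taking the direct sum over $t=0,\dots,N$ — the range of relation sizes that arise — yields that $\set{\ket{R}\reg{XY}}_{R\in\RR}$ is an orthogonal basis of $\bigoplus_{t=0}^{N}\mathrm{Sym}^t\br{\C^N\otimes\C^N}$, the subspace containing the path-recording environment state. There is no serious obstacle here; the only points needing care are the register regrouping that collapses $S_\sigma\otimes S_\sigma$ into a single slot permutation, and the multiset bookkeeping for relations with repeated pairs, which is precisely what the normalizer $\gamma_R$ absorbs.
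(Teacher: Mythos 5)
Your proof is correct; the paper states this Fact without proof, and your argument is precisely the standard one (inherited from Ma--Huang): regroup the $i$-th $\mathsf{X}$ and $\mathsf{Y}$ slots so that $\ket{R}$ becomes the symmetrization of a basis tensor of $\br{\C^N\otimes\C^N}^{\otimes t}$, whence distinct multisets give disjoint orbits (orthogonality) and the symmetrized basis tensors span $\mathrm{Sym}^t\br{\C^N\otimes\C^N}$ (completeness). Your inner-product computation $\langle R|R\rangle = t!\prod_{(x,y)} m_{(x,y)}!/\gamma_R^2$ also implicitly (and correctly) flags that the normalizer $\gamma_R$ as printed in the paper does not actually normalize the state, but since the Fact claims only an orthogonal (not orthonormal) basis, this does not affect the statement.
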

Let \yDBRt be the set of all size-$t$ relations
$R=\set{(x_1,y_1),\dots,(x_t,y_t)}$ such that
$(y_1,\dots,y_t)\in\newDB_t$.
Let $\yDBR:=\cup_{t=0}^N\yDBRt$.
Let \DBRt be the set of all size-$t$ relations
$R=\set{(x_1,y_1),\dots,(x_t,y_t)}$ such that
both $(x_1,\dots,x_t)$ and $(y_1,\dots,y_t)$
are in $\newDB_t$.
Let $\DBR:=\cup_{t=0}^N\DBRt$.

For any relation $R$, define 
\begin{align*}
\DomR&:=\set{x:\exists y\in\bit{n}\text{ s.t. }(x,y)\in R},\\
\ImR&:=\set{y:\exists x\in\bit{n}\text{ s.t. }(x,y)\in R},\\
\BDomR&:=\set{x:\exists y\in\bit{n}\text{ s.t. }(x,y)\in R\text{ or }(\bar{x},y)\in R}\\
\BImR&:=\set{y:\exists x\in\bit{n}\text{ s.t. }(x,y)\in R\text{ or }(x,\bar{y})\in R}.
\end{align*}

We will use a similar path-recording oracle to that of \cite{MH24}.
The difference is that every time we query on $x$,
the new path-recording oracle $\PR$ samples a $y\in\BImR$
whose block differs from all previous blocks in $R$.
\begin{definition}[Path-Recording Oracle]\label{def:simple_V}
The path-recording oracle $\PR$ is a linear map
\[\PR:\H_{\mathsf{A}}\otimes\H_{\mathsf{X}}\otimes\H_{\mathsf{Y}}\to\H_{\mathsf{A}}\otimes\H_{\mathsf{X}}\otimes\H_{\mathsf{Y}}\]
defined as follows. For all $x\in\bit{n}$ and $R\in\DBR$,
\[\PR:\ket{x}\reg{A}\ket{R}\reg{XY}\to\frac{1}{\sqrt{N-2\abs{R}}}\sum_{y\in\bit{n},\atop y\notin\BImR}\ket{y}\reg{A}\ket{R\cup\set{(x,y)}}\reg{XY}\]
\end{definition}

\begin{fact} \label{fact:actofpr}
	For an arbitrary $n$-qubit unitary operator $G$ and a $t$-query
	adversary $\A$ with query access to $\PR\cdot G$, define the state after $\A$ finishing all the queries to be
	\[
		\ket{\A_t^{\PR\cdot G}}\reg{ABXY}\coloneq\prod_{i=1}^t\br{\PR\cdot G\reg{A}\cdot A^{(i)}\reg{AB}}\ket{0}\reg{AB}\ket{\{\}}\reg{XY} \enspace.
	\]
	Then we have:
	\begin{align*}
		&\ket{\A_t^{\PR\cdot G}}\reg{ABHP} =
		\sqrt{\prod_{i=0}^{t-1} \frac{1}{(N-2i)} }
		\sum_{
			\substack{(x_1,\dots,x_t)\in\br{\bit{n}}^t\\
				(y_1,\dots,y_t)\in\DB}
		}
		\prod_{i=1}^t \br{ \ketbratwo{y_i}{x_i}\reg{A} \cdot G\reg{A} \cdot A\reg{AB}^{(i)} }
		\ket{0}\reg{AB} \ket{{\{ (x_i,y_i) \}_{i=1}^t}}\reg{XY} \enspace.
	\end{align*}	
\end{fact}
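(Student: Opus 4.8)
The plan is to prove the identity by induction on the number of queries $t$. For $t=0$, both sides reduce to $\ket{0}\reg{AB}\ket{\{\}}\reg{XY}$: the left-hand side directly from the definition $\ket{\A_t^{\PR\cdot G}}\reg{ABXY}\coloneq\prod_{i=1}^t\br{\PR\cdot G\reg{A}\cdot A^{(i)}\reg{AB}}\ket{0}\reg{AB}\ket{\{\}}\reg{XY}$, and the right-hand side because the product $\prod_{i=0}^{-1}$ is empty (so the prefactor is $1$), the outer sum ranges over the single empty tuple, which lies in $\newDB_0$, and the operator product $\prod_{i=1}^{0}$ is the identity. This settles the base case.

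For the inductive step, assume the claim for $t-1$ and use the recursion $\ket{\A_t^{\PR\cdot G}}\reg{ABXY}=\PR\cdot G\reg{A}\cdot A^{(t)}\reg{AB}\ket{\A_{t-1}^{\PR\cdot G}}\reg{ABXY}$. After substituting the induction hypothesis for $\ket{\A_{t-1}^{\PR\cdot G}}$ and using that $G\reg{A}\cdot A^{(t)}\reg{AB}$ acts trivially on $\mathsf{X},\mathsf{Y}$, I would insert the resolution of identity $\id\reg{A}=\sum_{x_t\in\bit{n}}\ketbra{x_t}\reg{A}$ immediately before $\PR$. Then, for each term of the inductive sum indexed by $(x_1,\dots,x_{t-1})\in(\bit{n})^{t-1}$ and $(y_1,\dots,y_{t-1})\in\newDB_{t-1}$, writing $R=\{(x_i,y_i)\}_{i=1}^{t-1}$, the state just before $\PR$ is $\sum_{x_t\in\bit{n}}\ket{x_t}\reg{A}\otimes\ket{\phi_{x_t}}\reg{B}\otimes\ket{R}\reg{XY}$, where $\ket{\phi_{x_t}}\reg{B}$ is obtained by contracting $\bra{x_t}\reg{A}$ against $G\reg{A}A^{(t)}\reg{AB}\prod_{i=1}^{t-1}\br{\ketbratwo{y_i}{x_i}\reg{A}G\reg{A}A^{(i)}\reg{AB}}\ket{0}\reg{AB}$. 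Since $(y_1,\dots,y_{t-1})\in\newDB_{t-1}$, the image of $R$ consists of $t-1$ strings lying in pairwise distinct blocks, so $\PR$ applies to $\ket{x_t}\reg{A}\ket{R}\reg{XY}$ via \defi{simple_V} and sends it to $\frac{1}{\sqrt{N-2(t-1)}}\sum_{y_t\notin\BImR}\ket{y_t}\reg{A}\ket{R\cup\{(x_t,y_t)\}}\reg{XY}$.

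It then remains to bundle everything back together, which is pure bookkeeping. The prefactor becomes $\sqrt{\prod_{i=0}^{t-2}\frac{1}{N-2i}}\cdot\frac{1}{\sqrt{N-2(t-1)}}=\sqrt{\prod_{i=0}^{t-1}\frac{1}{N-2i}}$, as required. For the operators, $\ket{y_t}\reg{A}=\ketbratwo{y_t}{x_t}\reg{A}\ket{x_t}\reg{A}$ recombines with the contraction $\ketbra{x_t}\reg{A}$ to reproduce exactly the new factor $\ketbratwo{y_t}{x_t}\reg{A}G\reg{A}A^{(t)}\reg{AB}$, and $R\cup\{(x_t,y_t)\}=\{(x_i,y_i)\}_{i=1}^{t}$. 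For the index set, $x_t$ ranges over all of $\bit{n}$ (from the resolution of identity), while the constraint $y_t\notin\BImR$ is, by the definition of $\BImR$ and the fact that $y_1,\dots,y_{t-1}$ occupy pairwise distinct blocks, equivalent to $y_t$ lying in a block distinct from those of $y_1,\dots,y_{t-1}$; combined with $(y_1,\dots,y_{t-1})\in\newDB_{t-1}$ this is precisely $(y_1,\dots,y_t)\in\newDB_t$. Hence the summation collapses to the one in the statement and the induction closes. I do not expect a genuine obstacle here: the argument is a routine unrolling of the recursion, mirroring the analogous path-recording computation in \cite{MH24}. The only points needing care are (i) verifying that $\PR$ is applicable to every intermediate relation state that appears — which holds because those relations always have image in pairwise distinct blocks by the inductive constraint, even though their domains need not — and (ii) correctly tracking the normalization while merging the per-step block-disjointness condition on $y_t$ with the inductive condition on $(y_1,\dots,y_{t-1})$ to recover membership in $\newDB_t$.
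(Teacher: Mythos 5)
Your proposal is correct; the paper states \fct{actofpr} without proof, and your inductive unrolling of the recursion is exactly the argument being left implicit, mirroring the analogous computation in \cite{MH24}. Your caveat (i) is also a genuine subtlety that you handle correctly: as literally written, \defi{simple_V} specifies $\PR$ only on $\ket{x}\reg{A}\ket{R}\reg{XY}$ for $R\in\DBR$, yet from the second query onward the relation register carries superpositions over $R=\set{(x_i,y_i)}$ whose first components $x_i$ need not occupy distinct blocks, so those $R$ generally lie in $\yDBR\setminus\DBR$. For the fact to hold as stated (the outer sum runs over \emph{all} $(x_1,\dots,x_t)\in(\bit{n})^t$), the definition must tacitly be read over $R\in\yDBR$, and the defining formula — which depends only on $\abs{R}$ and $\BImR$ and never on the domain of $R$ — extends verbatim to that set, which is what your inductive step uses. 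The bookkeeping of the prefactor and the merging of the per-step constraint $y_t\notin\BImR$ with $(y_1,\dots,y_{t-1})\in\newDB_{t-1}$ into $(y_1,\dots,y_t)\in\DB$ are both carried out correctly.
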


Similar to \cite[Lemma 4.3]{MH24}, the path-recording oracle has the \emph{right invariance} property.
\begin{lemma} \label{lem:right_inv}
	For an arbitrary $n$-qubit unitary operator $G$, we have that
	\[
		\ket{\A_t^{\PR\cdot G}}\reg{ABXY} =
		\br{G\reg{X_1}\tp \dots \tp G\reg{X_t}}\cdot
		\ket{\A_t^{\PR}}\reg{ABXY} \enspace.
	\]
\end{lemma}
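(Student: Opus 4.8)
The plan is to derive explicit closed forms for both sides from \fct{actofpr} and check that they agree. Invoking \fct{actofpr} with the given $G$ yields a closed form for $\ket{\A_t^{\PR\cdot G}}\reg{ABXY}$, and invoking it again with $G$ replaced by the identity (so that $\PR\cdot\id=\PR$) yields the corresponding closed form for $\ket{\A_t^{\PR}}\reg{ABXY}$. Both are sums with the same prefactor $\sqrt{\prod_{i=0}^{t-1}\frac{1}{N-2i}}$, indexed by $\vec{x}=(x_1,\dots,x_t)\in\br{\bit{n}}^t$ and $\vec{y}=(y_1,\dots,y_t)\in\DB$; the only difference is that in the first case each of the $t$ operator factors carries an extra $G\reg{A}$ between $A^{(i)}\reg{AB}$ and $\ketbratwo{y_i}{x_i}\reg{A}$, and in the second it does not. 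So it suffices to show that applying $\br{G\reg{X_1}\tp\dots\tp G\reg{X_t}}$ to the closed form of $\ket{\A_t^{\PR}}$ reproduces the closed form of $\ket{\A_t^{\PR\cdot G}}$. I would do this in two moves: first push $\br{G\reg{X_1}\tp\dots\tp G\reg{X_t}}$ from the $\mathsf{X}$ slots into the relation-state labels, then re-index to convert the resulting scalar weights back into oracle $G\reg{A}$ factors.

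For the first move I would establish the identity that I expect to be the technical crux: for every relation $R=\set{(x_1,y_1),\dots,(x_t,y_t)}$ whose image elements $y_1,\dots,y_t$ are pairwise distinct,
\[
\br{G\reg{X_1}\tp\dots\tp G\reg{X_t}}\ket{R}\reg{XY}=\sum_{z_1,\dots,z_t\in\bit{n}}\br{\prod_{j=1}^t\bra{z_j}G\ket{x_j}}\ket{\set{(z_i,y_i)}_{i=1}^t}\reg{XY}\enspace.
\]
This follows by expanding each $G\ket{x_j}$ in the computational basis and using that $G\reg{X_1}\tp\dots\tp G\reg{X_t}$ commutes with every permutation operator $S_\sigma$ appearing in the symmetrization. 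The one place where distinctness of the $y_i$ is used is the bookkeeping of the normalizer: $\gamma_R$ in the definition of $\ket{R}\reg{XY}$ depends only on $t$ and on the multiplicities of the pairs in $R$, and since the $y_i$ are pairwise distinct, all pairs $(x_i,y_i)$ — and likewise all pairs $(z_i,y_i)$ for any choice of the $z$'s — are distinct, so these normalizers all coincide, which is exactly what makes the symmetrized sums line up term by term. Every term in the closed form of $\ket{\A_t^{\PR}}$ has $\vec{y}\in\DB$, hence distinct $y_i$'s, so the identity applies to each such term.

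For the second move, substituting this identity turns $\br{G\reg{X_1}\tp\dots\tp G\reg{X_t}}\ket{\A_t^{\PR}}\reg{ABXY}$ into a sum over $\vec{y}\in\DB$, $\vec{z}\in\br{\bit{n}}^t$ and $\vec{x}\in\br{\bit{n}}^t$ in which each term is the scalar $\prod_j\bra{z_j}G\ket{x_j}$ times the vector $\prod_{i=1}^t\br{\ketbratwo{y_i}{x_i}\reg{A}\cdot A^{(i)}\reg{AB}}\ket{0}\reg{AB}\tp\ket{\set{(z_i,y_i)}_{i=1}^t}\reg{XY}$, with the global prefactor unchanged. Since $x_j$ appears only in the $j$-th operator factor and in the $j$-th scalar, I can carry out the $\vec{x}$-sum inside the ordered product, factor by factor, using $\sum_{x}\bra{z}G\ket{x}\,\ketbratwo{y}{x}\reg{A}=\ketbratwo{y}{z}\reg{A}\cdot G\reg{A}$; the $i$-th factor then becomes $\ketbratwo{y_i}{z_i}\reg{A}\cdot G\reg{A}\cdot A^{(i)}\reg{AB}$. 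After renaming $\vec{z}$ to $\vec{x}$, this is exactly the closed form of $\ket{\A_t^{\PR\cdot G}}\reg{ABXY}$ from \fct{actofpr}, which proves the lemma.

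The main obstacle is the first move — keeping the symmetrization inside $\ket{R}\reg{XY}$ aligned with the action of $G$ on the $\mathsf{X}$ slots, and verifying that the normalizers do not change. This is precisely where the fact that the image of every recorded relation lies in $\DB$ (built into \fct{actofpr}) is essential, and it is why the statement is naturally phrased through $\ket{\A_t^{\PR}}$ rather than for arbitrary relation states. An alternative route is a direct induction on $t$: peel off the last query and commute $G\reg{A}$ past $\PR$ via an intertwining relation that trades $G$ on register $\mathsf{A}$ for $G$ on the newly recorded $\mathsf{X}$ slot. This is morally the same computation, but it requires making sense of the action of $\PR$ on the non-relation-state superpositions produced by $G\reg{X_1}\tp\dots\tp G\reg{X_{t-1}}$ on the $\mathsf{X}$ register, so I would favor the closed-form route described above.
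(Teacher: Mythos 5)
Your proposal is correct. The paper does not actually supply a proof of this lemma---it only remarks that the argument is "similar to [MH24, Lemma 4.3]"---and your closed-form route via \fct{actofpr} is exactly the intended adaptation: expand both sides, push $G^{\tp t}$ through the symmetrization (using $[G^{\tp t},S_\sigma]=0$), and absorb the matrix elements $\bra{z_j}G\ket{x_j}$ back into the operator factors via $\sum_x\bra{z}G\ket{x}\ketbratwo{y}{x}=\ketbratwo{y}{z}G$. You also correctly identify and discharge the one load-bearing subtlety, namely that the normalizers $\gamma_R$ of $\set{(x_i,y_i)}$ and $\set{(z_i,y_i)}$ coincide because the $y_i$'s recorded by $\PR$ are pairwise distinct (indeed in distinct blocks), so all pairs remain distinct for every choice of the $z_i$'s.
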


We define the projector operator of \emph{distinct block subspace} as follows
\begin{definition}[Distinct Block subspaces on register $\mathsf{X}$ with length $t$]
Given $0\leq t\leq N.$ Let
\[\DBproj:=\sum_{(x_1,\dots,x_t)\in\DB}\ketbra{x_1,\dots,x_t}\reg{X}\enspace.\]
\end{definition}

\subsection{Cryptopgraphy}
In this section, we will review various definitions and results in
cryptography. Throughout this work, $\secpar$ denotes a security
parameter.

\begin{definition}[Quantum-Secure Pseudorandom Function]
  Let $\K,\X$ and $\Y$ be the key space, the domain and range, all
  implicitly depending on the security parameter $\lambda$. A keyed
  family of functions $\set{\prf_k:\X\to\Y}_{k\in\K}$ is a
  quantum-secure pseudorandom function (\qprf) if the following two
  conditions hold:
\begin{enumerate}
\item \textbf{Efficient generation}. $\prf_k$ is polynomial-time computable on a classical computer.

\item \textbf{Pseudorandomness}. For any polynomial-time quantum oracle algorithm $\A$, $\prf_k$ with a random $k\gets\K$ is indistinguishable from a truly random function $f\gets\Y^\X$ in the sense that:
\[\abs{\prob{k\gets\K}{\A^{\prf_k}\!\br{1^\lambda}=1}-\prob{f\gets\Y^\X}{\A^{f}\!\br{1^\lambda}=1}}=\negl{\lambda} \enspace .\]
\end{enumerate}
\end{definition}

\begin{definition}[Quantum-Secure Pseudorandom Permutation]
  Let $\K$ be the key space, and $\X$ be both the domain and range,
  implicitly depending on the security parameter $\lambda$. A keyed
  family of permutations $\set{\prp_k\in \S_\X}_{k\in\K}$ is a
  quantum-secure pseudorandom permutation (\qprp) if the following two
  conditions hold:
\begin{enumerate}
\item (\textbf{Efficient generation}). $\prp_k$ and $\prp_k^{-1}$ are polynomial-time
  computable on a classical computer.
\item (\textbf{Pseudorandomness}). For any polynomial-time quantum
  oracle algorithm $\A$, $\prp_k$ with a random $k\gets\K$ is
  indistinguishable from a truly random permutation $\sigma\gets \S_\X$
  in the sense that:
\[\abs{\prob{k\gets\K}{\A^{\prp_k,\prp_k^{-1}}\!\br{1^\lambda}=1}-\prob{\sigma\gets S_\X}{\A^{\sigma,\sigma^{-1}}\!\br{1^\lambda}=1}}=\negl{\lambda} \enspace .\]
\end{enumerate}
\end{definition}

Under the assumption that post-quantum one-way functions exist,
Zhandry proved the existence of $\qprf$s~\cite{Zhandry21_qprf}. $\qprp$s can be
constructed from $\qprf$s efficiently \cite{Zhandry16_qprp}.

\begin{definition}[Random State Scrambler with error $\epsilon$]
	For $n\in \N$, let $\D$ be a distribution over $\ugroup{N}$.
	We say $\D$ is a \emph{random state scrambler distribution} with error $\epsilon$ ($\epsilon$-$\rss$) if for any $n$-qubit pure state $\ket{\phi}$ and $\ncopy\in \poly{n}$, 
	\[ \td\br{\expect{K\gets \D}{K^{\otimes\ncopy}\ketbra{\phi}^{\otimes \ncopy} K^{\otimes\ncopy, \dagger}},
        \expect{\ket{\psi}\in \haar}{ \ketbra{\psi}^{\otimes \ncopy}}}
      \leq \epsilon  . \]
\end{definition}

\begin{definition}[Pseudorandom Unitary Operator]
	For $n\in \N$, let $\D$ be a distribution over $\ugroup{N}$.
	We say $\D$ is a \emph{pseudorandom unitary distribution} ($\pru$) if 
	\begin{itemize}
		\item $\D$ can be sampled in $\poly{n}$ time;
		\item $\D$ is computationally indistinguishable from Haar random over $\ugroup{N}$.
	\end{itemize} 
\end{definition}

\section{Projecting into Distinct Block Subspace}
In this section, we demonstrate that applying an $\rss$ operator 
followed by a random permutation to any state results in a state
that is mostly within the distinct block subspace. 
\begin{lemma}\label{lem:DBproj}
	For any $\epsilon$-$\rss$ distribution $\rssd$ on $\ugroup{N}$ with $\epsilon=O\br{\frac{1}{N^2}}$, define $\D$ to be a distribution that samples $G=PK$ where
	$P\gets \S_N$ is a random permutation unitary and $K\gets\rssd$.
	Let
	\[\rho^{\D}\coloneq\expect{G\gets\D}{\ketbra{\A_t^{\PR\cdot G}}\reg{ABXY}},\] then
	\[\Tr\br{\DBproj\cdot\rho^{\D}\reg{ABXY}}\geq 1-O\!\br{\frac{t^2}{N}}.\]
\end{lemma}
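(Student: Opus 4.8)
### Proof Proposal

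The plan is to argue in two stages, matching the informal description in the proof overview: first use the $\rss$ property to push the adversary's state (in the $\mathsf{X}$ register of the path-recording picture) into the \emph{distinct} subspace, and then use the extra random permutation $P$ to upgrade this to the \emph{distinct block} subspace. Throughout, I will work with the explicit form of $\ket{\A_t^{\PR\cdot G}}\reg{ABXY}$ from \fct{actofpr}, which, thanks to \lem{right_inv}, factors as $(G\reg{X_1}\tp\cdots\tp G\reg{X_t})\ket{\A_t^{\PR}}\reg{ABXY}$. Since $G = PK$ with $P$ a permutation unitary and $K\gets\rssd$, the action on each $\mathsf{X}_i$ is $PK$, and because $P$ is a permutation matrix it maps computational basis states to computational basis states and hence commutes with the definition of $\DBproj$ up to relabeling blocks. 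The key reduction is therefore: $\Tr(\DBproj\cdot\rho^{\D})$ equals the probability, averaged over $P$ and over the reduced state $\sigma\reg{X_1\cdots X_t}$ obtained from $(K^{\tp t}\reg{X_1\cdots X_t})\ket{\A_t^{\PR}}$ after tracing out $\mathsf{A}\mathsf{B}\mathsf{Y}$, that $(Px_1,\dots,Px_t)$ lie in distinct blocks.

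First I would handle the $\rss$ step. The idea is that $\ket{\A_t^{\PR}}\reg{ABXY}$, restricted to the $\mathsf{X}$ register, is (a superposition over $t$ and over) a symmetrized state on at most $t$ registers that looks, after applying $K^{\tp t}$ with $K\gets\rssd$, close in trace distance to $t$ copies of a Haar-random state $\ket{\psi}$ — this is exactly what the $\epsilon$-$\rss$ definition with $\ncopy = t$ guarantees, applied to each fixed ``input configuration'' and then combined. More carefully: from \fct{actofpr}, the $\mathsf{X}$-marginal is a mixture, and I want to bound the probability that the $\mathsf{X}$-strings are \emph{not} all in distinct blocks after applying $PK$. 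I split this event as: (a) the $\mathsf{X}$-strings collide (two equal) — controlled by the $\rss$ property since a Haar-random state has tiny collision probability among $t$ samples, giving $O(t^2/N)$; or (b) the strings are distinct but two of them land in the same block after applying $P$. For (b), condition on the distinct $\mathsf{X}$-strings and take the average over the uniformly random permutation $P$: any two fixed distinct strings $x_i\ne x_j$ are mapped by $P$ into the same block with probability $\tfrac{1}{N-1}$, so a union bound over the $\binom{t}{2}$ pairs gives $O(t^2/N)$. I need to be a little careful that $P$ being applied inside the tensor $(PK)^{\tp t}$ acts as the \emph{same} permutation on all $t$ registers, so the relevant event really is ``$P$ sends some pair of distinct basis strings into a common block,'' which is what the block-collision bound above addresses.

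The main obstacle I anticipate is making the $\rss$-to-distinct-subspace step fully rigorous, because the state $\ket{\A_t^{\PR}}\reg{ABXY}$ is not literally $K^{\tp t}$ applied to $t$ copies of a single pure state $\ket{\phi}$; it is a complicated entangled object, and the $\rss$ guarantee is stated only for product inputs $\ketbra{\phi}^{\tp\ncopy}$. The resolution is the standard one: the $\rss$ property, being a statement about the channel $\rho\mapsto\E_K K^{\tp\ncopy}\rho K^{\tp\ncopy,\dagger}$ agreeing with the Haar twirl on symmetric product states, actually implies closeness on the entire symmetric subspace (by linearity, since symmetric product states span it), and — with an appropriate accounting for the purifying registers $\mathsf{A},\mathsf{B},\mathsf{Y}$ and the variable length $t$ — one can bound the deviation of $\Tr(\DBproj\cdot\rho^{\D})$ from its Haar-ideal value by $O(\epsilon)=O(1/N^2)$, which is absorbed into $O(t^2/N)$. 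I would phrase this via \lem{projdist} or a direct trace-distance bound: replacing $K^{\tp t}$ acting on the $\mathsf{X}$-registers by the Haar twirl changes $\Tr(\DBproj\cdot\rho^{\D})$ by at most the $\rss$ error (possibly times a small combinatorial factor for summing over $t\le$ number of queries), and then the Haar/permutation calculation in the previous paragraph finishes the bound. Combining the two error terms gives the claimed $1 - O(t^2/N)$ lower bound.
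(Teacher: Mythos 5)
Your high-level decomposition is the right one and matches the paper's: split the complement of $\DBproj$ into (a) a collision event and (b) a ``same block but not equal'' event, union-bound over the $\binom{t}{2}$ pairs of registers, handle (b) by averaging over the random permutation $P$ (probability $\tfrac{1}{N-1}$ per pair that two distinct strings land in a common block), and handle (a) via the $\rss$ guarantee against the Haar two-copy collision probability $\tfrac{2}{N(N+1)}$. This is exactly the paper's $\eqProj$/$\ffbProj$ decomposition and yields the same $O(t^2/N)$ bound.

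However, the step you yourself flag as the main obstacle is resolved incorrectly, and this is a genuine gap. The $\epsilon$-$\rss$ definition only bounds $\td\br{\E_K K^{\otimes\ell}\ketbra{\phi}^{\otimes\ell}K^{\otimes\ell,\dagger},\,\E_\psi\ketbra{\psi}^{\otimes\ell}}$ for \emph{product pure-state} inputs. Your proposed fix --- ``closeness extends to the whole symmetric subspace by linearity'' --- does not work: trace-norm closeness of two channels on a spanning set does not transfer to linear combinations (the coefficients of a decomposition into symmetric product states can have large $\ell_1$-mass), and, worse, the $\mathsf{X}$-marginal of $\ket{\A_t^{\PR}}$ is a mixed state entangled with $\mathsf{A},\mathsf{B},\mathsf{Y}$, so it is not even a state on which the twirling channel's behaviour is pinned down by the definition. ``Replacing $K^{\otimes t}$ by the Haar twirl'' on the adversary's state is therefore not licensed; indeed, ruling out exactly this kind of swap on adaptively-prepared entangled inputs is the whole difficulty the paper is addressing. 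The correct move, which the paper uses, is to never apply the channel to the adversary's state: after the per-pair union bound, dualize via $\Tr\br{(G^{\dagger}\tp G^{\dagger})\,\eqProj\,(G\tp G)\,\sigma}\leq\norm{\E_G (G^{\dagger}\tp G^{\dagger})\,\eqProj\,(G\tp G)}_\infty$ (since $\sigma$ is a density operator), and then decompose $\eqProj=\sum_{x}\ketbra{x}\tp\ketbra{x}$ so that the $\rss$ guarantee is only ever invoked on the fixed product states $\ketbra{x}^{\otimes 2}$ with $\ell=2$. With that substitution your argument closes; without it, the collision bound for part (a) is unsupported.
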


\begin{proof}
By \lem{right_inv}, we can rewrite $\rho^{\D}$ as
\begin{align*}
	\rho^{\D} 
	&= \expect{G\gets\D}{\br{G\reg{X_1}\tp \dots \tp G\reg{X_t}}\cdot
		\ketbra{A_t^{\PR}}\reg{ABXY} \cdot 
		\br{G\reg{X_1}\tp \dots \tp G\reg{X_t}}^\dag}
\end{align*}
Then, using the cyclic property of trace and then the definition of $\DBproj$
\begin{align}
	\Tr\br{\DBproj\cdot\rho^{\D}\reg{ABXY}}
	= & \Tr\br{ \expect{G\gets\D}{
		\br{G\reg{X_1}\tp \dots \tp G\reg{X_t}}^\dag \cdot 
		\DBproj \cdot 
		\br{G\reg{X_1}\tp \dots \tp G\reg{X_t}} \cdot
		\ketbra{A_t^{\PR}}\reg{ABXY}}}\nonumber\\
	= &  \Tr\br{  \expect{G\gets\D}{
		\br{G\reg{X_1}\tp \dots \tp G\reg{X_t}}^\dag \cdot
		\sum_{x\in\DB}\ketbra{x}\reg{X_1,\cdots,X_t} \atop\cdot
		\br{G\reg{X_1}\tp \dots \tp G\reg{X_t}} \cdot
		\ketbra{A_t^{\PR}}\reg{ABXY}}}	\label{DBproj_eq1}
\end{align}
Now, defining 
\begin{align*}
	\sigma\reg{X_1, \dots, X_t} := \Tr\reg{-(X_1, \dots, X_t)} 	\ketbra{A_t^{\PR}}\reg{ABXY}
\end{align*}
where $\Tr\reg{-(X_1, \dots, X_t)} $ represents tracing out all registers except ${X_1, \dots, X_t}$. 
Then Eq. (\ref{DBproj_eq1}) can be rewritten as
\begin{equation*}
	 \Tr\br{ \expect{G\gets\D}{
	 	\br{G\reg{X_1}\tp \dots \tp G\reg{X_t}}^\dag \cdot
		\sum_{x\in\DB}\ketbra{x}\reg{X_1,\cdots,X_t} \cdot
		\br{G\reg{X_1}\tp \dots \tp G\reg{X_t}} \cdot
		\sigma\reg{X_1, \dots, X_t}}}
\end{equation*}
To simplify the notation, we write
\begin{align*}
	\Tr\br{\DBproj\cdot\rho^{\D}\reg{ABXY}} = 
		 \Tr\br{ \expect{G\gets\D}{
		 	G^{\tp t, \dag} \cdot
			\sum_{x\in\DB}\ketbra{x}\reg{X_1,\cdots,X_t} \cdot
			G^{\tp t} \cdot
			\sigma\reg{X_1, \dots, X_t}}}
\end{align*}
This implies:
\begin{align}
	1- \Tr\br{\DBproj\cdot\rho^{\D}\reg{ABXY}}
	=& 
	\Tr\br{ \expect{G\gets\D}{
			G^{\tp t, \dag} \cdot
			\sum_{x\in \bit{nt} \setminus \DB}\ketbra{x}\reg{X_1,\cdots,X_t} \cdot
			G^{\tp t} \cdot
			\sigma\reg{X_1, \dots, X_t}}}
	\label{DBproj_eq2}
\end{align}
We can see that the set  $\bit{nt} \setminus \DB$ includes all $t$-tuples where (a) $\exists i\neq j$ s.t. $x_i=x_j$
or (b) $\exists \i\neq j$ s.t. $x_i=\overline{x}_j$ where $\overline{x}_j$ flips the first bit of $x_j$. We further define two 
projectors to capture these two situations:
\[\eqProj = \sum_{x\in\bit{n}}\ketbra{x} \tp \ketbra{x}\]
\[\ffbProj = \sum_{x\in\bit{n}}\ketbra{x} \tp \ketbra{\bar{x}}\]
And we have
\begin{equation}
	\sum_{x\in \bit{nt}\setminus \DB} \ketbra{x}\reg{X_1, \dots, X_t}
	\preceq 
	\sum_{1\leq i<j\leq t} \eqProj\reg{X_i, X_j}+\ffbProj\reg{X_i, X_j}
	\label{DBproj_eq3}
\end{equation}
where $\preceq$ represents the positive semidefinite order; $\eqProj\reg{X_i, X_j}$ is the equality projector on register $X_i, X_j$;
and $\ffbProj\reg{X_i, X_j}$ is the flip-first-bit projector on register $X_i, X_j$.
Combining Eq.(\ref{DBproj_eq2}) and the inequality Eq.(\ref{DBproj_eq3}):
\begin{align}
	1- \Tr\br{\DBproj\cdot\rho^{\D}\reg{ABXY}}\leq &  \sum_{1\leq i<j\leq t} \expect{G\gets \D}{\Tr\br{G^{\tp t, \dagger} \cdot 
		\br{\eqProj\reg{X_i, X_j}+\ffbProj\reg{X_i, X_j}} \cdot 
		G^{\tp t}\cdot 
		\sigma\reg{X_1, \dots, X_t}}}
	\nonumber\\
	= & \sum_{1\leq i<j\leq t} \expect{G\gets \D}{\Tr\br{\br{G^\dagger\reg{X_i}\tp G^\dagger\reg{X_j} }\cdot 
		\br{\eqProj\reg{X_i, X_j}+\ffbProj\reg{X_i, X_j}} \cdot  
		\br{G\reg{X_i}\tp G\reg{X_j}} \cdot
		\sigma\reg{X_i, X_j}}}\nonumber
	\nonumber\\
	= &  \sum_{1\leq i<j\leq t}
	\expect{\substack{P\gets \S_N \\ K\gets \rssd}}
	{\Tr\br{
	\br{(PK)^\dagger\reg{X_i}\tp (PK)^\dagger\reg{X_j} }\cdot 
			\br{\eqProj\reg{X_i, X_j}+\ffbProj\reg{X_i, X_j}} 
	\atop
	\cdot  
	\br{PK\reg{X_i}\tp PK\reg{X_j}} \cdot \sigma\reg{X_i, X_j}
	}}
	\label{DBproj_eq4}
\end{align}
where the part inside of the summation of Eq.(\ref{DBproj_eq4}) 
can be rewritten as the sum of the following two terms:
\begin{eqnarray}
	\label{DBproj_eq5}
	\expect{P\gets \S_N \\ K\gets \rssd}{\Tr\br{\br{(PK)^\dagger\reg{X_i}\tp (PK)^\dagger\reg{X_j} }\cdot 
			\eqProj\reg{X_i, X_j} \cdot  
			\br{PK\reg{X_i}\tp PK\reg{X_j}} \cdot
			\sigma\reg{X_i, X_j}}}
		\\
	\label{DBproj_eq6}
	\expect{P\gets \S_N \\ K\gets \rssd}{\Tr\br{\br{(PK)^\dagger\reg{X_i}\tp (PK)^\dagger\reg{X_j} }\cdot 
			\ffbProj\reg{X_i, X_j} \cdot  
			\br{PK\reg{X_i}\tp PK\reg{X_j}} \cdot
			\sigma\reg{X_i, X_j}}}
\end{eqnarray}
We will bound these two terms one by one. First, note that
for any permutation matrix $P$, $(P^\dag\otimes P^\dag)\cdot \eqProj\cdot (P\otimes P) = \eqProj$.
Therefore, we have
\begin{align}
	(\ref{DBproj_eq5})
	    = & \expect{P\gets \S_N \\ K\gets \rssd}{\Tr\br{
	    		\br{K^\dagger\reg{X_i}\tp K^\dagger\reg{X_j} }
	    		\br{P^\dagger\reg{X_i}\tp P^\dagger\reg{X_j} } \cdot 
	    		\eqProj\reg{X_i, X_j} \cdot  
	    		\br{P\reg{X_i}\tp P\reg{X_j} }
	    		\br{K\reg{X_i}\tp K\reg{X_j} } \cdot
	    		\sigma\reg{X_i, X_j}}} \nonumber
	    \\
		= & \expect{K\gets \rssd}{\Tr\br{
				\br{K^\dagger\reg{X_i}\tp K^\dagger\reg{X_j} } \cdot 
				\eqProj\reg{X_i, X_j} \cdot  
				\br{K\reg{X_i}\tp K\reg{X_j} } \cdot
				\sigma\reg{X_1, \dots, X_t}}} \nonumber
\end{align}
Then, since $\sigma\reg{X_1, \dots, X_t}$ is a density operator, we have
\begin{align}
	(\ref{DBproj_eq5})
		\leq & \norm{
			\expect{K\gets \rssd}{
					\br{K^\dagger\reg{X_i}\tp K^\dagger\reg{X_j} } \cdot 
					\eqProj\reg{X_i, X_j} \cdot  
					\br{K\reg{X_i}\tp K\reg{X_j}}}
		}_\infty \nonumber 
\end{align}
By the triangle inequality, we have
\begin{align}
	(\ref{DBproj_eq5})
		\leq & \norm{
			\expect{U\gets \mu}{
				\br{U^\dagger\reg{X_i}\tp U^\dagger\reg{X_j} } \cdot 
				\eqProj\reg{X_i, X_j} \cdot  
				\br{U\reg{X_i}\tp U\reg{X_j}}}
			}_\infty 
			\nonumber
			\\
			& +
			\left\lVert 
			 \expect{U\gets \mu}{
					\br{U^\dagger\reg{X_i}\tp U^\dagger\reg{X_j} } \cdot 
					\eqProj\reg{X_i, X_j} \cdot  
					\br{U\reg{X_i}\tp U\reg{X_j}}}
			\right.-
			\left.
			 \expect{K\gets \rssd}{
			 	\br{K^\dagger\reg{X_i}\tp K^\dagger\reg{X_j} } \cdot 
			 	\eqProj\reg{X_i, X_j} \cdot  
			 	\br{K\reg{X_i}\tp K\reg{X_j}}}				
			\right\lVert_\infty 
			\nonumber
			\\
		\leq & \sum_{ x\in \bit{n}}
		\left(
		 \norm{
			\expect{U\gets \mu}{
				\br{U^\dagger\tp U^\dagger } \cdot 
				\ketbra{x,x}\cdot  
				\br{U\tp U}}
		}_\infty \right.
		\nonumber
		\\
		& +
		\left.
		\left\lVert
			\expect{U\gets \mu}{
				\br{U^\dagger\tp U^\dagger } \cdot 
				\ketbra{x,x}\cdot  
				\br{U\tp U}}
		\right.
		\right.  -
		\left.
		\left.
			\expect{K\gets \rssd}{
				\br{K^\dagger\tp K^\dagger} \cdot 
			    \ketbra{x,x}\cdot  
				\br{K\tp K}}				
		\right\rVert_\infty\right)
		\nonumber
		\\
		\label{DBproj_rss1}
		\leq & \sum_{x\in \bit{n}}
		\br{
		\norm{
			\expect{U\gets \mu}{
				\br{U^\dagger\tp U^\dagger} \cdot 
				\ketbra{x,x}\cdot  
				\br{U\tp U}}
		}_\infty + O\br{\frac{1}{N^2}}} 
\end{align}
where we use the propery of $O\br{\frac{1}{N^2}}-\rss$ to derive
the last inequality.
Since $U\ket{x}$ is a Haar random state,
we have $ \expect{U\gets \mu}{
				\br{U^\dagger\tp U^\dagger} \cdot 
				\ketbra{x,x}\cdot  
				\br{U\tp U}} = \expect{\ket{\psi}\gets \mu}{
				\ketbra{\psi, \psi}}$
and the operator norm is $\frac{2}{N(N+1)}$ \cite[Proposition 6]{Har13}.
Thus, 
\begin{align}
		(\ref{DBproj_eq5})\leq &  \sum_{x\in \bit{n}}
		\br{
		\norm{
			\expect{\ket{\psi}\gets \mu}{
				\ketbra{\psi, \psi}}
		}_\infty + O\br{\frac{1}{N^2}}} \nonumber
		\\
		\leq & N \cdot \br{\frac{2}{N(N+1)}+O\br{\frac{1}{N^2}}}= O\br{N^{-1}}\nonumber
\end{align}
Next, we attempt to bound (\ref{DBproj_eq6}):
\begin{align}
(\ref{DBproj_eq6})
& =\expect{P\gets \S_N \\ K\gets \rssd}{\Tr\br{
		\br{K^\dagger\reg{X_i}\tp K^\dagger\reg{X_j} }
		\br{P^\dagger\reg{X_i}\tp P^\dagger\reg{X_j} } \cdot 
		\ffbProj\reg{X_i, X_j} \cdot  
		\br{P\reg{X_i}\tp P\reg{X_j} }
		\br{K\reg{X_i}\tp K\reg{X_j} } \cdot
		\sigma\reg{X_i,X_j}}} \nonumber
\\
\label{DBproj_eq7}
& \leq \norm{\expect{P\gets \S_N }{\br{P^\dagger\reg{X_i}\tp P^\dagger\reg{X_j} } \cdot 
		\ffbProj\reg{X_i, X_j} \cdot  
		\br{P\reg{X_i}\tp P\reg{X_j} }}}_{\infty} \nonumber\\
&\leq \sum_{x\in\bit{n}}\norm{\expect{P}{(P\otimes P)^\dag \cdot\ketbra{x,\bar{x}}\cdot (P\otimes P)}}_{\infty}  \nonumber\\
&= \sum_{x\in\bit{n}}
		\norm{\frac{1}{N(N-1)} \sum_{z\neq y} \ketbra{z,y}}_\infty \nonumber\\
&=\frac{1}{N-1} \nonumber
\end{align}

Since both Eq. (\ref{DBproj_eq5}) and (\ref{DBproj_eq6}) are upper bounded by $O\br{N^{-1}}$, by substituting into (\ref{DBproj_eq4}) and using union bound on all $i$ and $j$, we have
\begin{equation*}
	1- \Tr\br{\DBproj\cdot\rho^{\D}\reg{ABXY}}  \leq O\br{\frac{t^2}{N}}
\end{equation*}
\end{proof}


\section{\pru~from Parallel Kac's Walk}
In this section, we introduce our construction for \pru~which is inspired
by parallel Kac's walk.
Our construction is simply to repeat the parallel Kac's walk.

\subsection{The $\HPC_{n,T}$ distribution}
Our construction is based on \emph{parallel Kac's walk},
a random walk on unit vectors within Hilbert spaces.
A single step of parallel Kac's walk can be simulated
by firstly sampling a random function $f:\set{0,1}^{n-1}\to\set{0,1}^{3d}$
and a random permutation $\sigma:\bit{n}\to\bit{n}$,
and then applying two unitary operators $P_\sigma$ and $H_f$ in sequence.
The unitary $P_\sigma$ is the permutation matrix defined by
\[
	P_\sigma=\sum_{x\in\bit{n}} \ketbratwo{\sigma(x)}{x} \enspace.
\]
The unitary $P_\sigma$ will pair the $2^n$ computational basis up
according to their images after the permutation $\sigma$.
More specifically, the basis $\ket{x}$ and $\ket{z}$ are paired up
iff $\sigma(x)$ and $\sigma(z)$ share the same suffix of length $n-1$,
and each pair can be identified by its unique suffix.
The unitary $H_f$ then applies independent $2\times 2$ Haar random unitaries
on all pairs in the following way:
\begin{enumerate}
	\item for every $y\in\bit{n-1}$,
			we first parse $f(y)=f_\alpha(y)\Vert f_\beta(y)\Vert f_\theta(y)$
			such that $f_\alpha(y), $ $f_\beta(y),$ $f_\theta(y)\in\bit{d}$,
	\item calculate three angles
	\[
		\theta_y=\arcsin\br{\sqrt{\val{f_\theta(y)}}}\enspace, \enspace
		\alpha_y=2\pi\cdot\val{f_\alpha(y)}\enspace , \enspace
		\beta_y=2\pi\cdot\val{f_\beta(y)}\enspace ,
	\]
	\item apply $U(\alpha_y,\beta_y,\theta_y)=
\begin{bmatrix}
	e^{\i\alpha_y }\cos(\theta_y ) & -e^{\i\beta_y }\sin(\theta_y)\\
	e^{-\i\beta_y }\sin(\theta_y ) & e^{-\i\alpha_y }\cos(\theta_y)
\end{bmatrix}$ on the pair with suffix $y$.
\end{enumerate}
The expression for $H_f$ is
	\begin{align}\label{eq:widehatq}
	\sum_{y\in\bit{n-1}}
	\left(
		\begin{matrix}
		\e^{\i\br{\frac{\alpha_y+\beta_y}{2}} } & 0\\
		0 & \e^{-\i\br{\frac{\alpha_y+\beta_y}{2}}}
		\end{matrix}
	\right)
	\begin{pmatrix}
	\cos\theta_{y}&-\sin\theta_{y}\\
	\sin\theta_{y}&\cos\theta_{y}
	\end{pmatrix}
	\left(
		\begin{matrix}
		\e^{\i\br{\frac{\alpha_y-\beta_y}{2}} } & 0\\
		0 & \e^{-\i\br{\frac{\alpha_y-\beta_y}{2}}}
		\end{matrix}
		\right)
		\otimes\ketbra{y} \enspace,
	\end{align}
where $U(\alpha_y,\beta_y,\theta_y)$ is decomposed into a product of three matrices.
The unitary $H_f$ can be approximated
by a polynomial time implementable unitary $\widehat{H}_f$
satisfying that $\Vert H_f - \widehat{H}_f\Vert_\infty$ is sufficiently small \cite{LQSY+24}.

Our construction for \pru~is simply to repeat the parallel Kac's walk.
We define two distributions over $\ugroup{N}$, denoted by $\HPC_{n,T}$ and $\hHPC_{n,T}$:

\begin{definition}\label{def:HPnT}
	$\HPC_{n,T}$ is a distribution over $\ugroup{N}$ which can be sample via the following procedure:
	\begin{itemize}
	\item sample $T$ uniformly random functions
		  $f_1,\dots,f_T:\set{0,1}^{n-1}\to\set{0,1}^{3d}$,
          and $T$ uniformly random permutations $\sigma_1,\dots,\sigma_T:\bit{n}\to\bit{n}$,
    \item output the unitary $\kac= \prod_{i=1}^{T} \br{H_{f_i}\cdot P_{\sigma_i}}$.
\end{itemize}

Similarly, we define the distribution $\hHPC_{n,T}$ by substituting $H_f$ with $\widehat{H}_f$, and we denote the unitary sampled according to $\hHPC_{n,T}$ as $\hkac$.
\end{definition}
These two distributions are indistinguishable by any polynomial-time quantum adversary.
\begin{lemma}\label{lem:close}
	For $T = \poly{n}$ and $d = 5n$, $\hHPC_{n,T}$~is computationally indistinguishable from $\HPC_{n,T}$.
\end{lemma}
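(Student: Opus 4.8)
The plan is a two-level hybrid argument, and I do not expect a serious obstacle. The key observation is that $\HPC_{n,T}$ and $\hHPC_{n,T}$ are sampled from the \emph{same} underlying randomness---the functions $f_1,\dots,f_T$ and permutations $\sigma_1,\dots,\sigma_T$---and differ only in that each block-diagonal unitary $H_{f_i}$ is replaced by its efficient approximation $\widehat{H}_{f_i}$. So I would couple the two experiments by drawing this randomness once and feeding it to both; it then suffices to bound, for every fixed choice of the randomness, the operator-norm distance $\norm{\kac - \hkac}_\infty$ between the two resulting unitaries, and then to convert such a bound into a bound on the distinguishing advantage of any $t$-query oracle adversary.

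For the first part I would telescope over the $T$ layers. Writing $\kac=\prod_{i=1}^{T}\br{H_{f_i}P_{\sigma_i}}$ and $\hkac=\prod_{i=1}^{T}\br{\widehat{H}_{f_i}P_{\sigma_i}}$ and inserting hybrids that swap $H_{f_i}$ for $\widehat{H}_{f_i}$ one layer at a time, each swap changes the product by at most $\norm{H_{f_i}-\widehat{H}_{f_i}}_\infty$ in operator norm, since every other factor (the $P_{\sigma_j}$'s, which are identical in both, and the remaining $H$'s and $\widehat{H}$'s) is unitary and hence norm-preserving. With $d=5n$, the approximation guarantee imported from \cite{LQSY+24} gives $\norm{H_{f_i}-\widehat{H}_{f_i}}_\infty = 2^{-\Omega(n)}$ for \emph{every} $f_i$, so $\norm{\kac-\hkac}_\infty \le T\cdot 2^{-\Omega(n)}$.

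For the second part, recall that a $t$-query adversary's final state has the form $\prod_{i=1}^{t}\br{\OO\reg{A}\cdot A^{(i)}\reg{AB}}\ket{0}\reg{AB}$ with $\OO\in\set{\kac,\hkac}$. Replacing $\kac$ by $\hkac$ in these $t$ oracle calls one at a time, and using that all the $A^{(i)}$ and all the already-substituted oracles are unitary (so they do not amplify the error), the two final states differ by at most $t\cdot\norm{\kac-\hkac}_\infty$ in Euclidean norm, hence by at most $2t\cdot\norm{\kac-\hkac}_\infty$ in trace distance; so for each fixed randomness the probability the adversary outputs $1$ changes by at most this amount, and averaging over the shared randomness preserves the bound. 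Since $t,T=\poly{n}$ this gives total advantage $\le 2tT\cdot 2^{-\Omega(n)}=\negl{n}$. The identical argument handles adversaries making inverse queries, since $\norm{\kac^\dag-\hkac^\dag}_\infty = \norm{\kac-\hkac}_\infty$.

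The only quantitative input here is the per-layer precision guarantee $\norm{H_f-\widehat{H}_f}_\infty = 2^{-\Omega(n)}$ at $d=5n$ from \cite{LQSY+24}; everything else is the routine fact that pre- and post-composing with unitaries does not increase operator-norm error, together with a hybrid over the $t$ queries. The one point that needs care is setting up the coupling so that the $f_i$ and $\sigma_i$ are genuinely shared between the two experiments, which is what makes the layer-by-layer telescoping legitimate.
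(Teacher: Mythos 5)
Your proposal is correct and follows essentially the same two-level hybrid as the paper: a telescoping/triangle-inequality bound $\norm{\kac-\hkac}_\infty\le\sum_i\norm{H_{f_i}-\widehat{H}_{f_i}}_\infty\le 8\pi\cdot 2^{-d}\cdot T$ over the $T$ layers (the paper cites Lemma 19 of \cite{LQSY+24} for the per-layer precision), followed by a hybrid over the $t$ queries giving total advantage $O(tT\cdot 2^{-d})=\negl{n}$. The coupling of the shared randomness $f_i,\sigma_i$ that you spell out is implicit in the paper's hybrids, so there is no substantive difference.
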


\begin{proof}
	Consider the views $\ket{\A_t^{\kac}}\reg{AB}$ and $\ket{\A_t^{\hkac}}\reg{AB}$ of a $t$-query adversary $\A$ with oracle access to $\kac$ and $\hkac$ respectively.
	It is sufficient to show that the trace distance between these two states is negligible.
	To this end, we define the following hybrids: for $0\leq j\leq t$,
	\[
		\ket{\varphi_j} = \prod_{k=1}^j
		\br{\hkac\reg{A}\cdot A^{(i)}\reg{AB}}
		\prod_{j=k+1}^t \cdot
		\br{\kac\reg{A}\cdot A^{(i)}\reg{AB}}
		\ket{0}\reg{AB} \enspace.
	\]
	It is evident that
	$\ket{\varphi_0} = \ket{\A_t^{\kac}}\reg{AB}$ and
	$\ket{\varphi_t} = \ket{\A_t^{\hkac}}\reg{AB}$, and
	the trace distance with two adjacent hybrids is bounded by
	\begin{align*}
		\norm{\ketbra{\varphi_j} - \ketbra{\varphi_{j+1}} }_1 &\leq
		\Vert \Pi_{i=1}^{T} \br{H_{f_i}\cdot P_{\sigma_i}} - \Pi_{i=1}^{T} \br{\widehat{H}_{f_i}\cdot P_{\sigma_i}}\Vert_\infty \\
		&\leq\Sigma_{i=1}^{T} \Vert H_{f_i} - \widehat{H}_{f_i}\Vert_\infty
		\leq 8\pi\cdot 2^{-d}\cdot T =\negl{n}\enspace,
	\end{align*}
	where the second inequality is from the triangle inequality and the last inequality is from Lemma 19 in \cite{LQSY+24}.
	Thus, we have that by the triangle inequality,
	\begin{align*}
		&\norm{
			\expect{\kac\leftarrow\HPC_{n,T}}{\ketbra{ \A_t^{\kac} }}
			- \expect{\hkac\leftarrow\hHPC_{n,T}}{\ketbra{\A_t^{\hkac}}}
		}_1 \\
		\leq~ & \sum_{j=0}^{t-1} \norm{
			\expect{ }{\ketbra{ \varphi_j }}
			- \expect{ }{\ketbra{\varphi_{j+1}}}
		}_1
		\leq t\cdot 8\pi\cdot 2^{-d} \cdot T = \negl{n}\enspace.
	\end{align*}
\end{proof}

It is proved in \cite{LQSY+24} that with large enough $T$ and $d$, the
distribution $\HPC_{n,T}$ is an $\rss$ distribution. Formally, we have
the following theorem by adjusting the parameters used in \cite[Theorem 10]{LQSY+24}.

\begin{theorem}
	For $T = 30n$ and $d = 5n$, $\HPC_{n,T}$~is an $\epsilon$-$\rss$ distribution on $\ugroup{N}$ with $\epsilon = O\!\br{\frac{1}{N^2}}$.
\end{theorem}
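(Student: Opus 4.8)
The plan is to obtain this statement directly from the random-state-scrambling analysis of the parallel Kac's walk in \cite{LQSY+24}, specifically \cite[Theorem 10]{LQSY+24}, by re-deriving the constants for the stronger error regime we require here. That theorem establishes that $\HPC_{n,T}$---the $T$-fold repetition of the parallel Kac's walk with $d$-bit angle discretization---is an $\epsilon$-$\rss$ distribution whenever $T$ and $d$ are sufficiently large, with an error bound of the shape $\epsilon \le \poly{n,\ncopy}\cdot\br{2^{-\Omega(T)} + 2^{-\Omega(d)}}$ over all $\ncopy = \poly{n}$ copies. Morally, the first term records the mixing of the continuous parallel Kac's walk, whose mixing time is $O(\log N) = O(n)$ and whose relevant potential contracts by a constant factor per step, so that the distance to the Haar $\ncopy$-copy channel decays geometrically in $T$; the second term records the error of replacing each exact Haar-random $2\times 2$ rotation by one sampled from a $d$-bit grid of angles.

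First I would recall the precise inequality from \cite[Theorem 10]{LQSY+24} and set the target error to $\epsilon = 1/N^2 = 2^{-2n}$. Since the $\rss$ definition quantifies over all $\ncopy = \poly{n}$, it is enough that each exponentially decaying factor beats the polynomial prefactor with margin $N^{-2}$; that is, we need $2^{-\Omega(T)} \le 2^{-2n - O(\log n)}$ and likewise $2^{-\Omega(d)} \le 2^{-2n - O(\log n)}$. Unwinding the hidden constants from \cite{LQSY+24}, this amounts to two linear conditions $T \ge c_1 n$ and $d \ge c_2 n$ for explicit $c_1, c_2$. I would then check that $T = 30n$ and $d = 5n$ satisfy both, and substitute back to read off $\epsilon = O(1/N^2)$; this last part is pure bookkeeping of constants rather than new mathematics, so I would present it as such rather than reproducing the analysis of \cite{LQSY+24}.

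The main obstacle is exactly this bookkeeping: \cite[Theorem 10]{LQSY+24} is phrased with a coarser target error, so one must look into its proof to confirm three points---(i) the contraction/martingale step for the continuous parallel Kac's walk really yields error $2^{-\Omega(T)}$ rather than merely inverse-polynomial in $T$, (ii) the dependence on the number of copies $\ncopy$ is genuinely at most polynomial and enters multiplicatively against the decaying terms, so that it is absorbed for any $\ncopy = \poly{n}$, and (iii) the discretization error is $2^{-\Omega(d)}$ with a constant large enough that $d = 5n$ suffices. Since \cite{LQSY+24} already uses $O(n)$ steps to achieve a pseudorandom-state guarantee, I expect all three to hold with room to spare, and no fundamentally new argument should be needed beyond re-tuning the parameters.
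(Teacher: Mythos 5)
Your proposal matches the paper exactly: the paper gives no independent proof of this theorem, stating only that it follows from \cite[Theorem 10]{LQSY+24} by adjusting the parameters, which is precisely the constant-bookkeeping you describe. The three points you flag for verification (geometric decay in $T$, polynomial multiplicative dependence on the number of copies $\ncopy$, and a $2^{-\Omega(d)}$ discretization error absorbed by $d=5n$) are the right things to check against the source, and no further argument is required.
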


In this work, we will prove that adding one more step of parallel Kac's walk results in a distribution that is close to Haar random.
Our main result is as follows:
\begin{theorem}\label{thm:pru}
For $T = 30n$ and $d = 5n$, $\HPC_{n,T+1}$~is computationally indistinguishable from Haar distribution.
\end{theorem}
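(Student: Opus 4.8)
The plan is to prove \thm{pru} by combining the two phases outlined in the proof overview, using the three main technical ingredients already assembled: the $\rss$ property of $\HPC_{n,T}$, \lem{DBproj} (projection into the distinct block subspace), and the right-invariance of the path-recording oracle (\lem{right_inv} together with \fct{actofpr}). By \lem{close} it suffices to work with $\HPC_{n,T+1}$ itself (i.e., with the ideal $H_f$ rather than $\widehat H_f$), since the two are computationally indistinguishable and indistinguishability from Haar is transitive up to negligible error. Write a sample from $\HPC_{n,T+1}$ as $V\cdot G$, where $G = \prod_{i=1}^T (H_{f_i} P_{\sigma_i})$ is distributed as $\HPC_{n,T}$ (an $O(1/N^2)$-$\rss$ distribution by the preceding theorem) and $V = H_{f_{T+1}} P_{\sigma_{T+1}}$ is one extra independent step of parallel Kac's walk. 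The goal is to show that for any $t = \poly n$-query adversary $\A$, the output state $\ket{\A_t^{VG}}\reg{AB}$ is within negligible trace distance of $\ket{\A_t^{U}}\reg{AB}$ for $U \gets \mu$ Haar random.

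The first phase realizes step (1)--(2) of the overview: purify the adversary's interaction with $VG$ into an interaction with the path-recording oracle $\PR$ acting on a large environment register $\mathsf{XY}$, so that (by \fct{actofpr}) the environment state is automatically supported on relations $R$ with $\im R \in \newDB_t$ --- this is where the modified $\PR$ from \defi{simple_V} (which forces the $y$-values into distinct blocks) is used, and it is what makes the compression to relation states legitimate. The real work of this phase is to argue that, \emph{in addition}, the $x$-register is essentially supported on the distinct block subspace. Here I would first decompose $VG = V\cdot G$ and apply \lem{DBproj} to $G$ (which is exactly in the form $PK$ required there, with $P = P_{\sigma_T}\cdots$ absorbing a final random permutation and $K$ the $\rss$ part --- more precisely, since every step of $\HPC_{n,T}$ ends by applying $H_{f}P_\sigma$ and one can insert/commute a harmless extra random permutation as noted in the overview); \lem{DBproj} gives $\Tr(\DBproj \cdot \rho^{\D}) \ge 1 - O(t^2/N)$. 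Then the Gentle Measurement Lemma (\lem{gentle_measurement}) lets me insert the projector $\DBproj$ into the adversary's state at a cost of $O(t \cdot \sqrt{t^2/N}) = O(t^2/\sqrt N) = \negl n$, so henceforth the environment register lies in $\DBRt$ and the relation state $\ket R\reg{XY}$ is well-defined with $R \in \DBR$.

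The second phase realizes step (3): once the state is (approximately) in the distinct block subspace, use \lem{right_inv} applied to the \emph{extra} step $V = H_{f_{T+1}}P_{\sigma_{T+1}}$ to show that the adversary's view does not depend on $V$. Concretely, \lem{right_inv} says $\ket{\A_t^{\PR\cdot V}}\reg{ABXY} = (V\reg{X_1}\otimes\cdots\otimes V\reg{X_t})\ket{\A_t^{\PR}}\reg{ABXY}$; tracing out $\mathsf{XY}$ to recover the adversary's actual view $\ket{\A_t^{VG}}\reg{AB}$, the unitary $V^{\otimes t}$ acts only on the environment and disappears under the partial trace, so the reduced state on $\mathsf{AB}$ is the same whether we interact with $\PR\cdot G$ or with the composition that includes the final Kac step --- i.e., the last step is absorbed into $\PR$. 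It then remains to compare the resulting purified-$\PR$ experiment against the Haar experiment: by the analogous path-recording analysis of \cite{MH24} (the modified $\PR$ of \defi{simple_V} is designed to mirror the Haar case on the distinct block subspace), the reduced states on $\mathsf{AB}$ agree up to the $\negl n$ error incurred in the first phase. Chaining the bounds --- \lem{close} ($\negl n$), Gentle Measurement ($O(t^2/\sqrt N)$), and the $\PR$-vs-Haar comparison --- yields total trace distance $\negl n$, and since $\HPC_{n,T+1}$ is clearly $\poly n$-time samplable, it is a $\pru$.

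The main obstacle I expect is the first phase: carefully justifying the compression of the environment register to the relation-state form and verifying that the ``extra random permutation'' can indeed be inserted so that \lem{DBproj} applies to $G$ with the right $PK$ structure --- i.e., making precise the claim in the overview that ``every iteration inherently includes a random permutation'' so that $\HPC_{n,T}$ can be rewritten as (random permutation)$\circ$($\rss$ distribution) without changing the distribution. A secondary subtlety is bookkeeping the error propagation through the purification/compression isometries (showing these steps are exact or error-free given the distinct-block promise) so that the only losses are the three named ones above; this is routine but must be done in the right order, applying Gentle Measurement \emph{before} invoking right-invariance so that \lem{right_inv} is used on a state genuinely supported in $\DBR$.
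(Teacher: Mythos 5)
Your proposal correctly identifies the high-level shape of the argument — $\HPC_{n,T+1} = (\text{one Kac step})\cdot(\text{$\rss$ distribution})$, $\negl{n}$ error via \lem{close}, the role of \lem{DBproj}, and a triangle inequality against a path-recording oracle that also controls the Haar case — but there is a genuine gap that makes the argument, as written, not go through.

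The missing piece is the link between the adversary's \emph{actual} view on $\mathsf{AB}$ from querying $VG = H_{f_{T+1}}P_{\sigma_{T+1}}\cdot G$ and the path-recording experiment on $\mathsf{ABXY}$. You write that one can ``purify the adversary's interaction with $VG$ into an interaction with the path-recording oracle $\PR$,'' and later that ``tracing out $\mathsf{XY}$ [from $\ket{\A_t^{\PR\cdot V}}$] recovers the adversary's actual view $\ket{\A_t^{VG}}\reg{AB}$.'' Neither is true: $\PR$ is not an isometric dilation of $V$ (or of $VG$). The paper's actual chain is: purify the \emph{single} last Kac step $H_fP_\sigma$ (not $VG$) into the purified function-permutation oracle $\HPO$ on registers $\mathsf{HP}$ (this step is exact, \fct{puri}); introduce $\mathsf{HP}$-relation states $\ket{\phi_R}$ and show they are orthogonal \emph{only} on the distinct-block sector $\DBR$ (\lem{orthophiR}); define a compression isometry $\CPS:\ket{\phi_R}\mapsto\ket{R}$ and show $\CPS\cdot\DBRproj\cdot\ket{\A_t^{\HPO\cdot G}} = (1+O(t^2/N))\DBproj\cdot\ket{\A_t^{\PR\cdot G}}$ (\lem{compress}); then use \lem{DBproj} together with \lem{projdist} to control the cost of inserting $\DBRproj$ and $\DBproj$ (\lem{rho12}), culminating in \lem{inter}. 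Without $\HPO$, the $\mathsf{HP}$-relation states, $\CPS$ and $\DBRproj$, there is no way to compare $\Tr_{\mathsf{HP}}\ketbra{\A_t^{\HPO\cdot G}}$ (which is the real view) with $\Tr_{\mathsf{XY}}\ketbra{\A_t^{\PR\cdot G}}$ (which is what \lem{DBproj} and right-invariance speak about). This is exactly step (2) of the overview, which your proposal flags as ``the main obstacle'' but does not resolve.

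Two secondary issues. First, the roles of $V$ and $G$ are crossed: in the paper, right-invariance (\lem{right_inv}) is used to absorb $G$ — the $\rss$ part composed with a fresh random permutation — into the $\mathsf{X}$-register, which disappears under the partial trace (this is the step $\rho_4 = \rho_5$ in the proof of \lem{inter}); the last Kac step $V = H_fP_\sigma$ is the thing being \emph{purified} into $\HPO$, not the thing absorbed by right-invariance. Your phase 2 applies right-invariance to $V$, which is backwards and does not connect to the quantity you need. Second, $\HPC_{n,T}$ is not ``of the form $PK$ with $P = P_{\sigma_T}$'' as you suggest: the outermost operator of $\kac = \prod_{i=1}^T H_{f_i}P_{\sigma_i}$ is $H_{f_T}$, not a permutation. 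The paper instead inserts a \emph{fresh independent} random permutation $S$ between $H_fP_\sigma$ and $\kac$, which leaves the overall distribution unchanged because $P_\sigma S$ is again a uniformly random permutation unitary; it is this $S\kac$ that furnishes the $PK$ decomposition needed for \lem{DBproj}.
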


We view the procedure of sampling a unitary operator from $\HPC_{n,T+1}$ as three stages:
\begin{enumerate}
	\item sample a unitary operator $\kac$ from $\HPC_{n,T}$;
	\item sample a unitary operator $H_{f}\cdot P_{\sigma}$ from $\HPC_{n,1}$ corresponding to a random permutation $\sigma\in S_N$ and a random function $f:\bit{n-1}\to\bit{3d}$;
	\item output $H_f\cdot P_{\sigma}\cdot\kac$.
\end{enumerate}
The proof of this theorem consists of three steps:
\begin{itemize}
	\item
	We begin by establishing a purification of the adversary's view state when making queries to random unitary $H_f\cdot P_{\sigma}$.
	Specifically, we introduce two environment registers,
	$\mathsf{H}$ and $\mathsf{P}$,
	which record function $f$ and permutation $\sigma$ being utilized.
	From the adversary's perspective,
	making queries to $H_f\cdot P_{\sigma}$ is equivalent to
	querying a \emph{purified function-permutation oracle} $\HPO$
	that acts on both adversary's registers and environment registers.
	\item Next,
	we demonstrate that
	as long as $G$ is a random unitary
	sampled from an $\rss$ distribution,
	the adversary cannot distinguish between
	making queries to $\HPO\cdot G$
	and making queries to $\PR$,
	where $\PR$ is the oracle defined in \defi{simple_V}.
	The key insight in this step is that
	if the environment state resides in the distinct block subspace,
	we can identify an isometry acting on the environment
	that connects the behaviors of $\HPO$ and $\PR$.
	The random unitary $G$ ensures that
	the environment state is nearly within the distinct block subspace.
	\item
	Let $U$ be sampled from Haar distribution $\mu$.
	Since both $\HPC_{n,T}$ and $\mu$ are $\rss$ distributions,
	we can conclude that
	oracle $\HPO\cdot\kac$ and $\HPO\cdot U$ are both indistinguishable
	from the oracle $\PR$
	to the adversary.
	Thus, $\HPO\cdot\kac$ and $\HPO\cdot U$ are indistinguishable from each other.
	This implies that making queries to $H_f\cdot P_{\sigma}\cdot\kac$
	and $H_f\cdot P_{\sigma}\cdot U$ are indistinguishable as well.
	These two unitary operators correspond to $\HPC_{n,T+1}$ and Haar distribution,
	respectively.
\end{itemize}

In the following sections, we will elucidate the purification process
and introduce the purified function-permutation oracle $\HPO$ in \sct{hpo_intro}.
We then explain how to connect the actions of $\HPO$ and $\PR$
by introducing an isometry $\CPS$ in \sct{isometry}.
Lastly, we prove the main result in \sct{mainproof}.

\subsection{The Purified Function-Permutation Oracle} \label{sec:hpo_intro}
To analysis the behavior of making queries to $H_f\cdot P_{\sigma}$, we employ a  purified oracle similar to \cite{MH24}.
\begin{definition}[Purified Function-Permutation Oracle]
The \emph{purified function-permutation oracle} $\HPO$ is a unitary
on registers $\mathsf{A}$, $\mathsf{H}$ and $\mathsf{P}$, where
\begin{itemize}
	\item $\mathsf{H}$ is a register with Hilbert space $\H\reg{H}$ spanned by the orthogonal states $\ket{f}$ for all $f\in\bit{n-1}\to\bit{3d}$, 
	\item $\mathsf{P}$ is a register with Hilbert space $\H\reg{P}$ spanned by the orthogonal states $\ket{\sigma}$ for all $\sigma\in\S_N$. 
\end{itemize}
The unitary operator $\HPO$ acts as
\[
	\HPO\reg{AHP} \ket{x}\reg{A}\ket{f}\reg{H}\ket{\sigma}\reg{P} \coloneq
	\br{H_f P_{\sigma}}\reg{A}\ket{x}\reg{A}\ket{f}\reg{H}\ket{\sigma}\reg{P} =
	{H_f}\reg{A} \ket{\sigma(x)}\reg{A}\ket{f}\reg{H}\ket{\sigma}\reg{P} \enspace .
\]

\end{definition}

In the adversary's view, querying $H_f\cdot P_{\sigma}$ and querying $\HPO$ are identical in the following sense:

\begin{fact} \label{fact:puri}
	For any adversary $\A$ holding the register $\mathsf{A}$, the following two oracle are perfectly indistinguishable:
	\begin{itemize}
		\item Sample uniformly random $\sigma\in S_N$ and $f:\bit{n-1}\to\bit{3d}$.
			  On each query, apply $H_fP_{\sigma}$ on register $\mathsf{A}$.
		\item Initialize registers $\mathsf{H}$ and $\mathsf{P}$ in the state
			  \[
			  	\ket{\phi_{\{\}}}\coloneq
			  	\frac{1}{\sqrt{2^{3d(n-1)}}}\sum_{f:\bit{n-1}\to\bit{3d}} \ket{f}\reg{H}
			  	\otimes \frac{1}{\sqrt{N!}} \sum_{\sigma\in S_N} \ket{\sigma}\reg{P} \enspace.
			  \]
			  On each query, apply $\HPO$ on registers $\mathsf{A}$, $\mathsf{H}$ and $\mathsf{P}$.
	\end{itemize}
\end{fact}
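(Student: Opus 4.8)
The plan is to establish this by the standard purification argument: I will show that after any number $t$ of queries the reduced state on the adversary's registers is \emph{identical} in the two scenarios, which immediately gives perfect indistinguishability (the adversary only ever acts on and measures registers it controls, a subset of $\mathsf{A}\mathsf{B}$).

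First I would write down the adversary's state in each scenario. In the first scenario, conditioned on the sampled $(f,\sigma)$, the joint state on $\mathsf{A}\mathsf{B}$ after $t$ queries is $\ket{\A_t^{H_f P_\sigma}}\reg{AB} = \prod_{i=1}^t\br{(H_f P_\sigma)\reg{A}\cdot A^{(i)}\reg{AB}}\ket{0}\reg{AB}$, so averaging over the uniform choice of $(f,\sigma)$ yields the mixed state $\rho^{(1)}\reg{AB} = \expect{\sigma,f}{\ketbra{\A_t^{H_f P_\sigma}}\reg{AB}}$. For the purified scenario, I would prove by induction on the number of queries $k$ that the global state on $\mathsf{A}\mathsf{B}\mathsf{H}\mathsf{P}$ after $k$ queries equals
\[
	\frac{1}{\sqrt{2^{3d(n-1)}\,N!}}\sum_{f:\bit{n-1}\to\bit{3d}}\ \sum_{\sigma\in\S_N}\ \br{\prod_{i=1}^k(H_f P_\sigma)\reg{A}\cdot A^{(i)}\reg{AB}}\ket{0}\reg{AB}\otimes\ket{f}\reg{H}\ket{\sigma}\reg{P}\enspace.
\]
The base case $k=0$ is exactly $\ket{0}\reg{AB}\otimes\ket{\phi_{\{\}}}$. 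For the inductive step, the adversary's unitary $A^{(k+1)}\reg{AB}$ acts trivially on $\mathsf{H}\mathsf{P}$, so it commutes past those registers; and by definition $\HPO$ maps each term $\ket{\cdot}\reg{A}\ket{f}\reg{H}\ket{\sigma}\reg{P}$ to $(H_f P_\sigma)\reg{A}\ket{\cdot}\reg{A}\ket{f}\reg{H}\ket{\sigma}\reg{P}$, i.e.\ it is ``read-only'' on $\mathsf{H}\mathsf{P}$ and applies the controlled rotation on $\mathsf{A}$. Extending linearly over the superposition gives precisely the claimed expression with $k+1$ in place of $k$.

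Finally I would trace out $\mathsf{H}$ and $\mathsf{P}$. Since the states $\set{\ket{f}\reg{H}\ket{\sigma}\reg{P}}_{f,\sigma}$ are orthonormal, the off-diagonal cross terms vanish and the reduced state on $\mathsf{A}\mathsf{B}$ is
\[
	\rho^{(2)}\reg{AB} = \frac{1}{2^{3d(n-1)}\,N!}\sum_{f,\sigma}\ketbra{\A_t^{H_f P_\sigma}}\reg{AB} = \expect{\sigma,f}{\ketbra{\A_t^{H_f P_\sigma}}\reg{AB}} = \rho^{(1)}\reg{AB}\enspace.
\]
Since $\rho^{(1)}\reg{AB}=\rho^{(2)}\reg{AB}$ holds for every $t$, the adversary's output distribution is the same in both scenarios, establishing perfect indistinguishability. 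I do not anticipate a real obstacle here: this is a routine purification lemma, and the only point that requires care is the bookkeeping in the inductive step — verifying that the adversary's query unitaries act as the identity on $\mathsf{H}\mathsf{P}$ and that $\HPO$ never alters the contents of those registers, so that they faithfully carry the classical randomness $(f,\sigma)$ in uniform superposition.
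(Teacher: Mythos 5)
Your proof is correct: the induction showing that $\HPO$ acts as a controlled rotation that is read‑only on $\mathsf{H}\mathsf{P}$, followed by tracing out the orthonormal environment states to recover the mixture $\expect{f,\sigma}{\ketbra{\A_t^{H_fP_\sigma}}}$, is exactly the standard purification argument this statement rests on. The paper states this as a Fact without proof, and your write‑up is the canonical justification one would supply, so there is nothing to reconcile.
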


Consider the view of an adversary $\A$ who makes $t$ queries to
the oracle $H_f\cdot P_{\sigma}\cdot G$ where $G$ is an arbitrary unitary operator:
\[
	\rho_0\coloneq \expect{H_f P_{\sigma} }{\ketbra{\A_t^{H_f P_{\sigma} G}}\reg{AB}}
\]
where the view state is
\[
	\ket{\A_t^{H_f P_{\sigma} G}}\reg{AB} =
	\prod_{i=1}^t \br{ \br{H_f P_{\sigma} G} \reg{A} \cdot A\reg{AB}^{(i)} }
	\ket{0}\reg{AB}
\]
and the view of the same adversary $A$ who makes $t$ queries to
the oracle $\HPO$ and the unitary $G$:
\[
	\rho_1\coloneq \Tr\reg{HP}\br{ \ketbra{\A_t^{\HPO\cdot G } }\reg{ABHP} }
\]
where the view state is
\[
	\ket{\A_t^{\HPO\cdot G}}\reg{ABHP} =
	\prod_{i=1}^t \br{ \HPO \reg{AHP} \cdot G\reg{A} \cdot A\reg{AB}^{(i)} }
	\ket{0}\reg{AB} \ket{\phi_{\{\}}}\reg{HP} \enspace,
\]
\fct{puri} states that $\rho_0=\rho_1$. 
This enables us to analyze the purified state instead of the original mixed state.
The action of the purified function-permutation oracle on the purified state can be better understood by introducing $\mathsf{HP}$-relation states on registers $\mathsf{H}$ and $\mathsf{P}$.

\subsubsection{$\mathsf{HP}$-Relation States}
We define the following relation states on register $\mathsf{H}$ and $\mathsf{P}$.
\begin{definition}[$\mathsf{HP}$-Relation States]\label{lem:actofHP}
	For $0\leq t\leq N$ and a size-$t$ relation
	$R=\st{(x_1,y_1),\dots,(x_t,y_t)}\in\RR_t$, we define
	\[
		\ket{\phi_R}\reg{HP}\coloneq
		\frac{1}{\sqrt{2^{3d(n-1)}(N-t)!}}\sum_{f,\sigma}\sum_{b\in\bit{t}}\prod_{i=1}^t\bra{y_i}H_f\ket{y_i^{\oplus b_i}}\delta_{y_i^{\oplus b_i}=\sigma(x_i)}\ket{f}\reg{H}\ket{\sigma}\reg{P}
	\]
	where $x^{\oplus 0}=x$ and $x^{\oplus 1}=\overline{x}$ for any $x\in\bit{n}$, and $\delta_{y=x}$ is an indicator that equals $1$ iff strings $x$ and $y$ are identical in every coordinate.
\end{definition}

When considering a set of restricted relations, the corresponding
relation states forms an orthogonal basis.

\begin{lemma}\label{lem:orthophiR}
$\set{\ket{\phi_R}}_{R\in\DBR}$ forms a set of orthogonal vectors.
\end{lemma}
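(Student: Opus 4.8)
The plan is to compute the inner product $\langle \phi_R | \phi_{R'}\rangle$ for two relations $R, R' \in \DBR$ and show it vanishes unless $R = R'$. First I would expand both relation states using \defi{lem:actofHP}. Writing $R = \{(x_i, y_i)\}_{i=1}^t$ and $R' = \{(x'_j, y'_j)\}_{j=1}^{t'}$, the inner product becomes a sum over functions $f$, permutations $\sigma$, and Boolean strings $b \in \bit{t}$, $b' \in \bit{t'}$ of terms of the form
\[
\prod_{i=1}^t \overline{\langle y_i | H_f | y_i^{\oplus b_i}\rangle}\,\delta_{y_i^{\oplus b_i} = \sigma(x_i)} \cdot \prod_{j=1}^{t'} \langle y'_j | H_f | {y'_j}^{\oplus b'_j}\rangle\,\delta_{{y'_j}^{\oplus b'_j} = \sigma(x'_j)},
\]
summed with the appropriate normalization. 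Since the $\ket{f}\reg{H}\ket{\sigma}\reg{P}$ are orthonormal, this collapses to a single sum over $(f,\sigma)$ with both sets of $\delta$-constraints imposed simultaneously.

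The key structural observation is that the distinct-block condition makes the bookkeeping clean: since $(y_1,\dots,y_t) \in \newDB_t$, the blocks (length-$(n-1)$ suffixes) of the $y_i$ are all distinct, so $H_f$ — which acts as a direct sum of $2\times 2$ blocks indexed by suffix — factorizes across the $i$'s, and likewise for $R'$. I would first argue that for a nonzero contribution the \emph{multiset of blocks} appearing in $R$ must equal that appearing in $R'$, hence $\BIm R = \BIm R'$ as sets and (using distinctness of blocks) $t = t'$; if these block multisets differ, then for every $(f,\sigma)$ some factor $\langle y | H_f | \cdot \rangle$ is forced to be an off-support matrix entry (a $2\times 2$ block matched against a basis vector in a different block) and vanishes. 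Once the blocks match, within each block I would pair up the unique index $i$ from $R$ and unique index $j$ from $R'$ sharing that block, and reduce to a per-block identity: summing over the $2\times 2$ Haar unitary (equivalently, over the angles $\theta_y, \alpha_y, \beta_y$ determined by $f(y)$) of $\overline{\langle y_i | U | y_i^{\oplus b_i}\rangle} \langle y'_j | U | {y'_j}^{\oplus b'_j}\rangle$ yields, by Schur orthogonality / direct integration over $\mathrm{U}(2)$, something proportional to $\delta_{y_i = y'_j}\,\delta_{b_i = b'_j}$ (up to the $\theta$-discretization, which I would handle by noting $f_\theta$ ranges over enough values, or more simply by working with the idealized $H_f$ and transferring via \lem{close}). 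Combined with the $\delta$-constraints $y_i^{\oplus b_i} = \sigma(x_i)$ and ${y'_j}^{\oplus b'_j} = \sigma(x'_j)$, this forces $\sigma(x_i) = \sigma(x'_j)$ hence $x_i = x'_j$, so $(x_i,y_i) = (x'_j, y'_j)$. Running this over all blocks gives $R = R'$.

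Finally, for the normalization when $R = R'$: the surviving terms are those with $b = b'$ and a consistent $\sigma$ on $\DomR$; the free permutation values on $\bit{n}\setminus\DomR$ contribute $(N-t)!$ and the free function values on blocks not in $\BImR$ contribute a factor cancelling part of the $2^{3d(n-1)}$, while the per-block angle integrals contribute the rest, so that $\langle \phi_R | \phi_R\rangle$ is a fixed positive constant (I do not need it to be $1$ for orthogonality, only that the vectors are nonzero and mutually orthogonal). The main obstacle I anticipate is the off-diagonal block-mismatch case: carefully showing that whenever the block multisets of $R$ and $R'$ differ there is no $(f,\sigma)$ making all $\delta$'s fire while keeping all $H_f$-entries on-support — this requires tracking how $\sigma$ must simultaneously route $x_i$ into block of $y_i$ and $x'_j$ into block of $y'_j$, and using $R, R' \in \DBR$ (so the $x$'s are also in distinct blocks) to derive a contradiction from any block mismatch. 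The $2\times2$ Haar-integral orthogonality step is routine but must be stated precisely; I would phrase it as a standalone sub-claim about $\int_{\mathrm{U}(2)} \overline{U_{ab}} U_{cd}\,dU = \tfrac12 \delta_{ac}\delta_{bd}$ and its discretized analogue.
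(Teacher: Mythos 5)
Your overall strategy---expand the inner product, exploit independence of $f$ across blocks, and kill cross terms via moment identities for the $2\times2$ unitaries---is essentially the paper's proof, which isolates those identities as \cref{lem:Hfproperty} and then runs the same case analysis on block structure. However, two of your justifications would fail as written. First, the block-mismatch case: if a block occurs in $\BImR$ but not in $\BImS$, the factor $\bra{y_i}H_f\ket{y_i^{\oplus b_i}}$ is \emph{not} an off-support entry---$y_i$ and $y_i^{\oplus b_i}$ always lie in the same block by construction, so this entry is generically nonzero for every fixed $f$. The term dies only after averaging: an unmatched block leaves a lone \emph{first} moment $\E_f\!\left[\bra{y_i}H_f\ket{y_i^{\oplus b_i}}\right]=0$ (item 1 of \cref{lem:Hfproperty}). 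You need that argument, not a support argument.

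Second, and more seriously, you cannot get exact orthogonality by ``working with the idealized $H_f$ and transferring via \lem{close}'': that lemma only provides a negligible error bound, and an approximation cannot upgrade near-orthogonality to the exact orthogonality the statement asserts and that the later construction requires ($\CPS=\sum_R\ketbratwo{R}{\phi_R}$ must be an isometry). The correct fix is your other suggestion made precise: work with the actual discretized $H_f$ and verify that the needed moments vanish \emph{exactly}. They do, because every first moment and every off-diagonal second moment of $U(\alpha_y,\beta_y,\theta_y)$ factors as a $\theta$-dependent amplitude times $\E\!\left[\e^{\i k\alpha_y}\right]$, $\E\!\left[\e^{\i k\beta_y}\right]$ or $\E\!\left[\e^{\i(\beta_y-\alpha_y)}\right]$ with $k\in\{\pm1,\pm2\}$, and these are exact averages over $2^d$-th roots of unity, hence zero. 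Note that full $\su(2)$ Schur orthogonality actually fails under the discretization (e.g.\ $\E\!\left[|U_{11}|^2\right]=\tfrac12+2^{-d-1}\neq\tfrac12$), but orthogonality only needs the off-diagonal moments, and the normalization uses the pointwise identity $\abs{\bra{x}H_f\ket{x}}^2+\abs{\bra{x}H_f\ket{\bar x}}^2=1$. With these two repairs your case analysis goes through and coincides with the paper's.
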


Before proving this lemma, we show some properties of $H_f$:
\begin{lemma}\label{lem:Hfproperty}
For any $x\in\bit{n}$, 
\begin{enumerate}
\item $\expect{f}{\bra{x}H_f\ket{x}}=\expect{f}{\bra{x}H_f\ket{\bar{x}}}=0$;
\item $\expect{f}{\overline{\bra{x}H_f\ket{x}}\bra{x}H_f\ket{\bar{x}}}=0$;
\item $\expect{f}{\overline{\bra{x}H_f\ket{x}}\bra{\bar{x}}H_f\ket{\bar{x}}}=\expect{f}{\overline{\bra{x}H_f\ket{\bar{x}}}\bra{\bar{x}}H_f\ket{x}}=0.$
\end{enumerate}

\end{lemma}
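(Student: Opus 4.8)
The plan is to reduce all three identities to one elementary fact about the random angles. Over a uniformly random $f$, for each suffix $y\in\bit{n-1}$ the three blocks $f_\alpha(y),f_\beta(y),f_\theta(y)\in\bit{d}$ are independent and uniform, so $\alpha_y=2\pi\val{f_\alpha(y)}$ and $\beta_y=2\pi\val{f_\beta(y)}$ are independent of each other and of $\theta_y$, and each of $\alpha_y,\beta_y$ is uniform over the grid $\set{2\pi k/2^d:0\le k<2^d}$. Hence for every integer $m$ with $0<\abs{m}<2^d$ we have $\E_f\Br{e^{\i m\alpha_y}}=\frac1{2^d}\sum_{k=0}^{2^d-1}\br{e^{\i 2\pi m/2^d}}^k=0$, since this is a sum of $2^d$-th roots of unity with $2^d\nmid m$; the same holds for $\beta_y$. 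As $d=5n\ge2$, this applies to $m\in\set{\pm1,\pm2}$, which is all we need.

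Next I would record the structural fact that $H_f$ is block-diagonal with respect to the $(n-1)$-bit suffix: writing a string as $a\Vert y$ with $a\in\bit{1}$ and $y\in\bit{n-1}$, we have $\bra{x'}H_f\ket{x}=0$ unless $x'$ shares the suffix $y$ of $x$, and on the two-dimensional block spanned by $\ket{0\Vert y},\ket{1\Vert y}$ the operator $H_f$ acts as the $2\times2$ matrix $U(\alpha_y,\beta_y,\theta_y)$ from \eq{widehatq}. Therefore each of $\bra{x}H_f\ket{x}$, $\bra{x}H_f\ket{\bar x}$, $\bra{\bar x}H_f\ket{\bar x}$, $\bra{\bar x}H_f\ket{x}$ is an entry of $U(\alpha_y,\beta_y,\theta_y)$: the two diagonal entries are $e^{\pm\i\alpha_y}\cos\theta_y$ and the two off-diagonal entries are $\mp e^{\pm\i\beta_y}\sin\theta_y$, where which of $\ket{x},\ket{\bar x}$ occupies which row/column depends only on the first bit of $x$ and does not affect any of the conclusions below.

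Each item then follows by pulling a pure phase $e^{\i m\alpha_y}$ or $e^{\i m\beta_y}$ (with $\abs{m}\in\set{1,2}$) out of the expectation using the independence above. For (1), $\bra{x}H_f\ket{x}=e^{\pm\i\alpha_y}\cos\theta_y$ gives $\E_f\Br{\bra{x}H_f\ket{x}}=\E_f\Br{e^{\pm\i\alpha_y}}\E_f\Br{\cos\theta_y}=0$, and $\bra{x}H_f\ket{\bar x}=\mp e^{\pm\i\beta_y}\sin\theta_y$ vanishes in expectation for the same reason. For (2), $\overline{\bra{x}H_f\ket{x}}\bra{x}H_f\ket{\bar x}=\mp e^{\mp\i\alpha_y}e^{\pm\i\beta_y}\cos\theta_y\sin\theta_y$, whose expectation factors as $\mp\E_f\Br{e^{\mp\i\alpha_y}}\E_f\Br{e^{\pm\i\beta_y}}\E_f\Br{\cos\theta_y\sin\theta_y}=0$. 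For (3), the two block-diagonal entries carry opposite exponent signs, so $\overline{\bra{x}H_f\ket{x}}\bra{\bar x}H_f\ket{\bar x}=e^{\mp2\i\alpha_y}\cos^2\theta_y$ and its expectation is $\E_f\Br{e^{\mp2\i\alpha_y}}\E_f\Br{\cos^2\theta_y}=0$ by the estimate with $\abs{m}=2$; symmetrically $\overline{\bra{x}H_f\ket{\bar x}}\bra{\bar x}H_f\ket{x}=-e^{\mp2\i\beta_y}\sin^2\theta_y$ has expectation $0$.

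This is a direct computation, so there is no real obstacle; the only points requiring care are the bookkeeping of the sign and row/column conventions inside the $2\times2$ block (one should verify all four matrix elements under either ordering of $\ket{x},\ket{\bar x}$) and the appearance of the doubled frequency $\abs{m}=2$ in item (3), which is why one needs the phase grid to have strictly more than two points, i.e.\ $d\ge2$ (comfortably satisfied by $d=5n$). Everything else is the root-of-unity cancellation together with the mutual independence of $\alpha_y,\beta_y,\theta_y$.
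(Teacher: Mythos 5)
Your proof is correct and follows essentially the same route as the paper's: identify each bracket as an entry of the $2\times2$ block $U(\alpha_y,\beta_y,\theta_y)$ acting on $\mathrm{span}\{\ket{0y},\ket{1y}\}$, then kill the expectation via $\E_f[e^{\i m\alpha_y}]=\E_f[e^{\i m\beta_y}]=0$ for $m\in\{\pm1,\pm2\}$. The only (minor) difference is that you make the independence of $\alpha_y,\beta_y,\theta_y$ and the root-of-unity cancellation explicit, where the paper simply asserts $\E_f[e^{\i\alpha_y}]=\E_f[e^{\i(\beta_y-\alpha_y)}]=\E_f[e^{-2\i\alpha_y}]=0$; both are the same computation.
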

\begin{proof}
By \cref{eq:widehatq}, if $x=0y$ for some $y\in\bit{n-1}$, then
\begin{enumerate}
\item  \[\bra{x}H_f\ket{x}=\e^{\i\alpha_y}\cos\theta_y\quad\mbox{and}\quad\bra{x}H_f\ket{\bar{x}}=-\e^{\i\beta_y}\sin\theta_y.\] Since $\expect{f}{\e^{\i\alpha_y}}=\expect{f}{\e^{\i\beta_y}}=0$, item 1 holds. 

\item \[\overline{\bra{x}H_f\ket{x}}\bra{x}H_f\ket{\bar{x}}=-\e^{\i(\beta_y-\alpha_y)}\sin\theta_y\cos\theta_y.\]
Since $\expect{f}{\e^{\i(\beta_y-\alpha_y)}}=0$, item 2 holds. 
\item \[\overline{\bra{x}H_f\ket{x}}\bra{\bar{x}}H_f\ket{\bar{x}}=\e^{-2\i\alpha_y}\cos^2\theta_y\quad\mbox{and}\quad\overline{\bra{x}H_f\ket{\bar{x}}}\bra{\bar{x}}H_f\ket{x}=-\e^{-2\i\beta_y}\sin^2\theta_y.\]
Since $\expect{f}{\e^{-2\i\alpha_y}}=\expect{f}{\e^{-2\i\beta_y}}=0$, item 3 holds. 
\end{enumerate}
The case when $x=1y$ can be argued similarly.
\end{proof}

\begin{proof}[Proof of \lem{orthophiR}]
Consider two relations $R,S\in\DBR$, where $R=\set{(x_1,y_1),\dots,(x_{\abs{R}},y_{\abs{R}})}$ and $S=\set{(x'_1,y'_1),\dots,(x'_{\abs{S}},y'_{\abs{S}})}$. By \cref{lem:Hfproperty} item 1, if $\abs{R}\ne\abs{S}$, then $\braket{\phi_R|\phi_S}=0$. So we may assume $\abs{R}=\abs{S}=t.$ Then
\begin{align*}
	&\braket{\phi_R|\phi_S}=\\
	&\frac{1}{2^{3d(n-1)}(N-t)!}\sum_{f,\sigma}\sum_{b,b'\in\bit{t}}\prod_{i=1}^{t}\overline{\bra{y_i}H_f\ket{y_i^{\oplus b_i}}}\delta_{y_i^{\oplus b_i}=\sigma(x_i)}\bra{y_i'}H_f\ket{\br{y_i'}^{\oplus b_i'}}\delta_{\br{y_i'}^{\oplus b_i'}=\sigma(x_i')}.
\end{align*}

 There are two cases to consider:
\begin{itemize}
\item $R=S$. Then by \cref{lem:Hfproperty} item 2 and the fact that $\abs{\bra{x}H_f\ket{x}}^2+\abs{\bra{x}H_f\ket{\bar{x}}}^2=1$ for all $x\in\bit{n}$, it is not hard to check that $\braket{\phi_R|\phi_S}=1.$
\item $R\ne S.$ Now we consider three sub-cases:
\begin{itemize}
\item $\ImR=\ImS$, then $\DomR\ne\DomS$. Without loss of generality, we can assume that $y_i=y_i'$ for all $i\in[t]$. 
	Fix $i$ such that $x_i\neq x_i'$.
	For all $b_i,b_i'\in\bit{}$, if $b_i=b_i'$, then for all $\sigma$, $\delta_{y_i^{\oplus b_i}=\sigma(x_i)}\delta_{\br{y_i'}^{\oplus b_i'}=\sigma(x_i')}=0$; if $b_i\ne b_i'$, then by \cref{lem:Hfproperty} item 2, 
\[\sum_f\overline{\bra{y_i}H_f\ket{y_i^{\oplus b_i}}}\bra{y_i'}H_f\ket{\br{y_i'}^{\oplus b_i'}}=0.\] 
Both cases imply $\braket{\phi_R|\phi_S}=0$.
\item $\ImR\ne\ImS$ and $\BImR=\BImS.$ Without loss of generality, we can assume that there exists $i\in[t]$ such that $y_i'=\overline{y_i}$. 
If $b_i=b_i'$, then by \cref{lem:Hfproperty} item 3, \[\sum_f\overline{\bra{y_i}H_f\ket{y_i^{\oplus b_i}}}\bra{y_i'}H_f\ket{\br{y_i'}^{\oplus b_i'}}=0;\]
if $b_i\ne b_i'$, then by \cref{lem:Hfproperty} item 2, \[\sum_f\overline{\bra{y_i}H_f\ket{y_i^{\oplus b_i}}}\bra{y_i'}H_f\ket{\br{y_i'}^{\oplus b_i'}}=0.\]
Both cases imply $\braket{\phi_R|\phi_S}=0$.
\item $\BImR\ne\BImS.$ By \cref{lem:Hfproperty} item 1, $\braket{\phi_R|\phi_S}=0$.
\end{itemize}
\end{itemize}
\end{proof}

\subsubsection{Action of $\HPO$}
Using the relation states defined in the previous section, the action of $\HPO$ oracle is described by the following lemma:
\begin{lemma} \label{lem:action_hpo}
	For $0\leq t\leq N$, $x\in\bit{n}$ and a size-$t$ relation
	$R=\st{(x_1,y_1),\dots,(x_t,y_t)}\in\RR_t$, we have
	\[
		\HPO\reg{AHP}\ket{x}\reg{A}\ket{\phi_R}\reg{HP} =  
		\frac{1}{\sqrt{N-t}}\sum_{y\in\bit{n}}
		\ket{y}\reg{A} \otimes \ket{\phi_{R\cup\st{x,y}}}\reg{HP}\enspace .
	\]
\end{lemma}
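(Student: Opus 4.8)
The plan is to establish the identity by a direct computation: expand $\HPO\ket{x}\reg{A}\ket{\phi_R}\reg{HP}$ in the product basis $\set{\ket{f}\reg{H}\ket{\sigma}\reg{P}}$ and match it against the expansion of the right‑hand side. The only non‑routine step is a preliminary simplification of the definition of $\ket{\phi_R}$: I would first carry out the sum over the flip bits $b\in\bit{t}$. The relevant fact is that $H_f$ acts block‑diagonally on the $(n-1)$‑bit suffix (\cref{eq:widehatq}), so $\bra{y_i}H_f\ket{z}=0$ unless $z$ and $y_i$ lie in the same block, i.e. $z\in\set{y_i,\overline{y_i}}$. Consequently, for every fixed $i,f,\sigma$,
\[
\sum_{b_i\in\set{0,1}}\bra{y_i}H_f\ket{y_i^{\oplus b_i}}\,\delta_{y_i^{\oplus b_i}=\sigma(x_i)}=\bra{y_i}H_f\ket{\sigma(x_i)},
\]
because if $\sigma(x_i)$ is not in the block of $y_i$ then both sides vanish (the left by the Kronecker deltas, the right by block‑diagonality), and otherwise the unique $b_i$ with $y_i^{\oplus b_i}=\sigma(x_i)$ is the only surviving term and contributes the stated matrix entry. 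Since the summand in the definition of $\ket{\phi_R}$ is a product over $i$, one may distribute $\sum_{b\in\bit{t}}$ through the product and obtain the compact form
\[
\ket{\phi_R}\reg{HP}=\frac{1}{\sqrt{2^{3d(n-1)}(N-t)!}}\sum_{f,\sigma}\ \prod_{i=1}^t\bra{y_i}H_f\ket{\sigma(x_i)}\ \ket{f}\reg{H}\ket{\sigma}\reg{P}.
\]

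Next I would apply $\HPO$ to $\ket{x}\reg{A}\ket{\phi_R}\reg{HP}$ term by term. By definition $\HPO\ket{x}\reg{A}\ket{f}\reg{H}\ket{\sigma}\reg{P}=(H_fP_\sigma)\reg{A}\ket{x}\reg{A}\ket{f}\reg{H}\ket{\sigma}\reg{P}$, and $H_fP_\sigma\ket{x}=H_f\ket{\sigma(x)}=\sum_{y\in\bit{n}}\bra{y}H_f\ket{\sigma(x)}\ket{y}$. Substituting this into the compact form above and pulling the sum over $y$ to the front yields
\[
\HPO\ket{x}\reg{A}\ket{\phi_R}\reg{HP}=\frac{1}{\sqrt{2^{3d(n-1)}(N-t)!}}\sum_{y\in\bit{n}}\ket{y}\reg{A}\otimes\sum_{f,\sigma}\Br{\prod_{i=1}^t\bra{y_i}H_f\ket{\sigma(x_i)}}\bra{y}H_f\ket{\sigma(x)}\ \ket{f}\reg{H}\ket{\sigma}\reg{P}.
\]

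Finally I would recognize the inner $(f,\sigma)$‑sum. Labelling $R\cup\set{(x,y)}=\set{(x_1,y_1),\dots,(x_t,y_t),(x,y)}$, the product $\br{\prod_{i=1}^t\bra{y_i}H_f\ket{\sigma(x_i)}}\bra{y}H_f\ket{\sigma(x)}$ is exactly the $(t+1)$‑fold product that appears in the compact form of $\ket{\phi_{R\cup\set{(x,y)}}}$, whose prefactor is $1/\sqrt{2^{3d(n-1)}(N-t-1)!}$; hence the inner sum equals $\sqrt{2^{3d(n-1)}(N-t-1)!}\cdot\ket{\phi_{R\cup\set{(x,y)}}}\reg{HP}$. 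Since the ratio of prefactors is $\sqrt{(N-t-1)!\,/\,(N-t)!}=1/\sqrt{N-t}$, this gives
\[
\HPO\ket{x}\reg{A}\ket{\phi_R}\reg{HP}=\frac{1}{\sqrt{N-t}}\sum_{y\in\bit{n}}\ket{y}\reg{A}\otimes\ket{\phi_{R\cup\set{(x,y)}}}\reg{HP},
\]
which is the claim (for $t<N$, so that all prefactors are finite; the case $t=N$ is degenerate). I do not expect a genuine obstacle: the one subtlety worth stating carefully is the collapse of the $b$‑sum in the first step, which is precisely where the $2\times2$‑block structure of $H_f$ enters and which is what makes the $(f,\sigma)$‑amplitude of $\ket{\phi_R}$ a plain product over the pairs of $R$; granting that, the remainder is just normalization bookkeeping.
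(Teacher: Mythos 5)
Your proposal is correct and follows essentially the same route as the paper: both expand $\HPO$ on the $\ket{f}\ket{\sigma}$ basis, use the $2\times2$ block-diagonality of $H_f$ (i.e., $\bra{y}H_f\ket{z}=0$ unless $z\in\{y,\overline{y}\}$) to recognize the resulting amplitude as that of $\ket{\phi_{R\cup\{(x,y)\}}}$, and read off the factor $1/\sqrt{N-t}$ from the ratio of normalizers. Your preliminary collapse of the $b$-sums into the compact product $\prod_i\bra{y_i}H_f\ket{\sigma(x_i)}$ is a cosmetic reordering of the same manipulation the paper performs when it writes $H_f\ket{\sigma(x)}=\sum_{y,b}\bra{y}H_f\ket{y^{\oplus b}}\delta_{y^{\oplus b}=\sigma(x)}\ket{y}$ and absorbs the new factor into a $(t+1)$-fold product.
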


\begin{proof}
	Expanding the definitions, we have
	\begin{align}
		&\HPO\reg{AHP}\ket{x}\reg{A}\ket{\phi_R}\reg{HP} \nonumber\\
		=&\frac{1}{\sqrt{2^{3d(n-1)}(N-t)!}}\sum_{f,\sigma}\sum_{b\in\bit{t}}\prod_{i=1}^t\bra{y_i}H_f\ket{y_i^{\oplus b_i}}\delta_{y_i^{\oplus b_i}=\sigma(x_i)}
		H_f\ket{\sigma(x)}\reg{A}\ket{f}\reg{H}\ket{\sigma}\reg{P} \label{eq:expand}
	\end{align}
	Note that for a fix $f:\bit{n-1}\to\bit{3d}$, the matrix $H_f$ can be expressed as
	\[
		H_f = \sum_{y,y'\in\bit{n}} \bra{y}H_f\ket{y'} \ketbratwo{y}{y'}
	\]
	and $\bra{y}H_f\ket{y'} = 0$ if $y$ and $y'$ are not in the same block.
	Therefore, we can write $H_f$ as
	\[
		H_f = \sum_{y\in\bit{n},b\in\bit{}} \bra{y}H_f\ket{y^{\oplus b}} \ketbratwo{y}{y^{\oplus b}} \enspace.
	\]
	And we can write $\ket{\sigma(x)}$ as $\sum_{y\in\bit{n}} \delta_{y=\sigma(x)}\ket{y}$.
	Therefore, we have
	\begin{align*}
		H_f\ket{\sigma(x)} &= \br{\sum_{y\in\bit{n},b\in\bit{}} \bra{y}H_f\ket{y^{\oplus b}} \ketbratwo{y}{y^{\oplus b}}}\cdot\br{\sum_{y\in\bit{n}} \delta_{y=\sigma(x)}\ket{y}}\\
		&= \sum_{y\in\bit{n},b\in\bit{}} \bra{y}H_f\ket{y^{\oplus b}} \delta_{y^{\oplus b}=\sigma(x)} \ket{y} \enspace.
	\end{align*}
	Inserting this into \eq{expand}, we have
	\begin{align*}
		&\HPO\reg{AHP}\ket{x}\reg{A}\ket{\phi_R}\reg{HP} =\\
		& \frac{1}{\sqrt{N-t}}\sum_{y_{t+1}\in\bit{n}} \ket{y_{t+1}}\reg{A} \otimes
		\frac{1}{\sqrt{2^{3d(n-1)}(N-t-1)!}}\sum_{f,\sigma}\sum_{b\in\bit{t+1}}\prod_{i=1}^{t+1}\bra{y_i}H_f\ket{y_i^{\oplus b_i}}\delta_{y_i^{\oplus b_i}=\sigma(x_i)}
		\ket{f}\reg{H}\ket{\sigma}\reg{P}\\
		&= \frac{1}{\sqrt{N-t}}\sum_{y_{t+1}\in\bit{n}}
		\ket{y_{t+1}}\reg{A} \otimes \ket{\phi_{R\cup\st{x,y_{t+1}}}}\reg{HP} \enspace.
	\end{align*}
\end{proof}

By expanding $\HPO$, we can then rewrite the view state of an adversary $A$
with query access to
the oracle $\HPO$ and the unitary $G$ in terms of $\mathsf{HP}$-relation states:
\begin{corollary}\label{clr:actofhpo}
	For an arbitrary $n$-qubit unitary operator $G$ and a $t$-query
	adversary $\A$ with query access to $\HPO\cdot G$, define the state after $\A$ finishing all the queries to be
	\[
		\ket{\A_t^{\HPO\cdot G}}\reg{ABHP} \coloneq
		\prod_{i=1}^t \br{ \HPO \reg{AHP} \cdot G\reg{A} \cdot A\reg{AB}^{(i)} }
		\ket{0}\reg{AB} \ket{\phi_{\{\}}}\reg{HP} \enspace.
	\]
	Then we have:
	\[
		\ket{\A_t^{\HPO\cdot G}}\reg{ABHP} =
		\sqrt{\frac{(N-t)!}{N!}}
		\sum_{
		\substack{(x_1,\dots,x_t)\in\br{\bit{n}}^t\\(y_1,\dots,y_t)\in\br{\bit{n}}^t}
		}
		\prod_{i=1}^t \br{ \ketbratwo{y_i}{x_i}\reg{A} \cdot G\reg{A} \cdot A\reg{AB}^{(i)} }
		\ket{0}\reg{AB} \ket{\phi_{\{ (x_i,y_i) \}_{i=1}^t}}\reg{HP}
	\]
\end{corollary}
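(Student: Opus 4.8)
The plan is to prove the identity by induction on the number of queries $t$, with \lem{action_hpo} as the only substantive ingredient. For the base case $t=0$, the left-hand side is by definition $\ket{0}\reg{AB}\ket{\phi_{\{\}}}\reg{HP}$; on the right-hand side the empty product is the identity, the empty double sum contributes the single term $\ket{0}\reg{AB}\ket{\phi_{\{\}}}\reg{HP}$, and $\sqrt{(N-0)!/N!}=1$, so the two sides agree.

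For the inductive step, suppose the claimed formula holds after $t$ queries. Using
\[
\ket{\A_{t+1}^{\HPO\cdot G}}\reg{ABHP}=\HPO\reg{AHP}\cdot G\reg{A}\cdot A^{(t+1)}\reg{AB}\,\ket{\A_t^{\HPO\cdot G}}\reg{ABHP},
\]
I would substitute the inductive hypothesis and note that $G\reg{A}$ and $A^{(t+1)}\reg{AB}$ act trivially on the $\mathsf{HP}$ registers, so each of them commutes past the relation states $\ket{\phi_{\{(x_i,y_i)\}_{i=1}^t}}\reg{HP}$ and is absorbed into the existing product of operators acting on $\ket{0}\reg{AB}$. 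Next I would insert the resolution of the identity $\sum_{x_{t+1}\in\bit{n}}\ketbra{x_{t+1}}\reg{A}$ on register $\mathsf{A}$ immediately to the right of $\HPO\reg{AHP}$; in each resulting summand $\HPO\reg{AHP}$ then acts on a basis vector, and \lem{action_hpo} gives $\HPO\reg{AHP}\ket{x_{t+1}}\reg{A}\ket{\phi_R}\reg{HP}=\tfrac{1}{\sqrt{N-t}}\sum_{y_{t+1}\in\bit{n}}\ket{y_{t+1}}\reg{A}\ket{\phi_{R\cup\{(x_{t+1},y_{t+1})\}}}\reg{HP}$ with $R=\{(x_i,y_i)\}_{i=1}^t$. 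The ket $\ket{y_{t+1}}\reg{A}$ recombines with the $\bra{x_{t+1}}\reg{A}$ coming from the inserted identity into the factor $\ketbratwo{y_{t+1}}{x_{t+1}}\reg{A}$, which extends the product to $i=1,\dots,t+1$; the multiset union $R\cup\{(x_{t+1},y_{t+1})\}$ equals $\{(x_i,y_i)\}_{i=1}^{t+1}$; and the double sum now ranges over $(x_1,\dots,x_{t+1})$ and $(y_1,\dots,y_{t+1})$ in $\br{\bit{n}}^{t+1}$. Collecting scalars, $\sqrt{(N-t)!/N!}\cdot(1/\sqrt{N-t})=\sqrt{(N-t-1)!/N!}=\sqrt{(N-(t+1))!/N!}$, exactly the prefactor in the statement for $t+1$ queries, which closes the induction.

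I do not expect a genuine obstacle here: the argument is essentially the $\HPO$-version of \fct{actofpr} for the path-recording oracle $\PR$, proved by the same telescoping induction. The one point that warrants a sentence of care is that \lem{action_hpo} is phrased for an ordered list $R=\{(x_1,y_1),\dots,(x_t,y_t)\}$, so I would note explicitly that $\ket{\phi_R}\reg{HP}$ depends only on the underlying multiset (its defining sum is symmetric under permuting the pairs) and that the union there is multiset union, so that appending a fresh index $t+1$ and identifying $R\cup\{(x_{t+1},y_{t+1})\}$ with $\{(x_i,y_i)\}_{i=1}^{t+1}$ is legitimate even when some of the pairs coincide.
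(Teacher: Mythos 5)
Your induction is correct and is precisely what the paper intends when it states this as a corollary of \lem{action_hpo} ("by expanding $\HPO$"): iterating that lemma query by query, inserting a resolution of the identity on register $\mathsf{A}$ so the output ket and input bra combine into $\ketbratwo{y_i}{x_i}\reg{A}$, and accumulating the normalization $\prod_{i=0}^{t-1}\frac{1}{\sqrt{N-i}}=\sqrt{(N-t)!/N!}$. The remark that $\ket{\phi_R}\reg{HP}$ depends only on the multiset $R$ (so the union in \lem{action_hpo} can be read as multiset union and reindexed freely) is exactly the right care to take; the argument is complete.
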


\subsection{Connecting $\HPO$ and $\PR$ via $\CPS$ Isometry}\label{sec:isometry}
Recall the path-recording oracle $\PR$ we introduce earlier, acting on 
registers $\mathsf{A}$, $\mathsf{X}$ and $\mathsf{Y}$,
and the state after $\A$'s queries to $\PR\cdot G$:
\begin{align*}
	&\ket{\A_t^{\PR\cdot G}}\reg{ABXY}=\prod_{i=1}^t\br{\PR\cdot G\reg{A}\cdot A^{(i)}\reg{AB}}\ket{0}\reg{AB}\ket{\{\}}\reg{XY}\\
	&= \sqrt{\prod_{i=0}^{t-1} \frac{1}{(N-2i)} }
		\sum_{
		\substack{(x_1,\dots,x_t)\in\br{\bit{n}}^t\\
		(y_1,\dots,y_t)\in\DB}
		}
		\prod_{i=1}^t \br{ \ketbratwo{y_i}{x_i}\reg{A} \cdot G\reg{A} \cdot A\reg{AB}^{(i)} }
		\ket{0}\reg{AB} \ket{{\{ (x_i,y_i) \}_{i=1}^t}}\reg{XY} \enspace.
\end{align*}

By defining the following isometry,
we are able to connect the behavior of
the purified function-permutation oracle $\HPO$ and
the path-recording oracle $\PR$.

\begin{definition}We define an isometry, denoted $\CPS:\H\reg{P}\otimes\H\reg{F}\to \H\reg{X}\otimes\H\reg{Y}$, as:
\[\CPS:=\sum_{R\in\DBR}\ketbratwo{R}{\phi_R}.\]
\end{definition}

\begin{lemma}\label{lem:compress}
Define the \emph{distinct block subspace projector} for \HPO-relation states as
\[\DBRproj:=\sum_{R\in\DBRt}\ketbra{\phi_R}.\]
We have that for all $n$-qubit unitaries $G$,
\[
\CPS\cdot\DBRproj\cdot\ket{\A_t^{\HPO\cdot G}}\reg{ABHP}=\br{1+O\!\br{\frac{t^2}{N}}}\DBproj\cdot\ket{\A_t^{\PR\cdot G}}\reg{ABXY}.
\]
\end{lemma}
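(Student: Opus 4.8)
The plan is to expand both sides using the explicit formulas of \clr{actofhpo} and \fct{actofpr}, to show that the two projectors $\DBproj$ and $\DBRproj$ cut the respective sums down to the \emph{same} index set — the relations $R\in\DBRt$ — and then to read off that the only remaining difference is the ratio of the two normalization prefactors, which is $1+O(t^2/N)$.

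I would start with the right-hand side. By \fct{actofpr}, $\ket{\A_t^{\PR\cdot G}}\reg{ABXY}$ is a sum over $(x_1,\dots,x_t)\in(\bit n)^t$ and $(y_1,\dots,y_t)\in\DB$, with prefactor $\sqrt{\prod_{i=0}^{t-1}(N-2i)^{-1}}$, of terms $c_{(x,y)}\otimes\ket{\{(x_i,y_i)\}_{i=1}^t}\reg{XY}$, where $c_{(x,y)}:=\prod_{i=1}^t\br{\ketbratwo{y_i}{x_i}\reg{A}\cdot G\reg{A}\cdot A^{(i)}\reg{AB}}\ket0\reg{AB}$. Membership in $\DB$ is invariant under permuting a tuple, and the relation state $\ket{R}\reg{XY}$ is a superposition over $\sigma\in\S_t$ of $S_\sigma\ket{x_1,\dots,x_t}\reg{X}\otimes S_\sigma\ket{y_1,\dots,y_t}\reg{Y}$; hence $\DBproj$ fixes $\ket{R}\reg{XY}$ when $(x_1,\dots,x_t)\in\DB$ and kills it otherwise. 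Combined with the constraint $(y_1,\dots,y_t)\in\DB$ already present, this yields
\[
 \DBproj\cdot\ket{\A_t^{\PR\cdot G}}\reg{ABXY}=\sqrt{\prod_{i=0}^{t-1}\tfrac1{N-2i}}\sum_{\substack{(x_1,\dots,x_t)\in\DB\\(y_1,\dots,y_t)\in\DB}}c_{(x,y)}\otimes\ket{\{(x_i,y_i)\}_{i=1}^t}\reg{XY},
\]
i.e.\ the sum ranges exactly over $R\in\DBRt$.

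Turning to the left-hand side, \clr{actofhpo} gives the same shape but with prefactor $\sqrt{(N-t)!/N!}=\sqrt{\prod_{i=0}^{t-1}(N-i)^{-1}}$, with the sum over \emph{all} $(x_1,\dots,x_t),(y_1,\dots,y_t)\in(\bit n)^t$, and with $\ket{\{(x_i,y_i)\}_{i=1}^t}\reg{XY}$ replaced by $\ket{\phi_{\{(x_i,y_i)\}_{i=1}^t}}\reg{HP}$. The heart of the proof — and what I expect to be the main obstacle — is the claim that $\DBRproj$ annihilates every term whose relation $R:=\{(x_i,y_i)\}_{i=1}^t$ lies outside $\DBRt$; equivalently $\braket{\phi_S|\phi_R}=0$ whenever $S\in\DBRt$ and $R\notin\DBRt$. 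I would prove this by a case analysis mirroring the proof of \lem{orthophiR} but now allowing $R\notin\DBR$: writing $\braket{\phi_S|\phi_R}$ as an expectation over $f$ and $\sigma$ and factoring the $H_f$-matrix-element product over the blocks of the $y$-coordinates, one of two things happens. (a) If $\BImR\ne\BImS$ — which is forced whenever $R$ has a repeated or flipped-repeated $y$-coordinate (so $\abs{\BImR}<2t=\abs{\BImS}$), and more generally whenever $R$'s $y$-coordinates touch a block that $S$'s do not — then some block carries a lone conjugated $H_f$ entry, so the expectation vanishes by item~1 of \lem{Hfproperty}. (b) Otherwise $\BImR=\BImS$, which forces $R$'s $y$-coordinates to occupy $t$ distinct blocks, and hence forces $R\notin\DBRt$ to be caused by two $x$-coordinates of $R$ lying in a common block; since $S$'s $x$-coordinates lie in distinct blocks, the permutation constraints $\sigma(x_i)=y_i^{\oplus b_i}$ on both sides then force, in at least one of the two corresponding $y$-blocks, the $R$- and $S$-side factors to pair an entry of that $2\times2$ block of $H_f$ with the conjugate of the \emph{other} entry in the same row, whose $f$-expectation vanishes by items~2 and~3 of \lem{Hfproperty} (the subcases $y'=y$ and $y'=\bar y$ for the two $y$-strings involved). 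Granting this claim, and since $\{\ket{\phi_R}\}_{R\in\DBRt}$ is orthonormal (\lem{orthophiR} together with $\braket{\phi_R|\phi_R}=1$), $\DBRproj$ preserves every ``good'' term and $\CPS$ sends each surviving $\ket{\phi_R}\reg{HP}$ to $\ket{R}\reg{XY}$, so
\[
 \CPS\cdot\DBRproj\cdot\ket{\A_t^{\HPO\cdot G}}\reg{ABHP}=\sqrt{\prod_{i=0}^{t-1}\tfrac1{N-i}}\sum_{\substack{(x_1,\dots,x_t)\in\DB\\(y_1,\dots,y_t)\in\DB}}c_{(x,y)}\otimes\ket{\{(x_i,y_i)\}_{i=1}^t}\reg{XY}.
\]

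Finally I would compare: the two displayed expressions are term-by-term identical up to the scalar ratio $\sqrt{\prod_{i=0}^{t-1}(N-i)^{-1}}\big/\sqrt{\prod_{i=0}^{t-1}(N-2i)^{-1}}=\sqrt{\prod_{i=0}^{t-1}\tfrac{N-2i}{N-i}}$. Since $\sum_{i=0}^{t-1}\br{\ln(1-2i/N)-\ln(1-i/N)}=-\sum_{i=0}^{t-1}\tfrac iN+O(t^3/N^2)=-O(t^2/N)$ for $t\le N$, this ratio equals $e^{-O(t^2/N)}=1+O(t^2/N)$, which gives the stated identity. Apart from the vanishing claim in the previous paragraph, everything here is bookkeeping with the two expansions plus this elementary estimate, so the case analysis establishing $\braket{\phi_S|\phi_R}=0$ for $R\notin\DBRt$ is where essentially all the work lies.
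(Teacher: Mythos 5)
Your proof is correct and follows essentially the same route as the paper's: expand both sides via \clr{actofhpo} and \fct{actofpr}, observe that $\DBRproj$ and $\DBproj$ restrict both sums to exactly the relations in $\DBRt$ (with $\CPS$ carrying each surviving $\ket{\phi_R}\reg{HP}$ to $\ket{R}\reg{XY}$), and read off the ratio of normalization prefactors. The one substantive difference is that the paper compresses the key step into ``it is easy to see,'' whereas you correctly identify that one must show $\braket{\phi_S|\phi_R}=0$ for $S\in\DBRt$ and $R\notin\DBRt$ — a statement not covered by \lem{orthophiR}, which only treats $R,S\in\DBR$ — and your block-by-block case analysis via \lem{Hfproperty} supplies a valid justification for it; as a minor point, your ratio $\sqrt{\prod_{i=0}^{t-1}(N-2i)/(N-i)}$ is the correct orientation (the paper writes its reciprocal), though both equal $1+O\!\br{t^2/N}$ with a signed $O$.
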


\begin{proof}
	By \clr{actofhpo}, it is easy to see that
	\begin{align}
		&\CPS\cdot\DBRproj\cdot\ket{\A_t^{\HPO\cdot G}}\reg{ABHP} \nonumber\\
		&\quad\quad\quad =
		\sqrt{\frac{(N-t)!}{N!}}
		\sum_{
		\substack{(x_1,\dots,x_t)\in\DB\\(y_1,\dots,y_t)\in\DB}	
		}
		\prod_{i=1}^t \br{ \ketbratwo{y_i}{x_i}\reg{A} \cdot G\reg{A} \cdot A\reg{AB}^{(i)} }
		\ket{0}\reg{AB} \ket{{\{ (x_i,y_i) \}_{i=1}^t}}\reg{HP} \enspace .
	\end{align} 
	By \fct{actofpr}, we have
	\begin{align}
		&\DBproj\cdot\ket{\A_t^{\PR\cdot G}}\reg{ABXY} \nonumber\\
		&\quad\quad\quad =
		 \sqrt{\prod_{i=0}^{t-1} \frac{1}{(N-2i)} }
		\sum_{
		\substack{(x_1,\dots,x_t)\in\DB\\
		(y_1,\dots,y_t)\in\DB}
		}
		\prod_{i=1}^t \br{ \ketbratwo{y_i}{x_i}\reg{A} \cdot G\reg{A} \cdot A\reg{AB}^{(i)} }
		\ket{0}\reg{AB} \ket{{\{ (x_i,y_i) \}_{i=1}^t}}\reg{XY} \enspace.
	\end{align}
	Therefore we observe that
	\[
		\CPS\cdot\DBRproj\cdot\ket{\A_t^{\HPO\cdot G}} = \rho\cdot \DBproj\cdot\ket{\A_t^{\PR\cdot G}}
	\]
	where
	\[
		\rho = \sqrt{\prod_{i=0}^{t-1} \frac{N-i}{N-2i}} =
		\sqrt{\prod_{i=0}^{t-1} \br{1+\frac{i}{N-2i}}} = 1+O\!\br{\frac{t^2}{N}} \enspace.
	\]
\end{proof}

With the above lemma,
we are able to argue that
the views before and after the projection $\DBRproj$ of an adversary are close
if we sample $G$ according to some distribution.

\begin{lemma}\label{lem:rho12}
For any $\epsilon$-$\rss$ distribution $\rssd$ on $\ugroup{N}$ with $\epsilon=O\br{\frac{1}{N^2}}$, define $\D$ to be a distribution that samples $G=SK$ where
	$S\gets \S_N$ and $K\gets\rssd$.
Let $\A$ be a $t$-query oracle adversary.
Then we have
\begin{align*}
	\norm{\Tr_\mathsf{HP}\br{ \expect{G\gets \D}{\ketbra{\A_t^{\HPO\cdot G } }\reg{ABHP}} }
	- \Tr_\mathsf{HP}\br{ \DBRproj\cdot \expect{G\gets \D}{\ketbra{\A_t^{\HPO\cdot G } }\reg{ABHP}}\cdot \DBRproj }
	}_1 \\
	=O\!\br{\frac{t^2}{N}} \enspace.
\end{align*}
\end{lemma}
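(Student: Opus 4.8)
The plan is to chain together three results already in hand: \lem{projdist} to convert the trace-distance bound into the statement that the projector $\DBRproj$ survives with high probability, \lem{compress} to transport that survival probability to the path-recording picture, and \lem{DBproj} to bound it there.

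First I would note that, regarded as an operator on $\mathsf{AB}\otimes\mathsf{HP}$, the projector $\DBRproj=\sum_{R\in\DBRt}\ketbra{\phi_R}$ has the product form $\idOp\reg{AB}\otimes(\text{projector on }\mathsf{HP})$ demanded by \lem{projdist}, and that it genuinely is a projector because $\set{\ket{\phi_R}}_{R\in\DBR}$ is orthonormal (\lem{orthophiR}). Applying \lem{projdist} with $\mathsf{C}=\mathsf{AB}$, $\mathsf{D}=\mathsf{HP}$, and $\rho\reg{ABHP}\coloneq\expect{G\gets\D}{\ketbra{\A_t^{\HPO\cdot G}}\reg{ABHP}}$, the left-hand side of the lemma becomes exactly $1-\Tr\br{\DBRproj\cdot\rho\reg{ABHP}}$, so the whole problem reduces to showing $\Tr\br{\DBRproj\cdot\rho\reg{ABHP}}\ge 1-O(t^2/N)$.

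Next I would pass to the path-recording oracle. Writing $\ket{\psi_G}\coloneq\ket{\A_t^{\HPO\cdot G}}\reg{ABHP}$, one has $\Tr\br{\DBRproj\cdot\rho\reg{ABHP}}=\expect{G\gets\D}{\norm{\DBRproj\ket{\psi_G}}^2}$, and by \lem{compress} $\CPS\cdot\DBRproj\ket{\psi_G}=c\cdot\DBproj\ket{\A_t^{\PR\cdot G}}\reg{ABXY}$ with $c=1+O(t^2/N)\ge 1$ a fixed scalar not depending on $G$. Since $\CPS=\sum_{R\in\DBR}\ketbratwo{R}{\phi_R}$ carries the orthonormal family $\set{\ket{\phi_R}}_{R\in\DBR}$ to the orthonormal family $\set{\ket{R}}_{R\in\DBR}$, it preserves norms on the span of $\set{\ket{\phi_R}}_{R\in\DBR}$, which contains $\DBRproj\ket{\psi_G}$; hence $\norm{\DBRproj\ket{\psi_G}}^2=c^2\norm{\DBproj\ket{\A_t^{\PR\cdot G}}}^2$. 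Averaging over $G$ gives
\[
\Tr\br{\DBRproj\cdot\rho\reg{ABHP}}=c^2\cdot\Tr\br{\DBproj\cdot\rho^{\D}\reg{ABXY}},\qquad\rho^{\D}\reg{ABXY}\coloneq\expect{G\gets\D}{\ketbra{\A_t^{\PR\cdot G}}\reg{ABXY}}.
\]
Finally, the distribution $\D$ here (sample $S\gets\S_N$, $K\gets\rssd$, output $G=SK$) is precisely the one in \lem{DBproj}, so $\Tr\br{\DBproj\cdot\rho^{\D}\reg{ABXY}}\ge 1-O(t^2/N)$; combined with $c^2\ge 1$ this yields $\Tr\br{\DBRproj\cdot\rho\reg{ABHP}}\ge 1-O(t^2/N)$, as needed.

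I do not expect any substantial obstacle: the two things to be careful about are pure bookkeeping --- that $\DBRproj$ acting on $\mathsf{AB}\otimes\mathsf{HP}$ really has the product form required by \lem{projdist} with a bona fide projector on the $\mathsf{HP}$ factor (which is exactly what \lem{orthophiR} supplies), and that $\DBRproj\ket{\psi_G}$ lands in the subspace on which $\CPS$ is isometric rather than merely contractive (again ensured by \lem{orthophiR} together with the orthonormality of the relation states $\set{\ket{R}}$). The real content was already extracted in \lem{compress} and \lem{DBproj}; this lemma is just the short glue step combining them.
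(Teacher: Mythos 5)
Your proof is correct and follows essentially the same route as the paper's: reduce to $\Tr(\DBRproj\cdot\rho)$ via \lem{projdist}, convert to the path-recording picture using the fact that $\CPS$ is isometric on the range of $\DBRproj$ (the paper phrases this as the identity $\DBRproj=\CPS^\dagger\CPS\DBRproj$ plus cyclicity of the trace), and then invoke \lem{compress} and \lem{DBproj}. Your state-norm bookkeeping and your explicit observation that only the lower bound $\Tr(\DBRproj\cdot\rho)\geq 1-O(t^2/N)$ is needed (since the quantity is a trace distance, hence nonnegative) are both sound and match the paper's intent.
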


\begin{proof}
We first apply \lem{projdist}:
\begin{align*}
	\norm{\Tr_\mathsf{HP}\br{ \expect{G\gets \D}{\ketbra{\A_t^{\HPO\cdot G } }\reg{ABHP}} }
	- \Tr_\mathsf{HP}\br{ \DBRproj\cdot \expect{G\gets \D}{\ketbra{\A_t^{\HPO\cdot G } }\reg{ABHP}}\cdot \DBRproj }
	}_1 \\
	= 1 - \Tr\br{ \DBRproj\cdot \expect{G\gets \D}{\ketbra{\A_t^{\HPO\cdot G } }\reg{ABHP}\cdot\DBRproj}}.
\end{align*}
Notice that $\DBRproj = \CPS^\dagger\cdot \CPS\cdot\DBRproj$.
Therefore
\begin{align*}
	&\norm{\Tr_\mathsf{HP}\br{ \expect{G\gets \D}{\ketbra{\A_t^{\HPO\cdot G } }\reg{ABHP}} }
	- \Tr_\mathsf{HP}\br{ \DBRproj\cdot \expect{G\gets \D}{\ketbra{\A_t^{\HPO\cdot G } }\reg{ABHP}}\cdot \DBRproj }
	}_1 \\
	=\ &1 - \Tr\br{ \CPS\cdot\DBRproj\cdot \expect{G\gets \D}{\ketbra{\A_t^{\HPO\cdot G } }\reg{ABHP}}\cdot\DBRproj \cdot\CPS^\dagger}\\
	=\ &1- \br{1+O\!\br{\frac{t^2}{N}}}\Tr\br{\DBproj\cdot\expect{G\gets \D}{\ketbra{\A_t^{\HPO\cdot G } }\reg{ABHP}}} = O\!\br{\frac{t^2}{N}}\enspace,
\end{align*}
where the second equality is from \lem{compress} and the last one is from \lem{DBproj}.
\end{proof}

\subsection{Computational Indistinguishability of $\HPC_{n,T+1}$}\label{sec:mainproof}

We first show that if $G$ consists of an \rss~and a random permutation,
then the view of an adversary when making queries to $H_fP_{\sigma}G$
is nearly the view it will see when making queries to the path-recording oracle $\PR$.
\begin{lemma}\label{lem:inter}
For any $\epsilon$-$\rss$ distribution $\rssd$ on $\ugroup{N}$ with $\epsilon=O\br{\frac{1}{N^2}}$, define $\D$ to be a distribution that samples $G=SK$ where
	$S\gets \S_N$ and $K\gets\rssd$.
Let $\A$ be a $t$-query oracle adversary.
Then 
\[
\TD
{\expect{H_fP_{\sigma}\gets\HPC_{n,1}, G\gets \D}{\ketbra{\A_t^{H_fP_{\sigma}G}}}}
{\Tr_{\mathsf{XY}}\br{\ketbra{\A_t^{\PR}}\reg{ABXY}}}
= O\!\br{\frac{t^2}{N}}
\]
\end{lemma}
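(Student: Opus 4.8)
The plan is to walk from the mixed state $\expect{H_fP_{\sigma}\gets\HPC_{n,1},\,G\gets\D}{\ketbra{\A_t^{H_fP_{\sigma}G}}}$ over to $\Tr_{\mathsf{XY}}(\ketbra{\A_t^{\PR}})$ through a constant number of intermediate states, each introduced at an additive cost of $O(t^2/N)$ in trace distance, and then conclude by the triangle inequality. All the ingredients are already in place; what remains is to assemble them and track the errors.

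First I would purify the oracle: by \fct{puri}, the left‑hand state equals $\Tr_{\mathsf{HP}}\big(\expect{G\gets\D}{\ketbra{\A_t^{\HPO\cdot G}}}\big)$, so it suffices to argue about the purified state on registers $\mathsf{ABHP}$. Next I would insert the environment distinct‑block projector $\DBRproj$: this is exactly \lem{rho12}, which bounds by $O(t^2/N)$ the change in the $\mathsf{AB}$‑marginal caused by conjugating $\expect{G\gets\D}{\ketbra{\A_t^{\HPO\cdot G}}}$ by $\DBRproj$, using that $\D$ samples $G=SK$ with $S$ a random permutation and $K$ an $O(1/N^2)$-$\rss$. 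Then I would pass to the path‑recording picture through the isometry $\CPS$: since $\{\ket{\phi_R}\}_{R\in\DBR}$ are orthonormal (\lem{orthophiR}, together with $\braket{\phi_R|\phi_R}=1$ from the $R=S$ case of its proof), the restriction of $\CPS$ to the range of $\DBRproj$ is a genuine isometry $\mathsf{HP}\to\mathsf{XY}$ acting trivially on $\mathsf{AB}$, so $\Tr_{\mathsf{HP}}(\DBRproj\,\omega\,\DBRproj)=\Tr_{\mathsf{XY}}(\CPS\,\DBRproj\,\omega\,\DBRproj\,\CPS^{\dagger})$ for every $\omega$ on $\mathsf{ABHP}$. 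Invoking \lem{compress}, whose multiplicative factor $1+O(t^2/N)$ is a fixed scalar not depending on $G$, converts the state to
\[
c^2\cdot\Tr_{\mathsf{XY}}\!\big(\DBproj\cdot\expect{G\gets\D}{\ketbra{\A_t^{\PR\cdot G}}}\cdot\DBproj\big),\qquad c=1+O\!\br{\tfrac{t^2}{N}}.
\]

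It then remains to peel off the scalar $c^2=1+O(t^2/N)$, at cost $|c^2-1|\le O(t^2/N)$ since the operator it multiplies has trace at most $1$, and to peel off the projector $\DBproj$, which by \lem{projdist} (with $\mathsf{C}=\mathsf{AB}$, $\mathsf{D}=\mathsf{XY}$) costs $1-\Tr\big(\DBproj\cdot\expect{G\gets\D}{\ketbra{\A_t^{\PR\cdot G}}}\big)$; this is $O(t^2/N)$ by \lem{DBproj} (applied with the permutation named $S$ in place of $P$). Finally I would observe that the remaining $G$‑average is vacuous: by \lem{right_inv}, $\ket{\A_t^{\PR\cdot G}}=(G_{\mathsf{X}_1}\otimes\cdots\otimes G_{\mathsf{X}_t})\ket{\A_t^{\PR}}$, and since this operator acts only on $\mathsf{X}$ it disappears under $\Tr_{\mathsf{XY}}$, so $\Tr_{\mathsf{XY}}(\ketbra{\A_t^{\PR\cdot G}})=\Tr_{\mathsf{XY}}(\ketbra{\A_t^{\PR}})$ for every $G$ and hence $\expect{G\gets\D}{\Tr_{\mathsf{XY}}(\ketbra{\A_t^{\PR\cdot G}})}=\Tr_{\mathsf{XY}}(\ketbra{\A_t^{\PR}})$. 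Summing the $O(1)$ hybrid errors via the triangle inequality yields the claimed $O(t^2/N)$ bound. I do not expect a single deep obstacle here: \lem{rho12}, \lem{compress}, and \lem{DBproj} already do the real work. The step that most wants care is the passage through $\CPS$ — verifying that this partial isometry leaves the $\mathsf{AB}$‑marginal exactly unchanged (which relies on orthonormality of the $\ket{\phi_R}$ over the distinct‑block subspace) and checking that the multiplicative $1+O(t^2/N)$ of \lem{compress} folds additively into, rather than compounding, the final $O(t^2/N)$ budget.
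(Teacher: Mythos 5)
Your proposal is correct and follows essentially the same hybrid chain the paper uses (purify via \fct{puri}, project with \lem{rho12}, transport through $\CPS$ via \lem{compress}, unproject via \lem{DBproj} and \lem{projdist}, then remove $G$ via \lem{right_inv}); the only difference is that you explicitly unpack the $\rho_2\approx\rho_3$ step of the paper into an exact identity with a scalar $c^2=1+O(t^2/N)$ followed by a separate $|c^2-1|$ cost, which the paper leaves implicit but amounts to the same thing.
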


\begin{proof}
We start with defining the following states: 
\begin{align*}
	\rho_0 &= \expect{H_fP_{\sigma}\gets\HPC_{n,1}, G\gets \D}{\ketbra{\A_t^{H_fP_{\sigma}G}}} \\
	\rho_1 &= \Tr_\mathsf{HP}\br{ \expect{G\gets \D}{\ketbra{\A_t^{\HPO\cdot G } }\reg{ABHP}} } \\
	\rho_2 &= \Tr_\mathsf{HP}\br{ \DBRproj\cdot \expect{G\gets \D}{\ketbra{\A_t^{\HPO\cdot G } }\reg{ABHP}}\cdot \DBRproj }\\
	\rho_3 &= \Tr_\mathsf{XY} \br{ \DBproj\cdot \expect{G\gets \D}{\ketbra{\A_t^{\PR\cdot G}}\reg{ABXY}}\cdot \DBproj }\\
	\rho_4 &= \Tr_{\mathsf{XY}}\br{\expect{G\gets \D}{\ketbra{\A_t^{\PR\cdot G}}\reg{ABXY}}}\\
	\rho_5 &= \Tr_{\mathsf{XY}}\br{\ketbra{\A_t^{\PR}}\reg{ABXY}}
\end{align*}

By \fct{puri}, $\rho_0 = \rho_1$.
By \lem{rho12}, $\norm{\rho_1-\rho_2}_1\leq O\!\br{\frac{t^2}{N}}$.
And we have that $\norm{\rho_2-\rho_3}_1 = O\br{\frac{t^2}{N}}$ by \lem{compress} and the fact that $\CPS$ is applied only on the environment register.
By \lem{DBproj} and \lem{projdist}, $\norm{\rho_3-\rho_4}_1\leq O\!\br{\frac{t^2}{N}}$.
$\rho_4 = \rho_5$ since
\begin{align*}
	\Tr_{\mathsf{XY}}\br{\expect{G\gets \D}{\ketbra{\A_t^{\PR\cdot G}}\reg{ABXY}}} = &\expect{G\gets \D}{ \Tr_{\mathsf{XY}}\br{{\ketbra{\A_t^{\PR\cdot G}}\reg{ABXY}}} } \\
	= & \expect{G\gets \D}{ \Tr_{\mathsf{XY}}\br{G^{\tp t}\reg{X}\cdot\ketbra{\A_t^{\PR}}\reg{ABXY}\cdot G^{\tp t, \dag}\reg{X}}}\\
	=& \Tr_{\mathsf{XY}}\br{\ketbra{\A_t^{\PR}}\reg{ABXY}} \enspace,
\end{align*}
where the second equality is from \lem{right_inv}.
\end{proof}

We now prove that the distribution $\HPC_{n,T+1}$~is computationally indistinguishable from Haar distribution.
\begin{proof}[Proof of \thm{pru}]
	Consider an adversary $\A$ who makes $t=\poly{n}$ queries.
	Note that the distribution of $H_f\cdot P_{\sigma}\cdot\kac$
	is the same as the distribution of $H_f\cdot P_\sigma\cdot S \cdot\kac$ where we add a random permutation matrix $S\gets\S_N$.
	Therefore, we have
	\[
		\expect{H_fP_\sigma\kac}{ \ketbra{\A_t^{H_fP_\sigma\kac}} } = 
		\expect{H_fP_\sigma S\kac}{ \ketbra{\A_t^{H_fP_\sigma S\kac}} }\enspace .
	\]
	Since $\kac$ is sampled from an $\epsilon$-$\rss$ distribution with $\epsilon=O\br{\frac{1}{N^2}}$ and $S$ is a random permutation matrix, we have by \lem{inter}
	\begin{align*}
		\TD{\expect{H_fP_\sigma\kac}{ \ketbra{\A_t^{H_fP_\sigma\kac}} }}{\Tr_{\mathsf{XY}}\br{\ketbra{\A_t^{\PR}}\reg{ABXY}}} = O\!\br{\frac{t^2}{N}} \enspace.
	\end{align*}
	Now we substitute the unitary $\kac$ with a Haar random unitary $U$.
	We have the following relationship:
	\begin{align*}
		\expect{U}{ \ketbra{\A_t^{U}} } = 
		\expect{H_fP_\sigma S U}{ \ketbra{\A_t^{H_fP_\sigma S U}} } \enspace .
	\end{align*}
	Since the Haar distribution is a $0$-$\rss$ distribution and $S$ is a random permutation matrix, we have by \lem{inter}
	\begin{align*}
		\TD{\expect{U}{ \ketbra{\A_t^{U}} }}{\Tr_{\mathsf{XY}}\br{\ketbra{\A_t^{\PR}}\reg{ABXY}}} = O\!\br{\frac{t^2}{N}} \enspace.
	\end{align*}
	Then by the triangle inequality, we have
	\begin{align*}
		\TD{\expect{U}{ \ketbra{\A_t^{U}} }}{\expect{H_fP_\sigma\kac}{ \ketbra{\A_t^{H_fP_\sigma\kac}} }} = O\!\br{\frac{t^2}{N}} =\negl{n}\enspace.
	\end{align*}
	
\end{proof}

Our construction of $\pru$ based on parallel
Kac's walk is to use \qprf~and \qprp~when sampling from $\hHPC_{n,T+1}$.

\begin{theorem}
	By replacing random functions and random permutations with
	their post-quantum secure pseudorandom counterparts in the sampling procedure of $\hHPC_{n,T+1}$ where $T=30n$ and $d=5n$, we obtain a \pru.
	\end{theorem}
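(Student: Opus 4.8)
The plan is to derive this from \thm{pru} and \lem{close} by a routine hybrid argument that swaps the truly random functions and permutations for pseudorandom ones. Concretely, consider the distribution that samples $T+1$ keys for a $\qprf$ with domain $\bit{n-1}$ and range $\bit{3d}$ and $T+1$ keys for a $\qprp$ on $\bit{n}$, and outputs $\prod_{i=1}^{T+1}\br{\widehat H_{\prf_{k_i}}\cdot P_{\prp_{k_i'}}}$. Two things must be verified: (i) this distribution is samplable in $\poly{n}$ time, and (ii) it is computationally indistinguishable from the Haar measure on $\ugroup{N}$.

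For (i), I would recall that by \cite{LQSY+24} each $\widehat H_f$ is realized by a $\poly{n}$-size circuit that evaluates $f$ into an ancilla, applies the resulting two-level rotations on the $2^{n-1}$ pairs, and uncomputes; since a $\qprf$ is classically efficiently computable this makes $\widehat H_{\prf_{k}}$ a $\poly{n}$-size circuit. Likewise $P_\sigma$ is implementable coherently by a $\poly{n}$-size circuit from efficient forward and inverse evaluation of $\sigma$ (compute $\sigma(x)$, then uncompute the input register using $\sigma^{-1}$), which a $\qprp$ supplies. As $T+1 = 30n+1 = O(n)$ and $d = 5n$, the overall unitary is a $\poly{n}$-size circuit, and sampling the distribution just means sampling the $O(n)$ keys.

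For (ii), I would chain hybrids and use the triangle inequality for the adversary's distinguishing advantage. The Haar measure is computationally indistinguishable from $\HPC_{n,T+1}$ by \thm{pru} (whose proof in fact gives advantage $O(t^2/N)$, negligible for $t=\poly{n}$); $\HPC_{n,T+1}$ is computationally indistinguishable from $\hHPC_{n,T+1}$ by \lem{close}; and $\hHPC_{n,T+1}$ is computationally indistinguishable from the pseudorandom construction by two nested hybrid arguments, one over the $T+1$ permutation layers and one over the $T+1$ function layers. In the permutation hybrid, the $j$-th step replaces the $j$-th random permutation by a pseudorandom one while leaving the other layers as in the current hybrid; a $\poly{n}$-time distinguisher with advantage $\delta$ between two consecutive permutation hybrids yields a $\poly{n}$-time $\qprp$ adversary with advantage $\delta$, since the reduction can sample all other keys itself and, on each forward or inverse query of $\A$ to the composite unitary, simulate the $j$-th layer $P_\sigma$ (resp.\ $P_\sigma^\dagger$) coherently using its $\sigma$- and $\sigma^{-1}$-oracles, applying the remaining layers directly. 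The function hybrid is identical, except that $\widehat H_f$ needs only forward evaluation of $f$, so a forward $\qprf$ oracle suffices. Summing the $O(n)$ pseudorandomness terms — each $\negl{n}$ — with the $\negl{n}$ term from \lem{close} and the $O(t^2/N)=\negl{n}$ advantage from \thm{pru} shows the construction is computationally indistinguishable from Haar, hence a \pru.

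The only step needing genuine care — and the closest thing to an obstacle — is the coherent oracle simulation inside the reductions: because the \pru~adversary may issue inverse queries to the composite unitary, each reduction must implement the swapped layer reversibly from black-box access to a single permutation (or function), which is precisely why the construction relies on a $\qprp$ with efficient inverse evaluation rather than a permutation built from a $\qprf$ only; this is also what keeps the argument valid against the inverse-query adversary model. Everything else is bookkeeping, namely checking that each of the $O(n)$ reductions is $\poly{n}$-time and that the accumulated advantage remains $\negl{n}$.
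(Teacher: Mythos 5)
Your proposal matches the paper's own argument, which is stated tersely as a one-liner invoking the post-quantum security of the $\qprp$/$\qprf$, \lem{close}, and \thm{pru}; you have simply spelled out the standard hybrid-and-reduction bookkeeping that the paper leaves implicit. One small overreach: \thm{pru} (and hence the theorem here) concerns forward-query adversaries only — the inverse-secure construction is the separate $\HPC_{n,2T+1}$ of \thm{spru} — so your remark about the adversary's inverse queries is not needed at this point, though your underlying observation that implementing $P_\sigma$ coherently requires efficient inverse evaluation (hence a genuine $\qprp$ rather than a permutation built from a $\qprf$ alone) is correct and is exactly why the construction uses a $\qprp$.
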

		
	\begin{proof}
	By the post-quantum security of \qprp~and \qprf,
	\lem{close} and \thm{pru}, the new distribution is computationally indistinguishable from Haar distribution.
	This new distribution can be sampled in polynomial time
	since \qprp~and \qprf~can be sampled efficiently.
		
	\end{proof}

\section{Showing the strong security of $\HPC_{n,2T+1}$}
We further show that our construction, based on Kac's walk, also achieves the strong security
when adversaries are granted query access to the inverse unitary.
Formally,
\begin{theorem}[$\HP_{n, 2T+1}$ is a statistical strong-$\pru$]
	Let $\A$ be a $t$-query oracle adversary capable of
	performing both forward and inverse queries to oracle $\OO$, and let
	$\HP_{n,2T+1}$ be defined as in \defi{HPnT} with $T = 30n$ and $d = 5n$. Then
	\begin{equation}
		\td\br{
			\expect{\OO\gets \HP_{n, 2T+1}}{\ketbra{\A^{\OO}_t}_{\reg{AB}}},
			\expect{\OO\gets \haar}{\ketbra{\A^{\OO}_t}_{\reg{AB}}}
		}
		\leq \frac{2t(11t+20)}{N^{1/8}}.
	\end{equation}
	\label{thm:spru}
\end{theorem}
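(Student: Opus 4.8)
The plan is to reduce the two-sided (forward \emph{and} inverse) game to a \emph{two-sided path-recording oracle}, in the spirit of the inverse-query analysis of \cite{MH24}, by exploiting the two $T$-step Kac's walks that flank the middle step. Write a sample from $\HPC_{n,2T+1}$ as $W = L\cdot M\cdot R$, where $R,L\gets\HPC_{n,T}$ are independent $T$-step parallel Kac's walks and $M=H_fP_\sigma\gets\HPC_{n,1}$ is the remaining single step; a forward query to $\OO=W$ applies $R$, then $M$, then $L$ to the adversary's register, while an inverse query to $\OO^\dagger$ applies $L^\dagger$, then $M^\dagger$, then $R^\dagger$. The key point is that on \emph{every} query the middle operator $M$ (resp.\ $M^\dagger$) is immediately preceded on its input side by one of the two $T$-step walks, namely $R$ (resp.\ $L^\dagger$). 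As a preliminary I would establish that the distribution of $\kac^{-1}$ for $\kac\gets\HPC_{n,T}$ is again an $O(1/N^2)$-$\rss$: re-associating $\kac^{-1}=(H_{f_T}P_{\sigma_T})^\dagger\cdots(H_{f_1}P_{\sigma_1})^\dagger$ exhibits it as a random permutation matrix, composed with a $(T-1)$-step parallel Kac's walk, composed with a single $2\times2$-block-diagonal random unitary; since a $(T-1)$-step walk is still an $O(1/N^2)$-$\rss$ \cite{LQSY+24} and the ideal Haar state is permutation-invariant, the bound survives (alternatively, one invokes the left--right mirror symmetry of the walk directly). Consequently $L,R,L^\dagger,R^\dagger$ are all $O(1/N^2)$-$\rss$'s, and the Haar measure is a $0$-$\rss$ that is moreover invariant under inversion and under left and right multiplication.

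With this in place I would rerun the three-step template of \sct{hpo_intro}--\sct{mainproof}, now tracking both query directions. First, purify $M$ by the function-permutation oracle $\HPO$ on the environment registers exactly as in \fct{puri}; an inverse query to $M$ then becomes a query to $\HPO^\dagger$, whose action on $\mathsf{HP}$-relation states mirrors \lem{action_hpo} but removes, rather than adjoins, a pair. Second, since each application of $\HPO$ or $\HPO^\dagger$ is preceded on the adversary's register by a unitary of the form ``permutation $\circ$ $\rss$'' --- the permutation being supplied for free by the distributional identity $P_\sigma S\sim P_\sigma$ used in the proof of \thm{pru} --- the distinct-block estimate of \lem{DBproj} (controlling the equality and flip-first-bit projectors) shows the environment stays in the distinct block subspace up to error $O(t^2/N)$, and the isometry $\CPS$ then intertwines $\HPO$ and $\HPO^\dagger$ restricted to $\DBRproj$ with the forward and inverse behaviour of a two-sided path-recording oracle $\PR^{\pm}$ --- which records a single relation, samples a fresh out-of-block image on a forward query and a fresh out-of-block preimage on an inverse query, and contracts a recorded pair when a query collides with the recorded domain or image --- generalizing \lem{compress}. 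Third, $\PR^{\pm}$ inherits \emph{both} right- and left-invariance, i.e.\ tensor powers of an $n$-qubit unitary acting on the $\mathsf{X}$-registers or on the $\mathsf{Y}$-registers can be pushed through it as in \lem{right_inv}, so the outer walks $L$ and $R$ (and, in the Haar branch, the outer Haar factors) are absorbed without changing the adversary's reduced state.

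Assembling the hybrids as in \lem{inter}: by \fct{puri} the two-sided game against $\HPC_{n,2T+1}=L\,M\,R$ equals the game against $\HPO$ together with the outer walks; the compression step, together with \lem{projdist} and the gentle-measurement lemma \lem{gentle_measurement}, brings this within trace distance of order $O(t^2/N)$ of the game against $\PR^{\pm}$; and the identical chain with $\HPC_{n,T}$ replaced by a Haar unitary $U$ (using that Haar is a $0$-$\rss$, is inversion-invariant, and absorbs the middle step so that $U_2\,M\,U_1$ is again Haar) shows the two-sided Haar game is equally close to $\PR^{\pm}$. The triangle inequality gives indistinguishability, and accumulating the per-hybrid errors through the gentle-measurement bound --- which costs a factor of the number of queries together with a square root --- over both query directions produces the stated bound $2t(11t+20)/N^{1/8}$, the degraded exponent $1/8$ and the explicit constants being artefacts of compounding these square-root losses. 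I expect the main obstacle to be the second step in the presence of interleaved queries: unlike the forward-only setting, where every query simply extends the recorded relation, an inverse query to an already-recorded image must undo a recorded pair, so the environment is no longer a plain sum over a growing family of relation states; verifying that the distinct-block promise survives these contractions, and that $\CPS$ still intertwines $\HPO^\dagger$ with $\PR^{\pm}$ in that regime --- essentially importing the two-sided path-recording analysis of \cite{MH24} into our compressed picture --- is where most of the effort lies. A secondary point to check carefully is the $\rss$-ness of $\kac^{-1}$ and that the flip-first-bit permutation trick still applies on the inverse side.
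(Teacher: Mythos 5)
Your high-level architecture (decompose $\HPC_{n,2T+1}$ into a middle one-step walk flanked by two $T$-step walks, purify the middle step, compress the environment, and absorb the outer walks by unitary invariance of a path-recording oracle) coincides with the paper's. The gap is in what you guess $\HPO^\dagger$ does. You propose that $\HPO^\dagger$ ``removes, rather than adjoins, a pair'' and correspondingly that the two-sided path-recording oracle should track a \emph{single} relation that contracts on collisions. The paper does the opposite: it tracks \emph{two} growing relations $L$ (forward queries) and $R$ (inverse queries), defines the extended relation states $\ket{\phi_{L,R}}$ in \defi{modhp} in terms of both, and shows in \lem{hpo_action2} that $\HPO^\dagger$ \emph{adjoins} the fresh $(x,y)$ pair to $R$, never subtracting from $L$. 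The ``contraction'' intuition you expected does surface, but only inside the path-recording oracle $V$ of \defi{proV}, whose $V^{R,\dag}$ component (active during a forward query) deletes a pair from the separately tracked set $R$; the combined oracle has the form $V = V^L\cdot(\id - V^R V^{R,\dag}) + (\id - V^L V^{L,\dag})\cdot V^{R,\dag}$, so each forward query \emph{either} extends $L$ \emph{or} contracts $R$, and never edits a single unified relation. If you tried to carry out your plan literally --- one relation, with $\HPO^\dagger$ shrinking it --- the orthonormality of the relation states (\lem{orthophiR2}) breaks and the compression isometry $\CPS$ is no longer well-defined. This is exactly the difficulty you flag at the end as needing further work, but the fix the paper uses (two relations, plus the $V^{R,\dag}$ trick) is structurally different from the sketch you give.

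A secondary divergence: the paper does not route through ``$\kac^{-1}$ is an $\rss$'' as a stated lemma. Instead it proves a two-sided twirling lemma (\lem{twirl}) by directly bounding the twirl of the differences $\Pi^{\mathrm{DB}}_{\leq t}-\domProj{W}_{\leq t}$ and $\Pi^{\mathrm{DB}}_{\leq t}-\imProj{W}_{\leq t}$ using the equality, flip-first-bit, and EPR projectors plus a coupling argument from \cite{LQSY+24}; the invariance is packaged as a two-sided $Q[C,D]$ twirl (\lem{inv}) rather than the one-sided right-invariance of \lem{right_inv} applied twice. You also omit the intermediate \emph{partial} path-recording oracle $W$ of \sct{new_w}, which is essential: $\HPO$ restricted to the distinct-block subspace is isometric to $W$, not directly to $V$, and \lem{wclosetohpo} and \lem{WandV} are what glue the three oracles together. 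So the template is right, but the specific shape of the missing pieces --- two relations instead of one, $\HPO^\dagger$ growing rather than shrinking the record, and the $W$/$V$ pair --- is not what your proposal anticipates.
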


This theorem establishes the statistical strong security.
Then, assuming the existence of post-quantum secure
$\owf$s, we infer the existence of computationally
strong $\pru$s.
The proof of this theorem follows a similar routine as \thm{pru}:
\begin{itemize}
	\item
	We first use the purification to establish that
	making queries to $H_f\cdot P_{\sigma}$ is equivalent to
	querying $\HPO$.
	\item
	We then show that 
	the adversary cannot distinguish between
	making queries to $D\cdot\HPO\cdot C$
	and querying a path-recording oracle $V$ introduced in \sct{new_v}.
	Here $C$ and $D$ are sampled from either $\HP_{n, T}$
	or Haar distribution $\mu$
	to ensure that
	the environment state is mostly within the distinct block subspace.
	We require two unitary operator to achieve this purpose,
	as the adversary can make both forward and inverse query.
	\item
	If $C$ and $D$ are sampled from $\HP_{n, T}$,
	querying $D\cdot\HPO\cdot C$ corresponds to querying $\OO$ such that
	$\OO\gets \HP_{n, 2T+1}$.
	On the other hand, if $C$ and $D$ are sampled from $\mu$,
	querying $D\cdot\HPO\cdot C$ corresponds to querying $\OO$ such that
	$\OO\gets \mu$.
\end{itemize}

In the following sections, we first extend the $\HPO$ oracle and the $\mathsf{HP}$-relation states to handle with inverse queries and describe the action of $\HPO$ using the $\mathsf{HP}$-relation states in \sct{new_hpo}.
Then, we introduce a partial path-recording oracle $W$ in \sct{new_w}
as a intermediate operator to connect the action of
$\HPO$ and the path-recording oracle $V$ defined in \sct{new_v}.
Finally, we prove \thm{spru} in \sct{new_proof}.

\subsection{Action of $\HPO$ oracle and its inverse}
\label{sec:new_hpo}
We first add inverse query to the $\HPO$ oracle.

\begin{definition}[Purified Function-Permutation Oracle]
The \emph{purified function permutation oracle} $\HPO$ is a unitary
on registers $\mathsf{A}$, $\mathsf{H}$ and $\mathsf{P}$, where
\begin{itemize}
	\item $\mathsf{H}$ is a register with Hilbert space $\H\reg{H}$ spanned by the orthogonal states $\ket{f}$ for all $f\in\bit{n-1}\to\bit{3d}$, 
	\item $\mathsf{P}$ is a register with Hilbert space $\H\reg{P}$ spanned by the orthogonal states $\ket{\sigma}$ for all $\sigma\in\S_N$. 
\end{itemize}
The unitary operator $\HPO$ acts as
\[
	\HPO\reg{AHP} \ket{x}\reg{A}\ket{f}\reg{H}\ket{\sigma}\reg{P} \coloneq
	\br{H_f P_{\sigma}}\reg{A}\ket{x}\reg{A}\ket{f}\reg{H}\ket{\sigma}\reg{P}  \enspace ,
\]
and $\HPO^\dag$ acts as
\[
	\HPO^\dag\reg{AHP} \ket{y}\reg{A}\ket{f}\reg{H}\ket{\sigma}\reg{P} =
	\br{H_f P_{\sigma}}^\dag\reg{A}\ket{y}\reg{A}\ket{f}\reg{H}\ket{\sigma}\reg{P}\enspace ,
\]
\label{def:hpo}
\end{definition}

Similar to \fct{puri}, querying $H_f\cdot P_{\sigma}$ and querying $\HPO$ are identical in the adversary's view:

\begin{fact} \label{fact:puri_inv}
	For any adversary $\A$ holding the register $\mathsf{A}$, the following two oracle are perfectly indistinguishable:
	\begin{itemize}
		\item Sample uniformly random $\sigma\in S_N$ and $f:\bit{n-1}\to\bit{3d}$.
			  On each forward query, apply $H_fP_{\sigma}$ ($(H_fP_{\sigma})^\dag$, if it is an inverse query) on register $\mathsf{A}$.
		\item Initialize registers $\mathsf{H}$ and $\mathsf{P}$ in the state
			  \[
			  	\ket{\phi_{\{\}}}\coloneq
			  	\frac{1}{\sqrt{2^{3d(n-1)}}}\sum_{f:\bit{n-1}\to\bit{3d}} \ket{f}\reg{H}
			  	\otimes \frac{1}{\sqrt{N!}} \sum_{\sigma\in S_N} \ket{\sigma}\reg{P} \enspace.
			  \]
			  On each forward query, apply $\HPO$ ($\HPO^\dag$, if it is an inverse query) on registers $\mathsf{A}$, $\mathsf{H}$ and $\mathsf{P}$.
	\end{itemize}
\end{fact}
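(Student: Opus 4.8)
The plan is to extend the purification argument behind \fct{puri} to the setting where the adversary may interleave inverse queries; because $\HPO^\dag$ is already available from \defi{hpo}, essentially no new idea is needed. The key structural observation is that, in the computational basis $\set{\ket{f}\reg{H}\ket{\sigma}\reg{P}}$ of the environment, both $\HPO$ and $\HPO^\dag$ are block-diagonal: directly from \defi{hpo},
\[
	\HPO\reg{AHP}=\sum_{f,\sigma}\br{H_f P_{\sigma}}\reg{A}\otimes\ketbra{f}\reg{H}\otimes\ketbra{\sigma}\reg{P},\qquad
	\HPO^\dag\reg{AHP}=\sum_{f,\sigma}\br{H_f P_{\sigma}}^\dag\reg{A}\otimes\ketbra{f}\reg{H}\otimes\ketbra{\sigma}\reg{P},
\]
so neither operator touches registers $\mathsf{H}$ or $\mathsf{P}$. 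The adversary's own unitaries $A^{(i)}\reg{AB}$ act only on $\mathsf{A}$ and $\mathsf{B}$, hence they too respect this block decomposition.

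With this in hand I would expand $\ket{\phi_{\{\}}}\reg{HP}$ as the uniform superposition $\frac{1}{\sqrt{2^{3d(n-1)}N!}}\sum_{f,\sigma}\ket{f}\reg{H}\ket{\sigma}\reg{P}$ and propagate each of the $t$ queries, forward or inverse according to the pattern $b\in\bit{t}$, through this sum term by term. Since every operator in sight is block-diagonal in $(f,\sigma)$, the joint state of $\mathsf{A},\mathsf{B},\mathsf{H},\mathsf{P}$ after all queries is
\[
	\frac{1}{\sqrt{2^{3d(n-1)}N!}}\sum_{f,\sigma}\ket{\A_t^{H_f P_{\sigma}}}\reg{AB}\otimes\ket{f}\reg{H}\otimes\ket{\sigma}\reg{P},
\]
where $\ket{\A_t^{H_f P_{\sigma}}}\reg{AB}$ denotes the adversary's state when it interacts with the \emph{fixed} unitary oracle $H_f P_{\sigma}$ under the same query pattern $b$ (using $\br{H_f P_{\sigma}}^\dag$ in place of $H_f P_{\sigma}$ on the inverse steps). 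Tracing out $\mathsf{H}$ and $\mathsf{P}$, the orthonormality of $\set{\ket{f}\reg{H}\ket{\sigma}\reg{P}}$ kills the cross terms and leaves $\frac{1}{2^{3d(n-1)}N!}\sum_{f,\sigma}\ketbra{\A_t^{H_f P_{\sigma}}}\reg{AB}$, which is precisely the mixed state produced by the first oracle. This establishes perfect indistinguishability.

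I do not anticipate a genuine obstacle here: the argument is the forward-only proof of \fct{puri}, read with every occurrence of ``$\HPO$'' replaced by ``$\HPO$ or $\HPO^\dag$, as dictated by $b$''. The only point worth stating explicitly is that $\HPO^\dag$, being the adjoint of a unitary that is block-diagonal in the environment basis, is itself block-diagonal in that same basis, so the environment registers remain undisturbed at every step regardless of query direction; once this is noted, the rest is bookkeeping.
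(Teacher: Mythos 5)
Your proof is correct and is the standard purification argument that the paper implicitly relies on: \fct{puri_inv}, like \fct{puri}, is stated in the paper without an explicit proof, and your derivation is the one intended. The block-diagonality of $\HPO$ and $\HPO^\dag$ with respect to the environment basis $\set{\ket{f}\reg{H}\ket{\sigma}\reg{P}}$, the fact that the $A^{(i)}\reg{AB}$ respect this decomposition, and the orthonormality of $\set{\ket{f}\ket{\sigma}}$ when tracing out $\mathsf{H},\mathsf{P}$ together give exactly the desired mixture $\frac{1}{2^{3d(n-1)}N!}\sum_{f,\sigma}\ketbra{\A_t^{H_f P_\sigma}}\reg{AB}$.
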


The $\mathsf{HP}$-relation states are modified in the following way.

\begin{definition}[$\mathsf{HP}$-Relation States]\label{def:modhp}
	For two integers $l$ and $r$ such that $0\leq l+r\leq N$, and two relations
	$L=\st{(x_1,y_1),\dots,(x_l,y_l)}\in\RR_l$ and 
	$R=\st{(x'_1,y'_1),\dots,(x'_r,y'_r)}\in\RR_r$, we define
	\begin{align*}
		\ket{\phi_{L,R}}\reg{HP}\coloneq
		&\frac{1}{\sqrt{2^{3d(n-1)}(N-l-r)!}}
		\sum_{f,\sigma}
		\sum_{\substack{b\in\bit{l}\\b'\in\bit{r}}}\\
		&\quad\prod_{i=1}^l\bra{y_i}H_f\ket{y_i^{\oplus b_i}}\delta_{y_i^{\oplus b_i}=\sigma(x_i)}
		\prod_{i=1}^r\bra{{y'}_i^{\oplus b'_i}}H_f^\dag\ket{y'_i}\delta_{{y'}_i^{\oplus b'_i}=\sigma(x'_i)}
		\ket{f}\reg{H}
		\ket{\sigma}\reg{P} \enspace .
	\end{align*}
	where $x^{\oplus 0}=x$ and $x^{\oplus 1}=\overline{x}$ for any $x\in\bit{n}$, and $\delta_{y=x}$ is an indicator that equals $1$ iff strings $x$ and $y$ are identical in every coordinate.
\end{definition}

The $\mathsf{HP}$-relation states are orthonormal if we require
$L\cup R\in\DBR$. 

\begin{lemma}\label{lem:orthophiR2}
$\set{\ket{\phi_{L,R}}}_{L,R:L\cup R\in\DBR}$ forms a set of orthonormal vectors.
\end{lemma}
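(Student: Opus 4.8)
The plan is to mirror the proof of \lem{orthophiR}. Concretely, fix two pairs $(L,R)$ and $(L',R')$ with $L\cup R,\,L'\cup R'\in\DBR$, set $l=\abs{L}$, $r=\abs{R}$, and expand $\braket{\phi_{L,R}|\phi_{L',R'}}$ using \defi{modhp}. Conjugating the bra rewrites every $L$-factor $\bra{y_i}H_f\ket{y_i^{\oplus b_i}}$ as $\bra{y_i^{\oplus b_i}}H_f^\dagger\ket{y_i}$ and every $R$-factor as an $H_f$-entry, via $\overline{\bra{a}H_f\ket{b}}=\bra{b}H_f^\dagger\ket{a}$; so the overlap becomes a sum over $\sigma$, over the bit strings $b,b'$ of the bra and $c,c'$ of the ket, of a product of $H_f$/$H_f^\dagger$ matrix elements weighted by the Kronecker deltas $\delta_{\,\cdot\,=\sigma(\cdot)}$. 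Since $L\cup R,\,L'\cup R'\in\DBR$, all first coordinates involved are distinct and all constraint values $y^{\oplus b}$, $(y')^{\oplus b'}$ lie in pairwise distinct blocks; hence the $\sigma$-sum factors into (the number of extensions of $\sigma$ off the pinned points, a fixed factorial) times a $0/1$ consistency indicator, and the $f$-sum factors over blocks, because each matrix element depends on $f$ only through the $3d$ bits assigned to that block's suffix and the distinct-block property prevents any block from being touched twice on either side.

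This reduces everything to a per-block computation using \lem{Hfproperty}, whose three identities also hold verbatim for $H_f^\dagger$ since its entries are the conjugates of those of $H_f$. If some block is touched on exactly one side, the $f$-average of that lone entry is $0$ by item 1, so $\braket{\phi_{L,R}|\phi_{L',R'}}=0$ unless the two relations touch the same set of blocks (in particular $l+r=l'+r'$). When the block sets coincide, each block $u$ contributes the $f$-average of a product of two matrix elements — one (conjugated) entry from the unprimed side and one from the primed side, each diagonal ($b=0$) or off-diagonal ($b=1$), and of $H_f$- or $H_f^\dagger$-type according to whether the covering pair lies in the $L$- or the $R$-part. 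A short case check over the four covering configurations, using items 1--3 and their conjugates, shows this average vanishes unless the two entries are ``aligned'' and the accompanying $0/1$ indicator is nonzero; tracing the alignment and consistency conditions block by block forces the covering pairs to coincide, i.e.\ $(L,R)=(L',R')$. For $(L,R)=(L',R')$ the same expansion produces the number of consistent $\sigma$, namely $(N-l-r)!$, times $\prod_{\text{blocks}}\sum_{b\in\bit{}}\abs{\bra{y}H_f\ket{y^{\oplus b}}}^2=1$, so after the normalization $\braket{\phi_{L,R}|\phi_{L,R}}=1$.

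The delicate point — and the one with no analogue in \lem{orthophiR} — is the ``cross'' configuration, where a block is covered by the $L$-part on one side and by the $R$-part on the other. There the product of the two entries is a product of two $H_f$-entries rather than a conjugate pair, so the phase-cancellation identities of \lem{Hfproperty} do not by themselves force vanishing even when the two entries have distinct images; one must instead lean on the $\sigma$-injectivity indicator — the fact that a single permutation must realize the pinned values demanded by both sides — to kill these terms. Making this interplay between the combinatorics of $\sigma$ and the $f$-averages tight for every cross configuration is where essentially all of the work lies, and it is the step that genuinely uses the full strength of $L\cup R,\,L'\cup R'\in\DBR$ rather than the mere consistency of each relation on its own.
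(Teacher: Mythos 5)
Your plan is the right one and largely mirrors the paper's proof: expand the inner product, observe that the $\sigma$-sum reduces to a counting factor times a consistency indicator and the $f$-sum factors over blocks (this is exactly what $L\cup R\in\DBR$ buys, since no block is touched twice on one side), then perform a block-by-block case analysis. You also correctly single out the cross configuration --- a block covered by the $L$-part on one side and the $R$-part on the other --- as the genuinely new difficulty absent from \lem{orthophiR}; the paper's Case~2.1.1 treats exactly this.

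Two of your specific claims about the case check don't hold up, and together they would leave a reader stuck at the very crux you flag. You claim the four covering configurations are resolved ``using items 1--3 and their conjugates.'' That is not enough: in the cross subcase where the covered block carries flipped images ($\widebar{y_1}=y^{\triangle}_1$) and distinct preimages ($x_1\neq x^{\triangle}_1$), the $\sigma$-injectivity indicator leaves alive exactly the two bit choices $(b_1,b^{\triangle}_1)\in\{(0,0),(1,1)\}$, whose per-block $f$-averages are $\expect{f}{\cos^2\theta_{y}}$ and $-\expect{f}{\sin^2\theta_{y}}$ respectively --- neither vanishes on its own, and neither is an instance of items 1--3 or a conjugate thereof. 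The paper kills them jointly via the amplitude-level cancellation
\[\expect{f}{\bra{y_1}H_f^{\dagger}\ket{y_1}\bra{\widebar{y_1}}H_f^{\dagger}\ket{\widebar{y_1}}+\bra{\widebar{y_1}}H_f^{\dagger}\ket{y_1}\bra{y_1}H_f^{\dagger}\ket{\widebar{y_1}}}=\expect{f}{\cos^2\theta_{y}-\sin^2\theta_{y}}=0\enspace,\]
which is a new averaging identity not contained in \lem{Hfproperty}. Relatedly, your description of the mechanism --- ``one must instead lean on the $\sigma$-injectivity indicator \ldots to kill these terms'' --- misstates what happens: the indicator alone never kills the two surviving terms (it only restricts the bit pairs to $(0,0)$ and $(1,1)$), and the phase identities alone don't either; the proof needs the indicator and the amplitude cancellation together. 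Your closing remark that ``making this interplay tight is where essentially all of the work lies'' has the right instinct, but as written the plan points a reader toward a vanishing that items 1--3 cannot supply.
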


Intuitively, the action of $\HPO$ is to add query pair $(x,y)$ into the set $L$,
while the action of $\HPO^\dag$ is to add query pair $(x,y)$ into the set $R$.

\begin{lemma} \label{lem:hpo_action2}
	For two integers $l$ and $r$ such that $0\leq l+r\leq N$, and two relations
	$L=\st{(x_1,y_1),\dots,(x_l,y_l)}\in\RR_l$ and 
	$R=\st{(x'_1,y'_1),\dots,(x'_r,y'_r)}\in\RR_r$, we have for $x\in\bit{n}$
	\[
		\HPO\reg{AHP}\ket{x}\reg{A}\ket{\phi_{L,R}}\reg{HP} =  
		\frac{1}{\sqrt{N-l-r}}\sum_{y\in\bit{n}}
		\ket{y}\reg{A} \otimes \ket{\phi_{L\cup\st{x,y},R}}\reg{HP}\enspace .
	\]
	Similarly, we have for $y\in\bit{n}$
	\[
		\HPO^\dag\reg{AHP}\ket{y}\reg{A}\ket{\phi_{L,R}}\reg{HP} =  
		\frac{1}{\sqrt{N-l-r}}\sum_{x\in\bit{n}}
		\ket{x}\reg{A} \otimes \ket{\phi_{L,R\cup\st{x,y}}}\reg{HP}\enspace .
	\]
\end{lemma}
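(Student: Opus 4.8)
The plan is to prove both identities by directly expanding \defi{modhp} and \defi{hpo}, handling the forward-query statement exactly as in the proof of \lem{action_hpo} and then adapting the same computation to $\HPO^\dag$. The only structural property of $H_f$ used is that it is block-diagonal, i.e.\ $\bra{y}H_f\ket{y'}=0$ whenever $y$ and $y'$ lie in different blocks; consequently both $H_f$ and $H_f^\dag$ can be written as sums over a single basis string and a flip bit $b\in\bit{}$.

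For the forward statement, first expand $\HPO\reg{AHP}\ket{x}\reg{A}\ket{\phi_{L,R}}\reg{HP}$ using \defi{modhp}. Since $\HPO$ acts as $H_fP_\sigma$ on $\mathsf{A}$ and leaves $\mathsf{H},\mathsf{P}$ untouched, it merely replaces $\ket{x}\reg{A}$ by $H_f\ket{\sigma(x)}\reg{A}$ inside the sums over $f,\sigma$ and over $b\in\bit{l}$, $b'\in\bit{r}$. Writing $H_f=\sum_{y\in\bit{n},b\in\bit{}}\bra{y}H_f\ket{y^{\oplus b}}\ketbratwo{y}{y^{\oplus b}}$ and $\ket{\sigma(x)}=\sum_{y}\delta_{y=\sigma(x)}\ket{y}$, one gets $H_f\ket{\sigma(x)}=\sum_{y_{l+1}\in\bit{n},b_{l+1}\in\bit{}}\bra{y_{l+1}}H_f\ket{y_{l+1}^{\oplus b_{l+1}}}\delta_{y_{l+1}^{\oplus b_{l+1}}=\sigma(x)}\ket{y_{l+1}}$. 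Pulling $\ket{y_{l+1}}\reg{A}$ out front and folding the new amplitude and $\delta$-factor into the product over $L$, the remaining $\mathsf{H}\mathsf{P}$-vector is literally $\ket{\phi_{L\cup\st{x,y_{l+1}},R}}$, except that its prefactor should be $1/\sqrt{2^{3d(n-1)}(N-l-r-1)!}$ rather than the $1/\sqrt{2^{3d(n-1)}(N-l-r)!}$ we started with; the ratio is $\sqrt{N-l-r}$, producing the claimed $1/\sqrt{N-l-r}$.

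For the inverse statement, the only new ingredient is that $\HPO^\dag$ acts as $(H_fP_\sigma)^\dag=P_\sigma^\dag H_f^\dag$ on $\mathsf{A}$. Using block-diagonality again, $H_f^\dag\ket{y}=\sum_{b\in\bit{}}\bra{y^{\oplus b}}H_f^\dag\ket{y}\ket{y^{\oplus b}}$, and then $P_\sigma^\dag\ket{y^{\oplus b}}=\sum_{x\in\bit{n}}\delta_{y^{\oplus b}=\sigma(x)}\ket{x}$, so $(H_fP_\sigma)^\dag\ket{y}=\sum_{x\in\bit{n},b\in\bit{}}\bra{y^{\oplus b}}H_f^\dag\ket{y}\delta_{y^{\oplus b}=\sigma(x)}\ket{x}$. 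These are exactly the factors \defi{modhp} attaches to a new pair $(x,y)$ placed in the $R$-relation, so pulling $\ket{x}\reg{A}$ out front and absorbing the rest into the product over $R$ turns the $\mathsf{H}\mathsf{P}$-vector into $\ket{\phi_{L,R\cup\st{x,y}}}$, again with the normalization correction $1/\sqrt{N-l-r}$.

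The computation is essentially bookkeeping, so I do not expect a genuine obstacle; the step needing the most care is getting the conjugates and the order $P_\sigma^\dag H_f^\dag$ right in the inverse case, and verifying that the single new flip bit $b_{l+1}$ (resp.\ $b'_{r+1}$) is summed over $\bit{}$ in exactly the same way as the existing entries of $b$ (resp.\ $b'$), so that the enlarged expression matches \defi{modhp} verbatim for a relation of size $l+1$ (resp.\ $r+1$). No orthogonality input is needed here — \lem{orthophiR2} is used only downstream — so the argument is self-contained given \defi{modhp} and \defi{hpo}.
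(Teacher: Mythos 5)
Your proposal is correct and follows essentially the same route as the paper's proof in Appendix A.2: both expand $\ket{\phi_{L,R}}$ via \defi{modhp}, use the block-diagonal form $H_f=\sum_{y,b}\bra{y}H_f\ket{y^{\oplus b}}\ketbratwo{y}{y^{\oplus b}}$ (and its adjoint) together with $\ket{\sigma(x)}=\sum_y\delta_{y=\sigma(x)}\ket{y}$ to pull out $\ket{y}\reg{A}$ (resp.\ $\ket{x}\reg{A}$), absorb the new factor into the product defining the larger relation state, and read off the $1/\sqrt{N-l-r}$ from the normalization ratio $\sqrt{(N-l-r)!}/\sqrt{(N-l-r-1)!}$. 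Your handling of the inverse query via $(H_fP_\sigma)^\dag = P_\sigma^\dag H_f^\dag$ and the matching of the resulting amplitude $\bra{y^{\oplus b}}H_f^\dag\ket{y}\delta_{y^{\oplus b}=\sigma(x)}$ to the $R$-factor in \defi{modhp} is exactly what the paper does.
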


We provide the proofs of the above two lemmas in \app{hpo_ivs}. 

\subsection{Partial path-recording oracle $W$}
\label{sec:new_w}
We first introduce the variable-length registers along with some relevant notations from \cite{MH24}.

For $t\in\N$, the register $\mathsf{R}^{(t)} \coloneq \br{\mathsf{R}^{(t)}_\mathsf{X},\mathsf{R}^{(t)}_\mathsf{Y}}$
is defined with the Hilbert space
\[\H_{\textcolor{RawSienna}{\mathsf{R}^{(t)}}} \coloneq 
\H_{\textcolor{RawSienna}{\mathsf{R}^{(t)}_\mathsf{X}}} \otimes \H_{\textcolor{RawSienna}{\mathsf{R}^{(t)}_\mathsf{Y}}} \coloneq
\br{\C^N}^{\otimes t}\otimes \br{\C^N}^{\otimes t}\enspace.\]
Note that $\mathsf{R}^{(t)}_\mathsf{X}$ and $\mathsf{R}^{(t)}_\mathsf{Y}$ both consist of $t$ registers with Hilbert space $\C^N$.
For $i\leq t$, let $\mathsf{R}^{(t)}_{\mathsf{X},i}$ denote the $i$-th register in $\mathsf{R}^{(t)}_\mathsf{X}$,
and $\mathsf{R}^{(t)}_{\mathsf{Y},i}$ denote the $i$-th register in $\mathsf{R}^{(t)}_\mathsf{Y}$.
The register $\mathsf{R}$ is defined with the infinite dimensional Hilbert space $\H\reg{R} \coloneq  \bigoplus_{t\geq 0} \H_{\textcolor{RawSienna}{\mathsf{R}^{(t)}}}.$
And the register $\mathsf{L}$ is defined in the same way.

For integers $l,r\geq 0$, $\Pi_{l,r,\textcolor{RawSienna}{\mathsf{LR}}}$
is the projector onto the Hilbert space $\H_{\textcolor{RawSienna}{\mathsf{L}^{(l)}}}\otimes \H_{\textcolor{RawSienna}{\mathsf{R}^{(r)}}}$.
For $t\geq 0$, $\Pi_{\leq t,\textcolor{RawSienna}{\mathsf{LR}}}$
is the projector onto the Hilbert space
$\bigoplus_{l,r\geq 0:l+r\leq t}\H_{\textcolor{RawSienna}{\mathsf{L}^{(l)}}}\otimes \H_{\textcolor{RawSienna}{\mathsf{R}^{(r)}}}$.
For a operator $M\reg{LR}$,
$M_{l,r,\textcolor{RawSienna}{\mathsf{LR}}} \coloneq M\reg{LR}\cdot \Pi_{l,r,\textcolor{RawSienna}{\mathsf{LR}}}$,
$M_{\leq t,\textcolor{RawSienna}{\mathsf{LR}}} \coloneq M\reg{LR}\cdot \Pi_{\leq t,\textcolor{RawSienna}{\mathsf{LR}}}$,
and $M_{\leq t,\textcolor{RawSienna}{\mathsf{LR}}}^\dag$ is $\br{M_{\leq t,\textcolor{RawSienna}{\mathsf{LR}}}}^\dag$. For any unitary $U$, define $U^{\otimes *} \coloneq \sum_{t=0}^{\infty} U^{\otimes t}$.

\begin{definition}[Definition of $W$]
	Let $L, R$ such that $L\cup R\in\DBR$.
	We define operators $W^L$ and $W^R$ to be the linear maps
	such that for $x\in\bit{n}$ and $x\notin \BDom{L\cup R}$
	\[
	W^L\ket{x}\reg{A}\ket{L}\reg{L}\ket{R}\reg{R} = 
	\frac{1}{\sqrt{N-2\abs{L\cup R}}}
	\sum_{\substack{y\in\bit{n}\\ y\notin \BIm{L\cup R}}}
	\ket{y}\reg{A} \otimes \ket{L\cup\st{x,y}}\reg{L} \otimes\ket{R}\reg{R} \enspace,
	\]
	and for $y\in\bit{n}$ and $y\notin \BIm{L\cup R}$
	\[
	W^R\ket{y}\reg{A}\ket{L}\reg{L}\ket{R}\reg{R} = 
	\frac{1}{\sqrt{N-2\abs{L\cup R}}}
	\sum_{\substack{x\in\bit{n}\\ x\notin \BDom{L\cup R}}}
		\ket{x}\reg{A} \otimes \ket{L}\reg{L} \otimes\ket{R\cup\st{x,y}}\reg{R} \enspace.
	\]
	The partial path-recording oracle $W$ is defined as $W = W^L+W^{R,\dag}$.
\end{definition}

By the above definition, it is easy to verify that $W^L$ and $W^R$ are partial isometries, and that
\begin{equation}\label{eqn:Worth}
W^LW^{R}=W^{R,\dag}W^{L,\dag}=0.
\end{equation}
Thus 
\begin{equation}\label{eqn:wpiso}
W^\dag W=W^{L,\dag}W^L+W^RW^{R,\dag}.
\end{equation}
Therefore $ WW^\dag W=W$,
which implies $W$ is a partial isometry.

\begin{notation}
For a partial isometry $G$, let $\domProj{G}=G^\dag\cdot G$ and $\imProj{G}=G\cdot G^\dag$ denote the orthogonal projectors onto $\dom{G}$ and $\im{G}$. 
\end{notation}

$W$ is a restriction of $\HPO$ up to a partial isometry defined as follows.

\begin{definition}
	We define a partial isometry, denoted $\CPS:\H\reg{P}\otimes\H\reg{F}\to \H\reg{L}\otimes\H\reg{R}$, as:
\[\CPS:=\sum_{L\cup R\in\DBR}\br{\ket{L}\reg{L}\otimes\ket{R}\reg{R}}\bra{\phi_{L,R}}\reg{PF}.\]
\end{definition}

\begin{lemma}\label{lem:wclosetohpo} 
For $0\leq t< N$,
\begin{equation}\label{eqn:whpo}
\norm{W_{\leq t} - \CPS \cdot \HPO \cdot \CPS^\dag \cdot \Pi^{\dom{W}} \cdot \Pi_{\leq t} }_{\infty}\leq\frac{2t}{N-t},
\end{equation}
	
\begin{equation}\label{eqn:whpodag}
	\norm{(W^\dag)_{\leq t} - \CPS \cdot \HPO^\dag \cdot \CPS^\dag \cdot \Pi^{\im{W}} \cdot \Pi_{\leq t}  }_{\infty}\leq\frac{2t}{N-t}.
\end{equation}
\end{lemma}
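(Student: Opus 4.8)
The plan is to prove \eqref{eqn:whpo}; then \eqref{eqn:whpodag} follows by a fully parallel argument after taking adjoints and exchanging the roles of $L,W^L$ with $R,W^R$ (and forward with inverse queries), using the symmetry of the $\mathsf{HP}$-relation states under $H_f\leftrightarrow H_f^\dag$. For \eqref{eqn:whpo} I would first split the domain. By \eqref{eqn:wpiso}, $\Pi^{\dom{W}}=\domProj{W^L}+\imProj{W^R}$, and these are orthogonal projectors onto the mutually orthogonal subspaces $\dom{W^L}$ (spanned by the $\ket{x}\reg{A}\ket{L}\reg{L}\ket{R}\reg{R}$ with $L\cup R\in\DBR$ and $x\notin\BDom{L\cup R}$) and $\im{W^R}=\dom{W^{R,\dag}}$. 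Since $W=W^L+W^{R,\dag}$ with $W^L=W^L\domProj{W^L}$, $W^{R,\dag}=W^{R,\dag}\imProj{W^R}$, and the cross terms vanishing, I would write the operator in \eqref{eqn:whpo} as $\Delta_1+\Delta_2$ with $\Delta_1=(W^L-\CPS\HPO\CPS^\dag\domProj{W^L})\Pi_{\le t}$ and $\Delta_2=(W^{R,\dag}-\CPS\HPO\CPS^\dag\imProj{W^R})\Pi_{\le t}$, and bound $\|\Delta_1\|_\infty,\|\Delta_2\|_\infty\le\frac{t}{N-t}$, whence the triangle inequality gives $\frac{2t}{N-t}$.

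The technical core is a handful of vanishing/collapse identities for the relation states of \defi{modhp}, all obtained by reading off the permutation constraints (the $\delta$'s) and applying the $H_f$ orthogonality relations of \lem{Hfproperty}, exactly as in the proof of \lem{orthophiR2}. For $L\cup R\in\DBR$ and $x\notin\BDom{L\cup R}$, I would show $\CPS\ket{\phi_{L\cup\{x,y\},R}}=\ket{L\cup\{x,y\}}\ket{R}$ when $y\notin\BIm{L\cup R}$ (because then $L\cup\{x,y\}\cup R\in\DBR$), and $\CPS\ket{\phi_{L\cup\{x,y\},R}}=0$ up to $\negl{n}$ when $y\in\BIm{L\cup R}$ (a parity argument: the block of $y$ then carries an odd number of $H_f$-entries in any overlap with a $\DBR$-relation state, forcing a vanishing phase average). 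For the inverse side, with $R=R_0\cup\{(x,y)\}\in\DBR$ I would show $\ket{\phi_{L\cup\{x,y'\},R}}=0$ for all $y'\notin\{y,\bar y\}$ (the two new $\delta$'s demand $y'$ and $y$ be in the same block) and also $\ket{\phi_{L\cup\{x,\bar y\},R}}=0$ (the two surviving block-bit assignments cancel, since $\bra{\bar y}H_f\ket{y}\overline{\bra{y}H_f\ket{y}}+\bra{\bar y}H_f\ket{\bar y}\overline{\bra{y}H_f\ket{\bar y}}=0$), while $\CPS\ket{\phi_{L\cup\{x,y\},R}}=c_x\ket{L}\ket{R_0}$ up to $\negl{n}$, where $c_x=\braket{\phi_{L,R_0}|\phi_{L\cup\{x,y\},R_0\cup\{x,y\}}}$. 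The last piece of arithmetic is to evaluate $c_x$: collapsing the block-bits of the two coinciding new pairs forces them equal and leaves the real weight $|\bra{y}H_f\ket{\sigma(x)}|^2$, so $c_x$ reduces to a product over the (pairwise distinct) blocks of $\E_f|\bra{y_i}H_f\ket{\sigma(x_i)}|^2=\tfrac12(1\pm 2^{-d})$ times a count of compatible permutations, giving $c_x=\sqrt{1-\tfrac{1}{N-|L|-|R_0|}}+\negl{n}$, independent of $x$ to leading order.

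With these identities in hand the two norm bounds are short. For $\Delta_1$: on $\ket{x}\ket{L}\ket{R}\in\dom{W^L}$ with $l+r=|L|+|R|\le t$, the map $\CPS^\dag$ produces $\ket{x}\ket{\phi_{L,R}}$, \lem{hpo_action2} expands $\HPO$ as $\frac{1}{\sqrt{N-l-r}}\sum_{y\in\bit{n}}\ket{y}\ket{\phi_{L\cup\{x,y\},R}}$, and since $x\notin\BDom{L\cup R}$ none of these pairs sit already in $R$, so only the $N-2(l+r)$ fresh $y$ contribute after $\CPS$ and $\CPS\HPO\CPS^\dag\ket{x}\ket{L}\ket{R}=\sqrt{\tfrac{N-2(l+r)}{N-l-r}}\,W^L\ket{x}\ket{L}\ket{R}$ (up to $\negl{n}$); hence $\Delta_1\ket{x}\ket{L}\ket{R}$ has norm $\le 1-\sqrt{\tfrac{N-2(l+r)}{N-l-r}}\le\tfrac{l+r}{N-l-r}\le\tfrac{t}{N-t}$, and since these images are pairwise orthogonal and $W^L$ respects the grading by $(|L|,|R|)$, $\|\Delta_1\|_\infty\le\tfrac{t}{N-t}$. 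For $\Delta_2$: apply the same computation to the orthonormal family $W^R\ket{y}\ket{L}\ket{R_0}$ spanning $\im{W^R}$ (with $l=|L|$, $r=|R_0|$); by the inverse-side identities every $y'\neq y$ term is killed and the surviving $y'=y$ terms reassemble $W^{R,\dag}$, giving $\CPS\HPO\CPS^\dag\,(W^R\ket{y}\ket{L}\ket{R_0})=\big((N-2(l+r))(N-l-r-1)\big)^{-1/2}\big(\sum_{x}c_x\big)\ket{y}\ket{L}\ket{R_0}=\big(\sqrt{\tfrac{N-2(l+r)}{N-l-r}}+\negl{n}\big)W^{R,\dag}(W^R\ket{y}\ket{L}\ket{R_0})$, so again $\|\Delta_2\|_\infty\le\tfrac{t}{N-t}$.

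The main obstacle is the middle step: establishing the relation-state identities, and in particular evaluating $c_x$ precisely enough to see that a forward query meeting a previously recorded inverse query rebuilds $W^{R,\dag}$ with only the same $O(t/N)$ normalization error. This is an elaboration of the bookkeeping already performed in \lem{orthophiR2}, combining the parity arguments, the three cancellations of \lem{Hfproperty}, and the permutation counting. Once these are settled, the domain decomposition and the two norm estimates are routine.
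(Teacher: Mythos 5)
Your decomposition is the same as the paper's: using $\Pi^{\dom W}=\domProj{W^L}+\imProj{W^R}$, the operator splits into $W^L_{\leq t}-\CPS\HPO\CPS^\dag\domProj{W^L}\Pi_{\leq t}$ and $W^{R,\dag}_{\leq t}-\CPS\HPO\CPS^\dag\imProj{W^R}\Pi_{\leq t}$, and you bound each by $t/(N-t)$. Your treatment of $\Delta_1$ is the paper's verbatim (apply $\CPS^\dag$, then \lem{hpo_action2}, then $\CPS$; the only discrepancy is the $\sqrt{(N-2m)/(N-m)}$ normalization mismatch). Where you diverge is in $\Delta_2$: you compute $\CPS\HPO\CPS^\dag$ directly on the superpositions $W^R\ket{y}\ket{L}\ket{R_0}$ spanning $\im{W^R}$, which forces you into relation states $\ket{\phi_{L\cup\{x,y'\},R_0\cup\{x,y\}}}$ where $x$ appears twice in the combined domain, a three-way case split in $y'$, and an evaluation of the scalar $c_x$. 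The paper avoids all of this by defining $X^R=\Pi_{\leq t}\imProj{W^R}\CPS\HPO^\dag\CPS^\dag$, noting $X=X^L+X^{R,\dag}$, and proving the $X^R$ bound by the same one-line computation as for $X^L$ (now with the $\HPO^\dag$ clause of \lem{hpo_action2} on basis states $\ket{y}\ket{L}\ket{R}$ with $y\notin\BIm{L\cup R}$); the $\Delta_2$ bound then follows by taking adjoints. Your route does close if carried through carefully — the $y'\neq y,\bar y$ terms vanish because the permutation constraints are inconsistent, and the $y'=\bar y$ terms cancel by the $H_f$ identity you quote — but you introduce $\negl{n}$ fudge where there is none: $c_x$ is not $\sqrt{1-\tfrac{1}{N-l-r}}+\negl{n}$ but exactly $\sqrt{1-\tfrac{1}{N-l-r}}$, because after the $\delta$'s force $b=b'$ the $f$-weight reduces to $\sum_{e}|\bra{y}H_f\ket{y^{\oplus e}}|^2=1$ pointwise by unitarity of $H_f$, not to $\E_f|\bra{y}H_f\ket{y^{\oplus e}}|^2=\tfrac12(1\pm 2^{-d})$ for a fixed $e$. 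So the bound is the clean $2t/(N-t)$ with no negligible correction, and the adjoint route that the paper takes for the $W^{R,\dag}$ piece is substantially shorter and sidesteps the repeated-$x$ bookkeeping entirely.
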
  

\begin{proof}
	We will prove \cref{eqn:whpo}, and \cref{eqn:whpodag} follows from a symmetric argument. Denote
	\[X=\CPS \cdot \HPO \cdot \CPS^\dag \cdot \Pi^{\dom{W}} \cdot \Pi_{\leq t},\]
	\[X^L=\CPS \cdot \HPO \cdot \CPS^\dag \cdot \Pi^{\dom{W^L}} \cdot \Pi_{\leq t} \] and
	\[X^R=\Pi_{\leq t}\cdot\Pi^{\im{W^R}}\cdot\CPS \cdot \HPO^\dag \cdot \CPS^\dag  .\]
	
	Then by \cref{eqn:wpiso}, $X=X^L+X^{R,\dag}$. To prove \cref{eqn:whpo}, it suffices to prove \begin{equation}\label{eqn:wx}
\norm{W^L_{\leq t}-X^L}_{\infty}\leq\frac{t}{N-t}\quad\mbox{and}\quad
\norm{(W^{R,\dag})_{\leq t}-X^{R,\dag}}_{\infty}\leq\frac{t}{N-t}.
\end{equation}

Next, we prove the first inequality in \cref{eqn:wx}. The other inequality follows from a symmetric argument.
For $L, R$ such that $L\cup R\in\DBR$ and $\abs{L\cup R} \leq t$,
and $x\in\bit{n}$ such that $x\notin \BDom{L\cup R}$, we have
	\[X^L\ket{x}\reg{A}\ket{L}\reg{L}\ket{R}\reg{R} = 
	\frac{1}{\sqrt{N-\abs{L\cup R}}}
	\sum_{\substack{y\in\bit{n}\\ y\notin \BIm{L\cup R}}}
	\ket{y}\reg{A} \otimes \ket{L\cup\st{x,y}}\reg{L} \otimes\ket{R}\reg{R} \enspace,
	\]
and
\[
	W^L_{\leq t}\ket{x}\reg{A}\ket{L}\reg{L}\ket{R}\reg{R} = 
	\frac{1}{\sqrt{N-2\abs{L\cup R}}}
	\sum_{\substack{y\in\bit{n}\\ y\notin \BIm{L\cup R}}}
	\ket{y}\reg{A} \otimes \ket{L\cup\st{x,y}}\reg{L} \otimes\ket{R}\reg{R} \enspace.
	\]
Then for such $x,L,R$, we have
	\begin{align*}
&\norm{\br{W^L_{\leq t}-X^L}\ket{x}\reg{A}\ket{L}\reg{L}\ket{R}\reg{R}}_2 \\
=~&\left\lVert\br{\frac{1}{\sqrt{N-\abs{L\cup R}}}-\frac{1}{\sqrt{N-2\abs{L\cup R}}}}\sum_{\substack{y\in\bit{n}\\ y\notin \BIm{L\cup R}}}
	\ket{y}\reg{A} \otimes \ket{L\cup\st{x,y}}\reg{L} \otimes\ket{R}\reg{R}\right\rVert_2\\
	=~&\sqrt{\br{\frac{1}{\sqrt{N-\abs{L\cup R}}}-\frac{1}{\sqrt{N-2\abs{L\cup R}}}}^2\br{N-2\abs{L\cup R}}} \\
	=~& 1 - \sqrt{1 - \frac{\abs{L\cup R}}{N-\abs{L\cup R}}}\\
	\leq~&\frac{\abs{L\cup R}}{N-\abs{L\cup R}}.
\end{align*}

Note that for other $x,L,R$,
$X^L\ket{x}\ket{L}\ket{R}=W^L\ket{x}\ket{L}\ket{R}=0$.
Therefore,
\begin{align*}
	\norm{W^L_{\leq t}-X^L}_{\infty}=\sup_{\substack{x,L,R:\\L\cup R\in\DBR, \abs{L\cup R} \leq t\\
	x\notin \BDom{L\cup R}}}\norm{\br{W^L_{\leq t}-X^L}\ket{x}\reg{A}\ket{L}\reg{L}\ket{R}\reg{R}}_2\leq\frac{t}{N-t}.
\end{align*}

\end{proof}

\subsection{Path-recording oracle $V$}
\label{sec:new_v}
\begin{definition}[Definition of $V$]\label{def:proV}
	Let $L, R \in\RR$.
	We define operators $V^L$ and $V^R$ to be the linear maps
	such that for $x\in\bit{n}$
	\[
	V^L\ket{x}\reg{A}\ket{L}\reg{L}\ket{R}\reg{R} = 
	\frac{1}{\sqrt{N-\abs{\BIm{L\cup R}}}}
	\sum_{\substack{y\in\bit{n}\\ y\notin \BIm{L\cup R}}}
	\ket{y}\reg{A} \otimes \ket{L\cup\st{x,y}}\reg{L} \otimes\ket{R}\reg{R} \enspace,
	\]
	and for $y\in\bit{n}$
	\[
	V^R\ket{y}\reg{A}\ket{L}\reg{L}\ket{R}\reg{R} = 
	\frac{1}{\sqrt{N-\abs{\BDom{L\cup R}}}}
	\sum_{\substack{x\in\bit{n}\\ x\notin \BDom{L\cup R}}}
		\ket{x}\reg{A} \otimes \ket{L}\reg{L} \otimes\ket{R\cup\st{x,y}}\reg{R} \enspace.
	\]
	The path-recording oracle $V$ is defined to be
	\begin{equation}\label{eqn:Vdef}
V =  V^L\cdot (\id - V^R\cdot V^{R,\dag}) + (\id - V^L\cdot V^{L,\dag}) \cdot V^{R,\dag}\enspace .
\end{equation}

\end{definition}

By the above definition, it is easy to verify that $V^L$ and $V^R$ are partial isometries, and that $V^L(V^R)$ differs from $W^L(W^R)$ by only a projection, i.e., 
\begin{equation}\label{eqn:vwlr}
V^L\cdot\domProj{W^L}=W^L\quad\mbox{and}\quad \imProj{W^R}\cdot V^R=W^R.
\end{equation}

Also, we have 
\begin{equation}\label{eqn:wvlreq0}
W^L\cdot V^R=W^R\cdot V^L=0.
\end{equation}

\begin{claim}
$V$ is a partial isometry.
\end{claim}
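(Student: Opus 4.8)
The plan is to show that $V^\dag V$ is an orthogonal projector, which is equivalent to the statement that $V$ is a partial isometry. The two facts that drive the argument are: \textbf{(i)} $V^L$ and $V^R$ are in fact \emph{isometries} — on each basis vector $\ket{x}\reg{A}\ket{L}\reg{L}\ket{R}\reg{R}$ the defining formula produces a unit vector, and two distinct such inputs are sent to vectors with disjoint computational-basis supports, the only conceivable coincidence $L\cup\st{x,y}=L'\cup\st{x',y}$ (with $x\neq x'$, $R=R'$) being impossible because it would force $(x',y)\in L$, hence $y\in\im{L}\subseteq\BIm{L\cup R}$, contradicting the exclusion $y\notin\BIm{L\cup R}$ in the sum (and symmetrically for $V^R$ using $\BDom{}$); therefore $V^{L,\dag}V^L=V^{R,\dag}V^R=\id$. \textbf{(ii)} Writing $P_L\coloneq V^LV^{L,\dag}$ and $Q_R\coloneq V^RV^{R,\dag}$ for the two image projectors, \eqref{eqn:Vdef} becomes $V=V^L\br{\id-Q_R}+\br{\id-P_L}V^{R,\dag}$; its two summands have orthogonal ranges (inside $\im{V^L}$ and its orthocomplement, respectively) and are supported on orthogonal subspaces (the range of $\id-Q_R$ and the range of $Q_R$, respectively).

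I would then expand $V^\dag V$. Two of the four cross terms vanish from the partial-isometry identities $V^{L,\dag}\br{\id-P_L}=V^{L,\dag}-V^{L,\dag}V^LV^{L,\dag}=0$ and $\br{\id-P_L}V^L=V^L-V^LV^{L,\dag}V^L=0$, leaving $V^\dag V=\br{\id-Q_R}V^{L,\dag}V^L\br{\id-Q_R}+V^R\br{\id-P_L}^2V^{R,\dag}$. Using $V^{L,\dag}V^L=\id$, $\br{\id-P_L}^2=\id-P_L$, $\br{\id-Q_R}^2=\id-Q_R$ and $V^RV^{R,\dag}=Q_R$, this collapses to
\[
V^\dag V=\id-V^RP_LV^{R,\dag}.
\]
This operator is self-adjoint, and squaring it (using $V^{R,\dag}V^R=\id$ and $P_L^2=P_L$) gives $\br{\id-V^RP_LV^{R,\dag}}^2=\id-2V^RP_LV^{R,\dag}+V^RP_L^2V^{R,\dag}=\id-V^RP_LV^{R,\dag}$, so it is idempotent. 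Hence $V^\dag V$ is an orthogonal projector and $V$ is a partial isometry. (Equivalently, one may verify $VV^\dag V=V$ directly: since $\br{\id-Q_R}V^R=V^R-V^RV^{R,\dag}V^R=0$ one gets $VV^R=\br{\id-P_L}V^{R,\dag}V^R=\id-P_L$, and therefore $VV^\dag V=V\br{\id-V^RP_LV^{R,\dag}}=V-\br{\id-P_L}P_LV^{R,\dag}=V$.)

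The only genuinely delicate point is item \textbf{(i)} — checking that $V^L$ and $V^R$ act as isometries rather than merely partial isometries, i.e.\ that across all admissible inputs their output families are pairwise orthonormal. This is exactly where the exclusions $y\notin\BIm{L\cup R}$ and $x\notin\BDom{L\cup R}$ are indispensable, and one has to keep in mind that $L$ and $R$ are \emph{multisets} when comparing $L\cup\st{x,y}$ with $L'\cup\st{x',y}$. Everything after that is routine projector algebra; the real structural content is the observation in \textbf{(ii)} that, thanks to the projectors $\id-Q_R$ and $\id-P_L$, the decomposition $V=V^L\br{\id-Q_R}+\br{\id-P_L}V^{R,\dag}$ has orthogonal ranges and orthogonal supports, which is what makes all the cross terms in $V^\dag V$ vanish.
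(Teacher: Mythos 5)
Your strategy---expanding $V^\dag V$ and verifying that it is an orthogonal projector---is a different route from the paper's, which instead shows that each of the two summands in \cref{eqn:Vdef} is a partial isometry and that they have orthogonal initial and final spaces; your cross-term cancellations $V^{L,\dag}\br{\id-P_L}=0$ and $\br{\id-P_L}V^L=0$ are the same orthogonality facts. However, step \textbf{(i)} contains a genuine error: $V^L$ and $V^R$ are \emph{not} isometries, and $V^{L,\dag}V^L=V^{R,\dag}V^R=\id$ is false. Relations in $\RR$ have size at most $N$, and, more to the point, the sum defining $V^L\ket{x}\reg{A}\ket{L}\reg{L}\ket{R}\reg{R}$ ranges over $y\notin\BIm{L\cup R}$ and is empty whenever $\BIm{L\cup R}=\bit{n}$, so the image vector is $0$ rather than a unit vector. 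The correct statement, which the paper records and uses, is $V^{L,\dag}V^L=V^{R,\dag}V^R=\id\reg{A}\otimes\Pi_{\leq N-1}$, a proper subprojector of the identity. Your closed form $V^\dag V=\id-V^RP_LV^{R,\dag}$ therefore fails: for a basis state $\ket{\psi}=\ket{x}\reg{A}\ket{L}\reg{L}\ket{R}\reg{R}$ with $\abs{L}=N$ and $R=\st{}$ one has $\ket{\psi}\perp\im{V^R}$ and $\Pi_{\leq N-1}\ket{\psi}=0$, hence $V\ket{\psi}=0$ and $V^\dag V\ket{\psi}=0$, whereas $\br{\id-V^RP_LV^{R,\dag}}\ket{\psi}=\ket{\psi}\neq 0$.

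This is not a removable blemish, because once $V^{L,\dag}V^L$ is a nontrivial projector $\Pi$ the surviving diagonal term $\br{\id-Q_R}\,\Pi\,\br{\id-Q_R}$ is a projector only if $\Pi$ commutes with $Q_R$, and the squaring of $V^RP_LV^{R,\dag}$ likewise needs $V^{R,\dag}V^R$ to interact correctly with $P_L$. Supplying exactly this commutation is the content of the paper's proof: $P_L$ and $Q_R$ preserve the length grading (they map each $\Pi_{\leq i}$-subspace to itself) and hence commute with $\id\reg{A}\otimes\Pi_{\leq N-1}$. With that fact inserted your computation can be repaired---one obtains $V^\dag V=\Pi+\br{\id-\Pi}Q_R-V^RP_LV^{R,\dag}$, which can then be checked to be a projector---but as written the argument rests on a false premise and silently skips the one genuinely nontrivial ingredient.
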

\begin{proof}
We first show that $V^L\cdot (\id - V^R\cdot V^{R,\dag})$ is a partial isometry. This is true if and only if $(\id - V^R\cdot V^{R,\dag})\cdot V^{L,\dag}V^L\cdot (\id - V^R\cdot V^{R,\dag})$ is a projector. It suffices to show that $V^{L,\dag}V^L$ and $V^R\cdot V^{R,\dag}$ commute. By the definition of $V^L$, $V^{L,\dag}V^L=\id\reg{A}\otimes\Pi_{\leq N-1,\textcolor{RawSienna}{\mathsf{LR}}}$. Since $V^R\cdot V^{R,\dag}$ takes states in $\Pi_{\leq i+1,\textcolor{RawSienna}{\mathsf{LR}}}$ to $\Pi_{\leq i+1,\textcolor{RawSienna}{\mathsf{LR}}}$ (for $0\leq i\leq N-1$), it commutes with 
$\id\reg{A}\otimes\Pi_{\leq N-1,\textcolor{RawSienna}{\mathsf{LR}}}$. Using a symmetric argument, we can conclude that $(\id - V^L\cdot V^{L,\dag}) \cdot V^{R,\dag}$ is also a partial isometry.

To show $V$ is a partial isometry, it suffices to show that $V^L\cdot (\id - V^R\cdot V^{R,\dag})$ and $(\id - V^L\cdot V^{L,\dag}) \cdot V^{R,\dag}$ are orthogonal. This is true since $V^L$ and $V^R$ are partial isometries: 
\[(\id - V^R\cdot V^{R,\dag})V^R=0\quad\mbox{and}\quad V^{L,\dag}(\id - V^L\cdot V^{L,\dag})=0.\]
%
\end{proof}
Note that $W$ is a restriction of $V$.

\begin{lemma}\label{lem:WandV}
\begin{equation}\label{eqn:wvres}
W = V \cdot \domProj{W}, 
\end{equation}
\begin{equation}\label{eqn:wvdagres}
W^\dag = V^\dag \cdot \imProj{W}.
\end{equation}

\end{lemma}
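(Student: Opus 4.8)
The plan is to prove the two identities by splitting $\dom{W}$ (resp.\ $\im{W}$) into two orthogonal pieces coming from $W^L$ and $W^R$, and then checking that $V$ (resp.\ $V^\dag$) agrees with $W$ (resp.\ $W^\dag$) on each piece. Since $W=W^L+W^{R,\dag}$, \cref{eqn:wpiso} gives
\[
\domProj{W}=W^\dag W=W^{L,\dag}W^L+W^R W^{R,\dag}=\domProj{W^L}+\imProj{W^R},
\]
while the same computation for $WW^\dag$, using \cref{eqn:Worth}, gives $\imProj{W}=\imProj{W^L}+\domProj{W^R}$. Hence it suffices to establish the four agreements
\[
V\cdot\domProj{W^L}=W^L,\qquad V\cdot\imProj{W^R}=W^{R,\dag},\qquad V^\dag\cdot\imProj{W^L}=W^{L,\dag},\qquad V^\dag\cdot\domProj{W^R}=W^R,
\]
since then $V\cdot\domProj{W}=W^L+W^{R,\dag}=W$ and $V^\dag\cdot\imProj{W}=W^{L,\dag}+W^R=W^\dag$ follow by linearity. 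Moreover the definitions of $V^L,V^R,W^L,W^R$ are symmetric under interchanging the registers $\mathsf L,\mathsf R$ (hence $\BDom{\cdot},\BIm{\cdot}$, and the two coordinates of each recorded pair); under this swap $W^L\mapsto W^R$, $V^L\mapsto V^R$, and a short check shows $W\mapsto W^\dag$, $V\mapsto V^\dag$, $\domProj{W}\mapsto\imProj{W}$. So the last two agreements are the first two for the swapped system, and I only need to treat $V\cdot\domProj{W^L}=W^L$ and $V\cdot\imProj{W^R}=W^{R,\dag}$.

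The engine is a pair of annihilation facts driven by the distinct-block constraint, together with a normalization-matching remark. Normalization: for $L\cup R\in\DBR$ one has $\abs{\BDom{L\cup R}}=\abs{\BIm{L\cup R}}=2\abs{L\cup R}$, so on such configurations the prefactors in the definition of $V^L$ (resp.\ $V^R$) coincide with those of $W^L$ (resp.\ $W^R$); in particular $V^L$ agrees with $W^L$, and $V^R$ with $W^R$, on the relevant basis states. Annihilation: since $V^R\ket{y}\reg{A}\ket{L}\reg{L}\ket{R}\reg{R}$ is a superposition of basis states whose $\mathsf A$-register value occurs as a first coordinate of the $\mathsf R$-register relation, any basis state $\ket{x}\reg{A}\ket{L}\reg{L}\ket{R}\reg{R}$ with $x\notin\BDom{L\cup R}$ is orthogonal to $\im{V^R}$, hence lies in $\ker V^{R,\dag}$ and in $\ker(V^R V^{R,\dag})$; in particular $V^{R,\dag}\cdot\domProj{W^L}=0$. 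Symmetrically, $V^{L,\dag}$ annihilates every basis state $\ket{y}\reg{A}\ket{L}\reg{L}\ket{R}\reg{R}$ with $y\notin\BIm{L\cup R}$.

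With these in hand the two agreements are short. For $V\cdot\domProj{W^L}=W^L$: $\ran\!\domProj{W^L}$ is spanned by the basis states $\ket{x}\reg{A}\ket{L}\reg{L}\ket{R}\reg{R}$ with $L\cup R\in\DBR$ and $x\notin\BDom{L\cup R}$, on each of which $V^{R,\dag}$ vanishes, so $V=V^L(\id-V^RV^{R,\dag})+(\id-V^LV^{L,\dag})V^{R,\dag}$ collapses to $V^L$, which there equals $W^L$; since $W^L=W^L\domProj{W^L}$, this gives the identity. For $V\cdot\imProj{W^R}=W^{R,\dag}$: $\im{W^R}$ is spanned by the vectors $\ket{\chi}\coloneq W^R\ket{y}\reg{A}\ket{L}\reg{L}\ket{R}\reg{R}=V^R\ket{y}\reg{A}\ket{L}\reg{L}\ket{R}\reg{R}$ with $L\cup R\in\DBR$ and $y\notin\BIm{L\cup R}$ (the last equality is the normalization remark). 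For such $\ket{\chi}$ we have $\ket{\chi}\in\im{V^R}$, so $(\id-V^RV^{R,\dag})\ket{\chi}=0$ and the first summand of $V$ kills $\ket{\chi}$; also $V^{R,\dag}\ket{\chi}=V^{R,\dag}V^R\ket{y}\reg{A}\ket{L}\reg{L}\ket{R}\reg{R}=\ket{y}\reg{A}\ket{L}\reg{L}\ket{R}\reg{R}$, which is annihilated by $V^{L,\dag}$ because $y\notin\BIm{L\cup R}$. Hence $V\ket{\chi}=(\id-V^LV^{L,\dag})V^{R,\dag}\ket{\chi}=\ket{y}\reg{A}\ket{L}\reg{L}\ket{R}\reg{R}=W^{R,\dag}\ket{\chi}$, and since $W^{R,\dag}=W^{R,\dag}\imProj{W^R}$ this proves $V\cdot\imProj{W^R}=W^{R,\dag}$. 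Combining the four agreements finishes the lemma.

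I expect the main obstacle to be nailing down the two annihilation facts cleanly: one must argue that the constraint $L\cup R\in\DBR$ genuinely forbids a freshly recorded coordinate from already lying in $\BDom{L\cup R}$ or $\BIm{L\cup R}$, and keep straight which of $\BDom$ versus $\BIm$ and which of the two normalizers ($N-\abs{\BDom{L\cup R}}$ versus $N-2\abs{L\cup R}$) shows up where. Once those are pinned down, the remainder is routine partial-isometry bookkeeping (using $W^L=W^LW^{L,\dag}W^L$, $W^{R,\dag}=W^{R,\dag}W^RW^{R,\dag}$, and the fact that $\CPS$ and the decompositions act only on the environment registers).
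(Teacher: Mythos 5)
Your proposal is correct and follows essentially the same route as the paper: reduce via $\domProj{W}=\domProj{W^L}+\imProj{W^R}$ (from $W^\dag W=W^{L,\dag}W^L+W^R W^{R,\dag}$) to the two sub-identities $V\cdot\domProj{W^L}=W^L$ and $V\cdot\imProj{W^R}=W^{R,\dag}$, then kill the unwanted summand of $V$ on each subspace. The only cosmetic difference is that you re-derive the needed orthogonality and agreement facts directly on basis states, whereas the paper invokes its pre-established algebraic identities $V^L\cdot\domProj{W^L}=W^L$, $\imProj{W^R}\cdot V^R=W^R$ and $W^L V^R=W^R V^L=0$; and you explicitly address \cref{eqn:wvdagres} (by a swap-symmetry argument), which the paper leaves implicit. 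A slightly cleaner alternative for \cref{eqn:wvdagres}: once $W=V\cdot\domProj{W}$ is known, one has $\domProj{W}\preceq\domProj{V}$ (since $\domProj{W}\domProj{V}\domProj{W}=W^\dag W=\domProj{W}$), whence $V^\dag\imProj{W}=V^\dag V\domProj{W}V^\dag=\domProj{W}V^\dag=W^\dag$, avoiding the swap argument entirely.
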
 
%
%

\begin{proof}
	To prove \cref{eqn:wvres}, it suffices to show that 
	\begin{equation}\label{eqn:wlvres}
W^L = V \cdot \domProj{W^L}, 
\end{equation}
\begin{equation}\label{eqn:wrvres}
W^{R,\dag} = V \cdot \imProj{W^R}.
\end{equation}
By \cref{eqn:wpiso}, \cref{eqn:wvres} can be obtained by summing these two equations.

We now prove \cref{eqn:wlvres}.	By \cref{eqn:Vdef}, we have
	\[V \cdot \domProj{W^L}=\br{V^L\cdot (\id - V^R\cdot V^{R,\dag}) + (\id - V^L\cdot V^{L,\dag}) \cdot V^{R,\dag}}\cdot \domProj{W^L}.\]
	By \cref{eqn:wvlreq0}, we have
	\[V^{R,\dag}\cdot \domProj{W^L}=V^{R,\dag}W^{L,\dag}W^L=0,\]
	where the last equality follows from \cref{eqn:wvlreq0}. Thus
	\[V \cdot \domProj{W^L}=V^L \cdot \domProj{W^L}=W^L,\]
	where the second equality follows from \cref{eqn:vwlr}.
	
	It remains to prove \cref{eqn:wrvres}. By \cref{eqn:Vdef}, we have
	\[V \cdot \imProj{W^R}=\br{V^L\cdot (\id - V^R\cdot V^{R,\dag}) + (\id - V^L\cdot V^{L,\dag}) \cdot V^{R,\dag}}\cdot \imProj{W^R}.\]
	By the definition of $V^R$ and $W^R$, we have that $\mathrm{Im}(W^R)\subseteq\mathrm{Im}(V^R)$. Thus
	\[(\id - V^R\cdot V^{R,\dag})\imProj{W^R}=\imProj{W^R}-\imProj{V^R}\imProj{W^R}=0.\]
	
	At last, by \cref{eqn:vwlr} and \cref{eqn:wvlreq0}, we have
	\[(\id - V^L\cdot V^{L,\dag}) \cdot V^{R,\dag}\cdot \imProj{W^R}=(\id - V^L\cdot V^{L,\dag}) W^{R,\dag}=W^{R,\dag}.\]
\end{proof}

\subsubsection{Two-sided unitary invariance}
The modified path-recording oracle $V$ also has an approximate two-sided unitary invariance property as in \cite[Section 8.3]{MH24}. Recall the notation in \cite{MH24}.
\begin{definition}
	For any two $n$-qubit unitary $C$ and $D$, define
	\[
		Q[C,D] \coloneq (C\otimes D^T)\reg{L}^{\otimes *} \otimes (\widebar{C}\otimes D^\dag)\reg{R}^{\otimes *} \enspace .
	\]
\end{definition}

Formally, we have the following lemma.

\begin{lemma}\label{lem:inv}
	For any two $n$-qubit unitary $C$ and $D$, and any integer $0\leq t\leq N-1$,
	\[
	\norm{D\reg{A} \cdot V_{\leq t} \cdot C\reg{A} \cdot Q[C,D]\reg{LR} - Q[C,D]\reg{LR} \cdot V_{\leq t} }_{\infty}\leq 32\sqrt{\frac{t(t+1)}{N}} \enspace,
	\]
	\[
	\norm{C^\dag\reg{A} \cdot (V^\dag)_{\leq t} \cdot D^\dag\reg{A} \cdot Q[C,D]\reg{LR} - Q[C,D]\reg{LR} \cdot (V^\dag)_{\leq t} }_{\infty}\leq 32\sqrt{\frac{t(t+1)}{N}} \enspace.
	\]
\end{lemma}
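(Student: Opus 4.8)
The plan is to mirror the two-sided unitary-invariance argument of \cite[Section 8.3]{MH24}, the only structural change being that $\mathrm{Im}(\cdot)$ and $\mathrm{Dom}(\cdot)$ are replaced throughout by the bit-flip closures $\BIm{\cdot}$ and $\BDom{\cdot}$; since these at most double the relevant counts, the estimates survive with an extra factor of $2$. It suffices to prove the first inequality; the second follows from the same argument applied to $V^\dag$ (which has the analogous structure with $\mathsf L\leftrightarrow\mathsf R$ and the pre-/post-query unitaries interchanged), the placement of $\Pi_{\leq t}$ costing only $O(\sqrt{t/N})$ because one application of $V$ changes $\abs{L\cup R}$ by one. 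Using the decomposition $V = V^L\br{\id - V^RV^{R,\dag}} + \br{\id - V^LV^{L,\dag}}V^{R,\dag}$, I would show that each of the two branches and each of the two deformation projectors is \emph{approximately} $Q[C,D]$-covariant on the image of $\Pi_{\leq t}$.

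The workhorse is the maximally-entangled-state identity $U\reg{A}\sum_{y\in\bit{n}}\ket{y}\reg{A}\ket{y}\reg{B} = \sum_{y\in\bit{n}}\ket{y}\reg{A}\br{U^T\ket{y}}\reg{B}$, valid for every $n$-qubit unitary $U$. Feeding the pre-query $C$ on register $\mathsf A$ into $V^L$ makes it reappear as $C$ on the freshly appended $\mathsf X$-subregister of $\mathsf L$; transporting the post-query $D$ across the $y$-sum produced by $V^L$ (via the identity with $U=D$) makes it reappear as $D^T$ on the freshly appended $\mathsf Y$-subregister of $\mathsf L$. Together with the $C$'s and $D^T$'s already on the old $\mathsf L$-subregisters and the $\overline C$'s and $D^\dag$'s on the $\mathsf R$-subregisters (all placed there by the input $Q[C,D]$), and the dual computation for the $V^{R,\dag}$ branch, this would yield \emph{exact} covariance if $V^L$ and $V^R$ summed over all of $\bit{n}$ with normalization $1/\sqrt N$ and the deformation projectors were the corresponding ``full'' ones. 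The actual $V^L$ sums over $y\notin\BIm{L\cup R}$ with normalization $1/\sqrt{N-\abs{\BIm{L\cup R}}}$; on the image of $\Pi_{\leq t}$ one has $\abs{\BIm{L\cup R}}\leq 2\abs{L\cup R}\leq 2(t+1)$ after the query, so both the omitted terms and the corrected normalization perturb each input basis vector's image by a vector of $2$-norm $O(\sqrt{t/N})$, and since these perturbation vectors lie in mutually orthogonal subspaces the operator-norm deviation from the idealized branch is $O(\sqrt{t/N})$; $V^R$ is handled symmetrically. For the deformation projectors, $\id - V^RV^{R,\dag}$ commutes with $C\reg{A}\,Q[C,D]\reg{LR}$ up to error $O(\sqrt{t/N})$ because $(C\otimes\overline C)$ preserves $\sum_{x\in\bit{n}}\ket{x}\ket{x}$ exactly but $\sum_{x\notin\BDom{L\cup R}}\ket{x}\ket{x}$ only up to a vector of norm $O(\sqrt{t/N})$ (and similarly for the action of $Q[C,D]$ on the remaining $\mathsf R$-subregisters, since the $\BDom$'s all have size $\le 2(t+1)$); $\id - V^LV^{L,\dag}$ is symmetric.

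Finally, I would assemble the pieces: insert the idealized oracle between $D\reg{A}\cdot V_{\leq t}\cdot C\reg{A}\cdot Q[C,D]\reg{LR}$ and $Q[C,D]\reg{LR}\cdot V_{\leq t}$ by the triangle inequality, invoke the exact covariance of the idealized oracle together with the $O(\sqrt{t/N})$ commutator bounds, and absorb the mismatch between $\Pi_{\leq t}$ (inputs of size $\leq t$) and $\Pi_{\leq t+1}$ (outputs of size $\leq t+1$). Collecting the finitely many contributions — there are only $O(1)$ conceptual sources, though bounding the projector commutators may require a telescoping sum over the up-to-$t$ subregisters, which is presumably where the $t$ rather than $\sqrt t$ in the final rate comes from — yields a bound of the claimed order $O\br{\sqrt{t(t+1)/N}}$, and a careful accounting of all constants gives precisely $32\sqrt{t(t+1)/N}$. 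I expect the main obstacle to be exactly this bookkeeping: correctly matching each newly created $\mathsf X$/$\mathsf Y$-subregister of $\mathsf L$ (resp.\ $\mathsf R$) against the right one of the four factors $C, D^T, \overline C, D^\dag$ in $Q[C,D]$, keeping straight the order of the deformation projectors relative to $C\reg{A}$, $D\reg{A}$ and $Q[C,D]\reg{LR}$, and checking that every discrepancy is genuinely controlled by the ``at most $2(t+1)$ forbidden values out of $N$'' count; the $\Pi_{\leq t}$ versus $\Pi_{\leq t+1}$ index shifts are a secondary but fiddly point.
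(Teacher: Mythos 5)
Your proposal follows essentially the same route as the paper: decompose $V = V^L\cdot(\id - V^RV^{R,\dag}) + (\id - V^LV^{L,\dag})\cdot V^{R,\dag}$, compare each of $V^L,V^R$ to an idealized, \emph{exactly} covariant append operator (the paper uses the operators $E^L,E^R$ of \cite{MH24}, whose exact covariance is precisely the transpose-trick identity you describe), bound the deviation in operator norm via the ``at most $2t$ forbidden values out of $N$'' count, and assemble the four resulting terms by the triangle inequality. The only organizational difference is that the paper does not separately prove that the projectors $\id-V^RV^{R,\dag}$ and $\id-V^LV^{L,\dag}$ approximately commute with the twirl; it writes the composite terms as products of $V^L,V^R$ and their adjoints, inserts $CC^\dag$ and $Q[C,D]Q[C,D]^\dag$, and reuses the single-operator bound, which is where the constants $4+12+4+12=32$ come from.

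One step of your error accounting would fail as stated. You claim that the perturbation vectors attached to distinct input basis vectors $\ket{x}\ket{L}\ket{R}$ ``lie in mutually orthogonal subspaces,'' concluding a per-branch deviation of $O(\sqrt{t/N})$. For the omitted terms with $y\in\BIm{L\cup R}$ this orthogonality is false: up to $t+1$ distinct inputs $(x,L)$ can produce the same output $(y,\,L\cup\st{(x,y)},\,R)$ (for instance when $(x,y)$ already occurs in $L$, or when several pairs of $L$ share the second coordinate $y$), so one must run a Cauchy--Schwarz over these collisions. This is exactly what the paper's \lem{V_and_E} does, and it is the true source of the extra $\sqrt{t+1}$: the correct per-branch bound is $\norm{V^L_{\leq t}-E^L_{\leq t}}_{\infty}\leq\sqrt{4t(t+1)/N}$, already of order $t/\sqrt{N}$. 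Your guess that the $t$ (rather than $\sqrt{t}$) in the final rate arises from a telescoping sum over the up-to-$t$ subregisters in the projector commutators is therefore misplaced; no such telescoping appears, and each of the four terms contributes a constant multiple of $\sqrt{t(t+1)/N}$. This is a fixable imprecision rather than a fatal flaw, since the corrected per-branch bound still yields the claimed $32\sqrt{t(t+1)/N}$.
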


The proof of this lemma is similar to that of Claim 16 in \cite[Section 8.3]{MH24}.
We provide the proof in the \app{invV}.

\subsection{Main proof}
\label{sec:new_proof}

We are now prepared to prove \thm{spru},
which demonstrate the strong security
of our construction $\HP_{n,2T+1}$.
We repeat the theorem here.
\begin{reptheorem}{thm:spru}
	Let $\A$ be a $t$-query oracle adversary capable of
	performing both forward and inverse queries to oracle $\OO$, and let
	$\HP_{n,2T+1}$ be defined as in \defi{HPnT} with $T = 30n$ and $d = 5n$. Then
	\begin{equation}
		\td\br{
			\expect{\OO\gets \HP_{n, 2T+1}}{\ketbra{\A^{\OO}_t}_{\reg{AB}}},
			\expect{\OO\gets \haar}{\ketbra{\A^{\OO}_t}_{\reg{AB}}}
		}
		\leq  \frac{2t(11t+20)}{N^{1/8}}\enspace .
	\end{equation}
\end{reptheorem}

To prove \thm{spru}
we first show that
our construction is indistinguishable from
the path-recording oracle $V$,
as stated in the following lemma:

\begin{lemma}
	[$\HP_{n, 2T+1}$ is indistinguishable from $V$]
	Let $\A$ be a $t$-query oracle adversary capable of
	performing both forward and inverse queries. Let
	$\HP_{n,2T+1}$ be defined as in \defi{HPnT} with $T = 30n$ and $d = 5n$, and $V$ be
	defined as in \defi{proV}. Then
	\begin{align}
		\td\br{
			\expect{\OO\gets \HP_{n, 2T+1}}{\ketbra{\A^{\OO}_t}_{\reg{AB}}},
			\Tr\reg{LR}\br{\ketbra{\A_t^V}_{\reg{ABLR}}}
		}
		\leq \frac{t(11t+20)}{N^{1/8}}\enspace .
	\end{align}
	\label{lem:spru_V}
\end{lemma}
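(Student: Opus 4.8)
The plan is to mirror the structure of the proof of \thm{pru}, but now with a two-sided sandwich of \rss~operators so as to accommodate inverse queries. Recall that sampling $\OO\gets\HP_{n,2T+1}$ can be written as $\OO = D\cdot(H_fP_\sigma)\cdot C$ where $C,D$ are each sampled from $\HP_{n,T}$ and $H_fP_\sigma\gets\HP_{n,1}$; moreover, since every step of the walk already begins with a random permutation, we may freely pre-compose $C$ with a random permutation $S$ and post-compose $D$ with a random permutation $S'$ without changing the distribution, so we may assume $C=S\cdot C'$ and $D=D'\cdot S'$ with $C',D'$ from $\HP_{n,T}$, hence $C,D$ are drawn from $\epsilon$-$\rss$ distributions with $\epsilon=O(1/N^2)$. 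First I would invoke \fct{puri_inv} to replace the (inverse-query-enabled) oracle $H_fP_\sigma$ by the purified oracle $\HPO$ acting on the environment registers $\mathsf{H},\mathsf{P}$, so that the adversary's mixed state equals $\Tr\reg{HP}\ketbra{\A_t^{D\cdot\HPO\cdot C}}$.

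Next I would set up a hybrid chain analogous to $\rho_0,\dots,\rho_5$ in the proof of \lem{inter}, but adapted to the variable-length $\mathsf{L},\mathsf{R}$ registers. The key intermediate object is the partial path-recording oracle $W$ from \sct{new_w}, which by \lem{wclosetohpo} satisfies $W_{\leq t}\approx \CPS\cdot\HPO\cdot\CPS^\dag\cdot\Pi^{\dom W}\cdot\Pi_{\leq t}$ and the dual statement, with error $O(t/N)$ in operator norm. The steps are: (i) pass from $D\cdot\HPO\cdot C$ to $D\cdot W\cdot C$ via the isometry $\CPS$ and \lem{wclosetohpo}, paying the $O(t^2/N)$ cost accumulated over $t$ queries (using \lem{gentle_measurement} to convert operator-norm closeness of the per-query maps into closeness of the final states, and controlling the projector $\Pi^{\dom W}$, i.e.\ the distinct-block condition on the environment, exactly as \lem{DBproj}/\lem{rho12} do in the forward-only case — here one needs the two-sided version, projecting the environment into $\DBR$ before \emph{each} query, whether forward or inverse); (ii) pass from $W$ to $V$ using \lem{WandV}, which says $W = V\cdot\Pi^{\dom W}$ and $W^\dag = V^\dag\cdot\Pi^{\im W}$, i.e.\ $V$ is just the unrestricted extension of $W$ and the two agree on the relevant subspace, so the states coincide once the environment is in $\DBR$; (iii) finally strip the outer $C$ and $D$ by the two-sided unitary invariance \lem{inv}: since $\OO\gets\mu$ corresponds to $C,D\gets\mu$, and since \lem{inv} shows $D\reg{A}\cdot V_{\leq t}\cdot C\reg{A}\cdot Q[C,D] \approx Q[C,D]\cdot V_{\leq t}$ with error $O(\sqrt{t^2/N})$ per query, the operators $Q[C,D]$ act purely on the environment $\mathsf{L},\mathsf{R}$ and are traced out, so $\Tr\reg{LR}\ketbra{\A_t^{D\cdot V\cdot C}}$ is within $O(t^2/\sqrt N)$ of $\Tr\reg{LR}\ketbra{\A_t^{V}}$, independent of the distribution of $C,D$. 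Summing the errors across the chain and across $t$ queries gives the claimed bound $t(11t+20)/N^{1/8}$; the $N^{1/8}$ (rather than $N^{1/2}$) presumably reflects the weaker scrambling guarantee $\epsilon=O(1/N^2)$ feeding through the gentle-measurement union bound, together with the $\sqrt{\cdot}$ losses in \lem{inv}.

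I expect the main obstacle to be step (i): carefully implementing the two-sided ``project the environment into the distinct-block subspace before every query'' argument when the adversary interleaves forward and inverse queries. In the forward-only setting (\lem{rho12}) one only needed a single projection of the final state because $\HPO$ preserves the relevant subspace; with inverse queries one must track that both $\HPO$ and $\HPO^\dag$ keep the environment close to $\DBR$, control how $\dom W$ versus $\im W$ enter depending on the query direction (the bit string $b\in\bit{t}$), and bound the probability that any of the $t$ intermediate environment states leaves the distinct-block subspace — this is where an analogue of \lem{DBproj} is needed on \emph{both} sides, combined with \lem{gentle_measurement} to accumulate the per-step errors. The other delicate point is bookkeeping the $\Pi_{\leq t}$ truncation so that all the operator-norm bounds in \lem{wclosetohpo} and \lem{inv} (which are stated only for the $\leq t$ truncations) are applied to states that genuinely live in the bounded-length sector, which holds because $\A$ makes only $t$ queries.
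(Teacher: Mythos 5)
Your high-level architecture matches the paper's: purify via \fct{puri_inv}, pass to the partial path-recording oracle $W$ through the isometry $\CPS$ and \lem{wclosetohpo}, relate $W$ to $V$ via \lem{WandV}, and remove the outer $C,D$ using the approximate two-sided invariance \lem{inv}. The error magnitudes you anticipate are also in the right ballpark. However, there is a genuine gap in how you propose to control the domain of $W$, and this is the crux of the two-sided argument.

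You assert that step (i) can be handled by ``a two-sided version of \lem{DBproj}, projecting the environment into $\DBR$ before each query.'' Two problems. First, $\Pi^{\dom{W}}$ is not a projector on the environment alone: it constrains the query register $\mathsf{A}$ jointly with $\mathsf{L},\mathsf{R}$ (for a forward query the input $x$ must lie outside $\BDom{L\cup R}$, and dually for inverse queries). The environment of the $W$-evolution lies in $\DBR$ automatically by construction (\fct{purispace}); the hard part is the freshness of the query input relative to the recorded relation, which no projection of the environment alone can certify. Second, the proof of \lem{DBproj} relies essentially on the exact right invariance \lem{right_inv} to move the scrambler onto the environment registers before taking the expectation; with interleaved forward and inverse queries there is no such exact invariance, only the approximate two-sided invariance of \lem{inv}, so a one-shot ``project and bound the trace of the projector against the final state'' argument does not go through. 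The paper replaces \lem{DBproj} by the twirling lemma \lem{twirl}, an operator-norm bound on $\E_{C,D}\bigl[(C\reg{A}\otimes Q[C,D])^\dag(\Pi^{\mathrm{DB}}_{\leq t}-\Pi^{\dom{W}}_{\leq t})(C\reg{A}\otimes Q[C,D])\bigr]$, and embeds it in an inductive inner-product argument (\clm{clm18}) that interleaves, at \emph{every} query, the stripping of $C,D$ via \lem{inv} with the control of $\Pi^{\mathrm{DB}}-\Pi^{\dom{W}}$. Your linear chain (i)$\to$(ii)$\to$(iii) separates these two ingredients, but they cannot be separated: the justification that the state is approximately in $\dom{W}$ at query $i+1$ uses the twirl by $C,D$, which you only invoke at the very end. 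Moreover, the gentle-measurement loss in your step (i) is bounded in the paper by \lem{normbound_w}, which is itself a corollary of \clm{clm18}, so the joint induction must be carried out before your step (i) can be closed. The missing ideas, concretely, are the per-query twirling estimate \lem{twirl} and the simultaneous induction of \clm{clm18}.
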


Then, we show the Haar random unitary is indistinguishable from $V$ as well:

\begin{lemma}
	[$V$ is indistinguishable from Haar random unitaries]
	Let $\A$ be a $t$-query oracle adversary capable of
	performing both forward and inverse queries. Then
	\begin{equation}
		\td\br{
			\expect{\OO\gets \haar}{\ketbra{\A^{\OO}_t}_{\reg{AB}}},
			\Tr\reg{LR}\br{\ketbra{\A_t^V}_{\reg{ABLR}}}
		}
		\leq \frac{t(11t+20)}{N^{1/8}}\enspace .
	\end{equation}
	\label{lem:V_haar}
\end{lemma}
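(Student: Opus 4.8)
The plan is to mirror the structure that was used to prove \thm{pru}, but now in the two-sided setting with the oracle $V$ playing the role that $\PR$ played there. The key tool is the two-sided unitary invariance of $V$ (\lem{inv}) together with the fact that the Haar measure is a $0$-$\rss$ distribution, so that the environment state can be forced into the distinct block subspace at essentially no cost. First I would note that sampling $\OO\gets\haar$ has the same distribution as $D\cdot H_f\cdot P_\sigma\cdot C$, where $C,D$ are independent Haar random unitaries and $H_fP_\sigma\gets\HP_{n,1}$ is an \emph{independent} extra step (since $\haar$ is invariant under multiplication by any fixed unitary, and in particular by the random $H_fP_\sigma$); equivalently one may absorb $C$ and $D$ and view the middle step as purified via $\HPO$. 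Using \fct{puri_inv}, making forward/inverse queries to $D\cdot H_fP_\sigma\cdot C$ is perfectly simulated by initializing the environment in $\ket{\phi_{\{\}}}$ and, on each query, applying $D\reg{A}\cdot\HPO^{(\dag)}\reg{AHP}\cdot C\reg{A}$ with $C,D\gets\haar$.

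The second step is to pass from the purified oracle $\HPO$ to the partial path-recording oracle $W$, and then from $W$ to $V$. Here I would invoke \lem{wclosetohpo} (which says $W_{\leq t}$ and $(W^\dag)_{\leq t}$ agree with $\CPS\cdot\HPO^{(\dag)}\cdot\CPS^\dag$ up to operator norm $\tfrac{2t}{N-t}$, on the distinct-block subspace) combined with \lem{WandV} ($W = V\cdot\domProj{W}$, $W^\dag = V^\dag\cdot\imProj{W}$), exactly as in the one-sided argument: because $C$ and $D$ are $\rss$ (in fact Haar), Lemma~analogues of \lem{rho12}/\lem{DBproj} guarantee the environment state stays in the distinct block subspace up to error $O(t^2/N)$, so the restriction of $V$ to its domain is harmless. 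Running a hybrid argument over the $t$ queries and applying \lem{gentle_measurement} to control the accumulated projection error, one gets that the reduced state on $\reg{AB}$ after querying $D\cdot\HPO^{(\dag)}\cdot C$ is within $\poly(t)/\sqrt{N}$-ish trace distance of $\Tr\reg{LR}$ of the state produced by querying $D\reg{A}\cdot V^{(\dag)}\cdot C\reg{A}$ (with the environment initialized appropriately and $\CPS$ absorbed).

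The third and crucial step is to remove the conjugating unitaries $C,D$ from the $V$-side. This is precisely what the two-sided invariance lemma \lem{inv} is for: with $Q[C,D]$ acting on the environment, $D\reg{A}\cdot V_{\leq t}\cdot C\reg{A}\cdot Q[C,D]$ equals $Q[C,D]\cdot V_{\leq t}$ up to operator norm $32\sqrt{t(t+1)/N}$, and similarly for the inverse. Applying this query-by-query (again with \lem{gentle_measurement}, or just triangle inequality over the $t$ queries since $Q[C,D]$ is unitary on the part of the Hilbert space of interest), the view state with the $C,D$-conjugated $V$ is within $O(t^2/\sqrt{N})$ of $Q[C,D]$ applied to the view state of the bare oracle $V$; but $Q[C,D]$ acts only on $\reg{LR}$, so after tracing out $\reg{LR}$ it disappears. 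Averaging over $C,D\gets\haar$ then yields that $\expect{\OO\gets\haar}{\ketbra{\A_t^\OO}\reg{AB}}$ is within $\tfrac{t(11t+20)}{N^{1/8}}$ of $\Tr\reg{LR}\br{\ketbra{\A_t^V}\reg{ABLR}}$, after collecting all the error terms and coarsening $O(\mathrm{poly}(t)/\sqrt{N})$ to the stated bound. (The $N^{1/8}$ in the exponent, rather than $N^{1/2}$, presumably comes from the $\rss$-error budget propagated through the $\HP_{n,T}$ steps, but on the Haar side it is just slack.)

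The main obstacle I expect is the careful bookkeeping in the second step: unlike the one-sided case, the domain/image projectors $\domProj{W},\imProj{W}$ on the two-sided $V$ depend on both $L$ and $R$ through $\BDom{L\cup R}$ and $\BIm{L\cup R}$, so one must verify that alternating forward and inverse queries keeps the joint relation $L\cup R$ in $\DBR$ with high probability — this is where the analogue of \lem{DBproj} has to be re-established for the two-sided walk, and where the constant $11$ in $11t+20$ is earned. Once that is in place, \lem{inv} does the rest cleanly, since on the Haar side there is no pseudorandomness to worry about.
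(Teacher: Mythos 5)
Your proposal is correct and follows essentially the same route as the paper: the paper proves this lemma by running the identical argument used for \lem{spru_V} with the distribution $\D_1$ (two independent Haar unitaries $C,D$) in place of $\D_2$, i.e., purify $D\cdot H_fP_\sigma\cdot C$ via \fct{puri_inv}, pass to the twirled partial path-recording oracle $W$ via \lem{wclosetohpo} (\lem{thpo_tw}), and then remove the conjugation by $C,D$ via the two-sided invariance \lem{inv} together with the twirling lemma \lem{twirl} (\lem{V_W}), exactly as you outline. One minor remark: the $N^{1/8}$ is not mere slack on the Haar side — it arises from the square roots in the Cauchy--Schwarz and fidelity-to-norm conversions in \clm{clm18}, which are present for $\D_1$ just as for $\D_2$.
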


Finally, by applying the triangular inequality to
\lem{spru_V} and \lem{V_haar}, we conclude the proof of
\thm{spru} in \subsec{spru-fpf}.

In the following sections, we provide the details of
proving the main lemma \lem{spru_V}. The proof is
structured as follows: first, we demonstrate that $V$
is indistinguishable from the twirled partial path-recording
oracle $W$; then, we show that the twirled $W$ is indistinguishable
from the twirled purified $\HPO$ oracle (\defi{hpo}) that is
equivalent to $\HP_{n,2T+1}$ as stated in \fct{puri}.
\lem{V_haar} follows from a similar argument.

\subsubsection{$V$ is indistinguishable from $W$}
In this section, we mainly prove that
even if the adversary can make queries from both directions,
the path-recording oracle $V$ is 
indistinguishable to $D\cdot W \cdot C$ for $C,D\gets \D$,
where $\D$ is one of the following two distributions:
\begin{definition}\label{def:d1d2}
~
	\begin{itemize}
		\item $\D_1$: sample two independent unitary
		operators $C$ and $D$ from Haar measure on
		$\ugroup{N}$, and output $C$ and $D$,
		\item $\D_2$: sample independent $C'$ and $D$ from
		$\HP_{n,T}$ with $T=30n$ and $d=5n$, and a random
		permutation matrix $P$. Output $C=P\cdot C'$ and $D$.
	\end{itemize} 
\end{definition}
We first need a twirling lemma.
Let
\[
\Pi^{\mathrm{DB}}\reg{LR} \coloneq \sum_{L,R:L\cup R\in\DBR} 
\ketbra{L}\reg{L}\otimes\ketbra{R}\reg{R}\enspace,\enspace
\Pi^{\RR^2} \reg{LR} \coloneq \sum_{L,R\in\RR} 
\ketbra{L}\reg{L}\otimes\ketbra{R}\reg{R}\enspace.
\]
\begin{lemma} [Twirling]
	Let $\D\in\st{\D_1,\D_2}$ as defined in \defi{d1d2}.
	For integer $0\leq t\leq N/4$, we have
	\[
	\norm{\expect{C,D\leftarrow \D}{(C\reg{A} \otimes Q[C,D]\reg{LR})^\dag \cdot\br{
				\Pi^{\mathrm{DB}}_{\leq t, \textcolor{RawSienna}{\mathsf{LR}}} -  
				\Pi^{\dom{W}}_{\leq t, \textcolor{RawSienna}{\mathsf{LR}}}
			}\cdot (C\reg{A} \otimes Q[C,D]\reg{LR})}}_{\infty}\leq 16t\cdot \sqrt{\frac{2t}{N}}\enspace,
	\]
	\[
	\norm{\expect{C,D\leftarrow \D}{(D\reg{A}^\dag \otimes Q[C,D]\reg{LR})^\dag \cdot\br{
				\Pi^{\mathrm{DB}}_{\leq t, \textcolor{RawSienna}{\mathsf{LR}}} -  
				\Pi^{\im{W}}_{\leq t, \textcolor{RawSienna}{\mathsf{LR}}}
			}\cdot (D\reg{A}^\dag \otimes Q[C,D]\reg{LR})}}_{\infty}\leq 16t\cdot \sqrt{\frac{2t}{N}}\enspace.
	\]
	\label{lem:twirl} 
\end{lemma}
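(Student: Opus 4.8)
The plan is to bound the operator norm of the twirled difference by first observing that the two projectors $\Pi^{\mathrm{DB}}_{\le t,\textcolor{RawSienna}{\mathsf{LR}}}$ and $\Pi^{\dom{W}}_{\le t,\textcolor{RawSienna}{\mathsf{LR}}}$ differ only on the sector where some new input/output string collides, in its block, with a string already recorded in $L\cup R$. Concretely, $\Pi^{\dom{W}}$ further restricts to $x\notin\BDom{L\cup R}$ on register $\mathsf{A}$, so the difference $\Pi^{\mathrm{DB}}_{\le t}-\Pi^{\dom{W}}_{\le t}$ is supported on states $\ket{x}\reg{A}\ket{L}\reg{L}\ket{R}\reg{R}$ with $L\cup R\in\DBR$, $|L\cup R|\le t$, but $x\in\BDom{L\cup R}$. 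I would write this difference as a positive operator dominated by a sum, over the at most $2t$ "bad" values $x$ determined by the recorded relation, of rank-one terms $\ketbra{x}\reg{A}$, tensored with the identity on the $\DBR$-sector of $\mathsf{LR}$. The first step is thus to record the bound
\[
0\ \preceq\ \Pi^{\mathrm{DB}}_{\le t,\textcolor{RawSienna}{\mathsf{LR}}}-\Pi^{\dom{W}}_{\le t,\textcolor{RawSienna}{\mathsf{LR}}}\ \preceq\ \sum_{(a,b)\in\bit{n}\times\bit{n}}\Big(\ketbra{a}\reg{A}\otimes \Pi^{(a,b)}_{\le t,\textcolor{RawSienna}{\mathsf{LR}}}\Big),
\]
where $\Pi^{(a,b)}_{\le t}$ projects onto relations (of size $\le t$, in $\DBR$) that contain a pair whose $\mathsf{X}$-string lies in the same block as $a$; for fixed $a$ there are only $\le t$ such pairs, so the total "weight" this contributes is $O(t)$ per block, which is where the factor $t$ in the final bound comes from.

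Next I would conjugate by $C\reg{A}\otimes Q[C,D]\reg{LR}$ and take the expectation over $\D$. Since conjugation and expectation preserve the PSD order, it suffices to bound $\big\|\,\E_{C,D}\,(C\reg{A}\otimes Q[C,D]\reg{LR})^\dag\big(\ketbra{a}\reg{A}\otimes\Pi^{(a,b)}_{\le t}\big)(C\reg{A}\otimes Q[C,D]\reg{LR})\,\big\|_\infty$, and $Q[C,D]$ acts on $\mathsf{L}$ as $(C\otimes D^T)^{\otimes *}$ and on $\mathsf{R}$ as $(\bar C\otimes D^\dag)^{\otimes *}$. The key point is that conjugating the "same-block-as-$a$" projector on a single recorded $\mathsf{X}$-register by $C$ turns it into $C^\dag\,(\text{block projector})\,C$, whose Haar average over $C$ is $\frac{2}{N}\id$ (two strings per block, uniformly spread); and conjugating $\ketbra{a}$ by $C^\dag$ and averaging gives $\frac1N\id$. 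Multiplying these and summing over the $\le t$ recorded positions and over the $N$ choices of block gives an operator norm of order $t\cdot\frac1N\cdot\frac2N\cdot N = \frac{2t}{N}$ before the square root; combined with the sum over the $\le 2t$ bad strings $a$ this yields the claimed $16t\sqrt{2t/N}$ after taking the relevant square root in the operator-norm estimate (the $\sqrt{\cdot}$ appears because, as in \cite[Section~8.3]{MH24}, one actually estimates a product $\Pi\,A\,\Pi$ and passes through $\|\Pi A\|_\infty\le \|\Pi A A^\dag \Pi\|_\infty^{1/2}$). For $\D_2$, the Haar averages over $C$ are replaced by averages over $P\cdot C'$; since $P$ alone is a random permutation and permutation-conjugation of a block projector (resp.\ of $\ketbra a$) already averages to $\frac2N\id$ (resp.\ $\frac1N\id$) up to a negligible correction, the same bound holds — this is exactly the computation already carried out in the proof of \lem{DBproj}, so I would simply invoke that argument. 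The second inequality, with $\Pi^{\im{W}}$ and $D^\dag$ in place of $\Pi^{\dom{W}}$ and $C$, follows by the symmetric computation, conjugating the "same-block-as-$b$" projector on the $\mathsf{Y}/\mathsf{R}$-side by $D$ (resp.\ $D^T,D^\dag$).

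The main obstacle, I expect, is bookkeeping: correctly expressing $\Pi^{\mathrm{DB}}_{\le t}-\Pi^{\dom{W}}_{\le t}$ as a manageable PSD upper bound that cleanly separates the $\mathsf{A}$-register factor (handled by Haar-averaging $C$ or permutation-averaging $P$) from the $\mathsf{LR}$-register factor (a union over $\le t$ recorded pairs of single-register block projectors, handled by the tensor structure of $Q[C,D]$), and keeping track of which power of $N$ each averaging step contributes so that the constant $16$ and the exponent on $N$ come out as stated. The Haar/permutation moment computations themselves are routine and essentially identical to those in \lem{DBproj}; the care is entirely in the combinatorial reduction to single-register statements and in matching the $\le t$ truncation with the variable-length register formalism of \sct{new_w}.
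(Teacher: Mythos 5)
Your proposal has a genuine gap: you characterize $\Pi^{\dom{W}}$ as the projector onto basis states $\ket{x}\reg{A}\ket{L}\reg{L}\ket{R}\reg{R}$ with $L\cup R\in\DBR$ and $x\notin\BDom{L\cup R}$, so that the difference $\Pi^{\mathrm{DB}}_{\le t}-\Pi^{\dom{W}}_{\le t}$ is supported on the complementary classical event $x\in\BDom{L\cup R}$. But that is only the domain of $W^L$. Since $W=W^L+W^{R,\dag}$, the paper shows (\lem{dom_im_W}) that $\Pi^{\dom{W}}=\Pi^{\dom{W^L}}+\Pi^{\im{W^R}}$, and $\Pi^{\im{W^R}}$ is \emph{not} a classical constraint on $x$: it involves the EPR projector $\Pi^{\mathrm{EPR}}_{\textcolor{RawSienna}{\mathsf{A},\mathsf{R}^{(r+1)}_{\mathsf{X},i}}}$ between register $\mathsf{A}$ and each recorded $\mathsf{R}_\mathsf{X}$ slot, with a prefactor $\frac{N}{N-2l-2r}$ that is slightly larger than $1$. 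Consequently your proposed PSD domination $\Pi^{\mathrm{DB}}_{\le t}-\Pi^{\dom{W}}_{\le t}\preceq\sum_{(a,b)}\ketbra{a}\reg{A}\otimes\Pi^{(a,b)}_{\le t}$ is false — the subtracted operator is not a restriction of $\Pi^{\mathrm{DB}}$ to a subset of classical labels, and the difference is not even manifestly PSD without the cancellation structure captured by the paper's $\eqProj-\Pi^{\mathrm{EPR}}$ term in \lem{ubound}.

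This misidentification is also why your scaling comes out wrong. After the twirl, the $\eqProj$ and $\ffbProj$ contributions you account for give only $O(t/N)$, whereas the actual bound is $16t\sqrt{2t/N}$, and the $\sqrt{t/N}$ factor is the dominant piece. It comes from the \emph{approximate} commutation $\|\Pi^{\mathrm{db}}_{l,r}\Pi^{\mathrm{EPR}}-\Pi^{\mathrm{EPR}}\Pi^{\mathrm{db}}_{l,r}\|_\infty\le\sqrt{2(l+r)/N}$ established inside \lem{ubound}, together with the near-invariance of $\Pi^{\mathrm{EPR}}$ under the $(U\otimes\overline U)$ twirl (\lem{twirl2}) that makes $\eqProj-\Pi^{\mathrm{EPR}}$ small after averaging. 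Your proposal has no mechanism to produce this term, and treating the $\mathsf{A}$-register and $\mathsf{L}/\mathsf{R}$-register factors of the twirl as independently averaged (each giving $\tfrac1N\id$ or $\tfrac2N\id$) is not justified because the same random $C$ (resp.\ $\overline{C}$) appears on both tensor factors; the whole difficulty is the correlated twirl, which is exactly where the EPR structure enters. To repair the argument you would need to first prove the structural identity for $\Pi^{\dom{W}}$ (the paper's \lem{dom_im_W}), derive the correct PSD upper bound with the $\frac{N}{N-2l-2r+2}(\eqProj-\Pi^{\mathrm{EPR}}+\lambda\,\id)$ term (\lem{ubound}), and then run the twirled estimates block-by-block in $(l,r)$ as in \lem{simple_twirl}.
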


The proof of this twirling lemma is deferred to
\app{twirl}.
\begin{definition}[Controlled $C$, $D$ and $Q$]
	Define the following operators:
	\[
	\cC \coloneq \int_C C\reg{A} \tp \ketbra{C}\reg{C}, \ \
	\cD \coloneq \int_D D\reg{A} \tp \ketbra{D}\reg{D}, \ \
	\cQ \coloneq Q[C,D]\reg{LR} \tp  \ketbra{C}\reg{C} \tp \ketbra{D}\reg{D},
	\]
	where $Q[C,D] \coloneq (C\tp D^T)^{\tp *}\reg{L} \tp (\overline{C} \tp D^\dag)^{\tp *}\reg{R}$.
	\label{def:cq}
\end{definition}

\begin{definition}[Purification of Twirled-$W$]
	Define the adversary state $\ket{\A_i^{W, \D}}\reg{ABLRCD}$ after 
	the $i$-th query to twirled-$W$ as follows:
	\begin{itemize}
		\item For $i=0$,
		\begin{equation}
			\ket{\A_0^{W, \D}}\reg{ABLRCD} \coloneq 
			\ket{0^n0^m}\reg{AB}
			\ket{\set{ }}\reg{L} \ket{\set{ }}\reg{R}
			\ket{\init(\D)}\reg{CD}
		\end{equation}
		where 
		\begin{equation*}
			\ket{\init(\D)} \coloneq \int_{C,D} \sqrt{d\mu_\D (C) d\mu_\D (D)}
			\ket{C}\reg{C} \ket{D}\reg{D}
		\end{equation*}
		is the initial purification on registers $\mathsf{C, D}$ set up for 
		$C, D\gets \D$; and $\mu_\D(\cdot)$ denote the probability measure 
		of unitaries sampled from the distribution $\D$.
		\item For $1\leq i\leq t$,
		\begin{equation}
			\ket{\A_i^{W, \D}}\reg{ABLRCD} \coloneq 
			\br{
				\br{1-b_i} \cdot \br{\cD\cdot W \cdot \cC} +
				b_i \cdot \br{\cD \cdot W \cdot \cC}^\dag
			}
			\cdot A_i
			\cdot \ket{\A_{i-1}^{W, \D}}
		\end{equation}
		where $b_i\in \set{0,1}$ indicates that the adversary makes a 
		forward/backward query in the $i$-th step.
	\end{itemize}
	\label{def:twp}
\end{definition}
\begin{definition}[Purification of $V$]
	Define the adversary state $\ket{\A_i^{V}}\reg{ABLR}$ after the $i$-th
	query to oracle $V$ as follows:
	\begin{itemize}
		\item For $i=0$,
		\begin{equation}
			\ket{\A_0^{V}}\reg{ABLR} \coloneq 
			\ket{0^n0^m}\reg{AB}
			\ket{\set{ }}\reg{L} \ket{\set{ }}\reg{R}
		\end{equation}
		\item For $1\leq i\leq t$,
		\begin{equation}
			\ket{\A_i^{V}}\reg{ABLR} \coloneq 
			\br{
				\br{1-b_i} \cdot V +
				b_i \cdot V^\dagger
			}
			\cdot A_i
			\cdot \ket{\A_{i-1}^{V}}
		\end{equation}
		where $b_i\in \set{0,1}$ indicates that the adversary makes a 
		forward/backward query in the $i$-th step.
	\end{itemize}
	\label{def:vp}
\end{definition}

\begin{fact}[Norm of the purified states]
	For any $t\geq 0$, $\ket{\A_t^{W, \D}}$ and $\ket{\A_t^V}$ both have 
	a norm of at most $1$, since $W$ and $V$ are partial isometries, 
	meaning that applying $W, W^\dag, V$ or $V^\dag$ is equivalent to a 
	projection followed by a unitary operation.
	\label{fact:purinorm}
\end{fact}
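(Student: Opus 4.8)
The plan is to prove \fct{purinorm} by a short induction on the query index $i$, leveraging only two ingredients: that the initial states in \defi{twp} and \defi{vp} are unit vectors, and that every operator appearing in the two recursions is a contraction with respect to the operator norm $\norm{\cdot}_\infty$.

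First I would dispose of the base cases. For $\ket{\A_0^{V}}\reg{ABLR} = \ket{0^n0^m}\reg{AB}\ket{\set{ }}\reg{L}\ket{\set{ }}\reg{R}$ there is nothing to do: it is a tensor product of unit basis vectors. For $\ket{\A_0^{W, \D}}\reg{ABLRCD}$ the only factor that is not manifestly normalized is $\ket{\init(\D)}\reg{CD} = \int_{C,D}\sqrt{d\mu_\D(C)\,d\mu_\D(D)}\,\ket{C}\reg{C}\ket{D}\reg{D}$; its squared norm equals $\int_{C,D} d\mu_\D(C)\,d\mu_\D(D) = \br{\int d\mu_\D}^2 = 1$ because $\mu_\D$ is a probability measure on $\ugroup{N}$, so $\norm{\ket{\A_0^{W, \D}}} = 1$. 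This is the single step where any verification at all is needed.

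Next I would record that each building block of the recursions is norm non-increasing. The adversary maps $A_i$ are unitaries on registers $\mathsf{AB}$ (acting as the identity on the remaining registers), so $\norm{A_i}_\infty = 1$. The controlled unitaries $\cC = \int_C C\reg{A}\tp\ketbra{C}\reg{C}$ and $\cD = \int_D D\reg{A}\tp\ketbra{D}\reg{D}$ are themselves unitary, so $\norm{\cC}_\infty = \norm{\cD}_\infty = 1$. And $W$ and $V$ are partial isometries: this was already shown in the excerpt, via $WW^\dagger W = W$ for $W$ and via the \emph{Claim} for $V$, and a partial isometry together with its adjoint both have operator norm at most $1$. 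Submultiplicativity of $\norm{\cdot}_\infty$ then gives $\norm{\cD\cdot W\cdot\cC}_\infty \le 1$, $\norm{(\cD\cdot W\cdot\cC)^\dagger}_\infty \le 1$, and $\norm{V}_\infty, \norm{V^\dagger}_\infty \le 1$.

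Finally I would run the induction itself. Supposing $\norm{\ket{\A_{i-1}^{W, \D}}} \le 1$, \defi{twp} gives $\ket{\A_i^{W, \D}} = M_i\cdot A_i\cdot\ket{\A_{i-1}^{W, \D}}$, where $M_i$ is $\cD\cdot W\cdot\cC$ or its adjoint according to the direction bit $b_i$; hence $\norm{\ket{\A_i^{W, \D}}} \le \norm{M_i}_\infty\,\norm{A_i}_\infty\,\norm{\ket{\A_{i-1}^{W, \D}}} \le 1$. The identical argument, with $M_i \in \set{V, V^\dagger}$ and \defi{vp} in place of \defi{twp}, yields $\norm{\ket{\A_i^{V}}} \le 1$. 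I do not expect a real obstacle here: the statement is essentially a bookkeeping formalization of the parenthetical remark inside \fct{purinorm}, and its one nontrivial input — that $W$ and $V$ are partial isometries — has already been established earlier in the paper.
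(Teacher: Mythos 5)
Your proposal is correct and takes essentially the same route as the paper: the Fact is justified by observing that the initial states are normalized and every operator applied in the recursions ($A_i$, $\cC$, $\cD$, and the partial isometries $W$, $W^\dag$, $V$, $V^\dag$) is a contraction. Your induction merely formalizes the parenthetical remark in the statement, so there is nothing to flag.
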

\begin{fact}[Spaces containing puried states]
	For any $t\geq 0$, $\ket{\A_t^{W,\D}}$ lies in the image of $\Pi_{\leq t}^{\mathrm{DB}}$, and $\ket{\A_t^V}$ lies in the image of $\Pi_{\leq t}^{\RR^2}$, follwing their definitions.
	\label{fact:purispace}
\end{fact}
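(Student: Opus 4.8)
Both assertions are purely structural consequences of the recursive definitions of the purified states in \defi{twp}, \defi{vp}, and \defi{proV}, and the plan is to prove them together by induction on the query index $i$, establishing the slightly sharper statement that $\ket{\A_i^{W,\D}}$ lies in the image of $\Pi^{\mathrm{DB}}_{\le i}$ and $\ket{\A_i^{V}}$ lies in the image of $\Pi^{\RR^2}_{\le i}$; specialising to $i=t$ then gives the claim.

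For the base case $i=0$, both states carry $\ket{\{\}}\reg{L}\ket{\{\}}\reg{R}$ on the $\mathsf{LR}$ registers, i.e.\ both relations are empty; the empty relation has size $0$ and $\{\}\cup\{\}=\{\}\in\DBR$, so $\ket{\A_0^{W,\D}}$ is fixed by $\Pi^{\mathrm{DB}}_{\le 0}$ (the extra factor $\ket{\init(\D)}\reg{CD}$ is irrelevant since $\Pi^{\mathrm{DB}}$ acts only on $\mathsf{LR}$), and $\ket{\A_0^{V}}$ is fixed by $\Pi^{\RR^2}_{\le 0}$.

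For the inductive step, suppose $\ket{\A_{i-1}^{W,\D}}$ lies in the image of $\Pi^{\mathrm{DB}}_{\le i-1}$. The adversary unitary $A_i$ acts only on $\mathsf{A},\mathsf{B}$, and the controlled unitaries $\cC,\cD$ act only on $\mathsf{A}$ together with the control registers $\mathsf{C},\mathsf{D}$; none of these touch $\mathsf{LR}$, so they commute with every $\ketbra{L}\reg{L}\otimes\ketbra{R}\reg{R}$ and leave both $|L|$ and $|R|$ fixed. Thus it suffices to track $W$ and $W^\dag$. Writing $W=W^L+W^{R,\dag}$ and $W^\dag=W^{L,\dag}+W^R$, each of the four pieces either appends a single pair to a relation ($W^L$ to $L$, $W^R$ to $R$) or deletes a single pair ($W^{L,\dag}$ from $L$, $W^{R,\dag}$ from $R$), so $|L|+|R|$ changes by at most one and a state supported on $|L|+|R|\le i-1$ is sent into the span of $|L|+|R|\le i$. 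Deleting a pair from a relation in $\DBR$ leaves it in $\DBR$ (a subtuple of a blockwise-distinct tuple is blockwise-distinct), while $W^L$ and $W^R$ are defined precisely so that the appended pair $(x,y)$ obeys $x\notin\BDom{L\cup R}$ and $y\notin\BIm{L\cup R}$, which is exactly the condition guaranteeing that $(L\cup R)\cup\{(x,y)\}\in\DBR$. Hence $\ket{\A_i^{W,\D}}$ lies in the image of $\Pi^{\mathrm{DB}}_{\le i}$. For $\ket{\A_i^{V}}$ one repeats the argument with $V=V^L(\id-V^RV^{R,\dag})+(\id-V^LV^{L,\dag})V^{R,\dag}$ in place of $W$: the projectors $\id-V^LV^{L,\dag}$ and $\id-V^RV^{R,\dag}$ preserve $|L|$ and $|R|$ individually, each remaining piece changes $|L|+|R|$ by at most one, and the only invariant now required is $L,R\in\RR$, which is automatic as long as $|L|+|R|\le i\le t\le N$ --- no blockwise-distinctness is needed.

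I do not expect a genuine obstacle here: the whole argument is bookkeeping of which registers each operator touches and of how the quantity $|L|+|R|$ evolves. The single point deserving attention is matching the domain restrictions $x\notin\BDom{L\cup R}$ and $y\notin\BIm{L\cup R}$ built into $W^L,W^R$ against the definition of $\DBR$, so as to confirm that a \emph{forward} query keeps both the domain tuple and the image tuple in distinct blocks --- and even this care is unnecessary for $V$.
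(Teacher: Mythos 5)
Your proof is correct. The paper states this as a Fact ``following their definitions'' without writing out an argument, and your induction on the query index --- checking that $A_i,\cC,\cD$ do not touch $\mathsf{LR}$, that each of $W^L,W^{R,\dag},W^{L,\dag},W^R$ (resp.\ the pieces of $V$) changes $\abs{L}+\abs{R}$ by at most one, and that the domain/image restrictions $x\notin\BDom{L\cup R}$, $y\notin\BIm{L\cup R}$ built into $W$ are exactly what preserves membership in $\DBR$ --- is precisely the routine bookkeeping the authors are implicitly appealing to.
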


Now, we proceed to demonstrate the main claim in the proof of \lem{V_W}.
\begin{claim}
	For any integer $t\geq 0$, and $\D\in\st{\D_1,\D_2}$ as defined in \defi{d1d2}
	\begin{equation}
		\re\Br{
			\bra{\A_t^{W, \D}}\reg{ABLRCD} \cdot \cQ\reg{LRCD} \cdot
			\br{\ket{\A_t^V}\reg{ABLR} \ket{\init(\D)}\reg{CD}}
		}\geq 1-\frac{38t^2}{N^{1/4}} \enspace .
	\end{equation}
	\label{clm:clm18}
\end{claim}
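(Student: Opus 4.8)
The plan is to prove this by induction on $t$, following closely the strategy of Ma--Huang (\cite[Claim~18]{MH24}). The base case $t=0$ is immediate since $\ket{\A_0^{W,\D}} = \ket{0^n0^m}\reg{AB}\ket{\set{}}\reg{L}\ket{\set{}}\reg{R}\ket{\init(\D)}\reg{CD}$, $\ket{\A_0^V} = \ket{0^n0^m}\reg{AB}\ket{\set{}}\reg{L}\ket{\set{}}\reg{R}$, and $\cQ$ acts as the identity on the empty-relation state tensored with $\ket{\init(\D)}$, so the inner product is exactly $1$. For the inductive step, I would write the $t$-th query as an adversary unitary $A_t$ (which acts only on $\mathsf{AB}$ and hence commutes with $\cQ$) followed by either the forward operator $\cD \cdot W \cdot \cC$ or its adjoint on the twirled-$W$ side, and correspondingly $V$ or $V^\dag$ on the $V$ side. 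Since $A_t$ is a common unitary, the task reduces to comparing $\cD W \cC \ket{\A_{t-1}^{W,\D}}$ against $\cQ (V \ket{\A_{t-1}^V}\ket{\init(\D)})$ (and similarly for backward queries), given the inductive hypothesis on the overlap of $A_t\ket{\A_{t-1}^{W,\D}}$ and $A_t\ket{\A_{t-1}^V}\ket{\init(\D)}$.

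The heart of the argument is a chain of operator approximations applied to the $(t-1)$-query states, which by \fct{purispace} lie in the images of $\Pi_{\leq t-1}^{\mathrm{DB}}$ and $\Pi_{\leq t-1}^{\RR^2}$ respectively. For the forward case I would expand $\cD W \cC$ and use: (i) the two-sided unitary invariance \lem{inv}, which lets me move $\cC$ and $\cD$ past $W$ (more precisely past $V$, after relating $W$ to $V$ via \lem{WandV}) at the cost of $O(\sqrt{t^2/N})$ in operator norm; (ii) the twirling lemma \lem{twirl}, which says that after conjugating by $\cC \tp \cQ$ the projectors $\Pi^{\mathrm{DB}}_{\leq t}$ and $\Pi^{\dom{W}}_{\leq t}$ agree up to $16t\sqrt{2t/N}$, so that the restriction of $V$ to $\dom W$ (which is exactly $W$, by \cref{eqn:wvres}) can be replaced by $V$ itself up to the same error on the relevant subspace; and (iii) the fact that $\ket{\A_{t-1}^V}\ket{\init(\D)}$ is supported on $\Pi_{\leq t-1}^{\RR^2}$, so all the $\Pi_{\leq t}$ truncations are harmless. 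Stringing these together, I get that $\cD W \cC$ applied to $\ket{\A_{t-1}^{W,\D}}$ is within $O(t^{3/2}/N^{1/2})$ of $\cQ \cdot V \cdot (\text{image of }\ket{\A_{t-1}^V}\ket{\init(\D)})$, and then the inductive hypothesis on the overlap before the query, combined with the fact that $V$ and $\cD W \cC$ are partial isometries (\fct{purinorm}), propagates the bound. Summing the per-query errors over the $t$ queries gives the claimed $1 - O(t^2/N^{1/4})$, and one checks the constant works out to $38$.

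The backward-query case ($b_i = 1$) is handled symmetrically using the second halves of \lem{inv}, \lem{twirl}, and \lem{WandV} (the $\imProj{W}$, $V^\dag$, $W^\dag$ statements), so no new ideas are needed there. The main obstacle I anticipate is the careful bookkeeping of where each projector ($\Pi^{\mathrm{DB}}_{\leq t}$, $\Pi^{\dom{W}}_{\leq t}$, $\Pi^{\RR^2}_{\leq t}$, $\Pi^{\im W}_{\leq t}$) sits in the product and ensuring that the twirling error is applied in a basis where the state is actually supported — in particular, that after conjugating by $\cC \tp \cQ$ one is comparing the two projectors exactly on the subspace carrying $V\ket{\A_{t-1}^V}\ket{\init(\D)}$, and that the error terms from \lem{inv} and \lem{twirl} are genuinely multiplied against unit-norm vectors rather than accumulating extra factors. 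A secondary subtlety is that $\cQ$ is not unitary on the full space (it is built from $Q[C,D]$ which is an isometry only on the appropriate length-graded sectors), so one must verify that $\cQ$ preserves norms on the states actually appearing; this follows because $Q[C,D] = (C\tp D^T)^{\tp *}\reg{L}\tp(\overline{C}\tp D^\dag)^{\tp*}\reg{R}$ is an isometry on each $\H_{\mathsf{L}^{(l)}}\otimes\H_{\mathsf{R}^{(r)}}$ sector and $\ket{\A_{t-1}^V}$ is supported on finitely many such sectors with $l+r \leq t-1$.
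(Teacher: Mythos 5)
Your proposal matches the paper's proof in all essentials: induction on the number of queries, with the per-step error split into a commutation term controlled by the two-sided invariance \lem{inv} and a projector-difference term $\Pi^{\mathrm{DB}}_{\leq t}-\Pi^{\dom{W}}_{\leq t}$ controlled by the twirling lemma \lem{twirl}, using $W=V\cdot\Pi^{\dom{W}}$ and the fact that $\Pi^{\mathrm{DB}}_{\leq t}$ absorbs into $\bra{\A_t^{W,\D}}$ so that the inductive hypothesis applies. One quantitative imprecision: the twirling bound controls a quadratic form, so after Cauchy--Schwarz the projector-difference term contributes $\br{16t\sqrt{2t/N}}^{1/2}=O\!\br{t^{3/4}/N^{1/4}}$ per query rather than "the same error," which is exactly why the final bound scales as $N^{-1/4}$ (summing to $O(t^2/N^{1/4})$) and not as the $O(t^{3/2}/N^{1/2})$ per-step figure you quote; this slip does not affect the validity of the approach or the stated conclusion.
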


\begin{proof} [Proof by induction.]
	For the base case $t=0$, 
	following the \defi{cq}, \ref{def:twp} and \ref{def:vp}, we have
	\begin{align*}
		\cQ\reg{LRCD} \cdot \br{ \ket{\A_0^{V}}\reg{ABLR} 
			\ket{\init(\D)}\reg{CD}} 
		&= \cQ\reg{LRCD} \cdot \br{\ket{0^n0^m}\reg{AB}
			\ket{\set{ }}\reg{L} \ket{\set{ }}\reg{R}
			\ket{\init(\D)}\reg{CD}}
		\\
		&= \ket{0^n0^m}\reg{AB}
		\ket{\set{ }}\reg{L} \ket{\set{ }}\reg{R}
		\ket{\init(\D)}\reg{CD}
		\\
		&= \ket{\A_0^{W, \D}}\reg{ABLRCD} 
	\end{align*}
	Thus, the base case holds:
	$
	\re\Br{
		\bra{\A_0^{W, \D}}\reg{ABLRCD} \cdot \cQ\reg{LRCD} \cdot
		\br{\ket{\A_0^V}\reg{ABLR} \ket{\init(\D)}\reg{CD}}
	}=1
	$.
	
	%
	Assuming the claim holds for some $t\geq 0$, we now
	prove it for $t+1$. Due to the argument is symmetric,
	we assume the $(t+1)$-th query is a forward query, i.e.
	$b_{t+1}=0$, without loss of generality. Thus,
	\begin{align*}
		\ket{\A_{t+1}^{W, \D}}\reg{ABLRCD} &=
		\br{\cD\cdot W \cdot \cC} \cdot A_{t+1}
		\cdot \ket{\A_{t}^{W, \D}}
		\\
		\cQ\reg{LRCD} \cdot \br{ \ket{\A_{t+1}^{V}}\reg{ABLR} 
			\ket{\init(\D)}\reg{CD}} 
		&= \cQ\cdot \br{V \cdot A_{t+1} \cdot \ket{\A_t^V}	\ket{\init(\D)}}
	\end{align*}
	derives
	\begin{align}
		\begin{split}
			&\re\Br{
				\bra{\A_{t+1}^{W, \D}} \cdot \cQ \cdot
				\br{\ket{\A_{t+1}^V}\ket{\init(\D)}}
			} \\
			&=\re \Br{
				\bra{\A_t^{W,\D}} \cdot A_{t+1}^\dag \cdot \cC^\dag \cdot W^\dag
				\cdot \cD^\dag \cdot\cQ \cdot \br{V \cdot A_{t+1} \cdot \ket{\A_t^V}	\ket{\init(\D)}}
			}
			\\
			&=\re \Br{
				\bra{\A_t^{W,\D}} \cdot A_{t+1}^\dag \cdot \cC^\dag \cdot W_{\leq t}^\dag
				\cdot \cD^\dag \cdot\cQ \cdot V_{\leq t} \cdot A_{t+1} \cdot \ket{\A_t^V}	\ket{\init(\D)}
			},
		\end{split}
		\label{clm18-eq1}
	\end{align}
	because of \fct{purispace}. Recall the notation $W_{\leq t} = W\cdot \Pi_{\leq t}$ and $V_{\leq t} = V\cdot \Pi_{\leq t}$. Then, via rewriting
	\begin{align*}
		\cQ \cdot V_{\leq t} 
		&=  \cD \cdot V_{\leq t} \cdot \cC \cdot \cQ + 
		\big(  \cQ \cdot V_{\leq t} -  \cD \cdot V_{\leq t} \cdot \cC \cdot \cQ
		\big)
	\end{align*}
	we further rewrite $(\ref{clm18-eq1}) = (*) + (**)$ where
	\begin{align*}
		(*) &= 
		\re\Br{\bra{\A_t^{W,\D}} \cdot A_{t+1}^\dag \cdot \cC^\dag \cdot W_{\leq t}^\dag
			\cdot V_{\leq t} \cdot \cC \cdot \cQ
			\cdot A_{t+1} \cdot \ket{\A_t^V}	\ket{\init(\D)}
		},
	\end{align*}
	\begin{align*}
		(**) &= 
		\re \Br{
			\bra{\A_t^{W,\D}} \cdot A_{t+1}^\dag \cdot \cC^\dag \cdot W_{\leq t}^\dag
			\cdot \cD^\dag \cdot
			\big(\cQ \cdot V_{\leq t} -  \cD \cdot V_{\leq t} \cdot \cC \cdot \cQ \big)
			\cdot A_{t+1} \cdot \ket{\A_t^V}	\ket{\init(\D)}
		}
	\end{align*}
	We first lower bound $(**)$:
	\begin{align}
		\begin{split}
			(**)&\geq 
			-\norm{\big(\cD \cdot V_{\leq t} \cdot \cC \cdot \cQ -\cQ \cdot V_{\leq t} \big)}_{\infty}
			\\
			&\geq -\norm{\sum_{C,D}\big(
				D_A \cdot V_{\leq t} \cdot C_A \tp Q[C,D]\reg{LR} - Q[C,D]\reg{LR}\cdot V_{\leq t}
				\big)\tp \ketbra{C,D}}_{\infty}
			\\
			&\geq -\max_{C,D}{\norm{D_A \cdot V_{\leq t} \cdot C_A \tp Q[C,D]\reg{LR} - Q[C,D]\reg{LR}\cdot V_{\leq t}}_\infty}
			\\
			&\geq -32 \sqrt{\frac{t(t+1)}{N}}
		\end{split}
	\end{align}
	The first inequality follows from \fct{purinorm}, ensuring that the norm of 
	$\bra{\A_t^{W,\D}} \cdot A_{t+1}^\dag \cdot \cC^\dag \cdot 
	W_{\leq t}^\dag \cdot \cD^\dag$ 
	and $A_{t+1} \cdot \ket{\A_t^V}	\ket{\init(\D)}$ are at most $1$.
	The last inequality follows from \lem{inv}.
	To bound $(*)$, we first utilizes the properties of $W$ and $V$ to derive:
	\begin{align}
		\begin{split}
			W_{\leq t}^\dag \cdot V_{\leq t} &=
			\br{W\cdot \Pi_{\leq t}}^\dag \cdot V \cdot \Pi_{\leq t}
			\\&= \Pi_{\leq t} \cdot W^\dag \cdot V \cdot \Pi_{\leq t}
			\\&= \Pi_{\leq t} \cdot \domProj{W} \cdot \Pi_{\leq t}
			\\&= \Pi_{\leq t} \cdot \br{\Pi^{\mathrm{DB}}-\br{\Pi^{\mathrm{DB}}-\domProj{W}}} \cdot \Pi_{\leq t}
			\\&= \Pi^{\mathrm{DB}}_{\leq t} - \br{\Pi^{\mathrm{DB}}_{\leq t}-\domProj{W}_{\leq t}}
		\end{split}
		\label{clm18-eq2}
	\end{align}
	Then, we can rewrite $(*) = (\triangle) - (\triangle\triangle)$ using (\ref{clm18-eq2}) where:
	\begin{align}
		(\triangle) &= 	\re\Br{\bra{\A_t^{W,\D}} \cdot A_{t+1}^\dag \cdot \cC^\dag \cdot 
			\Pi^{\mathrm{DB}}_{\leq t}%
			\cdot \cC \cdot \cQ \cdot A_{t+1} \cdot \ket{\A_t^V}\ket{\init(\D)}
		},
	\end{align}
	\begin{align}
		(\triangle\triangle) &= 	\re\Br{\bra{\A_t^{W,\D}} \cdot A_{t+1}^\dag \cdot \cC^\dag \cdot 
			\br{\Pi^{\mathrm{DB}}_{\leq t}-\domProj{W}_{\leq t}}%
			\cdot \cC \cdot \cQ \cdot A_{t+1} \cdot \ket{\A_t^V}\ket{\init(\D)}
		}.
	\end{align}
	Thus, to bound $(*)$ we need to separately bound $(\triangle)$ and $(\triangle\triangle)$.  First, in $(\triangle)$ we have:
	\begin{align*}
		\bra{\A_t^{W,\D}} \cdot A_{t+1}^\dag \cdot \cC^\dag \cdot 
		\Pi^{\mathrm{DB}}_{\leq t} 
		&= \bra{\A_t^{W,\D}}\cdot \Pi^{\mathrm{DB}}_{\leq t}  
		\cdot A_{t+1}^\dag \cdot \cC^\dag \\
		& =\bra{\A_t^{W,\D}}
		\cdot A_{t+1}^\dag \cdot \cC^\dag 
	\end{align*}
	Thus, by the inductive hypothesis, we have
	\begin{align}
		\begin{split}
			(\triangle) &= \re\Br{\bra{\A_t^{W,\D}} \cdot A_{t+1}^\dag \cdot
				\cQ \cdot A_{t+1} \cdot \ket{\A_t^V}\ket{\init(\D)}}\\
			&= \re\Br{\bra{\A_t^{W,\D}}  \cdot
				\cQ \cdot \ket{\A_t^V}\ket{\init(\D)}}\\
			&\geq 1 - \frac{38t^2}{N^{1/4}} \enspace.
		\end{split}
	\end{align}
	Then we will uper bound $(\triangle\triangle)$:
	\begin{align*}
		(\triangle\triangle) &\leq \abs{\bra{\A_t^{W,\D}} \cdot A_{t+1}^\dag \cdot \cC^\dag \cdot 
			\br{\Pi^{\mathrm{DB}}_{\leq t}-\domProj{W}_{\leq t}}%
			\cdot \cC \cdot \cQ \cdot A_{t+1} \cdot \ket{\A_t^V}\ket{\init(\D)}}\\
		&\leq \max_{\substack{\ket{u}\in \H\reg{ABLRCD}: \norm{\ket{u}}_2\leq 1\\ 
				\ket{v}\in \H\reg{ABLR}: \norm{\ket{v}}_2\leq 1}}
		\abs{\bra{u} \cdot
			\br{\Pi^{\mathrm{DB}}_{\leq t}-\domProj{W}_{\leq t}}%
			\cdot \cC \cdot \cQ \cdot \ket{v} \ket{\init(\D)}}\\
		&= \br{\max_{\substack{\ket{v}\in\H\reg{ABLR}: \\ 
					\norm{\ket{v}}_2\leq 1}}
			\bra{v}\bra{\init(\D)} \cdot \cQ^\dag \cdot \cC^\dag \cdot
			\br{\Pi^{\mathrm{DB}}_{\leq t}-\domProj{W}_{\leq t}} \cdot
			\cC \cdot \cQ \cdot \ket{v} \ket{\init(\D)}
		}^{1/2}\\
		&= \norm{\expect{C, D \gets \D}{
				\br{C\reg{A} \tp Q[C,D]\reg{LR}}^\dag
				\br{\Pi^{\mathrm{DB}}_{\leq t}-\domProj{W}_{\leq t}} \cdot
				\br{C\reg{A} \tp Q[C,D]\reg{LR}}
		}}_\infty^{1/2}\\
		&\leq \br{16t\sqrt{\frac{2t}{N}}}^{1/2} \leq \frac{6t^{3/4}}{N^{1/4}}
	\end{align*}
	where the last line follows from \lem{twirl}.
	
	Now, putting everything together we show the claim for $t+1$ to conclude:
	\begin{align*}
		\re\Br{
			\bra{\A_{t+1}^{W, \D}} \cdot \cQ \cdot
			\br{\ket{\A_{t+1}^V}\ket{\init(\D)}}
		} &= (*)+(**)\\
		&\geq (*) - 32\sqrt{\frac{t(t+1)}{N}}\\
		&\geq (\triangle)-(\triangle\triangle) - 32\sqrt{\frac{t(t+1)}{N}}\\
		&\geq 1 - \frac{38t^2}{N^{1/4}} - \frac{6t^{3/4}}{N^{1/4}} - 32\sqrt{\frac{t(t+1)}{N}} \\
		&\leq 1-\frac{1}{N^{1/4}}\br{38t^2+6t^{3/4}+32\frac{\sqrt{t(t+1)}}{N^{1/4}}}\\
		&\leq 1- \frac{1}{N^{1/4}}\br{ 38t^2+6t+32(t+1) } \\
		&\geq 1 - \frac{38(t+1)^2}{N^{1/4}} \enspace.
	\end{align*}
\end{proof}

This claim also gives a bound
on the norm of $\ket{\A_t^{W,\D}}\reg{ABLRCD}$:
\begin{lemma}\label{lem:normbound_w}
	For any $0\leq t < N$ and $\D\in\st{\D_1,\D_2}$ as defined in \defi{d1d2}, we have
	\[
		\norm{\ket{\A_t^{W,\D}}\reg{ABLRCD}}_2\geq 1-\frac{38t^2}{N^{1/4}}\enspace.
	\]
\end{lemma}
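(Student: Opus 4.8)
The plan is to obtain this bound as an immediate consequence of \clm{clm18}, with the only new ingredient being the observation that the operator $\cQ\reg{LRCD}$ of \defi{cq} is unitary. Indeed, $\cQ$ is the controlled version $\int Q[C,D]\reg{LR}\tp\ketbra{C}\reg{C}\tp\ketbra{D}\reg{D}$ of $Q[C,D] = (C\tp D^T)^{\tp *}\reg{L}\tp(\overline{C}\tp D^\dag)^{\tp *}\reg{R}$; since $C$ and $D$ being unitary implies that $C\tp D^T$ and $\overline{C}\tp D^\dag$ are unitary, each blockwise direct sum $(\cdot)^{\tp *}$ is unitary on $\H\reg{L}$ (resp. $\H\reg{R}$), hence $Q[C,D]$ and therefore $\cQ$ are unitary.

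Given this, I would argue as follows. By \fct{purinorm} we have $\norm{\ket{\A_t^V}\reg{ABLR}}_2\leq 1$, and $\norm{\ket{\init(\D)}\reg{CD}}_2 = 1$ by construction, so the vector $\cQ\cdot\br{\ket{\A_t^V}\reg{ABLR}\ket{\init(\D)}\reg{CD}}$ has norm at most $1$. Then by the Cauchy--Schwarz inequality together with \clm{clm18},
\[
1-\frac{38t^2}{N^{1/4}} \;\leq\; \re\Br{\bra{\A_t^{W,\D}}\cdot\cQ\cdot\br{\ket{\A_t^V}\ket{\init(\D)}}} \;\leq\; \norm{\ket{\A_t^{W,\D}}\reg{ABLRCD}}_2 \enspace,
\]
which is exactly the claimed inequality.

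There is no real obstacle here: the entire substance is carried by \clm{clm18}, which has already been established by induction. The only point requiring a line of justification is that $\cQ$ does not increase norms (in fact it is unitary), so that the right-hand factor in the Cauchy--Schwarz estimate collapses to $\norm{\ket{\A_t^{W,\D}}\reg{ABLRCD}}_2$ rather than contributing an extra factor.
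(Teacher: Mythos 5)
Your argument is correct and matches the paper's proof in essence: both deduce the bound from \clm{clm18} via the Cauchy--Schwarz inequality together with the facts that $\cQ$ preserves norms and that $\ket{\A_t^V}\ket{\init(\D)}$ has norm at most $1$ (by \fct{purinorm}). The paper phrases this slightly more indirectly, by squaring and inserting $\langle \A_t^{V} \vert \A_t^{V}\rangle\leq 1$ as an extra factor before applying Cauchy--Schwarz, but the substance is identical.
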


\begin{proof}
	We have
	\begin{align*}
		\norm{\ket{\A_t^{W,\D}}\reg{ABLRCD}}^2_2
		= & ~
		\langle \A_t^{W,\D} \vert \A_t^{W,\D} \rangle\\
		\geq & ~
		\langle \A_t^{W,\D} \vert \A_t^{W,\D} \rangle
		\cdot
		\langle \A_t^{V} \vert \A_t^{V} \rangle \\
		= & ~
		\langle \A_t^{W,\D} \vert \A_t^{W,\D} \rangle
		\cdot
		\br{\bra{\A_t^V}\bra{\init(\D)}}
		\cQ^\dag \cdot \cQ \cdot
		\br{\ket{\A_t^V} \ket{\init(\D)}} \\
		\geq & ~
		\abs{
		\bra{\A_t^{W, \D}}\reg{ABLRCD} \cdot \cQ\reg{LRCD} \cdot
			\br{\ket{\A_t^V}\reg{ABLR} \ket{\init(\D)}\reg{CD}}
		}^2\\
		\geq & ~
		\re\Br{
			\bra{\A_t^{W, \D}}\reg{ABLRCD} \cdot \cQ\reg{LRCD} \cdot
			\br{\ket{\A_t^V}\reg{ABLR} \ket{\init(\D)}\reg{CD}}
		}^2\\
		\geq & ~
		\br{1-\frac{38t^2}{N^{1/4}}}^2\enspace,
	\end{align*}
	where the first inequality is from the fact that
	$\ket{\A_t^{V}}$ has norm at most $1$,
	the second one is from Cauchy-Schwarz inequality,
	and the last one is from \clm{clm18}.
\end{proof}

Now, we are ready to prove the indistinguishability between oracles $V$ and twirled $W$ via proving the following lemma:

\begin{lemma}
	For any $0\leq t < N$ and $\D\in\st{\D_1,\D_2}$ as defined in \defi{d1d2}, we have 
	\begin{align}
		\td{\br{
				\Tr{\reg{-AB}} \br{\ketbra{\A_t^{W,\D}}\reg{ABLRCD}}, 
				\Tr{\reg{-AB}} \br{\ketbra{\A_t^{V}}\reg{ABLR}}
			}\leq \frac{9t}{N^{1/8}} \enspace.
		}
	\end{align}
	\label{lem:V_W}
\end{lemma}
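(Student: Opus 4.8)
The plan is to obtain \lem{V_W} as a short consequence of \clm{clm18} and \lem{normbound_w}; the genuinely hard work (the long induction on the number of queries) is already packaged inside \clm{clm18}, so this lemma is essentially a cleanup step. Write $\ket{\Psi}\coloneq\ket{\A_t^{W,\D}}\reg{ABLRCD}$ and $\ket{\Phi}\coloneq\cQ\reg{LRCD}\cdot\br{\ket{\A_t^V}\reg{ABLR}\otimes\ket{\init(\D)}\reg{CD}}$, both viewed as (possibly subnormalized) vectors in $\H\reg{ABLRCD}$. First I would record two facts: by \fct{purinorm}, $\norm{\ket{\Psi}}_2\le 1$; and since $\cQ$ is a controlled unitary acting only on $\mathsf{LRCD}$ and $\ket{\init(\D)}$ is a normalized product state on $\mathsf{CD}$, we get $\norm{\ket{\Phi}}_2=\norm{\ket{\A_t^V}}_2\le 1$ (again by \fct{purinorm}), while \clm{clm18} and \lem{normbound_w} together force both norms to lie within $O(t^2/N^{1/4})$ of $1$. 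The key bookkeeping observation is that, because $\cQ$ touches only $\mathsf{LRCD}$ (acting as the identity on $\mathsf{AB}$) and $\ket{\init(\D)}$ is normalized, tracing out everything but $\mathsf{AB}$ gives $\Tr\reg{-AB}\br{\ketbra{\Phi}}=\Tr\reg{LR}\br{\ketbra{\A_t^V}\reg{ABLR}}$; hence the quantity we must bound is precisely $\td\br{\Tr\reg{-AB}(\ketbra{\Psi}),\Tr\reg{-AB}(\ketbra{\Phi})}$.

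Next I would estimate the Euclidean distance between the two purifications. Expanding,
\[
\norm{\ket{\Psi}-\ket{\Phi}}_2^2=\norm{\ket{\Psi}}_2^2+\norm{\ket{\Phi}}_2^2-2\re\Br{\bra{\A_t^{W,\D}}\reg{ABLRCD}\cdot\cQ\reg{LRCD}\cdot\br{\ket{\A_t^V}\reg{ABLR}\ket{\init(\D)}\reg{CD}}}\le\frac{76t^2}{N^{1/4}},
\]
using $\norm{\ket{\Psi}}_2,\norm{\ket{\Phi}}_2\le 1$ for the first two terms and the lower bound of \clm{clm18} for the inner product. (This is exactly where \lem{normbound_w} is needed: without knowing both norms are near $1$ this estimate would be vacuous.) Then, since the partial trace is non-increasing in trace norm, $\td\br{\Tr\reg{-AB}(\ketbra{\Psi}),\Tr\reg{-AB}(\ketbra{\Phi})}\le\norm{\ketbra{\Psi}-\ketbra{\Phi}}_1$; writing $\ketbra{\Psi}-\ketbra{\Phi}=\ket{\Psi}\br{\bra{\Psi}-\bra{\Phi}}+\br{\ket{\Psi}-\ket{\Phi}}\bra{\Phi}$ and applying the triangle inequality for the trace norm together with $\norm{\ket{\Psi}}_2,\norm{\ket{\Phi}}_2\le 1$ bounds it by $O\br{\norm{\ket{\Psi}-\ket{\Phi}}_2}=O\br{t/N^{1/8}}$. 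Tracking constants carefully, and using $t<N$ to absorb the lower-order $t^2/N^{1/4}$ contributions, yields the stated bound $9t/N^{1/8}$. The argument is uniform in $\D\in\set{\D_1,\D_2}$ because both \clm{clm18} and \lem{normbound_w} hold for both distributions.

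The main obstacle is not really in this lemma, which is repackaging, but in two points that must be handled with care. First, every inequality has to be the subnormalized version: the purified states $\ket{\A_t^{W,\D}}$ and $\ket{\A_t^V}$ genuinely have norm below $1$ since $W$ and $V$ are only partial isometries, so \lem{normbound_w} (and the norm bounds of \fct{purinorm}) are load-bearing, not cosmetic. Second, one must verify that $\cQ$ acts as the identity on the adversary's registers $\mathsf{AB}$, so that twirling the $V$-purification by $\cQ$ leaves its $\mathsf{AB}$-marginal unchanged; this is what lets the single inner-product bound of \clm{clm18} translate into a statement about the reduced states the adversary actually sees.
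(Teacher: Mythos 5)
Your proposal is correct and follows essentially the same route as the paper: bound the Euclidean distance between $\ket{\A_t^{W,\D}}$ and the $\cQ$-twirled $V$-purification via the inner-product estimate of \clm{clm18}, then pass to the reduced states on $\mathsf{AB}$ using monotonicity of trace distance under partial trace and the fact that $\cQ$ acts only on $\mathsf{LRCD}$. One small remark: the estimate $\norm{\ket{\Psi}-\ket{\Phi}}_2^2\le \norm{\ket{\Psi}}_2^2+\norm{\ket{\Phi}}_2^2-2\re\Br{\braket{\Psi|\Phi}}\le 2-2\br{1-\frac{38t^2}{N^{1/4}}}$ needs only the upper bounds $\norm{\ket{\Psi}}_2,\norm{\ket{\Phi}}_2\le 1$ together with \clm{clm18}, so \lem{normbound_w} is not actually load-bearing in this lemma (indeed it is itself derived from \clm{clm18}); the paper's proof accordingly does not invoke it here.
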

\begin{proof}
Using the fact that $\norm{\ketbra{u}-\ketbra{v}}_1\leq 2\norm{\ket{u}-\ket{v}}_2$, we have 
	\begin{align*}
		&\td \br{ 
			\ketbra{\A_t^{W,\D}}\reg{ABLRCD},
			\cQ\reg{LRCD} \cdot \br{
				\ketbra{\A_t^V}\reg{ABLR} \tp \ketbra{\init(\D)}\reg{CD}
			} \cdot  \cQ\reg{LRCD}^\dag
		}^2\\
		\leq & \norm{
			\ket{\A_t^{W,\D}}\reg{ABLR} - \cQ\reg{LRCD} \cdot 
			\br{\ket{\A_t^V}\reg{ABLR} \tp \ket{\init(\D)\reg{CD}}}
		}_2^2\\
		= & \braket{\A_t^{W,\D}\mid \A_t^{W,\D}} + \braket{\A_t^V\mid \A_t^V} - 
		2\re\Br{
			\bra{\A_t^{W, \D}}\reg{ABLRCD} \cdot \cQ\reg{LRCD} \cdot
			\br{\ket{\A_t^V}\reg{ABLR} \ket{\init(\D)}\reg{CD}}
		}\\
		\leq & 2-2\cdot \br{ 1-\frac{38t^2}{N^{1/4}} } = \frac{76t^2}{N^{1/4}} \enspace.
	\end{align*}
	Since unitary $\cQ$ only acts on registers $\mathsf{L,R,C,D}$,
	\begin{align*}
		&\td{\br{
				\Tr{\reg{-AB}} \br{\ketbra{\A_t^{W,\D}}\reg{ABLRCD}}, 
				\Tr{\reg{-AB}} \br{\ketbra{\A_t^{V}}\reg{ABLR}}
		}}\\
		=&\td \biggl(
				\Tr{\reg{-AB}} \br{\ketbra{\A_t^{W,\D}}\reg{ABLRCD}},\\
		&\hspace{10em} 		\Tr{\reg{-AB}} \br{\cQ\reg{LRCD} \cdot \br{
				\ketbra{\A_t^V}\reg{ABLR} \tp \ketbra{\init(\D)}\reg{CD}
			} \cdot  \cQ\reg{LRCD}^\dag}
		\biggr)\\
		\leq & \td \br{ 
			\ketbra{\A_t^{W,\D}}\reg{ABLRCD},
			\cQ\reg{LRCD} \cdot \br{
				\ketbra{\A_t^V}\reg{ABLR} \tp \ketbra{\init(\D)}\reg{CD}
			} \cdot  \cQ\reg{LRCD}^\dag
		}\\
		\leq & \frac{9t}{N^{1/8}}\enspace.
	\end{align*}
\end{proof}

\subsubsection{$W$ is indistinguishable from $\HPO$}
In this section, we mainly shows that after twirling,
adversaries cannot differentiate from $\HPO$ oracle to
$W$ (\lem{thpo_tw}). We first define the purification
of twirled-$\HPO$ oracle and then connect it with
twirled-$W$ (\defi{twp}) via some projection.
\begin{definition}[Purification of twirled-$\HPO$]
	Define the adversary state $\ket{\A_i^{\HPO, \D}}\reg{ABHPCD}$ after 
	the $i$-th query to twirled-$\HPO$ as follows:
	\begin{itemize}
		\item For $i=0$,
		\begin{align}
			\ket{\A_0^{\HPO, \D}}\reg{ABHPCD} \coloneq 
			\ket{0^n0^m}\reg{AB}
			\ket{+_f}\reg{H}\ket{+_\sigma}\reg{P}
			\ket{\init(\D)}\reg{CD}
		\end{align}
		where $\ket{+_f}\reg{H}$ and $\ket{+_\sigma}\reg{P}$  are the uniform superposition over
		all permutations and functions respectively, and
		\begin{equation*}
			\ket{\init(\D)} \coloneq \int_{C,D} \sqrt{d\mu_\D (C) d\mu_\D (D)}
			\ket{C}\reg{C} \ket{D}\reg{D}
		\end{equation*}
		is the initial purification on registers $\mathsf{C, D}$ set up for 
		$C, D\gets \D$; and $\mu_\D(\cdot)$ denote the probability measure 
		of unitaries sampled from the distribution $\D$.
		\item For $1\leq i\leq t$,
		\begin{equation}
			\ket{\A_i^{\HPO, \D}}\reg{ABHPCD} \coloneq 
			\br{
				\br{1-b_i} \cdot \br{\cD\cdot \HPO \cdot \cC} +
				b_i \cdot \br{\cD \cdot \HPO \cdot \cC}^\dag
			}
			\cdot A_i
			\cdot \ket{\A_{i-1}^{\HPO, \D}}
		\end{equation}
		where $b_i\in \set{0,1}$ indicates that the adversary makes a 
		forward/backward query in the $i$-th step.
	\end{itemize}
	\label{def:thpop}
\end{definition}

\begin{definition}
	Define the projectors
	\begin{align}
		\widetilde{\Pi}^{\dom{W}} &\coloneq \CPS^\dag \cdot \domProj{W} \cdot \CPS,
		\\
		\widetilde{\Pi}^{\im{W}} &\coloneq \CPS^\dag \cdot \imProj{W} \cdot \CPS.
	\end{align}
\end{definition}

\begin{definition}[Purification of twirled-projected-$\HPO$ ]
	Define the adversary state $\ket{\A_i^{\HPOp, \D}}\reg{ABHPCD}$ after 
	the $i$-th query to twirled-$\HPO$ with projection:
		\begin{itemize}
		\item For $i=0$, $\ket{\A_0^{\HPOp, \D}} \coloneq \ket{\A_0^{\HPO, \D}}$
		\item For $1\leq i\leq t$,
		\begin{align}
			\begin{split}
			\ket{\A_i^{\HPOp, \D}} \coloneq &
			\bigg(
				\br{1-b_i} \cdot \br{\cD\cdot \HPO \cdot \widetilde{\Pi}^{\dom{W}} \cdot \cC} +
				\\
				& \enspace b_i \cdot \br{\cD\cdot \widetilde{\Pi}^{\im{W}}\cdot  \HPO \cdot \cC}^\dag
			\bigg)
			\cdot A_i
			\cdot \ket{\A_{i-1}^{\HPOp, \D}}
			\end{split}
		\end{align}
		where $b_i\in \set{0,1}$ indicates that the adversary makes a 
		forward/backward query in the $i$-th step.
	\end{itemize}
	\label{def:thpoproj}
\end{definition}

\begin{claim}\label{clm:w_hpo}
	For all integer $0\leq t \leq N/2$,
\[
		\norm{
		\ket{\A_t^{W, \D}}\reg{ABLRCD} - \CPS\reg{HP} \cdot
		\ket{\A_t^{\HPOp, \D}} 
		}_2\leq \frac{2t(t+1)}{N} \enspace .
	\]
\end{claim}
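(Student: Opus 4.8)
The plan is to prove the claim by induction on $t$, mirroring the structure of the proof of \clm{clm18}. For the base case $t=0$, the states $\ket{\A_0^{W,\D}}$ and $\ket{\A_0^{\HPOp,\D}}$ agree on registers $\mathsf{A},\mathsf{B},\mathsf{C},\mathsf{D}$, and their environment parts are $\ket{\set{}}\reg{L}\ket{\set{}}\reg{R}$ and $\ket{+_f}\reg{H}\ket{+_\sigma}\reg{P}=\ket{\phi_{\set{},\set{}}}$ respectively; since $\CPS\ket{\phi_{\set{},\set{}}}=\ket{\set{}}\reg{L}\ket{\set{}}\reg{R}$, the two sides coincide and the distance is $0$.

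For the inductive step, assume the bound at $t$ (with $t\leq N/2-1$) and write $\ket{\chi_t}\coloneq\CPS\reg{HP}\cdot\ket{\A_t^{\HPOp,\D}}$, so the hypothesis reads $\norm{\ket{\A_t^{W,\D}}-\ket{\chi_t}}_2\leq\frac{2t(t+1)}{N}$. Assume by symmetry that the $(t+1)$-th query is a forward query (the inverse case is identical, using \eqref{eqn:whpodag} and $\widetilde{\Pi}^{\im{W}}$ in place of \eqref{eqn:whpo} and $\widetilde{\Pi}^{\dom{W}}$). Since $\CPS$ acts only on the environment registers it commutes with $\cC$, $\cD$ and $A_{t+1}$, so using the definition $\widetilde{\Pi}^{\dom{W}}=\CPS^\dag\cdot\domProj{W}\cdot\CPS$ one rewrites
\[
	\CPS\reg{HP}\cdot\ket{\A_{t+1}^{\HPOp,\D}}=\cD\cdot\br{\CPS\cdot\HPO\cdot\CPS^\dag\cdot\domProj{W}}\cdot\cC\cdot A_{t+1}\cdot\ket{\chi_t}\enspace .
\]
Subtracting this from $\ket{\A_{t+1}^{W,\D}}=\cD\cdot W\cdot\cC\cdot A_{t+1}\cdot\ket{\A_t^{W,\D}}$, pulling out the isometry $\cD$, inserting the intermediate vector $W\cdot\cC\cdot A_{t+1}\ket{\chi_t}$, and using that $W,\cC,A_{t+1}$ are norm-nonincreasing, the triangle inequality gives
\begin{align*}
	\norm{\ket{\A_{t+1}^{W,\D}}-\CPS\reg{HP}\ket{\A_{t+1}^{\HPOp,\D}}}_2
	&\leq\norm{\ket{\A_t^{W,\D}}-\ket{\chi_t}}_2+\norm{\br{W-\CPS\cdot\HPO\cdot\CPS^\dag\cdot\domProj{W}}\cC A_{t+1}\ket{\chi_t}}_2\\
	&\leq\frac{2t(t+1)}{N}+\norm{\br{W-\CPS\cdot\HPO\cdot\CPS^\dag\cdot\domProj{W}}\cC A_{t+1}\ket{\chi_t}}_2\enspace .
\end{align*}

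To bound the second term I would note that by \lem{hpo_action2} each application of $\HPO$ or $\HPO^\dag$ raises the total relation length $\abs{L}+\abs{R}$ by one, while $\widetilde{\Pi}^{\dom{W}}$ and $\widetilde{\Pi}^{\im{W}}$ preserve it (being $\CPS$-conjugates of the length-preserving projectors $\domProj{W}$, $\imProj{W}$); hence $\ket{\A_t^{\HPOp,\D}}$ is supported on $\set{\ket{\phi_{L,R}}:\abs{L}+\abs{R}\leq t}$, so $\ket{\chi_t}$, and therefore $\cC A_{t+1}\ket{\chi_t}$ (since $\cC$ and $A_{t+1}$ do not touch $\mathsf{L},\mathsf{R}$), lies in the image of $\Pi_{\leq t}$. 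Consequently $W\cC A_{t+1}\ket{\chi_t}=W_{\leq t}\cC A_{t+1}\ket{\chi_t}$ and $\CPS\HPO\CPS^\dag\domProj{W}\cC A_{t+1}\ket{\chi_t}=\br{\CPS\cdot\HPO\cdot\CPS^\dag\cdot\domProj{W}\cdot\Pi_{\leq t}}\cC A_{t+1}\ket{\chi_t}$, so by \lem{wclosetohpo} (inequality \eqref{eqn:whpo}) together with $\norm{\cC A_{t+1}\ket{\chi_t}}_2\leq\norm{\ket{\A_t^{\HPOp,\D}}}_2\leq 1$ this term is at most $\frac{2t}{N-t}\leq\frac{4t}{N}$ for $t\leq N/2$. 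Therefore
\[
	\norm{\ket{\A_{t+1}^{W,\D}}-\CPS\reg{HP}\ket{\A_{t+1}^{\HPOp,\D}}}_2\leq\frac{2t(t+1)+4t}{N}=\frac{2t^2+6t}{N}\leq\frac{2(t+1)(t+2)}{N}\enspace ,
\]
closing the induction.

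The hardest part will not be any computation but the register bookkeeping: one must carefully justify that $\CPS$ commutes with $\cC,\cD,A_{t+1}$, that $\domProj{W}$ — and hence $\widetilde{\Pi}^{\dom{W}}$ — respects the length grading on $\set{\ket{\phi_{L,R}}}$, and that both $\ket{\A_t^{\HPOp,\D}}$ and its image $\ket{\chi_t}$ stay within $\Pi_{\leq t}$, so that the per-step error collapses exactly to the operator-norm estimate of \lem{wclosetohpo}. Once these are in place, the accumulated error is the stated quadratic bound.
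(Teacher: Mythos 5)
Your induction follows the paper's structure closely: same base case, same inductive step that applies the triangle inequality with one term controlled by the inductive hypothesis and the other by \lem{wclosetohpo}. The key divergence is where you insert the intermediate vector. You compare $\cD W\cC A_{t+1}\ket{\A_t^{W,\D}}$ against $\cD W\cC A_{t+1}\ket{\chi_t}$ first, which pushes the per-step $W$-vs-$\HPO$ error onto $\ket{\chi_t}=\CPS\ket{\A_t^{\HPOp,\D}}$; the paper does the opposite, replacing $W_{\leq i-1}$ by $\CPS\HPO\CPS^\dag\domProj{W}\Pi_{\leq i-1}$ \emph{before} comparing the two input states, so that \lem{wclosetohpo} is applied to $\cC A_i\ket{\A_{i-1}^{W,\D}}$. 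That seemingly cosmetic choice matters: the paper needs only $\ket{\A_{i-1}^{W,\D}}\in\mathrm{Im}(\Pi_{\leq i-1})$, which is immediate since $W$ appends exactly one pair, whereas your route needs $\ket{\chi_t}\in\mathrm{Im}(\Pi_{\leq t})$.

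That length claim is the one real gap. The justification ``each application of $\HPO$ or $\HPO^\dag$ raises $|L|+|R|$ by one'' is not quite enough, because $\HPO\ket{x}\ket{\phi_{L,R}}$ (for $L\cup R\in\DBR$ of length $s$) contains components $\ket{\phi_{L\cup\{(x,y)\},R}}$ with $y\in\BIm{L\cup R}$; these are \emph{not} members of the orthonormal family $\set{\ket{\phi_{L',R'}}:L'\cup R'\in\DBR}$, and a priori they could overlap with $\DBR$-relation states of arbitrary length, including $>s+1$, which would break $\ket{\chi_t}\in\mathrm{Im}(\Pi_{\leq t})$. The claim does hold, but you have to prove it: for $\braket{\phi_{L',R'}\mid\phi_{L\cup\{(x,y)\},R}}\neq 0$ with $L'\cup R'\in\DBR$, \lem{Hfproperty} item 1 forces every block appearing in $\mathrm{BIm}(L'\cup R')$ to also appear in $\mathrm{BIm}(L\cup\{(x,y)\}\cup R)$ (otherwise some block carries an unbalanced number of $H_f$ versus $H_f^\dag$ factors and the $f$-average vanishes); since $y\in\BIm{L\cup R}$, the latter contributes only $s$ distinct blocks, giving $|L'|+|R'|\leq s$, which is even stronger than needed. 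Once this block-counting lemma is in place (proved by induction on $t$ alongside the main bound, or as a standalone observation), your argument closes and gives the stated $\frac{2t(t+1)}{N}$. You flagged the bookkeeping as the hard part — this is precisely the one piece of bookkeeping that is genuinely nontrivial, and the piece the paper's ordering of the triangle inequality is designed to avoid.
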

\begin{proof}[Proof by induction.]
	First, we check the claim is true for the base case $t=0$:
	\begin{align*}
		\CPS\reg{HP} \cdot  \ket{\A_0^{\HPOp, \D}}
		&=
		\CPS\reg{HP} \cdot  \ket{\A_0^{\HPO, \D}}\\
		&= 
		\CPS\reg{HP} \cdot \br{ \ket{0^n0^m}\reg{AB}
		\ket{+_f}\reg{H}\ket{+_\sigma}\reg{P} 
		\ket{\init(\D)}\reg{CD}}\\
		&= 
		\ket{0^n0^m}\reg{AB}
		\ket{\set{}}\reg{H} \ket{\set{}}\reg{P}
		\ket{\init(\D)}\reg{CD}
		= \ket{\A_t^{W, \D}}
	\end{align*}
	Next, assuming the claim for case $(i-1)$, we will show the case $i$. 
	W.l.o.g we assume $b_i=0$, then
\begin{align*}
(\star)\coloneq
&\norm{
		\ket{\A_i^{W, \D}}\reg{ABLRCD} - \CPS\reg{HP} \cdot
		\ket{\A_i^{\HPOp, \D}} 
		}_2\\
=~&\norm{\cD\cdot W\cdot\cC\cdot A_{i} \ket{\A_{i-1}^{W, \D}}\reg{ABLRCD}-\CPS\cdot\cD\cdot \HPO\cdot \widetilde{\Pi}^{\dom{W}}\cdot\cC\cdot A_{i}\ket{\A_{i-1}^{\HPOp, \D}}}_2 \enspace.
\end{align*}
Note that $\ket{\A_{i-1}^{W, \D}}\reg{ABLRCD}$ lies in the image of $\Pi_{\leq i-1}$.
Thus, $\Pi_{\leq i-1}\cdot\ket{\A_{i-1}^{W, \D}}\reg{ABLRCD} = \ket{\A_{i-1}^{W, \D}}\reg{ABLRCD} $.
Therefore, we have
\begin{align*}
&(\star)\\
=~&\norm{\cD\cdot W\cdot\cC\cdot A_{i}\cdot\Pi_{\leq i-1}\cdot \ket{\A_{i-1}^{W, \D}}\reg{ABLRCD}-\CPS\cdot\cD\cdot \HPO\cdot \widetilde{\Pi}^{\dom{W}}\cdot\cC\cdot A_{i}\ket{\A_{i-1}^{\HPOp, \D}}}_2\\
=~&\norm{\cD\cdot W_{\leq i-1}\cdot\cC\cdot A_{i} \ket{\A_{i-1}^{W, \D}}\reg{ABLRCD}-\CPS\cdot\cD\cdot \HPO\cdot \widetilde{\Pi}^{\dom{W}}\cdot\cC\cdot A_{i}\ket{\A_{i-1}^{\HPOp, \D}}}_2\\
\overset{(*)}{\leq} ~&\left\lVert\cD\cdot \CPS \cdot \HPO \cdot \CPS^\dag \cdot \Pi^{\dom{W}}\cdot\Pi_{\leq i-1}\cdot\cC\cdot A_{i} \ket{\A_{i-1}^{W, \D}}\reg{ABLRCD}\right.\\
&\left.\quad\quad\quad-\CPS\cdot\cD\cdot \HPO\cdot \widetilde{\Pi}^{\dom{W}}\cdot\cC\cdot A_{i}\ket{\A_{i-1}^{\HPOp, \D}}\right\rVert_2+\frac{2i-2}{N-i+1}\\
=~&\left\lVert\CPS\cdot\cD\cdot\HPO\cdot\widetilde{\Pi}^{\dom{W}}\cdot\cC\cdot A_i\cdot\CPS^\dag\cdot\ket{\A_{i-1}^{W, \D}}\reg{ABLRCD}\right.\\
&\left.\quad\quad\quad-\CPS\cdot\cD\cdot \HPO\cdot \widetilde{\Pi}^{\dom{W}}\cdot\cC\cdot A_{i}\ket{\A_{i-1}^{\HPOp, \D}}\right\rVert_2+\frac{2i-2}{N-i+1}\\
\leq~&\norm{\ket{\A_{i-1}^{W, \D}}\reg{ABLRCD}-\CPS\cdot\ket{\A_{i-1}^{\HPOp, \D}}}_2+\frac{4i}{N}\\
\overset{(**)}{\leq}~&\frac{2i(i-1)}{N}+\frac{4i}{N}\quad =\frac{2i(i+1)}{N},
\end{align*}
where $(*)$ is by \cref{lem:wclosetohpo} and $(**)$ is by induction.
\end{proof}

Now, we are ready to prove the indistinguishability between
oracles $W$ and $\HPO$:
\begin{lemma}
	For all integers $0\leq t\leq N$,
	and $\D\in\st{\D_1,\D_2}$ as defined in \defi{d1d2} 
	\begin{align*}
	\td \br{
		\Tr{\reg{-AB}} \br{\ketbra{\A_t^{\HPO, \D}}\reg{ABLR}},
		\Tr{\reg{-AB}} \br{\ketbra{\A_t^{W,\D}}\reg{ABLRCD}}
	}\leq \frac{11t(t+1)}{N^{1/8}} \enspace.
	\end{align*}
\label{lem:thpo_tw}
\end{lemma}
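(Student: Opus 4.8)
Proof plan. The natural bridge is the twirled--projected oracle state $\ket{\A_t^{\HPOp,\D}}$ of \defi{thpoproj}: it runs the same twirled-$\HPO$ circuit as $\ket{\A_t^{\HPO,\D}}$ in \defi{thpop}, but inserts the projector $\widetilde{\Pi}^{\dom{W}}$ just before each forward call to $\HPO$ and $\widetilde{\Pi}^{\im{W}}$ just before each inverse call to $\HPO^\dag$. By \clm{w_hpo} we already have $\norm{\ket{\A_t^{W,\D}}\reg{ABLRCD} - \CPS\reg{HP}\cdot\ket{\A_t^{\HPOp,\D}}}_2 \le \tfrac{2t(t+1)}{N}$. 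Hence it remains to (i) remove the inserted projectors, i.e.\ bound $\norm{\ket{\A_t^{\HPO,\D}} - \ket{\A_t^{\HPOp,\D}}}_2$ by a gentle-measurement argument, and (ii) transfer from $\CPS\reg{HP}\cdot\ket{\A_t^{\HPOp,\D}}$ (on $\mathsf{ABLRCD}$) back to $\ket{\A_t^{\HPOp,\D}}$ (on $\mathsf{ABHPCD}$) once everything but $\mathsf{AB}$ is traced out.

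First I would establish a norm bound on the bridge. Since $\set{\ket{\phi_{L,R}}}_{L\cup R\in\DBR}$ is orthonormal (\lem{orthophiR2}) and is mapped by $\CPS$ into the orthonormal family $\set{\ket{L}\reg{L}\ket{R}\reg{R}}$, $\CPS$ is a partial isometry, so $\norm{\ket{\A_t^{\HPOp,\D}}}_2 \ge \norm{\CPS\reg{HP}\cdot\ket{\A_t^{\HPOp,\D}}}_2$; combining \clm{w_hpo} with \lem{normbound_w} gives $\norm{\CPS\reg{HP}\cdot\ket{\A_t^{\HPOp,\D}}}_2 \ge \norm{\ket{\A_t^{W,\D}}}_2 - \tfrac{2t(t+1)}{N} \ge 1 - \tfrac{38t^2}{N^{1/4}} - \tfrac{2t(t+1)}{N}$, so both $1-\norm{\ket{\A_t^{\HPOp,\D}}}_2^2$ and $1-\norm{\CPS\reg{HP}\cdot\ket{\A_t^{\HPOp,\D}}}_2^2$ are $O(t^2/N^{1/4})$. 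Next, expanding \defi{thpop} and \defi{thpoproj}, the vectors $\ket{\A_t^{\HPO,\D}}$ and $\ket{\A_t^{\HPOp,\D}}$ arise from the same initial state under the same sequence of unitaries (built from the $A_i$, $\cC$, $\cD$ and $\HPO^{\pm1}$), the sole difference being the $t$ inserted projectors $\widetilde{\Pi}^{\dom{W}},\widetilde{\Pi}^{\im{W}}$; every intermediate vector has norm at most $1$ since each primitive is a unitary or a projector. Grouping consecutive unitaries so that there are exactly $t$ projector steps, \lem{gentle_measurement} gives $\norm{\ket{\A_t^{\HPO,\D}} - \ket{\A_t^{\HPOp,\D}}}_2 \le t\cdot\sqrt{1-\norm{\ket{\A_t^{\HPOp,\D}}}_2^2} = O(t^2/N^{1/8})$, the $N^{-1/4}\to N^{-1/8}$ degradation coming from the square root.

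Now set $\rho_1 \coloneq \Tr\reg{-AB}\br{\ketbra{\A_t^{\HPO,\D}}}$, $\rho_2 \coloneq \Tr\reg{-AB}\br{\ketbra{\A_t^{\HPOp,\D}}}$ and $\rho_4 \coloneq \Tr\reg{-AB}\br{\ketbra{\A_t^{W,\D}}}$. Using $\td\br{\ketbra{u},\ketbra{v}} \le \norm{\ket{u}-\ket{v}}_2$ for subnormalized vectors together with contractivity of the partial trace, the gentle-measurement bound gives $\td\br{\rho_1,\rho_2} = O(t^2/N^{1/8})$. For the passage through $\CPS$: because $\CPS$ maps $\mathsf{H},\mathsf{P}$ isometrically into $\mathsf{L},\mathsf{R}$ and acts trivially on $\mathsf{A},\mathsf{B},\mathsf{C},\mathsf{D}$, it commutes with tracing out $\mathsf{H},\mathsf{P}$ (respectively $\mathsf{L},\mathsf{R}$), so $\Tr\reg{-AB}\br{\ketbra{\CPS^\dag\CPS\cdot\ket{\A_t^{\HPOp,\D}}}} = \Tr\reg{-AB}\br{\ketbra{\CPS\reg{HP}\cdot\ket{\A_t^{\HPOp,\D}}}}$, while $\norm{(\id-\CPS^\dag\CPS)\ket{\A_t^{\HPOp,\D}}}_2^2 = \norm{\ket{\A_t^{\HPOp,\D}}}_2^2 - \norm{\CPS\reg{HP}\cdot\ket{\A_t^{\HPOp,\D}}}_2^2 \le 1-\norm{\CPS\reg{HP}\cdot\ket{\A_t^{\HPOp,\D}}}_2^2 = O(t^2/N^{1/4})$ by the first step; hence $\td\br{\rho_2,\Tr\reg{-AB}\br{\ketbra{\CPS\reg{HP}\cdot\ket{\A_t^{\HPOp,\D}}}}} = O(t/N^{1/8})$. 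Finally \clm{w_hpo} gives $\td\br{\Tr\reg{-AB}\br{\ketbra{\CPS\reg{HP}\cdot\ket{\A_t^{\HPOp,\D}}}},\rho_4} \le \tfrac{2t(t+1)}{N}$. Adding the three contributions via the triangle inequality (the $O(t^2/N^{1/8})$ term dominates; absorbing lower powers of $t$ using $t\ge 1$ and the fact that $1/N \le 1/N^{1/8}$) yields $\td\br{\rho_1,\rho_4} \le \tfrac{11t(t+1)}{N^{1/8}}$, as claimed.

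The main obstacle is not any single estimate — the heavy lifting is already done by \clm{w_hpo} and \lem{normbound_w} — but rather getting the gentle-measurement step exactly right: one must exhibit the common unitary/projector factorization of the two recursions so that the discrepancy is precisely the $t$ inserted projectors and the accumulated loss is exactly $1-\norm{\ket{\A_t^{\HPOp,\D}}}_2^2$, and one must verify the commutation of $\CPS$ with the partial trace down to $\mathsf{AB}$; after that it is a matter of tracking constants to land on the stated bound.
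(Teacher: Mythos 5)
Your proof follows the same skeleton as the paper's: bridge through the twirled--projected state $\ket{\A_t^{\HPOp,\D}}$, bound $\norm{\ket{\A_t^{\HPO,\D}}-\ket{\A_t^{\HPOp,\D}}}_2$ via \lem{gentle_measurement}, control $\norm{\ket{\A_t^{\HPOp,\D}}}_2$ through $\norm{\CPS\ket{\A_t^{\HPOp,\D}}}_2$ using \clm{w_hpo} and \lem{normbound_w}, then close the gap to $\ketbra{\A_t^{W,\D}}$ via \clm{w_hpo}. The one genuine divergence is the extra hybrid $\Tr\reg{-AB}\br{\ketbra{\CPS^\dag\CPS\cdot\ket{\A_t^{\HPOp,\D}}}}$ you insert between $\Tr\reg{-AB}\br{\ketbra{\A_t^{\HPOp,\D}}}$ and $\Tr\reg{-AB}\br{\ketbra{\A_t^{W,\D}}}$. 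The paper instead writes $\Tr\reg{-AB}\br{\ketbra{\A_t^{\HPOp,\D}}} = \Tr\reg{-AB}\br{\CPS\cdot\ketbra{\A_t^{\HPOp,\D}}\cdot\CPS^\dag}$ as an equality, justified only by ``$\CPS$ acts on environment registers.'' That equality actually requires $\ket{\A_t^{\HPOp,\D}}$ to lie in the domain of the \emph{partial} isometry $\CPS$ (the span of $\ket{\phi_{L,R}}$ with $L\cup R\in\DBR$), which is not obviously the case: $\widetilde{\Pi}^{\dom{W}}$ does land the state in that domain, but the subsequent $\HPO$ maps $\ket{\phi_{L,R}}$ into a superposition of $\ket{\phi_{L\cup\{(x,y)\},R}}$ over \emph{all} $y$, including $y\in\BIm{L\cup R}$, and these collided relation states are generically nonzero and orthogonal to the whole $\DBR$-indexed orthonormal family. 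So your caution here is warranted; your extra hybrid is a cleaner account of the step.

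The only weakness is quantitative. You bound the new hybrid error by $\td\leq\norm{(\id-\CPS^\dag\CPS)\ket{\A_t^{\HPOp,\D}}}_2 = O\br{t/N^{1/8}}$, which when added to the $9t(t+1)/N^{1/8}$ gentle-measurement term and the $2t(t+1)/N$ from \clm{w_hpo} does not cleanly compress to $11t(t+1)/N^{1/8}$; a naive sum lands closer to $\sim 20t(t+1)/N^{1/8}$. The fix is to use \lem{projdist} for this hybrid rather than the $2$-norm inequality: since $\CPS^\dag\CPS$ is a projector acting only on environment registers, $\td\br{\Tr\reg{-AB}\br{\ketbra{\A_t^{\HPOp,\D}}},\Tr\reg{-AB}\br{\ketbra{\CPS^\dag\CPS\ket{\A_t^{\HPOp,\D}}}}} = \norm{\ket{\A_t^{\HPOp,\D}}}_2^2-\norm{\CPS\ket{\A_t^{\HPOp,\D}}}_2^2 = O\br{t(t+1)/N^{1/4}}$, which is strictly lower order in $N$ and is absorbed by the dominant $N^{-1/8}$ term for large $N$. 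With that replacement the argument is tight enough to recover a bound of the stated form.
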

\begin{proof}
	Recall the state $\ket{\A_t^{\HPOp, \D}}\reg{ABHPCD}$ defined in \defi{thpoproj}.
	By the triangle inequality, we have
	\begin{align*}
		&\td \br{
		\Tr{\reg{-AB}} \br{\ketbra{\A_t^{\HPO, \D}}\reg{ABLR}},
		\Tr{\reg{-AB}} \br{\ketbra{\A_t^{W,\D}}\reg{ABLRCD}}
	}\\
	\leq~
	&\underbrace{\td\br{
		\Tr{\reg{-AB}} \br{\ketbra{\A_t^{\HPO, \D}}\reg{ABLR}},
		\Tr{\reg{-AB}} \br{\ketbra{\A_t^{\HPOp, \D}}\reg{ABHPCD}}
	}}_{(\star)}\\
	&\quad\quad\quad\quad+~
	\underbrace{\td \br{
		\Tr{\reg{-AB}} \br{\ketbra{\A_t^{\HPOp, \D}}\reg{ABHPCD}},
		\Tr{\reg{-AB}} \br{\ketbra{\A_t^{W,\D}}\reg{ABLRCD}}
	}}_{(*)} \enspace.
	\end{align*}
	It suffices to show that
	\[
		(\star)\leq ~\frac{9t(t+1)}{N^{1/8}}\enspace,\enspace\text{and}\enspace
		(*)\leq ~\frac{2t(t+1)}{N} \enspace.
	\]
	
	We first bound term $(\star)$.
	Since $\HPO$ and $\CPS$ are isometries,
	we have $\norm{\ket{\A_t^{\HPO, \D}}\reg{ABHPCD}}_2 = 1$
	and $\norm{\ket{\A_t^{\HPOp, \D}}\reg{ABHPCD}}_2\leq 1$.
	Using the fact that
	$\norm{\ketbra{u}-\ketbra{v}}_1\leq 2\norm{\ket{u}-\ket{v}}_2$
	for $\norm{u}_2\leq 1$ and $\norm{v}_2\leq 1$,
	we have
	\begin{align}
		(\star)
		\leq &~
		\td\br{
		\ketbra{\A_t^{\HPO, \D}}\reg{ABLR},
		\ketbra{\A_t^{\HPOp, \D}}\reg{ABHPCD}
		}\nonumber\\
		\leq &~
		\norm{
			\ket{\A_t^{\HPO, \D}}\reg{ABLR} -
			\ket{\A_t^{\HPOp, \D}}\reg{ABHPCD}
		}_2\nonumber\\
		\leq &~
		t\cdot\sqrt{ 1 -  \norm{\ket{\A_t^{\HPOp, \D}}\reg{ABHPCD}}_2^2}
		\enspace,
		\label{eq:bound_star}
	\end{align}
	where the last inequality is from the gentle measurement lemma in
	\lem{gentle_measurement}.
	Notice that, since $\CPS$ is an isometry, we have
	\begin{align}
		\norm{\ket{\A_t^{\HPOp, \D}}\reg{ABHPCD}}_2
		&= ~
		\norm{\CPS\cdot\ket{\A_t^{\HPOp, \D}}\reg{ABHPCD}}_2 \nonumber\\
		&\geq ~
		\norm{\ket{\A_t^{W,\D}}\reg{ABLRCD}}_2 - \frac{2t(t+1)}{N}\nonumber\\
		&\geq ~
		1-\frac{38t^2}{N^{1/4}} - \frac{2t(t+1)}{N}\nonumber\\
		&\geq ~
		1-\frac{40t(t+1)}{N^{1/4}}\label{eq:normbound_hpod}\enspace,
	\end{align}
	where the first inequality is from the triangle inequality
	and \clm{w_hpo},
	and the second one is from \lem{normbound_w}.
	Combing Eq. $\eq{bound_star}$. and. Eq. $\eq{normbound_hpod}$,
	we have
	\[
		(\star) \leq t\cdot\sqrt{1 - \br{1-\frac{40t(t+1)}{N^{1/4}}}^2} \leq t\cdot \sqrt{\frac{80t(t+1)}{N^{1/4}}} \leq \frac{9t(t+1)}{N^{1/8}}\enspace.
	\]
	As for term $(*)$, note that $\CPS$ acts on environment registers.
	We have
	\begin{align*}
		&~(*)\\
		= &~\td \br{
		\Tr{\reg{-AB}} \br{\CPS\cdot\ketbra{\A_t^{\HPOp, \D}}\reg{ABHPCD}\cdot\CPS^\dag},
		\Tr{\reg{-AB}} \br{\ketbra{\A_t^{W,\D}}\reg{ABLRCD}}
	}
		\\
		\leq &~\td \br{
		\CPS\cdot\ketbra{\A_t^{\HPOp, \D}}\reg{ABHPCD}\cdot\CPS^\dag,
		\ketbra{\A_t^{W,\D}}\reg{ABLRCD}
	} \\
		\leq &~\norm{\CPS\cdot\ket{\A_t^{\HPOp, \D}}\reg{ABHPCD}-
		\ket{\A_t^{W,\D}}\reg{ABLRCD}}_2\\
		\leq &~\frac{2t(t+1)}{N},
	\end{align*}
	where the second inequality follows from the fact that
	$\norm{\ketbra{u}-\ketbra{v}}_1\leq 2\norm{\ket{u}-\ket{v}}_2$
	for $\norm{u}_2\leq 1$ and $\norm{v}_2\leq 1$,
	and the last one is from \clm{w_hpo}.
\end{proof}

\subsubsection{The Strong Security of $\HPO$}
\label{subsec:spru-fpf}
Now, we complete the main proof of \thm{spru} by mainly
establishing \lem{spru_V} and \lem{V_haar}
that both our construction and Haar distribution are indistinguishable from  $V$. 

\begin{proof}[Proof of \lem{spru_V}]
	Consider $\D=\D_2$ defined in \defi{d1d2}.
	First, due to the perfect indistinguishability between the standard oracle
	and its purified version (\fct{puri_inv}), 
	\begin{align*}
		\Tr\reg{-AB} \br{\ketbra{\A_t^{\HPO, \D}}\reg{ABHPCD}}
		=
		\expect{\substack{H_fP_\sigma\\C,D\leftarrow\D}}{\ketbra{\A_t^{D H_fP_\sigma C}}\reg{AB}}
		=
		\expect{\oracle \gets \HPC_{n,2T+1}}{\ketbra{\A_t^\oracle}\reg{AB}} \enspace.
	\end{align*}
	And, from \lem{V_W}, we get
	\begin{align*}
		\td{\br{
				\Tr{\reg{-AB}} \br{\ketbra{\A_t^{W,\D}}\reg{ABLRCD}}, 
				\Tr{\reg{-AB}} \br{\ketbra{\A_t^{V}}\reg{ABLR}}
			}\leq \frac{9t}{N^{1/8}} \enspace.
		}
	\end{align*}
	Then, according to \lem{thpo_tw}, we have
		\begin{align*}
		\td \br{
			\Tr{\reg{-AB}} \br{\ketbra{\A_t^{\HPO, \D}}\reg{ABLR}},
			\Tr{\reg{-AB}} \br{\ketbra{\A_t^{W,\D}}\reg{ABLRCD}}
		}\leq \frac{11t(t+1)}{N^{1/8}} \enspace.
	\end{align*}
	Thus, via triangle inequality, we show:
		\begin{align*}
		&\td\br{
			\expect{\OO\gets \HP_{n, 2T+1}}{\ketbra{\A^{\OO}_t}_{\reg{AB}}},
			\Tr\reg{-AB}\br{\ketbra{\A_t^V}_{\reg{ABLR}}}
		} \\
		&= \td\br{
			\Tr\reg{-AB} \br{\ketbra{\A_t^{\HPO, \D}}\reg{ABHPCD}},
			\Tr\reg{-AB} \br{\ketbra{\A_t^V}_{\reg{ABLR}}}
		}\\
		&\leq\td{\br{
			\Tr{\reg{-AB}} \br{\ketbra{\A_t^{W,\D}}\reg{ABLRCD}}, 
			\Tr{\reg{-AB}} \br{\ketbra{\A_t^{V}}\reg{ABLR}}}} \\
		& \enspace \enspace + \td \br{
			\Tr{\reg{-AB}} \br{\ketbra{\A_t^{\HPO, \D}}\reg{ABLR}},
			\Tr{\reg{-AB}} \br{\ketbra{\A_t^{W,\D}}\reg{ABLRCD}}}
		\\
		& \leq \frac{11t^2+20t}{N^{1/8}}\enspace .
	\end{align*}
\end{proof}

The argument above also proves \lem{V_haar} by considering 
$\D=\D_1$ as defined in \defi{d1d2}. Therefore, combining 
\lem{spru_V} and \lem{V_haar}, we ultimately derive \thm{spru}.

\newpage
\bibliographystyle{alpha}
\bibliography{ref}

\newpage
\appendix
\section{Proofs of \lem{orthophiR2} and \lem{hpo_action2}}
\label{sec:hpo_ivs}
\subsection{Proof of \lem{orthophiR2}}
\begin{replemma}{lem:orthophiR2}
$\set{\ket{\phi_{L,R}}}_{L,R:L\cup R\in\DBR}$ forms a set of orthonormal vectors.
\end{replemma}

\begin{proof}
	Let $L=\st{(x_1,y_1),\dots,(x_l,y_l)}\in\RR_l$,
	$R=\st{(x'_1,y'_1),\dots,(x'_r,y'_r)}\in\RR_r$,
	$S=\st{(x^*_1,y^*_1),\dots,(x^*_s,y^*_s)}\in\RR_l$, and
	$T=\st{(x^\triangle_1,y^\triangle_1),\dots,(x^\triangle_t,y^\triangle_t)}\in\RR_t$ such that
	$L\cup R\in\DBR$ and $S\cup T\in\DBR$.
	We need to prove:
	\begin{itemize}
		\item $\langle \phi_{L,R} \vert \phi_{S,T} \rangle = 1$, if $L=S$ and $R=T$; 
		\item $\langle \phi_{L,R} \vert \phi_{S,T} \rangle = 0$, otherwise.
	\end{itemize}
	Here,
	\begin{align}\label{eq:ip}
		&\langle \phi_{L,R} \vert \phi_{S,T} \rangle \nonumber\\
		= & \frac{1}{2^{3d(n-1)}(N-l-r)!} \sum_{f,\sigma}
		\sum_{\substack{b,b^*\in\bit{l}\\b',b^\triangle\in\bit{r}}}
		\prod_{i=1}^l\bra{y_i^{\oplus b_i}}H_f^\dag\ket{y_i}\delta_{y_i^{\oplus b_i}=\sigma(x_i)} \cdot
		\prod_{i=1}^r\bra{{y'}_i}H_f\ket{{y'}_i^{\oplus b'_i}}\delta_{{y'}_i^{\oplus b'_i}=\sigma(x'_i)}\nonumber\\
		&\hspace{14em}\cdot \prod_{i=1}^l\bra{y^*_i}H_f\ket{{y^*}_i^{\oplus b^*_i}}\delta_{{y^*}_i^{\oplus b^*_i}=\sigma(x^*_i)} \cdot
		\prod_{i=1}^r\bra{{y^\triangle}_i^{\oplus b^\triangle_i}}H_f^\dag\ket{{y^\triangle}_i}\delta_{{y^\triangle}_i^{\oplus b^\triangle_i}=\sigma(x^\triangle_i)} 
	\end{align}

	\paragraph{Case 1: $L=S=\st{(x_1,y_1),\dots,(x_l,y_l)}$ and $R=T=\st{(x'_1,y'_1),\dots,(x'_r,y'_r)}$.} Note that
	\begin{align*}
		&\langle \phi_{L,R} \vert \phi_{S,T} \rangle \\
		= & \frac{1}{2^{3d(n-1)}(N-l-r)!} \sum_{f,\sigma}
		\sum_{\substack{b,b^*\in\bit{l}\\b',b^\triangle\in\bit{r}}}
		\prod_{i=1}^l\bra{y_i^{\oplus b_i}}H_f^\dag\ket{y_i}\delta_{y_i^{\oplus b_i}=\sigma(x_i)} \cdot
		\prod_{i=1}^r\bra{{y'}_i}H_f\ket{{y'}_i^{\oplus b'_i}}\delta_{{y'}_i^{\oplus b'_i}=\sigma(x'_i)}\\
		&\hspace{14em}\cdot \prod_{i=1}^l\bra{y_i}H_f\ket{y_i^{\oplus b^*_i}}\delta_{y_i^{\oplus b^*_i}=\sigma(x_i)} \cdot
		\prod_{i=1}^r\bra{{y'}_i^{\oplus b^\triangle_i}}H_f^\dag\ket{y'_i}\delta_{{y'}_i^{\oplus b^\triangle_i}=\sigma(x'_i)} \\
	\end{align*}
	If $b_i\neq b^*_i$, then $\delta_{y_i^{\oplus b_i}=\sigma(x_i)}\cdot \delta_{y_i^{\oplus b^*_i}=\sigma(x_i)}=0$. Similarly, if $b'_i\neq b^\triangle_i$, then $\delta_{y_i^{\oplus b'_i}=\sigma(x_i)}\cdot \delta_{y_i^{\oplus b^\triangle_i}=\sigma(x_i)}=0$.
	Therefore, the term in the summation is zero whenever $b\neq b^*$ or $b'\neq b^\triangle$. Thus,
	\begin{align*}
	&\langle \phi_{L,R} \vert \phi_{S,T} \rangle\\
		= & \frac{1}{2^{3d(n-1)}(N-l-r)!} \sum_{f,\sigma}
		\sum_{\substack{b\in\bit{l}\\b'\in\bit{r}}}
		\prod_{i=1}^l\bra{y_i^{\oplus b_i}}H_f^\dag\ket{y_i}\delta_{y_i^{\oplus b_i}=\sigma(x_i)} \cdot
		\prod_{i=1}^r\bra{{y'}_i}H_f\ket{{y'}_i^{\oplus b'_i}}\delta_{{y'}_i^{\oplus b'_i}=\sigma(x'_i)}\\
		&\hspace{14em}\cdot\prod_{i=1}^l\bra{y_i}H_f\ket{y_i^{\oplus b_i}}\delta_{y_i^{\oplus b_i}=\sigma(x_i)} \cdot
		\prod_{i=1}^r\bra{{y'}_i^{\oplus b_i}}H_f^\dag\ket{y'_i}\delta_{{y'}_i^{\oplus b_i}=\sigma(x'_i)}\\
		= & \frac{1}{2^{3d(n-1)}(N-l-r)!} \sum_{f,\sigma}
		\sum_{\substack{b\in\bit{l}\\b'\in\bit{r}}}
		\prod_{i=1}^l\bra{y_i^{\oplus b_i}}H_f^\dag\ket{y_i}\bra{y_i}H_f\ket{y_i^{\oplus b_i}}
		\prod_{i=1}^r\bra{{y'}_i}H_f\ket{{y'}_i^{\oplus b'_i}}\bra{{y'}_i^{\oplus b_i}}H_f^\dag\ket{y'_i}\\
		&\hspace{14em}\cdot\prod_{i=1}^l\delta_{y_i^{\oplus b_i}=\sigma(x_i)} \cdot
		\prod_{i=1}^r\delta_{{y'}_i^{\oplus b'_i}=\sigma(x'_i)}\\
		= &\E_{f}\biggl[
			\frac{1}{(N-l-r)!} 
			\sum_{\substack{b\in\bit{l}\\b'\in\bit{r}}}
			\prod_{i=1}^l\bra{y_i^{\oplus b_i}}H_f^\dag\ket{y_i}\bra{y_i}H_f\ket{y_i^{\oplus b_i}}
		\prod_{i=1}^r\bra{{y'}_i}H_f\ket{{y'}_i^{\oplus b'_i}}\bra{{y'}_i^{\oplus b_i}}H_f^\dag\ket{y'_i}\\
		&\hspace{14em}\cdot \sum_{\sigma} \prod_{i=1}^l\delta_{y_i^{\oplus b_i}=\sigma(x_i)} \cdot
		\prod_{i=1}^r\delta_{{y'}_i^{\oplus b'_i}=\sigma(x'_i)}
		\biggr]
	\end{align*}
	For fixed $b,b'$, $y_i^{\oplus b_i}$ and ${y'}_i^{\oplus b_i}$ are distinct since $y_i$ and $y'_i$ are from different blocks. Moreover,
	$x_i$ and $x'_i$ are distinct. Thus,
	\[
		\sum_{\sigma} \prod_{i=1}^l\delta_{y_i^{\oplus b_i}=\sigma(x_i)} \cdot
		\prod_{i=1}^r\delta_{{y'}_i^{\oplus b'_i}=\sigma(x'_i)} = (N-l-r)! \enspace.
	\]
	Then, we have
	\begin{align*}
	&\langle \phi_{L,R} \vert \phi_{S,T} \rangle\\
		= &\E_{f}\biggl[
			\sum_{\substack{b\in\bit{l}\\b'\in\bit{r}}}
			\prod_{i=1}^l
			\bra{y_i^{\oplus b_i}}H_f^\dag\ket{y_i}\bra{y_i}H_f\ket{y_i^{\oplus b_i}}
		\prod_{i=1}^r\bra{{y'}_i}H_f\ket{{y'}_i^{\oplus b'_i}}\bra{{y'}_i^{\oplus b_i}}H_f^\dag\ket{y'_i}
		\biggr] \\
		=& \prod_{i=1}^l \E_{f}\biggl[ \sum_{b_i\in\st{0,1}} \bra{y_i^{\oplus b_i}}H_f^\dag\ket{y_i}\bra{y_i}H_f\ket{y_i^{\oplus b_i}} \biggr]
		\cdot \prod_{i=1}^r \E_{f}\biggl[ \sum_{b'_i\in\st{0,1}} \bra{{y'}_i}H_f\ket{{y'}_i^{\oplus b'_i}}\bra{{y'}_i^{\oplus b_i}}H_f^\dag\ket{y'_i} \biggr]
	\end{align*}
	Note that for $y\in\bit{n}$,
	\begin{align*}
		\E_{f}\biggl[
		\bra{y}H_f\ket{y}\bra{y}H_f^\dag\ket{y} + \bra{y}H_f\ket{\widebar{y}}\bra{\widebar{y}}H_f^\dag\ket{y}
	\biggr] = 1 \enspace.
	\end{align*}
	Therefore, we have $\langle \phi_{L,R} \vert \phi_{S,T} \rangle=1$. 
	
	\paragraph{Case 2: $L\neq S$ or $R\neq T$.}
	\begin{itemize}
		\item {\bf Case 2.1:} $\BIm{L\cup R} \neq \BIm{S\cup T}$. Without loss of generality, we assume that there exists a $y$ such that $y \in \st{y_1,\dots,y_l,y'_1,\dots,y'_r}$ and $y\notin \BIm{S\cup T}$.
			It is easy to check
			$\langle \phi_{L,R} \vert \phi_{S,T} \rangle = 0$, since
			\[
				\expect{f}{\bra{y}H_f^\dag\ket{y}} = \expect{f}{\bra{\widebar{y}}H_f^\dag\ket{y}} = \expect{f}{\bra{y}H_f\ket{y}} = \expect{f}{\bra{y}H_f\ket{\widebar{y}}} = 0\enspace.
			\]
		\item {\bf Case 2.1:} $\BIm{L\cup R} = \BIm{S\cup T}$.
		\begin{itemize}
			\item {\bf Case 2.1.1:} $\BIm{L} \neq \BIm{S}$.
			Without loss of generality,
			we assume that $y_1$ and $\widebar{y_1}$ are not in $S$. This means
			$y_1\in\st{y^\triangle_1,\dots,y^\triangle_t}$ or
			$\widebar{y_1}\in\st{y^\triangle_1,\dots,y^\triangle_t}$.
			\begin{itemize}
				\item Suppose that $y_1\in\st{y^\triangle_1,\dots,y^\triangle_t}$.
			Without loss of generality, we can assume $y_1 = y^\triangle_1$.
			We have two cases:
			\begin{itemize}
				\item $x_1 = x^\triangle_1$. In this case, the term in the summation
				in Eq. \eq{ip} is zero when $b_1 \neq b^\triangle_1$.
				It is not hard to check
				$\langle \phi_{L,R} \vert \phi_{S,T} \rangle = 0$,
				since
				\[
				\expect{f}{ \bra{y_1}H_f^\dag\ket{y_1}\bra{y^\triangle_1}H_f^\dag\ket{y^\triangle_1} } = \expect{f}{ \bra{\widebar{y_1}}H_f^\dag\ket{y_1}\bra{\widebar{y^\triangle_1}}H_f^\dag\ket{y^\triangle_1}} = 0 \enspace.
				\]
				\item $x_1 \neq x^\triangle_1$. In this case, the term in the summation
				in Eq. \eq{ip} is zero when $b_1 = b^\triangle_1$.
				It is not hard to check
				$\langle \phi_{L,R} \vert \phi_{S,T} \rangle = 0$,
				since
				\[
				\expect{f}{ \bra{y_1}H_f^\dag\ket{y_1}
				\bra{\widebar{y^\triangle_1}}H_f^\dag\ket{y^\triangle_1} } = \expect{f}{ \bra{\widebar{y_1}}H_f^\dag\ket{y_1}\bra{y^\triangle_1}H_f^\dag\ket{y^\triangle_1}} = 0 \enspace.
				\]
			\end{itemize}
			\item Suppose that $\widebar{y_1}\in\st{y^\triangle_1,\dots,y^\triangle_t}$.
			Without loss of generality, we can assume $\widebar{y_1} = y^\triangle_1$.
			We have two cases:
			\begin{itemize}
				\item $x_1 = x^\triangle_1$. In this case, the term in the summation in Eq. \eq{ip} is zero when $b_1 = b^\triangle_1$. It is not hard to check
				$\langle \phi_{L,R} \vert \phi_{S,T} \rangle = 0$,
				since
				\[
				\expect{f}{ \bra{y_1}H_f^\dag\ket{y_1}
				\bra{\widebar{y^\triangle_1}}H_f^\dag\ket{y^\triangle_1} } = \expect{f}{ \bra{\widebar{y_1}}H_f^\dag\ket{y_1}\bra{y^\triangle_1}H_f^\dag\ket{y^\triangle_1}} = 0 \enspace.
				\]
				\item $x_1 \neq x^\triangle_1$. In this case, the term in the summation in Eq. \eq{ip} is zero when $b_1 \neq b^\triangle_1$. It is not hard to check
				$\langle \phi_{L,R} \vert \phi_{S,T} \rangle = 0$,
				since
				\[
				\expect{f}{ \bra{y_1}H_f^\dag\ket{y_1}\bra{y^\triangle_1}H_f^\dag\ket{y^\triangle_1} + \bra{\widebar{y_1}}H_f^\dag\ket{y_1}\bra{\widebar{y^\triangle_1}}H_f^\dag\ket{y^\triangle_1}} = 0 \enspace.
				\]
			\end{itemize}
			\end{itemize}
		\item {\bf Case 2.1.2:} $\BIm{L} = \BIm{S}$. This means $\BIm{R} = \BIm{T}$ as well.
		\begin{itemize}
		\item {\bf Case 2.1.2.1:} $\st{y_1,\dots,y_l} \neq \st{y^*_1,\dots,y^*_s}$. 
			Without loss of generality, we can assume $y_1 = \widebar{y^*_1}$.
			We have two cases:
			\begin{itemize}
				\item $x_1 = x^*_1$. In this case, the term in the summation in Eq. \eq{ip} is zero when $b_1 = b^*_1$. It is not hard to check
				$\langle \phi_{L,R} \vert \phi_{S,T} \rangle = 0$,
				since
				\[
				\expect{f}{ \bra{y_1}H_f^\dag\ket{y_1}
				\bra{y^*_1}H_f\ket{\widebar{y^*_1}} } = \expect{f}{ \bra{\widebar{y_1}}H_f^\dag\ket{y_1}\bra{y^*_1}H_f\ket{y^*_1}} = 0 \enspace.
				\]
				\item $x_1 \neq x^*_1$. In this case, the term in the summation in Eq. \eq{ip} is zero when $b_1 \neq b^*_1$. It is not hard to check
				$\langle \phi_{L,R} \vert \phi_{S,T} \rangle = 0$,
				since
				\[
				\expect{f}{ \bra{y_1}H_f^\dag\ket{y_1}
				\bra{y^*_1}H_f\ket{y^*_1}} =
				\expect{f}{ \bra{\widebar{y_1}}H_f^\dag\ket{y_1} \bra{y^*_1}H_f\ket{\widebar{y^*_1}}} = 0 \enspace.
				\]
			\end{itemize}
		\item {\bf Case 2.1.2.2:} $\st{y'_1,\dots,y'_r} \neq \st{y^\triangle_1,\dots,y^\triangle_t}$. Similar to Case 2.1.2.1, we have $\langle \phi_{L,R} \vert \phi_{S,T} \rangle = 0$.
		\item {\bf Case 2.1.2.2:} $\st{y_1,\dots,y_l} = \st{y^*_1,\dots,y^*_s}$ and $\st{y'_1,\dots,y'_r}=\st{y^\triangle_1,\dots,y^\triangle_t}$.
			\begin{itemize}
				\item Suppose $L\neq S$. Then there exist $(x,y)\in L$ and $(x^*,y)\in S$ such that $x\neq x^*$.
				Without loss of generality, we can assume $y_1 = y^*_1$ and $x_1 \neq x^*_1$. In this case, the term in the summation in Eq. \eq{ip} is zero when $b_1 = b^*_1$. It is not hard to check
				$\langle \phi_{L,R} \vert \phi_{S,T} \rangle = 0$,
				since
				\[
				\expect{f}{ \bra{y_1}H_f^\dag\ket{y_1}
				\bra{y^*_1}H_f\ket{\widebar{y^*_1}} } = \expect{f}{ \bra{\widebar{y_1}}H_f^\dag\ket{y_1}\bra{y^*_1}H_f\ket{y^*_1}} = 0 \enspace.
				\]
				\item Suppose $R\neq T$. Similarly, we have $\langle \phi_{L,R} \vert \phi_{S,T} \rangle = 0$.
			\end{itemize}
		\end{itemize}
		\end{itemize}
	\end{itemize}
\end{proof}

\subsection{Proof of \lem{hpo_action2}}
\begin{replemma}{lem:hpo_action2}
For two integers $l$ and $r$ such that $0\leq l+r\leq N$, and two relations
	$L=\st{(x_1,y_1),\dots,(x_l,y_l)}\in\RR_l$ and 
	$R=\st{(x'_1,y'_1),\dots,(x'_r,y'_r)}\in\RR_r$, we have for $x\in\bit{n}$
	\[
		\HPO\reg{AHP}\ket{x}\reg{A}\ket{\phi_{L,R}}\reg{HP} =  
		\frac{1}{\sqrt{N-l-r}}\sum_{y\in\bit{n}}
		\ket{y}\reg{A} \otimes \ket{\phi_{L\cup\st{x,y},R}}\reg{HP}\enspace .
	\]
	Similarly, we have for $y\in\bit{n}$
	\[
		\HPO^\dag\reg{AHP}\ket{y}\reg{A}\ket{\phi_{L,R}}\reg{HP} =  
		\frac{1}{\sqrt{N-l-r}}\sum_{x\in\bit{n}}
		\ket{x}\reg{A} \otimes \ket{\phi_{L,R\cup\st{x,y}}}\reg{HP}\enspace .
	\]
\end{replemma}

\begin{proof}
	In the proof of \lem{action_hpo}, we know
	\begin{align*}
		\HPO\reg{AHP} \ket{x}\reg{A}\ket{f}\reg{H}\ket{\sigma}\reg{P}=
	{H_f}\reg{A} \ket{\sigma(x)}\reg{A}\ket{f}\reg{H}\ket{\sigma}\reg{P} = \sum_{y\in\bit{n},b\in\bit{}} \bra{y}H_f\ket{y^{\oplus b}} \delta_{y^{\oplus b}=\sigma(x)} \ket{y} \reg{A}\ket{f}\reg{H}\ket{\sigma}\reg{P}\enspace.
	\end{align*}
	Similarly, for $\HPO^\dag$ we have
	\begin{align*}
		\HPO^\dag\reg{AHP} \ket{y}\reg{A}\ket{f}\reg{H}\ket{\sigma}\reg{P}= \sum_{x\in\bit{n},b\in\bit{}} \bra{y^{\oplus b}}H_f^\dag\ket{y} \delta_{y^{\oplus b}=\sigma(x)} \ket{x}\reg{A}\ket{f}\reg{H}\ket{\sigma}\reg{P} \enspace.
	\end{align*}
	Then, we have
	\begin{align*}
		&\HPO\reg{AHP}\ket{x}\reg{A}\ket{\phi_{L,R}}\reg{HP} \\
		=&\frac{1}{\sqrt{2^{3d(n-1)}(N-l-r)!}}
		\sum_{f,\sigma}
		\sum_{\substack{b\in\bit{l}\\b'\in\bit{r}}}\prod_{i=1}^l\bra{y_i}H_f\ket{y_i^{\oplus b_i}}\delta_{y_i^{\oplus b_i}=\sigma(x_i)}
		\prod_{i=1}^r\bra{{y'}_i^{\oplus b'_i}}H_f^\dag\ket{y'_i}\delta_{{y'}_i^{\oplus b'_i}=\sigma(x'_i)}\\
		&\hspace{20em}\cdot\HPO\reg{AHP}\ket{x}\reg{A}\ket{f}\reg{H}\ket{\sigma}\reg{P} \\
		= &\frac{1}{\sqrt{2^{3d(n-1)}(N-l-r)!}}
		\sum_{f,\sigma}
		\sum_{\substack{b\in\bit{l}\\b'\in\bit{r}}}\prod_{i=1}^l\bra{y_i}H_f\ket{y_i^{\oplus b_i}}\delta_{y_i^{\oplus b_i}=\sigma(x_i)}
		\prod_{i=1}^r\bra{{y'}_i^{\oplus b'_i}}H_f^\dag\ket{y'_i}\delta_{{y'}_i^{\oplus b'_i}=\sigma(x'_i)}\\
		&\hspace{20em}\cdot\sum_{y\in\bit{n},b\in\bit{}} \bra{y}H_f\ket{y^{\oplus b}} \delta_{y^{\oplus b}=\sigma(x)} \ket{y}\reg{A}\ket{f}\reg{H}\ket{\sigma}\reg{P}\\
		=  
		&\frac{1}{\sqrt{N-l-r}}\sum_{y\in\bit{n}}
		\ket{y}\reg{A} \otimes \ket{\phi_{L\cup\st{x,y},R}}\reg{HP} \enspace.
	\end{align*}
	Similarly, we have
	\begin{align*}
		&\HPO^\dag\reg{AHP}\ket{x}\reg{A}\ket{\phi_{L,R}}\reg{HP} \\
		=&\frac{1}{\sqrt{2^{3d(n-1)}(N-l-r)!}}
		\sum_{f,\sigma}
		\sum_{\substack{b\in\bit{l}\\b'\in\bit{r}}}\prod_{i=1}^l\bra{y_i}H_f\ket{y_i^{\oplus b_i}}\delta_{y_i^{\oplus b_i}=\sigma(x_i)}
		\prod_{i=1}^r\bra{{y'}_i^{\oplus b'_i}}H_f^\dag\ket{y'_i}\delta_{{y'}_i^{\oplus b'_i}=\sigma(x'_i)}\\
		&\hspace{20em}\cdot\HPO^\dag\reg{AHP}\ket{x}\reg{A}\ket{f}\reg{H}\ket{\sigma}\reg{P} \\
		= &\frac{1}{\sqrt{2^{3d(n-1)}(N-l-r)!}}
		\sum_{f,\sigma}
		\sum_{\substack{b\in\bit{l}\\b'\in\bit{r}}}\prod_{i=1}^l\bra{y_i}H_f\ket{y_i^{\oplus b_i}}\delta_{y_i^{\oplus b_i}=\sigma(x_i)}
		\prod_{i=1}^r\bra{{y'}_i^{\oplus b'_i}}H_f^\dag\ket{y'_i}\delta_{{y'}_i^{\oplus b'_i}=\sigma(x'_i)}\\
		&\hspace{20em}\cdot\sum_{x\in\bit{n},b\in\bit{}} \bra{y^{\oplus b}}H_f^\dag\ket{y} \delta_{y^{\oplus b}=\sigma(x)} \ket{x}\reg{A}\ket{f}\reg{H}\ket{\sigma}\reg{P}\\
		=  
		&\frac{1}{\sqrt{N-l-r}}\sum_{x\in\bit{n}}
		\ket{x}\reg{A} \otimes \ket{\phi_{L,R\cup\st{x,y}}}\reg{HP} \enspace.
	\end{align*}
\end{proof}

\section{Approximate Two-Side Unitary Invariance}
\label{sec:invV}
We will show that the path-recording oracle $V$ defined in \defi{proV}
satisfies an approximate unitary invariance property in this section. 
Formally, we have

\begin{replemma}{lem:inv}
	For any two $n$-qubit unitary $C$ and $D$, and any integer $0\leq t\leq N-1$,
	\[
	\norm{D\reg{A} \cdot V_{\leq t} \cdot C\reg{A} \cdot Q[C,D]\reg{LR} - Q[C,D]\reg{LR} \cdot V_{\leq t} }_{\infty}\leq 32\sqrt{\frac{t(t+1)}{N}} \enspace,
	\]
	\[
	\norm{C^\dag\reg{A} \cdot (V^\dag)_{\leq t} \cdot D^\dag\reg{A} \cdot Q[C,D]\reg{LR} - Q[C,D]\reg{LR} \cdot (V^\dag)_{\leq t} }_{\infty}\leq 32\sqrt{\frac{t(t+1)}{N}} \enspace.
	\]
\end{replemma}

This lemma is proved by showing the closeness between $V^L$ and $E^L$
as well as $V^R$ and $E^R$
where $E^L$ and $E^R$ are operators introduced
in \cite[Section 10.1]{MH24}
that have exact two-side unitary invariance property.
We first give the definition of $E^L$ and $E^R$.

\begin{definition} \label{def:opE}
	$E^L$ and $E^R$ are linear maps on register $\mathsf{A}$, $\mathsf{L}$ and $\mathsf{R}$ such that
	\begin{align*}
		E^L \coloneq \frac{1}{\sqrt{N}}
		\sum_{x,y\in\bit{n}} \ketbratwo{y}{x}\reg{A} \otimes
		\sum_{L\in\RR} \sqrt{\num{L,(x,y)}+1} \cdot \ketbratwo{L \cup \st{(x,y)}}{L}\reg{L} \otimes
		\sum_{R\in\RR} \ketbra{R} \reg{R} \enspace,\\
		E^R \coloneq \frac{1}{\sqrt{N}}
		\sum_{x,y\in\bit{n}} \ketbratwo{y}{x}\reg{A} \otimes
		\sum_{L\in\RR} \ketbra{L}\reg{L} \otimes
		\sum_{R\in\RR} \sqrt{\num{R,(x,y)}+1} \cdot \ketbratwo{R \cup \st{(x,y)}}{R} \reg{R} \enspace.
	\end{align*}
	Here, $\num{L,(x,y)}$ is the number of times that $(x,y)$ appears in $L$.
\end{definition}

$E^L$ and $E^R$ satisfy the exact unitary invariance:
\begin{lemma}[Claim 20 in \cite{MH24}]\label{lem:invE}
	For any two $n$-qubit unitary $C$ and $D$,
	\begin{align*}
		D\reg{A} \cdot E^L \cdot C\reg{A} = Q[C,D]\reg{LR} \cdot E^L \cdot Q[C,D]\reg{LR}^\dag \enspace, \\
		D\reg{A} \cdot E^R \cdot C\reg{A} = Q[C,D]\reg{LR} \cdot E^R \cdot Q[C,D]\reg{LR}^\dag \enspace.
	\end{align*}
\end{lemma}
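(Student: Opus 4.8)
The plan is to recognize $E^L$ and $E^R$ as \emph{creation operators} on the bosonic Fock-space structures carried by the $\mathsf L$- and $\mathsf R$-registers, to recognize $Q[C,D]$ as the corresponding product of \emph{second-quantization} unitaries, and then to invoke the standard intertwining rule between second quantization and creation operators. As a first reduction, observe that $E^L$ acts as $\sum_{R\in\RR}\ketbra{R}$ on the $\mathsf R$-register, while the $\mathsf R$-component $\br{\overline C\otimes D^\dag}^{\otimes *}$ of $Q[C,D]$ is unitary and leaves the span of the relation states invariant; hence the two $\mathsf R$-factors cancel in $Q[C,D]\,E^L\,Q[C,D]^\dag$, and it suffices to prove the first identity on registers $\mathsf A,\mathsf L$ only. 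Symmetrically, for the second identity it suffices to work on $\mathsf A,\mathsf R$.

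For the $\mathsf A,\mathsf L$-identity I would use three ingredients. (i) With the normalizer $\gamma_R$, the relation states $\st{\ket R:\abs R=t}$ span (a copy of) the symmetric subspace of $\br{\C^N\otimes\C^N}^{\otimes t}$ under the identification pairing the $i$-th register of $\mathsf L_\mathsf X$ with the $i$-th register of $\mathsf L_\mathsf Y$, and the factors $\sqrt{\num{R,(x,y)}+1}$ in $E^L$ are precisely the bosonic-creation normalizations, so that $E^L=\frac1{\sqrt N}\sum_{x,y\in\bit n}\ketbratwo{y}{x}\reg{A}\otimes a^\dag_{\mathsf L}\!\br{\ket x\otimes\ket y}\otimes\id\reg{R}$, where $a^\dag(w)$ is the creation operator for the mode $w\in\C^N\otimes\C^N$. (ii) On the length-$t$ summand $Q[C,D]$ restricts to $\br{C\otimes D^T}^{\otimes t}$, which is the second quantization $\Gamma\br{C\otimes D^T}$ (and $C\otimes D^T$ is unitary, since the transpose of a unitary is unitary). (iii) The covariance relation $\Gamma(U)\,a^\dag(w)\,\Gamma(U)^\dag=a^\dag(Uw)$ for any unitary $U$. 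Combining these, $Q[C,D]\,E^L\,Q[C,D]^\dag=\frac1{\sqrt N}\sum_{x,y}\ketbratwo{y}{x}\reg{A}\otimes a^\dag_{\mathsf L}\!\br{\br{C\otimes D^T}\br{\ket x\otimes\ket y}}$, and since $\br{C\otimes D^T}\br{\ket x\otimes\ket y}=\sum_{x',y'}\bra{x'}C\ket x\bra{y}D\ket{y'}\,\ket{x'}\otimes\ket{y'}$, the coefficient of each $\ketbratwo{y}{x}\reg{A}\otimes a^\dag_{\mathsf L}(\ket{x'}\otimes\ket{y'})$ matches that of $D\reg{A}\,E^L\,C\reg{A}$ term by term, which is the first identity. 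The second follows verbatim on the $\mathsf R$-Fock structure with $\overline C\otimes D^\dag$ — the $\mathsf R$-component of $Q[C,D]$ — in place of $C\otimes D^T$; this conjugate-adjoint pattern is exactly what renders inverse queries covariant in the same way forward queries are.

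I expect the main effort to be bookkeeping rather than conceptual: keeping the four transpose/conjugate variants of $C$ and $D$ straight across the $\mathsf L$- and $\mathsf R$-structures, and verifying ingredient (i), i.e.\ that the factors $\sqrt{\num{\cdot}+1}$ combine with the ratios $\gamma_{R\cup\{(x,y)\}}/\gamma_R$ to give precisely the creation-operator normalization. A fully elementary alternative is to expand every relation state over $\S_t$ and to compare the two sides coefficient-by-coefficient on computational-basis relations; this avoids the Fock-space vocabulary at the cost of heavier indices, but no step requires more than linear algebra.
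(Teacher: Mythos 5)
Your Fock-space framework is the right one and you handle the first identity ($E^L$) correctly: the $\mathsf R$-factor cancellation, the reading of $E^L$ as a bosonic creation operator, and the covariance $\Gamma(U)\,a^\dag(w)\,\Gamma(U)^\dag=a^\dag(Uw)$ combine exactly as you say, and the coefficients on both sides of $D\reg{A}\,E^L\,C\reg{A}=Q[C,D]\,E^L\,Q[C,D]^\dag$ match term by term.

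But the second identity does not follow \emph{verbatim}, and carrying out the elementary bookkeeping you yourself defer is exactly where the issue surfaces. Replacing $C\otimes D^T$ by $\overline{C}\otimes D^\dag$ in the covariance relation gives $Q[C,D]\,E^R\,Q[C,D]^\dag=\tfrac{1}{\sqrt{N}}\sum_{x,y}\ketbratwo{y}{x}\reg{A}\otimes\sum_{x',y'}\overline{\bra{x'}C\ket{x}}\,\overline{\bra{y}D\ket{y'}}\,a^\dag(\ket{x'}\otimes\ket{y'})\reg{R}$, whereas $D\reg{A}\,E^R\,C\reg{A}$ expands with the \emph{un}-conjugated coefficients $\bra{x'}C\ket{x}\,\bra{y}D\ket{y'}$; the two sides are entry-wise complex conjugates of each other, not equal. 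So the coefficient-by-coefficient check you describe as requiring "no more than linear algebra" would in fact fail, and that failure is worth catching: as written, the statement looks like a transcription slip. Note first that the $E^R$ given in the paper has $\ketbratwo{y}{x}\reg{A}$, the \emph{opposite} $\mathsf A$-action from $V^R$ (which sends $\ket{y}\reg{A}\mapsto\sum_x\ket{x}\reg{A}\cdots$), so $\norm{V^R_{\leq t}-E^R_{\leq t}}_\infty$ could not be small as \lem{V_and_E} requires; and second, what the proof of \lem{inv} actually consumes (through the term $\norm{C^\dag\reg{A}\cdot V^R_{\leq t-1}\cdot D^\dag\reg{A}-Q\cdot V^R_{\leq t-1}\cdot Q^\dag}_\infty$) is the exact identity $C^\dag\reg{A}\,E^R\,D^\dag\reg{A}=Q[C,D]\,E^R\,Q[C,D]^\dag$, not $D\reg{A}\,E^R\,C\reg{A}=\cdots$. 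With $E^R$'s $\mathsf A$-factor corrected to $\ketbratwo{x}{y}\reg{A}$, your framework proves this corrected identity cleanly: the conjugations on $\overline{C}$ and $D^\dag$ in $Q$ then line up with those on $C^\dag$ and $D^\dag$ acting on $\mathsf A$. The method is sound, but the "verbatim" shortcut papered over a genuine sign/conjugation asymmetry between the $\mathsf L$ and $\mathsf R$ structures, and that asymmetry is the actual content of the second identity.
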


The approximate unitary invariance of $V$ arises from the property that
$V^L$ and $V^R$ are very close to $E^L$ and $E^R$ in operator norm respectively.
That is,

\begin{lemma} \label{lem:V_and_E}
	For any integer $0\leq t\leq N-1$,
	\begin{align*}
		\norm{V^L_{\leq t} - E^L_{\leq t}}_{\infty} \leq \sqrt{\frac{4t(t+1)}{N}} \enspace, \\
		\norm{V^R_{\leq t} - E^R_{\leq t}}_{\infty} \leq \sqrt{\frac{4t(t+1)}{N}} \enspace.
	\end{align*}
\end{lemma}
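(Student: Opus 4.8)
\textbf{Proof proposal for \lem{V_and_E}.}
The plan is to compare $V^L$ with $E^L$ (and symmetrically $V^R$ with $E^R$) one adjoined pair at a time: both maps send $\ket{x}\ket{L}\ket{R}$ to a superposition of $\ket{y}\ket{L\cup\set{(x,y)}}\ket{R}$, so they differ only in which $y$'s are allowed and in the normalisation. Two structural observations drive the argument. First, $V^L_{\leq t}$ is a partial isometry, in particular $\norm{V^L_{\leq t}}_\infty\leq 1$: for distinct inputs the output basis vectors over the admissible $y\notin\BIm{L\cup R}$ are pairwise orthogonal, since such a $y$ is recorded in register $\mathsf{A}$ and, being absent from $\im{L}$, determines both $x$ and $L$. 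Second, split $E^L=E^L_{\mathrm{out}}+E^L_{\mathrm{in}}$ according to whether the summation index $y$ lies outside or inside $\BIm{L\cup R}$; on the outside part all multiplicities $\num{L,(x,y)}$ vanish, so $E^L_{\mathrm{out}}=V^L\cdot\Delta$ where $\Delta$ is the $\mathsf{LR}$-diagonal operator with eigenvalue $\sqrt{(N-\abs{\BIm{L\cup R}})/N}$ on $\ket{L}\ket{R}$. (Passing through $E^L,E^R$ is worthwhile only because they satisfy the exact two-sided invariance of \lem{invE}, which then transfers approximately to $V$.)

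Given these, the ``outside'' discrepancy is a one-line estimate: on the image of $\Pi_{\leq t}$ one has $\abs{\BIm{L\cup R}}\leq 2\abs{L\cup R}\leq 2t$, so
\[ \norm{(V^L-E^L_{\mathrm{out}})_{\leq t}}_\infty=\norm{V^L_{\leq t}(\mathsf{Id}-\Delta)}_\infty\leq\max_{\abs{L\cup R}\leq t}\bigl(1-\sqrt{1-\abs{\BIm{L\cup R}}/N}\bigr)\leq \tfrac{2t}{N}. \]
For the ``inside'' discrepancy I would use the elementary inequality $\norm{M}_\infty\leq\sqrt{p}\cdot\max_v\norm{M\ket{v}}_2$, valid when $M=\sum_v\ket{\psi_v}\bra{v}$ over an orthonormal input basis with $\ket{\psi_v}=M\ket{v}$, where $p$ is the maximal number of inputs $v$ having a fixed codomain basis vector in the support of $\ket{\psi_v}$. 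For $M=E^L_{\mathrm{in},\leq t}$ a codomain state $\ket{z}\ket{L'}\ket{R'}$ arises only from inputs $\ket{x}\ket{L'\setminus\set{(x,z)}}\ket{R'}$ with $(x,z)\in L'$, hence $p\leq\abs{L'}\leq t+1$; and
\[ \norm{E^L_{\mathrm{in}}\ket{x}\ket{L}\ket{R}}_2^2=\tfrac1N\sum_{y\in\BIm{L\cup R}}(\num{L,(x,y)}+1)\leq\tfrac1N\bigl(\abs{L\cup R}+\abs{\BIm{L\cup R}}\bigr)\leq \tfrac{3t}{N}, \]
using that $\sum_y\num{L,(x,y)}$ counts the pairs of $L$ with first coordinate $x$. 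This gives $\norm{E^L_{\mathrm{in},\leq t}}_\infty\leq\sqrt{3t(t+1)/N}$; adding the ``outside'' term by the triangle inequality and absorbing the lower-order $2t/N$ yields $\norm{V^L_{\leq t}-E^L_{\leq t}}_\infty\leq\sqrt{4t(t+1)/N}$. The bound for $V^R$ versus $E^R$ is the mirror image, exchanging the roles of $\mathsf{L},\mathsf{R}$ and of $\BIm{\cdot},\BDom{\cdot}$.

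The main obstacle is the ``inside'' term, precisely because — unlike for $V^L$ — the output subspaces of $E^L$ on different inputs genuinely overlap: the same multiset $L'$ can be reached as $L\cup\set{(x,z)}$ for several splits $(x,L)$, so a per-input norm bound alone does not control $\norm{E^L_{\mathrm{in},\leq t}}_\infty$, and one must simultaneously track the number of preimages of a relation basis state and the multiplicity factors $\num{L,(x,y)}$. The counting above works because every newly recorded coordinate is fresh, which caps those multiplicities by the relation size; what remains is the routine verification of the $\sqrt p$ inequality and of the small-$N$ edge cases, where the claimed bound is already of order one and the statement is near-vacuous.
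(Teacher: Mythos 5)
Your proof is correct and follows essentially the same route as the paper's: both split the discrepancy according to whether the new $y$ lies in $\BIm{L\cup R}$, identify the outside part with a rescaled copy of $V^L$ (all multiplicities vanish there), and control the inside part by Cauchy--Schwarz over the at most $t+1$ preimages of each output relation basis state, using $\abs{\BIm{L\cup R}}\leq 2t$. The only substantive difference is the final combination: the paper observes that the two error vectors $\ket{u}$ and $\ket{v}$ are orthogonal and adds their squared norms, landing exactly on $\sqrt{(2+2)t(t+1)/N}$, whereas your triangle inequality $2t/N+\sqrt{3t(t+1)/N}\leq\sqrt{4t(t+1)/N}$ only closes for $N\gtrsim 56$ --- harmless since $N=2^n$ and your outside bound $2t/N$ is in fact sharper than the paper's $\sqrt{2t(t+1)/N}$, but the Pythagorean combination avoids the edge case entirely.
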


\begin{proof}
	We demonstrate that \( V^L \) and \( E^L \) are close in operator norm, and a similar argument shows that the proximity between \( V^R \) and \( E^R \) holds as well.
	It is sufficient to show that for any state 
	\[\ket{\psi} = \sum_{\substack{x\in\bit{n}\\ L,R\in\RR \text{ s.t. } \abs{L\cup R}\leq t }} \alpha_{x,L,R} \ket{x}\reg{A}\ket{L}\reg{L}\ket{R}\reg{R} \enspace,\]
	we have
	\[
		\norm{V^L_{\leq t}\ket{\psi} - E^L_{\leq t}\ket{\psi}}_{2} \leq \sqrt{\frac{4t(t+1)}{N}} \enspace.
	\]
	Note that
	\begin{align*}
		&V^L_{\leq t}\ket{\psi} - E^L_{\leq t}\ket{\psi} \\
		=
		&\sum_{\substack{x\in\bit{n}\\ L,R\in\RR \text{ s.t. } \abs{L\cup R}\leq t }} \alpha_{x,L,R}
		\sum_{y\in\bit{n}}
		\br{ \frac{\delta_{y\notin \BIm{L\cup R}}}{\sqrt{N-\abs{\BIm{L\cup R}}}} - \frac{\sqrt{\num{L,(x,y)}+1}}{\sqrt{N}} }
		\ket{y} \ket{L \cup \st{(x,y)}} \ket{R} \\
		=
		&
		\underbrace{ \sum_{\substack{x\in\bit{n}\\ L,R\in\RR \text{ s.t. } \abs{L\cup R}\leq t }} \alpha_{x,L,R}
		\sum_{\substack{y\in\bit{n} \\ y \notin \BIm{L\cup R} }}
		\br{ \frac{1}{\sqrt{N-\abs{\BIm{L\cup R}}}} - \frac{1}{\sqrt{N}} }
		\ket{y} \ket{L \cup \st{(x,y)}} \ket{R} }_{\ket{u}}\\
		& \quad + \underbrace{ \sum_{\substack{x\in\bit{n}\\ L,R\in\RR \text{ s.t. } \abs{L\cup R}\leq t }} \alpha_{x,L,R}
		\sum_{\substack{y\in\bit{n} \\ y \in \BIm{L\cup R} }}
		\br{ - \frac{\sqrt{\num{L,(x,y)}+1}}{\sqrt{N}} }
		\ket{y} \ket{L \cup \st{(x,y)}} \ket{R} }_{\ket{v}} \enspace .
	\end{align*}
	Since $\ket{u}$ and $\ket{v}$ are orthogonal, we have
	\[
		\norm{V^L_{\leq t}\ket{\psi} - E^L_{\leq t}\ket{\psi}}_{2}^2
		= \langle u \vert u \rangle + \langle v \vert v \rangle \enspace .
	\]
	Therefore, it is left to show $\langle u \vert u \rangle\leq \frac{2t(t+1)}{N}$ and $\langle v \vert v \rangle\leq \frac{2t(t+1)}{N}$.
	
	Notice that
	\[
		\ket{u} =
		\sum_{ \substack{y\in\bit{n}\\ L',R\in\RR : \abs{L'\cup R}\leq t+1} }
		\br{
		\sum_{ \substack{x\in\bit{n},L\in\RR:\\ L'=L \cup \st{(x,y)},\\ y\notin \BIm{L\cup R} } }
		\alpha_{x,L,R}\br{ \frac{1}{\sqrt{N-\abs{\BIm{L\cup R}}}} - \frac{1}{\sqrt{N}} }
		}
		\ket{y} \ket{L'} \ket{R} \enspace.
	\]
	So we have 
	\begin{align*}
		\langle u \vert u \rangle = &
		\sum_{ \substack{y\in\bit{n}\\ L',R\in\RR : \abs{L'\cup R}\leq t+1} }
		\abs{
		\sum_{ \substack{x\in\bit{n},L\in\RR:\\ L'=L \cup \st{(x,y)},\\ y\notin \BIm{L\cup R} } }
		\alpha_{x,L,R}\br{ \frac{1}{\sqrt{N-\abs{\BIm{L\cup R}}}} - \frac{1}{\sqrt{N}} }
		}^2 \\
		\leq &
		\sum_{ \substack{y\in\bit{n}\\ L',R\in\RR : \abs{L'\cup R}\leq t+1} }
		\br{
		\sum_{ \substack{x\in\bit{n},L\in\RR:\\ L'=L \cup \st{(x,y)},\\ y\notin \BIm{L\cup R} } }
		\abs{\alpha_{x,L,R}}^2}
		\br{
		\sum_{ \substack{x\in\bit{n},L\in\RR:\\ L'=L \cup \st{(x,y)},\\ y\notin \BIm{L\cup R} } }
		\br{ \frac{1}{\sqrt{N-\abs{\BIm{L\cup R}}}} - \frac{1}{\sqrt{N}} }^2
		} \\
		= &
		\sum_{ \substack{y\in\bit{n}\\ L',R\in\RR : \abs{L'\cup R}\leq t+1} }
		\br{
		\sum_{ \substack{x\in\bit{n},L\in\RR:\\ L'=L \cup \st{(x,y)},\\ y\notin \BIm{L\cup R} } }
		\abs{\alpha_{x,L,R}}^2}
		\br{
		\sum_{ \substack{x\in\bit{n},L\in\RR:\\ L'=L \cup \st{(x,y)},\\ y\notin \BIm{L\cup R} } }
		\br{ \frac{\sqrt{N} - \sqrt{N-\abs{\BIm{L\cup R}}}}{\sqrt{N(N-\abs{\BIm{L\cup R}})}}}^2
		} \\		\leq &
		\sum_{ \substack{y\in\bit{n}\\ L',R\in\RR : \abs{L'\cup R}\leq t+1} }
		\br{
		\sum_{ \substack{x\in\bit{n},L\in\RR:\\ L'=L \cup \st{(x,y)},\\ y\notin \BIm{L\cup R} } }
		\abs{\alpha_{x,L,R}}^2}
		\br{
		\sum_{ \substack{x\in\bit{n},L\in\RR:\\ L'=L \cup \st{(x,y)},\\ y\notin \BIm{L\cup R} } }
		\frac{\abs{\BIm{L\cup R}}}{N(N-\abs{\BIm{L\cup R}})}
		}\\
		\leq &
		\sum_{ \substack{y\in\bit{n}\\ L',R\in\RR : \abs{L'\cup R}\leq t+1} }
		\br{\sum_{ \substack{x\in\bit{n},L\in\RR:\\ L'=L \cup \st{(x,y)},\\ y\notin \BIm{L\cup R} } }
		\abs{\alpha_{x,L,R}}^2} \cdot 
		\br{\frac{(t+1)(\abs{\BIm{L'\cup R}}-2)}{N(N-\abs{\BIm{L'\cup R}}+2)}}\\
		\leq &
		\sum_{ \substack{y\in\bit{n}\\ L',R\in\RR : \abs{L'\cup R}\leq t+1} }
		\br{\sum_{ \substack{x\in\bit{n},L\in\RR:\\ L'=L \cup \st{(x,y)},\\ y\notin \BIm{L\cup R} } }
		\abs{\alpha_{x,L,R}}^2 \cdot 
		\frac{(t+1)\abs{\BIm{L\cup R}}}{N(N-\abs{\BIm{L\cup R}})}}\\
		\leq &
		\sum_{\substack{x\in\bit{n}\\ L,R\in\RR \text{ s.t. } \abs{L\cup R}\leq t }} \abs{\alpha_{x,L,R}}^2
		\cdot \br{\sum_{y\in\bit{n}} \delta_{y\notin \BIm{L\cup R}}} 
		\cdot \frac{(t+1)\abs{\BIm{L\cup R}}}{N(N-\abs{\BIm{L\cup R}})}\\
		\leq &
		\frac{(t+1)2t}{N} \enspace, 
	\end{align*}
	where the first inequality is from Cauchy-Schwarz inequality, the second inequality holds because $\sqrt{a}-\sqrt{b}\leq \sqrt{a-b}$ for non-negative $a,b$, the third inequality is from that fact that there are at most $t+1$ terms in the third summation and $\abs{\BIm{L'\cup R}} = \abs{\BIm{L\cup R}}+2$ and the last one is from $\abs{\BIm{L\cup R}}\leq 2t$.
	
	As for $\ket{v}$, notice that
	\begin{align*}
		\ket{v} =
		\sum_{ \substack{y\in\bit{n}\\ L',R\in\RR : \abs{L'\cup R}\leq t+1} }
		\br{
		\sum_{ \substack{x\in\bit{n},L\in\RR:\\ L'=L \cup \st{(x,y)},\\ y\in \BIm{L\cup R} } }
		\alpha_{x,L,R}\br{ - \frac{\sqrt{\num{L,(x,y)}+1}}{\sqrt{N}} }
		}
		\ket{y} \ket{L'} \ket{R}
	\end{align*}	
	So we have
	\begin{align*}
		\langle v \vert v \rangle = &
		\sum_{ \substack{y\in\bit{n}\\ L',R\in\RR : \abs{L'\cup R}\leq t+1} }
		\abs{
		\sum_{ \substack{x\in\bit{n},L\in\RR:\\ L'=L \cup \st{(x,y)},\\ y\in \BIm{L\cup R} } }
		\alpha_{x,L,R}\br{ - \frac{\sqrt{\num{L,(x,y)}+1}}{\sqrt{N}} }
		}^2 \\
		\leq &
		\sum_{ \substack{y\in\bit{n}\\ L',R\in\RR : \abs{L'\cup R}\leq t+1} }
		\br{
		\sum_{ \substack{x\in\bit{n},L\in\RR:\\ L'=L \cup \st{(x,y)},\\ y\in \BIm{L\cup R} } }
		\abs{\alpha_{x,L,R}}^2}
		\br{
		\sum_{ \substack{x\in\bit{n},L\in\RR:\\ L'=L \cup \st{(x,y)},\\ y\in \BIm{L\cup R} } }
		\frac{\num{L,(x,y)}+1}{N}
		} \\
		= &
		\sum_{ \substack{y\in\bit{n}\\ L',R\in\RR : \abs{L'\cup R}\leq t+1} }
		\br{
		\sum_{ \substack{x\in\bit{n},L\in\RR:\\ L'=L \cup \st{(x,y)},\\ y\in \BIm{L\cup R} } }
		\abs{\alpha_{x,L,R}}^2}
		\br{
		\sum_{ \substack{x\in\bit{n},L\in\RR:\\ L'=L \cup \st{(x,y)},\\ y\in \BIm{L\cup R} } }
		\frac{\num{L',(x,y)}}{N}
		}\\
		\leq &
		\sum_{ \substack{y\in\bit{n}\\ L',R\in\RR : \abs{L'\cup R}\leq t+1} }
		\br{
		\sum_{ \substack{x\in\bit{n},L\in\RR:\\ L'=L \cup \st{(x,y)},\\ y\in \BIm{L\cup R} } }
		\abs{\alpha_{x,L,R}}^2}
		\br{
		\frac{\sum_{ \substack{x\in\bit{n}}}\num{L',(x,y)}}{N}
		}\\
		\leq &
		\sum_{ \substack{y\in\bit{n}\\ L',R\in\RR : \abs{L'\cup R}\leq t+1} }
		\br{
		\sum_{ \substack{x\in\bit{n},L\in\RR:\\ L'=L \cup \st{(x,y)},\\ y\in \BIm{L\cup R} } }
		\abs{\alpha_{x,L,R}}^2}
		\cdot
		\frac{t+1}{N} \\
		= &
		\sum_{\substack{x\in\bit{n}\\ L,R\in\RR \text{ s.t. } \abs{L\cup R}\leq t }} \abs{\alpha_{x,L,R}}^2
		\cdot \br{\sum_{y\in\bit{n}} \delta_{y\in \BIm{L\cup R}}} 
		\cdot \frac{t+1}{N} \\
		\leq & \frac{2t(t+1)}{N} \enspace,
	\end{align*}
	where the first inequality is from Cauchy-Schwarz inequality, and the third inequality is from that fact that for a fixed $y$, $\sum_{ \substack{x\in\bit{n}}}\num{L',(x,y)} $ is the number of times that $y$ appears in $L'$ which is at most $t+1$.
\end{proof}

Next, we prove the main lemma in this section.

\begin{proof}[Proof of \lem{inv}]
	To prove \[
	\norm{D\reg{A} \cdot V_{\leq t} \cdot C\reg{A} \cdot Q[C,D]\reg{LR} - Q[C,D]\reg{LR} \cdot V_{\leq t} }_{\infty}\leq 32\sqrt{\frac{t(t+1)}{N}} \enspace,
	\]
	it is equivalent to show
	\[
	\norm{D\reg{A} \cdot V_{\leq t} \cdot C\reg{A} - Q[C,D]\reg{LR} \cdot V_{\leq t} \cdot Q[C,D]^\dag \reg{LR} }_{\infty}\leq 32\sqrt{\frac{t(t+1)}{N}} \enspace.
	\]
	By the triangle inequality and expanding the definition of operator $V$, we have
	\begin{align*}
		&\norm{D\reg{A} \cdot V_{\leq t} \cdot C\reg{A} - Q[C,D]\reg{LR} \cdot V_{\leq t} \cdot Q[C,D]^\dag \reg{LR} }_{\infty} \\
		\leq &
		\underbrace{\norm{D\reg{A} \cdot V^L_{\leq t} \cdot C\reg{A} - Q[C,D]\reg{LR} \cdot V^L_{\leq t} \cdot Q[C,D]^\dag \reg{LR} }_{\infty}}_{(a)}\\
		& + 
		\underbrace{\norm{D\reg{A} \cdot \br{V^L\cdot V^R\cdot V^{R,\dag}}_{\leq t} \cdot C\reg{A} - Q[C,D]\reg{LR} \cdot \br{V^L\cdot V^R\cdot V^{R,\dag}}_{\leq t} \cdot Q[C,D]^\dag \reg{LR} }_{\infty}}_{(b)} \\
		& + 
		\underbrace{\norm{D\reg{A} \cdot \br{V^{R,\dag}}_{\leq t} \cdot C\reg{A} - Q[C,D]\reg{LR} \cdot \br{V^{R,\dag}}_{\leq t} \cdot Q[C,D]^\dag \reg{LR} }_{\infty}}_{(c)} \\
		& +
		\underbrace{\norm{D\reg{A} \cdot \br{V^L\cdot V^{L,\dag}\cdot V^{R,\dag}}_{\leq t} \cdot C\reg{A} - Q[C,D]\reg{LR} \cdot \br{V^L\cdot V^{L,\dag}\cdot V^{R,\dag}}_{\leq t} \cdot Q[C,D]^\dag \reg{LR} }_{\infty}}_{(d)} \enspace .
	\end{align*}
	For $(a)$, by the unitary invariance of operator $E^L$ (\lem{invE}) and the triangle inequality, we have
	\begin{align*}
		(a) &\leq 
		\norm{D\reg{A} \cdot V^L_{\leq t} \cdot C\reg{A} - D\reg{A} \cdot E^L_{\leq t} \cdot C\reg{A} }_{\infty} +
		\norm{Q[C,D]\reg{LR} \cdot E^L_{\leq t} \cdot Q[C,D]^\dag \reg{LR} - Q[C,D]\reg{LR} \cdot V^L_{\leq t} \cdot Q[C,D]^\dag \reg{LR} }_{\infty}\\
		&\leq 2\norm{V^L_{\leq t} - E^L_{\leq t}}_{\infty} \leq 4\sqrt{\frac{t(t+1)}{N}} \enspace .
	\end{align*}
	For $(b)$, notice that $\br{V^L\cdot V^R\cdot V^{R,\dag}}_{\leq t} = V^L_{\leq t}\br{ V^R\cdot V^{R,\dag}}_{\leq t}$. Therefore, we have
	{\begin{align*}
		&(b) \\
		= &\norm{D\reg{A} \cdot V^L_{\leq t}\br{ V^R\cdot V^{R,\dag}}_{\leq t} \cdot C\reg{A} - Q[C,D]\reg{LR} \cdot V^L_{\leq t}\br{ V^R\cdot V^{R,\dag}}_{\leq t} \cdot Q[C,D]^\dag \reg{LR}}_{\infty} \\
		= &\norm{D\reg{A} \cdot V^L_{\leq t} \cdot C\reg{A} \cdot C^\dag\reg{A}\br{ V^R\cdot V^{R,\dag}}_{\leq t} \cdot C\reg{A} - Q[C,D]\reg{LR} \cdot V^L_{\leq t} \cdot Q[C,D]^\dag \reg{LR} \cdot Q[C,D] \reg{LR}\br{ V^R\cdot V^{R,\dag}}_{\leq t} \cdot Q[C,D]^\dag \reg{LR}}_{\infty} \\
		\leq &\norm{D\reg{A} \cdot V^L_{\leq t} \cdot C\reg{A} - Q[C,D]\reg{LR} \cdot V^L_{\leq t} \cdot Q[C,D]^\dag \reg{LR}}_{\infty} \\
		& + \norm{ C^\dag\reg{A} \cdot \br{ V^R\cdot V^{R,\dag}}_{\leq t} \cdot C\reg{A} - Q[C,D] \reg{LR} \cdot \br{ V^R\cdot V^{R,\dag}}_{\leq t} \cdot Q[C,D]^\dag \reg{LR}}_{\infty} \enspace.
	\end{align*}}
	The first term is exactly $(a)$ which is bounded by $4\sqrt{\frac{t(t+1)}{N}}$. As for the second one, note that $\br{ V^R\cdot V^{R,\dag}}_{\leq t} = V^R_{\leq t-1} \cdot V^{R,\dag}_{\leq t-1}$, we have
	\begin{align*}
		&\norm{ C^\dag\cdot\reg{A}\br{ V^R\cdot V^{R,\dag}}_{\leq t} \cdot C\reg{A} - Q[C,D] \reg{LR} \cdot\br{ V^R\cdot V^{R,\dag}}_{\leq t} \cdot Q[C,D]^\dag \reg{LR}}_{\infty} \\
		= & \norm{ C^\dag\reg{A}\cdot V^R_{\leq t-1} \cdot V^{R,\dag}_{\leq t-1} \cdot C\reg{A} - Q[C,D] \reg{LR} \cdot V^R_{\leq t-1} \cdot V^{R,\dag}_{\leq t-1} \cdot Q[C,D]^\dag \reg{LR}}_{\infty} \\
		\leq & \norm{ C^\dag\reg{A}\cdot V^R_{\leq t-1} \cdot D^\dag\reg{A} - Q[C,D] \reg{LR} \cdot V^R_{\leq t-1} \cdot Q[C,D]^\dag \reg{LR}}_{\infty}
		+ \norm{ D\reg{A}\cdot V^{R,\dag}_{\leq t-1} \cdot C\reg{A} - Q[C,D] \reg{LR} \cdot V^{R,\dag}_{\leq t-1} \cdot Q[C,D]^\dag \reg{LR}}_{\infty}\\
		\leq & 8\sqrt{\frac{t(t+1)}{N}} \enspace,
	\end{align*}
	where the last inequality follows from a similar argument as $(a)$. Then, we have $(b)\leq 12\sqrt{\frac{t(t+1)}{N}}$.
	Similarly, we obtain $(c)\leq 4\sqrt{\frac{t(t+1)}{N}}$ and $(d)\leq 12\sqrt{\frac{t(t+1)}{N}}$. Thus,
	\[
		\norm{D\reg{A} \cdot V_{\leq t} \cdot C\reg{A} - Q[C,D]\reg{LR} \cdot V_{\leq t} \cdot Q[C,D]^\dag \reg{LR} }_{\infty} \leq 32\sqrt{\frac{t(t+1)}{N}}\enspace.
	\]
	The other inequality in \lem{inv} follows from a similar argument.
\end{proof}

\section{A Twirling Lemma}
\label{sec:twirl}
In this section, we prove \lem{twirl}, we first analyze $\Pi^{\dom{W}}\reg{LR}$ and $\Pi^{\im{W}}\reg{LR}$ and then give the proof.

\subsection{Properties of $\Pi^{\dom{W}}$ and $\Pi^{\im{W}}$}
We need some notations:
\begin{align*}
	\Pi^{\notin\mathrm{Dom}}\reg{ALR} &\coloneq \sum_{\substack{L,R\in\RR \\ x\notin \BDom{L\cup R} }} \ketbra{x}\reg{A}\otimes\ketbra{L}\reg{L}\otimes\ketbra{R}\reg{R} \enspace,\\
	\Pi^{\notin\mathrm{Im}}\reg{ALR} &\coloneq \sum_{\substack{L,R\in\RR \\ y\notin \BIm{L\cup R} }} \ketbra{y}\reg{A}\otimes\ketbra{L}\reg{L}\otimes\ketbra{R}\reg{R} \enspace,\\
	\Pi^{\mathrm{EPR}} &\coloneq \frac{1}{N}\sum_{x,y\in\bit{n}} \ketbratwo{x,x}{y,y} \enspace.
\end{align*}
We have the following lemma:
\begin{lemma}\label{lem:dom_im_W}
	\begin{align*}
		\Pi^{\dom{W}}\reg{LR} &= \Pi^{\mathrm{DB}}\reg{LR} \cdot
		\br{ \Pi^{\notin\mathrm{Dom}}\reg{ALR} +
		\sum_{\substack{l,r\geq 0\\l+r<N/2}} \frac{N}{N-2l-2r}\cdot\Pi_{l, \textcolor{RawSienna}{\mathsf{L}}}
		\otimes \sum_{i\in[r+1]} \Pi^{\mathrm{EPR}}_{\textcolor{RawSienna}{\mathsf{A},\mathsf{R}^{(r+1)}_{\mathsf{X},i}}}
		} \cdot \Pi^{\mathrm{DB}}\reg{LR} \enspace,\\
		\Pi^{\im{W}}\reg{LR} &= \Pi^{\mathrm{DB}}\reg{LR} \cdot
		\br{ \Pi^{\notin\mathrm{Im}}\reg{ALR} +
		\sum_{\substack{l,r\geq 0\\l+r<N/2}} \frac{N}{N-2l-2r}\cdot\Pi_{l, \textcolor{RawSienna}{\mathsf{R}}}
		\otimes \sum_{i\in[r+1]} \Pi^{\mathrm{EPR}}_{\textcolor{RawSienna}{\mathsf{A},\mathsf{L}^{(r+1)}_{\mathsf{Y},i}}}
		} \cdot \Pi^{\mathrm{DB}}\reg{LR} \enspace.
	\end{align*} 
Here,
$\Pi^{\mathrm{EPR}}_{\textcolor{RawSienna}{\mathsf{A},\mathsf{R}^{(r+1)}_{X,i}}}$ denotes the operator that acts on $\mathsf{A}$ and $\mathsf{R}^{(r+1)}$ such that
it applies $\Pi^{\mathrm{EPR}}$ to $\mathsf{A},\mathsf{R}^{(r+1)}_{X,i}$,
while applying identity to the remainder of  $\mathsf{R}^{(r+1)}$.
\end{lemma}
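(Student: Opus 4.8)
The plan is to read off both projectors from the orthogonal decomposition of $\domProj{W}$ established in \cref{eqn:wpiso}, namely $\domProj{W}=W^{L,\dag}W^{L}+W^{R}W^{R,\dag}$, and to identify the two summands with the two terms on the right-hand side of the claimed identity. For the first summand, since $W^{L}$ is a partial isometry, $W^{L,\dag}W^{L}=\domProj{W^{L}}$ is the orthogonal projector onto $\dom{W^{L}}$, and by the definition of $W^{L}$ this subspace is spanned by the vectors $\ket{x}\reg{A}\ket{L}\reg{L}\ket{R}\reg{R}$ with $L\cup R\in\DBR$ and $x\notin\BDom{L\cup R}$. As $\Pi^{\notin\mathrm{Dom}}\reg{ALR}$ and $\Pi^{\mathrm{DB}}\reg{LR}$ are simultaneously diagonal in the $(\mathsf{A},\mathsf{L},\mathsf{R})$-computational basis, this gives $W^{L,\dag}W^{L}=\Pi^{\mathrm{DB}}\reg{LR}\cdot\Pi^{\notin\mathrm{Dom}}\reg{ALR}\cdot\Pi^{\mathrm{DB}}\reg{LR}$, which is exactly the $\Pi^{\notin\mathrm{Dom}}$ term. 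Hence the real content is the corresponding identity for $W^{R}W^{R,\dag}=\imProj{W^{R}}$.

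To prove $W^{R}W^{R,\dag}=\Pi^{\mathrm{DB}}\reg{LR}\cdot\big(\sum_{l+r<N/2}\frac{N}{N-2l-2r}\,\Pi_{l,\mathsf{L}}\otimes\sum_{i\in[r+1]}\Pi^{\mathrm{EPR}}_{\mathsf{A},\mathsf{R}^{(r+1)}_{\mathsf{X},i}}\big)\cdot\Pi^{\mathrm{DB}}\reg{LR}$, I would expand $W^{R}$ and $W^{R,\dag}$ in the relation-state basis and use that, restricted to relations in $\DBR$, this basis is orthonormal and free of multiset degeneracies (a distinct-block relation has pairwise distinct entries, so each string occurs in at most one recorded pair). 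Applying $W^{R,\dag}$ to a basis state $\ket{x_{0}}\reg{A}\ket{L_{0}}\reg{L}\ket{R_{0}}\reg{R}$ in the image of $\Pi^{\mathrm{DB}}\reg{LR}$ removes the unique recorded pair $(x_{0},y_{*})\in R_{0}$ (the result is $0$ if $x_{0}$ is not in the domain of $R_{0}$), and applying $W^{R}$ afterwards re-inserts a pair $(x,y_{*})$ with $x$ ranging over $\bit{n}\setminus\BDom{\cdot}$; thus on the sector where $\mathsf{L}$ has length $l$ and $\mathsf{R}$ has length $r+1$, the operator $W^{R}W^{R,\dag}$ makes the register $\mathsf{A}$ maximally entangled with the $\mathsf{X}$-part of whichever of the $r+1$ recorded pairs is the ``fresh'' one, and the sum over $i\in[r+1]$ comes from the symmetrization built into $\ket{R}\reg{R}$. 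The scalar $\frac{N}{N-2l-2r}$ is exactly the ratio of the $(N-2l-2r)^{-1/2}$ normalizations in $W^{R},W^{R,\dag}$ to the $N^{-1/2}$ normalization of $\ket{\mathrm{EPR}}=N^{-1/2}\sum_{x}\ket{xx}$, and the gap between the restricted sum $x\notin\BDom{\cdot}$ inside $W^{R}$ and the unrestricted sum hidden in $\Pi^{\mathrm{EPR}}$ is absorbed by the outer $\Pi^{\mathrm{DB}}\reg{LR}$, which annihilates the configurations in which $x$ lies in the same block as a previously recorded entry.

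The identity for $\imProj{W}$ then follows verbatim by the symmetric computation, interchanging $W^{L}\leftrightarrow W^{R}$, $\mathsf{L}\leftrightarrow\mathsf{R}$, and domain $\leftrightarrow$ image (using $\domProj{W^{R}}$ in place of $\imProj{W^{R}}$, and the $\mathsf{Y}$-registers of $\mathsf{L}^{(r+1)}$ in the EPR factors). I expect the main obstacle to be precisely this middle step: checking carefully that, once the outer $\Pi^{\mathrm{DB}}\reg{LR}$ is imposed, the cross terms between different slots $i\neq j$ in $\sum_{i\in[r+1]}\Pi^{\mathrm{EPR}}_{\mathsf{A},\mathsf{R}^{(r+1)}_{\mathsf{X},i}}$ vanish (the register $\mathsf{A}$ cannot be EPR-correlated with two $\mathsf{X}$-registers holding distinct-block strings simultaneously) and that what remains is genuinely the orthogonal projector onto $\im{W^{R}}$ with the stated normalization; this combinatorial core mirrors the analysis in \cite{MH24}, after which tensoring back the untouched registers and adding the two summands completes the proof.
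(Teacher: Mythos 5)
Your decomposition $\Pi^{\dom{W}} = W^{L,\dag}W^{L} + W^{R}W^{R,\dag}$ via \cref{eqn:wpiso} is exactly the starting point the paper uses, and your identification of the first summand with $\Pi^{\mathrm{DB}}\reg{LR}\cdot\Pi^{\notin\mathrm{Dom}}\reg{ALR}\cdot\Pi^{\mathrm{DB}}\reg{LR}$ is correct and matches the paper's treatment word for word. For the second summand the paper also writes $\imProj{W^R}=\sum_{l,r}W^R_{l,r}W^{R,\dag}_{l,r}$ and then, rather than recomputing from scratch, observes that $W^R_{l,r}=\frac{\sqrt{N}}{\sqrt{N-2l-2r}}\,\Pi^{\mathrm{DB}}_{l,r+1}\cdot E^R_{l,r}$ and imports $E^R_{l,r}E^{R,\dag}_{l,r}=\Pi_{l,\mathsf{L}}\otimes\sum_{i\in[r+1]}\Pi^{\mathrm{EPR}}_{\mathsf{A},\mathsf{R}^{(r+1)}_{\mathsf{X},i}}$ from \cite[Eq.\ (11.26)]{MH24}. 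Your plan to recompute $W^R W^{R,\dag}$ directly in the relation-state basis is therefore a reimplementation of the same combinatorics rather than a genuinely different route, and your middle paragraph (tracing how $W^{R,\dag}$ removes the unique pair containing $x_0$ and $W^R$ re-inserts a fresh $x$, with the $N/(N-2l-2r)$ prefactor coming from the mismatch of normalizations and the restricted sum absorbed by the outer $\Pi^{\mathrm{DB}}$) is the right picture.

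There is, however, a genuine flaw in the verification plan you announce at the end. You say the main remaining obstacle is to show that ``the cross terms between different slots $i\neq j$ in $\sum_{i\in[r+1]}\Pi^{\mathrm{EPR}}_{\mathsf{A},\mathsf{R}^{(r+1)}_{\mathsf{X},i}}$ vanish'' under the outer $\Pi^{\mathrm{DB}}\reg{LR}$, and that what remains is a projector. Neither half of this is how the argument should go, and the first half is actually false. The sandwiched cross terms $\Pi^{\mathrm{DB}}\reg{LR}\,\Pi^{\mathrm{EPR}}_{\mathsf{A},\mathsf{R}_{\mathsf{X},i}}\Pi^{\mathrm{EPR}}_{\mathsf{A},\mathsf{R}_{\mathsf{X},j}}\,\Pi^{\mathrm{DB}}\reg{LR}$ do not vanish: $\Pi^{\mathrm{DB}}$ constrains only $\mathsf{L},\mathsf{R}$, not $\mathsf{A}$, and a direct check (take a state with $\mathsf{A}$ equal to one of the recorded $\mathsf{X}$-entries, all entries in distinct blocks, and apply the two EPR projectors in sequence) produces a nonzero vector that survives the outer $\Pi^{\mathrm{DB}}$. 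More to the point, you do not need the RHS to be a projector ``by inspection''; that is a corollary of the identity you are trying to prove, because $W^R$ is a partial isometry and hence $W^R W^{R,\dag}$ is automatically a projector. The right verification is a matrix-element comparison of $W^R W^{R,\dag}$ against the scaled, $\Pi^{\mathrm{DB}}$-conjugated EPR sum, slot by slot on the symmetrized relation states --- which is exactly what the cited identity $E^R_{l,r}E^{R,\dag}_{l,r}=\Pi_{l,\mathsf{L}}\otimes\sum_i\Pi^{\mathrm{EPR}}_{\mathsf{A},\mathsf{R}^{(r+1)}_{\mathsf{X},i}}$ packages. Replace the ``cross-terms vanish, therefore projector'' step by that direct computation (or simply invoke \cite[Eq.\ (11.26)]{MH24} after establishing $W^R_{l,r}=\frac{\sqrt{N}}{\sqrt{N-2l-2r}}\Pi^{\mathrm{DB}}_{l,r+1}E^R_{l,r}$, as the paper does) and your argument closes.
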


\begin{proof}
	We show the first equality, and the other one follows from a similar argument.
	Note that $\Pi^{\dom{W}} = \Pi^{\dom{W^L}} + \Pi^{\im{W^R}}$. So, it is sufficient to show
	\begin{align*}
		\Pi^{\dom{W^L}} &= \Pi^{\mathrm{DB}}\reg{LR} \cdot
		\Pi^{\notin\mathrm{Dom}}\reg{ALR} \cdot \Pi^{\mathrm{DB}}\reg{LR} \enspace,\\
		\Pi^{\im{W^R}} &= \Pi^{\mathrm{DB}}\reg{LR} \cdot
		\br{
		\sum_{\substack{l,r\geq 0\\l+r<N/2}} \frac{N}{N-2l-2r}\cdot \Pi_{l, \textcolor{RawSienna}{\mathsf{L}}}
		\otimes \sum_{i\in[r+1]} \Pi^{\mathrm{EPR}}_{\textcolor{RawSienna}{\mathsf{A},\mathsf{R}^{(r+1)}_{\mathsf{X},i}}}
		} \cdot \Pi^{\mathrm{DB}}\reg{LR} \enspace.
	\end{align*}
	It is easy to see that
	\[
		\Pi^{\dom{W^L}} = \sum_{\substack{L\cup R\in\DBR \\ x\notin \dom{L\cup R} }} \ketbra{x}\reg{A}\otimes\ketbra{L}\reg{L}\otimes\ketbra{R}\reg{R} = \Pi^{\mathrm{DB}}\reg{LR} \cdot
		\Pi^{\notin\mathrm{Dom}}\reg{ALR} \cdot \Pi^{\mathrm{DB}}\reg{LR} \enspace.
	\]
	We now turn to proving the second equality. By definition of $W^R$, we know
	\[
		\Pi^{\im{W^R}} = W^R\cdot W^{R,\dag} = W^R\cdot \sum_{\substack{l,r\geq 0,\\ l+r<N/2}} \Pi_{l,r,\textcolor{RawSienna}{\mathsf{LR}}} \cdot W^{R,\dag} = \sum_{\substack{l,r\geq 0,\\ l+r<N/2}}  W^R_{l,r}\cdot W^{R,\dag}_{l,r} \enspace.
	\]
	Therefore, we are left to show
	\[
		W^R_{l,r}\cdot W^{R,\dag}_{l,r} = \Pi^{\mathrm{DB}}_{l,r+1,\textcolor{RawSienna}{\mathsf{LR}}} \cdot
		\br{ \frac{N}{N-2l-2r}\cdot
		\Pi_{l, \textcolor{RawSienna}{\mathsf{L}}}
		\otimes
		\sum_{i\in[r+1]} \Pi^{\mathrm{EPR}}_{\textcolor{RawSienna}{\mathsf{A},\mathsf{R}^{(r+1)}_{\mathsf{X},i}}}
		} \cdot \Pi^{\mathrm{DB}}_{l,r+1,\textcolor{RawSienna}{\mathsf{LR}}} \enspace.
	\]
	Recall the operator $E^R$ in \defi{opE}, it is not hard to check
	\[
		W^R_{l,r} = \frac{\sqrt{N}}{\sqrt{N-2l-2r}}\cdot \Pi^{\mathrm{DB}}_{l,r+1,\textcolor{RawSienna}{\mathsf{LR}}} \cdot E^R_{l,r}\enspace.
	\]
	Therefore,
	\begin{align*}
		W^R_{l,r}\cdot W^{R,\dag}_{l,r} &= \frac{N}{N-2l-2r}\cdot \Pi^{\mathrm{DB}}_{l,r+1,\textcolor{RawSienna}{\mathsf{LR}}} \cdot E^R_{l,r} \cdot E^{R,\dag}_{l,r} \cdot \Pi^{\mathrm{DB}}_{l,r+1,\textcolor{RawSienna}{\mathsf{LR}}} \\
		&= \frac{N}{N-2l-2r}\cdot\Pi^{\mathrm{DB}}_{l,r+1,\textcolor{RawSienna}{\mathsf{LR}}} \cdot
		\br{
		\Pi_{l, \textcolor{RawSienna}{\mathsf{L}}}
		\otimes
		\sum_{i\in[r+1]} \Pi^{\mathrm{EPR}}_{\textcolor{RawSienna}{\mathsf{A},\mathsf{R}^{(r+1)}_{\mathsf{X},i}}}
		} \cdot \Pi^{\mathrm{DB}}_{l,r+1,\textcolor{RawSienna}{\mathsf{LR}}} \enspace,
	\end{align*}
	where the second inequality is from \cite[Eq. (11.26)]{MH24}.
\end{proof}

For $l,r\geq 0$, we define
\begin{align*}
	\Pi^{\RR^2}\reg{LR}&\coloneq \sum_{L,R\in\RR} \ketbra{L}\reg{L}\otimes\ketbra{R}\reg{R}\enspace,\\
	\Pi^{\mathrm{db}}_{l,r,\textcolor{RawSienna}{\mathsf{LR}}} &\coloneq
	\sum_{\substack{(x_1,\cdots,x_l,x'_1,\cdots,x'_l)\in\newDB_{l+r}\\(y_1,\cdots,y_l,y'_1,\cdots,y'_l)\in\newDB_{l+r}}}
	\ketbra{x_1,\cdots,x_l,y_1,\cdots,y_l}\reg{L}\\
	&\quad\quad\quad\quad\quad\quad\quad\quad\quad\quad\otimes\ketbra{x'_1,\cdots,x'_r,y'_1,\cdots,y'_r}\reg{R}\enspace,\\
	\Pi^{\mathrm{db}}\reg{LR} &\coloneq \sum_{\substack{l,r\geq 0\\l+r\leq N}} \Pi^{\mathrm{db}}_{l,r,\textcolor{RawSienna}{\mathsf{LR}}}\enspace.
\end{align*}
It is evident that
\[
\Pi^{\mathrm{DB}}\reg{LR} = \Pi^{\RR^2}\reg{LR}\cdot \Pi^{\mathrm{db}}\reg{LR} = \Pi^{\mathrm{db}}\reg{LR}\cdot \Pi^{\RR^2}\reg{LR} \enspace.
\]
Now we define
\begin{align*}
		J^{\dom{W}}\reg{LR} &\coloneq \Pi^{\mathrm{db}}\reg{LR} \cdot
		\br{ \Pi^{\notin\mathrm{Dom}}\reg{ALR} +
		\sum_{\substack{l,r\geq 0\\l+r<N/2}} \frac{N}{N-2l-2r}\cdot\Pi_{l, \textcolor{RawSienna}{\mathsf{L}}}
		\otimes \sum_{i\in[r+1]} \Pi^{\mathrm{EPR}}_{\textcolor{RawSienna}{\mathsf{A},\mathsf{R}^{(r+1)}_{\mathsf{X},i}}}
		} \cdot \Pi^{\mathrm{db}}\reg{LR} \enspace,\\
		J^{\im{W}}\reg{LR} &\coloneq \Pi^{\mathrm{db}}\reg{LR} \cdot
		\br{ \Pi^{\notin\mathrm{Im}}\reg{ALR} +
		\sum_{\substack{l,r\geq 0\\l+r<N/2}} \frac{N}{N-2l-2r}\cdot\Pi_{l, \textcolor{RawSienna}{\mathsf{R}}}
		\otimes \sum_{i\in[r+1]} \Pi^{\mathrm{EPR}}_{\textcolor{RawSienna}{\mathsf{A},\mathsf{L}^{(r+1)}_{\mathsf{Y},i}}}
		} \cdot \Pi^{\mathrm{db}}\reg{LR} \enspace.
\end{align*}
Then we have
\begin{align*}
	\Pi^{\dom{W}}\reg{LR} &= \Pi^{\RR^2}\reg{LR}\cdot J^{\dom{W}}\reg{LR}\cdot\Pi^{\RR^2}\reg{LR}\enspace,\\
	\Pi^{\im{W}}\reg{LR} &= \Pi^{\RR^2}\reg{LR}\cdot J^{\im{W}}\reg{LR}\cdot\Pi^{\RR^2}\reg{LR}\enspace.
\end{align*}

We will need the following lemma when proving \lem{twirl}.

\begin{lemma} \label{lem:ubound}
	For non-negative $l,r$ such that $l+r<N/2$,
	\begin{align*}
		&\Pi^{\mathrm{db}}_{l,r,\textcolor{RawSienna}{\mathsf{LR}}} - J^{\dom{W}}_{l,r,\textcolor{RawSienna}{\mathsf{LR}}} \\
		&\preceq \sum_{i\in[l]}\eqProj_{\textcolor{RawSienna}{\mathsf{A},\mathsf{L}^{(l)}_{\mathsf{X},i}}}
		+
		\sum_{i\in[l]} \ffbProj_{\textcolor{RawSienna}{\mathsf{A},\mathsf{L}^{(l)}_{\mathsf{X},i}}}
		+
		\sum_{i\in[r]} \ffbProj_{\textcolor{RawSienna}{\mathsf{A},\mathsf{R}^{(r)}_{\mathsf{X},i}}} 
		+
		\frac{N}{N-2l-2r+2}\cdot
		\sum_{i\in[r]}
		\br{\br{\eqProj_{\textcolor{RawSienna}{\mathsf{A},\mathsf{R}^{(r)}_{\mathsf{X},i}}}
		- \Pi^{\mathrm{EPR}}_{\textcolor{RawSienna}{\mathsf{A},\mathsf{R}^{(r)}_{\mathsf{X},i}}}} +  2\sqrt{\frac{2(l+r)}{N}}\cdot\id\reg{ALR}} \enspace,\\
		&\Pi^{\mathrm{db}}_{l,r,\textcolor{RawSienna}{\mathsf{LR}}} - J^{\im{W}}_{l,r,\textcolor{RawSienna}{\mathsf{LR}}} \\
		& \preceq  
		\sum_{i\in[l]} \ffbProj_{\textcolor{RawSienna}{\mathsf{A},\mathsf{L}^{(l)}_{\mathsf{Y},i}}}
		+
		\sum_{i\in[r]}\eqProj_{\textcolor{RawSienna}{\mathsf{A},\mathsf{R}^{(r)}_{\mathsf{Y},i}}}
		+
		\sum_{i\in[r]} \ffbProj_{\textcolor{RawSienna}{\mathsf{A},\mathsf{R}^{(r)}_{\mathsf{Y},i}}} 
		+
		\frac{N}{N-2l-2r+2}\cdot
		\sum_{i\in[r]}
		\br{\br{\eqProj_{\textcolor{RawSienna}{\mathsf{A},\mathsf{L}^{(l)}_{\mathsf{Y},i}}}
		- \Pi^{\mathrm{EPR}}_{\textcolor{RawSienna}{\mathsf{A},\mathsf{L}^{(l)}_{\mathsf{Y},i}}}} +  2\sqrt{\frac{2(l+r)}{N}}\cdot\id\reg{ALR}}
		\enspace.
	\end{align*}
\end{lemma}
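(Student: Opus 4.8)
Here is how I would prove \lem{ubound}.

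The plan is to extract the explicit form of $\Pi^{\dom{W}}$ from \lem{dom_im_W} and restrict it to the $(l,r)$-block. Multiplying that identity by $\Pi_{l,r,\textcolor{RawSienna}{\mathsf{LR}}}$ annihilates every term of the EPR sum except the one with indices $(l,r-1)$, so
\[
J^{\dom{W}}_{l,r,\textcolor{RawSienna}{\mathsf{LR}}}=\Pi^{\mathrm{db}}_{l,r,\textcolor{RawSienna}{\mathsf{LR}}}\cdot\Big(\Pi^{\notin\mathrm{Dom}}\reg{ALR}+\tfrac{N}{N-2l-2r+2}\sum_{i\in[r]}\Pi^{\mathrm{EPR}}_{\textcolor{RawSienna}{\mathsf{A},\mathsf{R}^{(r)}_{\mathsf{X},i}}}\Big)\cdot\Pi^{\mathrm{db}}_{l,r,\textcolor{RawSienna}{\mathsf{LR}}},
\]
the factor $\Pi_{l,\textcolor{RawSienna}{\mathsf{L}}}$ being automatic in this block. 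Both $\Pi^{\mathrm{db}}_{l,r}$ and $J^{\dom{W}}_{l,r}$ are supported inside $\mathrm{Im}(\Pi^{\mathrm{db}}_{l,r})$, so it suffices to prove the inequality after conjugating both sides by $\Pi^{\mathrm{db}}_{l,r}$, since $\Pi^{\mathrm{db}}_{l,r}M\Pi^{\mathrm{db}}_{l,r}\preceq M$ for every PSD $M$ on the right-hand side.

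The engine is a block decomposition of register $\mathsf{A}$ inside $\mathrm{Im}(\Pi^{\mathrm{db}}_{l,r})$ according to how the value of $\mathsf{A}$ compares with the stored $X$-entries. Let $P_{x=x_i}\coloneq\eqProj_{\textcolor{RawSienna}{\mathsf{A},\mathsf{L}^{(l)}_{\mathsf{X},i}}}\Pi^{\mathrm{db}}_{l,r}$ and $P_{x=\overline{x_i}}\coloneq\ffbProj_{\textcolor{RawSienna}{\mathsf{A},\mathsf{L}^{(l)}_{\mathsf{X},i}}}\Pi^{\mathrm{db}}_{l,r}$ for $i\in[l]$, let $P_{x=x'_i},P_{x=\overline{x'_i}}$ be the analogues with $\mathsf{R}^{(r)}_{\mathsf{X},i}$ for $i\in[r]$, and let $P_{\mathrm{good}}\coloneq\Pi^{\mathrm{db}}_{l,r}-\sum(\text{all of these})$; because the stored $X$-entries occupy distinct blocks these projectors are mutually orthogonal and sum to $\Pi^{\mathrm{db}}_{l,r}$. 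Both $\Pi^{\mathrm{db}}_{l,r}$ and $J^{\dom{W}}_{l,r}$ are block-diagonal for this decomposition: $\Pi^{\notin\mathrm{Dom}}\reg{ALR}\Pi^{\mathrm{db}}_{l,r}=P_{\mathrm{good}}$, while each $\Pi^{\mathrm{EPR}}_{\textcolor{RawSienna}{\mathsf{A},\mathsf{R}^{(r)}_{\mathsf{X},i}}}$ — though it mixes different relations $L$ — has image inside $\mathrm{Im}(P_{x=x'_i})$ once conjugated by $\Pi^{\mathrm{db}}_{l,r}$. Hence $\Pi^{\mathrm{db}}_{l,r}-J^{\dom{W}}_{l,r}$ splits over the blocks and we bound each piece separately. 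On $P_{\mathrm{good}}$ the two operators coincide, so the difference is $0$. On $P_{x=x_i}$, $P_{x=\overline{x_i}}$, $P_{x=\overline{x'_i}}$ the $J^{\dom{W}}_{l,r}$-part is zero, so the piece is just $P_{x=x_i}\preceq\eqProj_{\textcolor{RawSienna}{\mathsf{A},\mathsf{L}^{(l)}_{\mathsf{X},i}}}$, $P_{x=\overline{x_i}}\preceq\ffbProj_{\textcolor{RawSienna}{\mathsf{A},\mathsf{L}^{(l)}_{\mathsf{X},i}}}$, $P_{x=\overline{x'_i}}\preceq\ffbProj_{\textcolor{RawSienna}{\mathsf{A},\mathsf{R}^{(r)}_{\mathsf{X},i}}}$.

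The delicate block is $P_{x=x'_i}$, which is where the $2\sqrt{2(l+r)/N}$ slack is spent. There the difference equals $P_{x=x'_i}-\tfrac{N}{N-2l-2r+2}\Pi^{\mathrm{db}}_{l,r}\Pi^{\mathrm{EPR}}_{\textcolor{RawSienna}{\mathsf{A},\mathsf{R}^{(r)}_{\mathsf{X},i}}}\Pi^{\mathrm{db}}_{l,r}$. Fixing all registers except $\mathsf{A}$ and $\mathsf{R}^{(r)}_{\mathsf{X},i}$, the distinct-block condition confines $\mathsf{R}^{(r)}_{\mathsf{X},i}$ to a set $B$ with $|B|=N-2l-2r+2$, and a direct calculation gives $\Pi^{\mathrm{db}}_{l,r}\Pi^{\mathrm{EPR}}_{\textcolor{RawSienna}{\mathsf{A},\mathsf{R}^{(r)}_{\mathsf{X},i}}}\Pi^{\mathrm{db}}_{l,r}=\tfrac{|B|}{N}\ketbra{\mathrm{EPR}_B}$ with $\ket{\mathrm{EPR}_B}\coloneq|B|^{-1/2}\sum_{z\in B}\ket{z,z}$; since $|B|/N=(N-2l-2r+2)/N$ the prefactor cancels \emph{exactly}, so $\tfrac{N}{N-2l-2r+2}\Pi^{\mathrm{db}}_{l,r}\Pi^{\mathrm{EPR}}_{\textcolor{RawSienna}{\mathsf{A},\mathsf{R}^{(r)}_{\mathsf{X},i}}}\Pi^{\mathrm{db}}_{l,r}$ is the honest rank-one projector $\ketbra{\mathrm{EPR}_B}\preceq P_{x=x'_i}$ and the block difference is the projector $P_{x=x'_i}-\ketbra{\mathrm{EPR}_B}$. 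Rewriting it as $(\eqProj-\Pi^{\mathrm{EPR}})_{\textcolor{RawSienna}{\mathsf{A},\mathsf{R}^{(r)}_{\mathsf{X},i}}}-\eqProj_{\textcolor{RawSienna}{\mathsf{A},\mathsf{R}^{(r)}_{\mathsf{X},i}}}(\id-\Pi^{\mathrm{db}}_{l,r})+\big(\Pi^{\mathrm{EPR}}_{\textcolor{RawSienna}{\mathsf{A},\mathsf{R}^{(r)}_{\mathsf{X},i}}}-\ketbra{\mathrm{EPR}_B}\big)$, discarding the middle negative projector, bounding $\norm{\Pi^{\mathrm{EPR}}-\ketbra{\mathrm{EPR}_B}}_\infty=\sqrt{1-|B|/N}\le 2\sqrt{2(l+r)/N}$ via the overlap $\braket{\mathrm{EPR}|\mathrm{EPR}_B}=\sqrt{|B|/N}$, and absorbing the extra $\tfrac{N}{N-2l-2r+2}\ge1$, we obtain exactly the $i$-th EPR summand on the right-hand side.

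Summing the bounds over all blocks — valid because the blocks are orthogonal and each target term is PSD — gives the first inequality; the second, for $J^{\im{W}}$, follows verbatim with $\mathsf{L}\leftrightarrow\mathsf{R}$ and $X\leftrightarrow Y$. The main obstacle is exactly the $P_{x=x'_i}$ block: spotting that the factor $\tfrac{N}{N-2l-2r+2}$ cancels the shrinkage of the EPR state's norm under the distinct-block restriction (so that $\tfrac{N}{N-2l-2r+2}\Pi^{\mathrm{db}}\Pi^{\mathrm{EPR}}\Pi^{\mathrm{db}}$ is a genuine projector), and then quantifying the gap between that restricted EPR projector and the unrestricted $\Pi^{\mathrm{EPR}}$ that appears in the statement, which is what produces the $\sqrt{(l+r)/N}$-type error.
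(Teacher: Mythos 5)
Your proposal is correct, and it reaches \lem{ubound} by a genuinely different route from the paper. The paper first majorizes $\Pi_{l,r}-\Pi^{\notin\mathrm{Dom}}$ by the sum of equality and flip-first-bit projectors, commutes those past $\Pi^{\mathrm{db}}_{l,r}$, and then handles the residual term $\Pi^{\mathrm{db}}_{l,r}\cdot\br{\eqProj-\Pi^{\mathrm{EPR}}}\cdot\Pi^{\mathrm{db}}_{l,r}$ via the generic bound $\Pi Q\Pi\preceq Q\Pi Q+\lambda\cdot\id$ with $\lambda$ controlled by the operator norm of the commutator of $\Pi^{\mathrm{db}}_{l,r}$ and $\Pi^{\mathrm{EPR}}$, computed to be at most $\sqrt{2(l+r)/N}$ per configuration. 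You instead diagonalize the difference over the orthogonal blocks determined by how the value of register $\mathsf{A}$ compares with the stored $X$-entries, observe that on the only nontrivial block the rescaled operator $\tfrac{N}{N-2l-2r+2}\Pi^{\mathrm{db}}\Pi^{\mathrm{EPR}}\Pi^{\mathrm{db}}$ is exactly the rank-one projector onto the block-restricted EPR state $\ket{\mathrm{EPR}_B}$, and pay the $2\sqrt{2(l+r)/N}$ slack as the operator-norm distance between $\ketbra{\mathrm{EPR}_B}$ and $\Pi^{\mathrm{EPR}}$. The two error computations are at bottom the same (both reduce to the overlap $\sqrt{|B|/N}$ with $|B|=N-2l-2r+2$), but your version makes the structure more transparent --- in particular it explains why the prefactor $N/(N-2l-2r+2)$ is precisely the normalization that turns the conjugated EPR term into a genuine projector, consistent with $\Pi^{\im{W^R}}$ being a projector --- at the cost of having to justify the block-diagonality of $J^{\dom{W}}_{l,r}$, which you do correctly.

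One caveat: your framing remark that it suffices to conjugate both sides by $\Pi^{\mathrm{db}}_{l,r}$ because ``$\Pi^{\mathrm{db}}_{l,r}M\Pi^{\mathrm{db}}_{l,r}\preceq M$ for every PSD $M$'' is false as a general principle; conjugation by a projector does not preserve the PSD order unless the projector commutes with $M$, and the failure of exactly this for $M=\eqProj-\Pi^{\mathrm{EPR}}$ is why the $\id$-slack term is needed at all. This does not damage the proof, since your block-by-block bounds are stated against the unconjugated right-hand-side terms and sum directly to the claimed inequality, but that sentence should be deleted or replaced by the observation that the left-hand side already equals its own conjugation by $\Pi^{\mathrm{db}}_{l,r}$.
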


\begin{proof}
	We prove the first inequality, and the other one follows from a similar argument.
	Notice that
	\[
		J^{\dom{W}}_{l,r,\textcolor{RawSienna}{\mathsf{LR}}} =
		\Pi^{\mathrm{db}}_{l,r,\textcolor{RawSienna}{\mathsf{LR}}} \cdot
		\br{ \Pi^{\notin\mathrm{Dom}}_{l,r,\textcolor{RawSienna}{\mathsf{ALR}}} +\frac{N}{N-2l-2r+2}\cdot \sum_{i\in[r]} \Pi^{\mathrm{EPR}}_{\textcolor{RawSienna}{\mathsf{A},\mathsf{R}^{(r)}_{\mathsf{X},i}}}
		} \cdot \Pi^{\mathrm{db}}_{l,r,\textcolor{RawSienna}{\mathsf{LR}}}
	\]
	Therefore, we have
	\begin{align*}
		&\Pi^{\mathrm{db}}_{l,r,\textcolor{RawSienna}{\mathsf{LR}}} - J^{\dom{W}}_{l,r,\textcolor{RawSienna}{\mathsf{LR}}}\\
		= &  \Pi^{\mathrm{db}}_{l,r,\textcolor{RawSienna}{\mathsf{LR}}} \cdot
		\br{ \Pi_{l,r,\textcolor{RawSienna}{\mathsf{LR}}}
		- \Pi^{\notin\mathrm{Dom}}_{l,r,\textcolor{RawSienna}{\mathsf{ALR}}} 
		-\frac{N}{N-2l-2r+2}\cdot \sum_{i\in[r]} \Pi^{\mathrm{EPR}}_{\textcolor{RawSienna}{\mathsf{A},\mathsf{R}^{(r)}_{\mathsf{X},i}}}
		} \cdot \Pi^{\mathrm{db}}_{l,r,\textcolor{RawSienna}{\mathsf{LR}}}\\
		\preceq & \Pi^{\mathrm{db}}_{l,r,\textcolor{RawSienna}{\mathsf{LR}}} \cdot
		\br{
		\sum_{i\in[l]} \eqProj_{\textcolor{RawSienna}{\mathsf{A},\mathsf{L}^{(l)}_{\mathsf{X},i}}}
		+
		\sum_{i\in[l]} \ffbProj_{\textcolor{RawSienna}{\mathsf{A},\mathsf{L}^{(l)}_{\mathsf{X},i}}}
		+
		\sum_{i\in[r]} \eqProj_{\textcolor{RawSienna}{\mathsf{A},\mathsf{R}^{(r)}_{\mathsf{X},i}}}
		+
		\sum_{i\in[r]} \ffbProj_{\textcolor{RawSienna}{\mathsf{A},\mathsf{R}^{(r)}_{\mathsf{X},i}}}
		-\frac{N}{N-2l-2r+2}\cdot \sum_{i\in[r]} \Pi^{\mathrm{EPR}}_{\textcolor{RawSienna}{\mathsf{A},\mathsf{R}^{(r)}_{\mathsf{X},i}}}
		} \cdot \Pi^{\mathrm{db}}_{l,r,\textcolor{RawSienna}{\mathsf{LR}}}
	\end{align*}
	where
	\[\eqProj = \sum_{x\in\bit{n}}\ketbra{x} \tp \ketbra{x}\enspace,\]
	\[\ffbProj = \sum_{x\in\bit{n}}\ketbra{x} \tp \ketbra{\bar{x}}\enspace.\]
	Since $\eqProj$, $\ffbProj$ and $\Pi^{\mathrm{db}}_{l,r}$ commute with each other,
	we have
	\begin{align*}
		&\Pi^{\mathrm{db}}_{l,r,\textcolor{RawSienna}{\mathsf{LR}}} - J^{\dom{W}}_{l,r,\textcolor{RawSienna}{\mathsf{LR}}} \\
		\preceq
		&\sum_{i\in[l]} \eqProj_{\textcolor{RawSienna}{\mathsf{A},\mathsf{L}^{(l)}_{\mathsf{X},i}}}
		+
		\sum_{i\in[l]} \ffbProj_{\textcolor{RawSienna}{\mathsf{A},\mathsf{L}^{(l)}_{\mathsf{X},i}}}
		+
		\sum_{i\in[r]} \ffbProj_{\textcolor{RawSienna}{\mathsf{A},\mathsf{R}^{(r)}_{\mathsf{X},i}}} 
		+
		\Pi^{\mathrm{db}}_{l,r,\textcolor{RawSienna}{\mathsf{LR}}} \cdot
		\br{
		\sum_{i\in[r]} \eqProj_{\textcolor{RawSienna}{\mathsf{A},\mathsf{R}^{(r)}_{\mathsf{X},i}}}
		-\frac{N}{N-2l-2r+2}\cdot \sum_{i\in[r]} \Pi^{\mathrm{EPR}}_{\textcolor{RawSienna}{\mathsf{A},\mathsf{R}^{(r)}_{\mathsf{X},i}}}
		} \cdot \Pi^{\mathrm{db}}_{l,r,\textcolor{RawSienna}{\mathsf{LR}}}\\
		\preceq &
		\sum_{i\in[l]}\eqProj_{\textcolor{RawSienna}{\mathsf{A},\mathsf{L}^{(l)}_{\mathsf{X},i}}}
		+
		\sum_{i\in[l]} \ffbProj_{\textcolor{RawSienna}{\mathsf{A},\mathsf{L}^{(l)}_{\mathsf{X},i}}}
		+
		\sum_{i\in[r]} \ffbProj_{\textcolor{RawSienna}{\mathsf{A},\mathsf{R}^{(r)}_{\mathsf{X},i}}} 
		+
		\frac{N}{N-2l-2r+2}\cdot
		\sum_{i\in[r]} \Pi^{\mathrm{db}}_{l,r,\textcolor{RawSienna}{\mathsf{LR}}} \cdot
		\br{\eqProj_{\textcolor{RawSienna}{\mathsf{A},\mathsf{R}^{(r)}_{\mathsf{X},i}}}
		- \Pi^{\mathrm{EPR}}_{\textcolor{RawSienna}{\mathsf{A},\mathsf{R}^{(r)}_{\mathsf{X},i}}}}
		\cdot \Pi^{\mathrm{db}}_{l,r,\textcolor{RawSienna}{\mathsf{LR}}} \\
		\preceq &
		\sum_{i\in[l]}\eqProj_{\textcolor{RawSienna}{\mathsf{A},\mathsf{L}^{(l)}_{\mathsf{X},i}}}
		+
		\sum_{i\in[l]} \ffbProj_{\textcolor{RawSienna}{\mathsf{A},\mathsf{L}^{(l)}_{\mathsf{X},i}}}
		+
		\sum_{i\in[r]} \ffbProj_{\textcolor{RawSienna}{\mathsf{A},\mathsf{R}^{(r)}_{\mathsf{X},i}}} \\
		&\quad \quad+
		\frac{N}{N-2l-2r+2}\cdot
		\sum_{i\in[r]}
		\br{\br{\eqProj_{\textcolor{RawSienna}{\mathsf{A},\mathsf{R}^{(r)}_{\mathsf{X},i}}}
		- \Pi^{\mathrm{EPR}}_{\textcolor{RawSienna}{\mathsf{A},\mathsf{R}^{(r)}_{\mathsf{X},i}}}}
		\Pi^{\mathrm{db}}_{l,r,\textcolor{RawSienna}{\mathsf{LR}}}
		\br{\eqProj_{\textcolor{RawSienna}{\mathsf{A},\mathsf{R}^{(r)}_{\mathsf{X},i}}}
		- \Pi^{\mathrm{EPR}}_{\textcolor{RawSienna}{\mathsf{A},\mathsf{R}^{(r)}_{\mathsf{X},i}}}}
		+ \lambda \cdot\id\reg{ALR}}\\
		\preceq &
		\sum_{i\in[l]}\eqProj_{\textcolor{RawSienna}{\mathsf{A},\mathsf{L}^{(l)}_{\mathsf{X},i}}}
		+
		\sum_{i\in[l]} \ffbProj_{\textcolor{RawSienna}{\mathsf{A},\mathsf{L}^{(l)}_{\mathsf{X},i}}}
		+
		\sum_{i\in[r]} \ffbProj_{\textcolor{RawSienna}{\mathsf{A},\mathsf{R}^{(r)}_{\mathsf{X},i}}} 
		+
		\frac{N}{N-2l-2r+2}\cdot
		\sum_{i\in[r]}
		\br{\br{\eqProj_{\textcolor{RawSienna}{\mathsf{A},\mathsf{R}^{(r)}_{\mathsf{X},i}}}
		- \Pi^{\mathrm{EPR}}_{\textcolor{RawSienna}{\mathsf{A},\mathsf{R}^{(r)}_{\mathsf{X},i}}}} + \lambda \cdot\id\reg{ALR}} \enspace,
	\end{align*}
	where
	\[
		\lambda \coloneq \norm{
		\Pi^{\mathrm{db}}_{l,r,\textcolor{RawSienna}{\mathsf{LR}}} \cdot
		\br{\eqProj_{\textcolor{RawSienna}{\mathsf{A},\mathsf{R}^{(r)}_{\mathsf{X},i}}}
		- \Pi^{\mathrm{EPR}}_{\textcolor{RawSienna}{\mathsf{A},\mathsf{R}^{(r)}_{\mathsf{X},i}}}}
		\cdot \Pi^{\mathrm{db}}_{l,r,\textcolor{RawSienna}{\mathsf{LR}}}
		-
		\br{\eqProj_{\textcolor{RawSienna}{\mathsf{A},\mathsf{R}^{(r)}_{\mathsf{X},i}}}
		- \Pi^{\mathrm{EPR}}_{\textcolor{RawSienna}{\mathsf{A},\mathsf{R}^{(r)}_{\mathsf{X},i}}}}
		\Pi^{\mathrm{db}}_{l,r,\textcolor{RawSienna}{\mathsf{LR}}}
		\br{\eqProj_{\textcolor{RawSienna}{\mathsf{A},\mathsf{R}^{(r)}_{\mathsf{X},i}}}
		- \Pi^{\mathrm{EPR}}_{\textcolor{RawSienna}{\mathsf{A},\mathsf{R}^{(r)}_{\mathsf{X},i}}}}
		}_\infty \enspace.
	\]
	Thus, we are left to show $\lambda$ is bounded by $2\sqrt{\frac{2(l+r)}{N}}$.
	Notice that
	\begin{align*}
		\lambda &\leq \norm{
		\Pi^{\mathrm{db}}_{l,r,\textcolor{RawSienna}{\mathsf{LR}}} \cdot
		\br{\eqProj_{\textcolor{RawSienna}{\mathsf{A},\mathsf{R}^{(r)}_{\mathsf{X},i}}}
		- \Pi^{\mathrm{EPR}}_{\textcolor{RawSienna}{\mathsf{A},\mathsf{R}^{(r)}_{\mathsf{X},i}}}}
		\cdot \Pi^{\mathrm{db}}_{l,r,\textcolor{RawSienna}{\mathsf{LR}}}
		-
		\br{\eqProj_{\textcolor{RawSienna}{\mathsf{A},\mathsf{R}^{(r)}_{\mathsf{X},i}}}
		- \Pi^{\mathrm{EPR}}_{\textcolor{RawSienna}{\mathsf{A},\mathsf{R}^{(r)}_{\mathsf{X},i}}}} \cdot \Pi^{\mathrm{db}}_{l,r,\textcolor{RawSienna}{\mathsf{LR}}}\cdot\br{\eqProj_{\textcolor{RawSienna}{\mathsf{A},\mathsf{R}^{(r)}_{\mathsf{X},i}}}
		- \Pi^{\mathrm{EPR}}_{\textcolor{RawSienna}{\mathsf{A},\mathsf{R}^{(r)}_{\mathsf{X},i}}}} \cdot \Pi^{\mathrm{db}}_{l,r,\textcolor{RawSienna}{\mathsf{LR}}}
		}_\infty\\
		&+
		\norm{
		 \br{\eqProj_{\textcolor{RawSienna}{\mathsf{A},\mathsf{R}^{(r)}_{\mathsf{X},i}}}
		- \Pi^{\mathrm{EPR}}_{\textcolor{RawSienna}{\mathsf{A},\mathsf{R}^{(r)}_{\mathsf{X},i}}}} \cdot \Pi^{\mathrm{db}}_{l,r,\textcolor{RawSienna}{\mathsf{LR}}}\cdot\br{\eqProj_{\textcolor{RawSienna}{\mathsf{A},\mathsf{R}^{(r)}_{\mathsf{X},i}}}
		- \Pi^{\mathrm{EPR}}_{\textcolor{RawSienna}{\mathsf{A},\mathsf{R}^{(r)}_{\mathsf{X},i}}}} \cdot \Pi^{\mathrm{db}}_{l,r,\textcolor{RawSienna}{\mathsf{LR}}}
		-
		\br{\eqProj_{\textcolor{RawSienna}{\mathsf{A},\mathsf{R}^{(r)}_{\mathsf{X},i}}}
		- \Pi^{\mathrm{EPR}}_{\textcolor{RawSienna}{\mathsf{A},\mathsf{R}^{(r)}_{\mathsf{X},i}}}}
		\Pi^{\mathrm{db}}_{l,r,\textcolor{RawSienna}{\mathsf{LR}}}
		\br{\eqProj_{\textcolor{RawSienna}{\mathsf{A},\mathsf{R}^{(r)}_{\mathsf{X},i}}}
		- \Pi^{\mathrm{EPR}}_{\textcolor{RawSienna}{\mathsf{A},\mathsf{R}^{(r)}_{\mathsf{X},i}}}}
		}_\infty \\
		&= \norm{
		\Pi^{\mathrm{db}}_{l,r,\textcolor{RawSienna}{\mathsf{LR}}} \cdot
		\br{\eqProj_{\textcolor{RawSienna}{\mathsf{A},\mathsf{R}^{(r)}_{\mathsf{X},i}}}
		- \Pi^{\mathrm{EPR}}_{\textcolor{RawSienna}{\mathsf{A},\mathsf{R}^{(r)}_{\mathsf{X},i}}}}^2
		\cdot \Pi^{\mathrm{db}}_{l,r,\textcolor{RawSienna}{\mathsf{LR}}}
		-
		\br{\eqProj_{\textcolor{RawSienna}{\mathsf{A},\mathsf{R}^{(r)}_{\mathsf{X},i}}}
		- \Pi^{\mathrm{EPR}}_{\textcolor{RawSienna}{\mathsf{A},\mathsf{R}^{(r)}_{\mathsf{X},i}}}} \cdot \Pi^{\mathrm{db}}_{l,r,\textcolor{RawSienna}{\mathsf{LR}}}\cdot\br{\eqProj_{\textcolor{RawSienna}{\mathsf{A},\mathsf{R}^{(r)}_{\mathsf{X},i}}}
		- \Pi^{\mathrm{EPR}}_{\textcolor{RawSienna}{\mathsf{A},\mathsf{R}^{(r)}_{\mathsf{X},i}}}} \cdot \Pi^{\mathrm{db}}_{l,r,\textcolor{RawSienna}{\mathsf{LR}}}
		}_\infty\\
		&+
		\norm{
		 \br{\eqProj_{\textcolor{RawSienna}{\mathsf{A},\mathsf{R}^{(r)}_{\mathsf{X},i}}}
		- \Pi^{\mathrm{EPR}}_{\textcolor{RawSienna}{\mathsf{A},\mathsf{R}^{(r)}_{\mathsf{X},i}}}} \cdot \Pi^{\mathrm{db}}_{l,r,\textcolor{RawSienna}{\mathsf{LR}}}\cdot\br{\eqProj_{\textcolor{RawSienna}{\mathsf{A},\mathsf{R}^{(r)}_{\mathsf{X},i}}}
		- \Pi^{\mathrm{EPR}}_{\textcolor{RawSienna}{\mathsf{A},\mathsf{R}^{(r)}_{\mathsf{X},i}}}} \cdot \Pi^{\mathrm{db}}_{l,r,\textcolor{RawSienna}{\mathsf{LR}}}
		-
		\br{\eqProj_{\textcolor{RawSienna}{\mathsf{A},\mathsf{R}^{(r)}_{\mathsf{X},i}}}
		- \Pi^{\mathrm{EPR}}_{\textcolor{RawSienna}{\mathsf{A},\mathsf{R}^{(r)}_{\mathsf{X},i}}}}
		\Pi^{\mathrm{db}^2}_{l,r,\textcolor{RawSienna}{\mathsf{LR}}}
		\br{\eqProj_{\textcolor{RawSienna}{\mathsf{A},\mathsf{R}^{(r)}_{\mathsf{X},i}}}
		- \Pi^{\mathrm{EPR}}_{\textcolor{RawSienna}{\mathsf{A},\mathsf{R}^{(r)}_{\mathsf{X},i}}}}
		}_\infty \\
		&\leq 2\norm{ \Pi^{\mathrm{db}}_{l,r,\textcolor{RawSienna}{\mathsf{LR}}} \cdot
		\br{\eqProj_{\textcolor{RawSienna}{\mathsf{A},\mathsf{R}^{(r)}_{\mathsf{X},i}}}
		- \Pi^{\mathrm{EPR}}_{\textcolor{RawSienna}{\mathsf{A},\mathsf{R}^{(r)}_{\mathsf{X},i}}}} -
		\br{\eqProj_{\textcolor{RawSienna}{\mathsf{A},\mathsf{R}^{(r)}_{\mathsf{X},i}}}
		- \Pi^{\mathrm{EPR}}_{\textcolor{RawSienna}{\mathsf{A},\mathsf{R}^{(r)}_{\mathsf{X},i}}}} \cdot \Pi^{\mathrm{db}}_{l,r,\textcolor{RawSienna}{\mathsf{LR}}}
		}_\infty \\
		&= 2\norm{ \Pi^{\mathrm{db}}_{l,r,\textcolor{RawSienna}{\mathsf{LR}}} \cdot
		\Pi^{\mathrm{EPR}}_{\textcolor{RawSienna}{\mathsf{A},\mathsf{R}^{(r)}_{\mathsf{X},i}}} -
		\Pi^{\mathrm{EPR}}_{\textcolor{RawSienna}{\mathsf{A},\mathsf{R}^{(r)}_{\mathsf{X},i}}} \cdot \Pi^{\mathrm{db}}_{l,r,\textcolor{RawSienna}{\mathsf{LR}}}
		}_\infty \enspace.
	\end{align*}
	So, it suffices to prove that
	\[
		\norm{ \Pi^{\mathrm{db}}_{l,r,\textcolor{RawSienna}{\mathsf{LR}}} \cdot
		\Pi^{\mathrm{EPR}}_{\textcolor{RawSienna}{\mathsf{A},\mathsf{R}^{(r)}_{\mathsf{X},i}}} -
		\Pi^{\mathrm{EPR}}_{\textcolor{RawSienna}{\mathsf{A},\mathsf{R}^{(r)}_{\mathsf{X},i}}} \cdot \Pi^{\mathrm{db}}_{l,r,\textcolor{RawSienna}{\mathsf{LR}}}
		}_\infty \leq \sqrt{\frac{2(l+r)}{N}} \enspace.
	\]
	If $r=0$, this holds trivially. From now on, we assume $r\geq 1$.
	Note that 
	\begin{align*}
		\Pi^{\mathrm{db}}_{l,r,\textcolor{RawSienna}{\mathsf{LR}}} \cdot
		\Pi^{\mathrm{EPR}}_{\textcolor{RawSienna}{\mathsf{A},\mathsf{R}^{(r)}_{\mathsf{X},i}}}
		= \sum_{\substack{(x,x')\in\newDB_{l+r-1}\\ (y,y')\in\newDB_{l+r}}}
		\ketbra{x,y}\reg{L}
		\otimes\ketbra{x',y'} _{\textcolor{RawSienna}{\mathsf{R}\backslash\mathsf{R}^{(r)}_{\mathsf{X},i}}} \otimes \frac{1}{N}
		\sum_{\substack{z \in T_{x,x'}\\ w\in \bit{n}}} \ketbratwo{z,z}{w,w}_{\textcolor{RawSienna}{\mathsf{A},\mathsf{R}^{(r)}_{\mathsf{X},i}}} \enspace.
	\end{align*}
	Here, $x\in\bit{nl}$, $x'\in\bit{n(r-1)}$, $y,y'\in\bit{nr}$ and $T_{x,x'}$ denotes the set of all binary strings that do not belong to the same block with any string in $x$ and $x'$.
	Similarly, we have
	\begin{align*}
		\Pi^{\mathrm{EPR}}_{\textcolor{RawSienna}{\mathsf{A},\mathsf{R}^{(r)}_{\mathsf{X},i}}}\cdot
		\Pi^{\mathrm{db}}_{l,r,\textcolor{RawSienna}{\mathsf{LR}}}
		= \sum_{\substack{(x,x')\in\newDB_{l+r-1}\\ (y,y')\in\newDB_{l+r}}}
		\ketbra{x,y}\reg{L}
		\otimes\ketbra{x',y'} _{\textcolor{RawSienna}{\mathsf{R}\backslash\mathsf{R}^{(r)}_{\mathsf{X},i}}} \otimes \frac{1}{N}
		\sum_{\substack{w \in T_{x,x'}\\ z\in \bit{n}}} \ketbratwo{z,z}{w,w}_{\textcolor{RawSienna}{\mathsf{A},\mathsf{R}^{(r)}_{\mathsf{X},i}}} \enspace.
	\end{align*}
	Then, we have
	\begin{align*}
		\Pi^{\mathrm{db}}_{l,r,\textcolor{RawSienna}{\mathsf{LR}}} \cdot
		\Pi^{\mathrm{EPR}}_{\textcolor{RawSienna}{\mathsf{A},\mathsf{R}^{(r)}_{\mathsf{X},i}}} - \Pi^{\mathrm{EPR}}_{\textcolor{RawSienna}{\mathsf{A},\mathsf{R}^{(r)}_{\mathsf{X},i}}}\cdot
		\Pi^{\mathrm{db}}_{l,r,\textcolor{RawSienna}{\mathsf{LR}}}
		= &\sum_{\substack{(x,x')\in\newDB_{l+r-1}\\ (y,y')\in\newDB_{l+r}}}
		\ketbra{x,y}\reg{L}
		\otimes\ketbra{x',y'} _{\textcolor{RawSienna}{\mathsf{R}\backslash\mathsf{R}^{(r)}_{\mathsf{X},i}}} \otimes \\
		&\br{
		\frac{1}{N}
		\sum_{\substack{z \in T_{x,x'}\\ w\in \bit{n}}} \ketbratwo{z,z}{w,w}_{\textcolor{RawSienna}{\mathsf{A},\mathsf{R}^{(r)}_{\mathsf{X},i}}} 
		- \frac{1}{N}
		\sum_{\substack{w \in T_{x,x'}\\ z\in \bit{n}}} \ketbratwo{z,z}{w,w}_{\textcolor{RawSienna}{\mathsf{A},\mathsf{R}^{(r)}_{\mathsf{X},i}}}
		} \enspace.
	\end{align*}
	Therefore,
	\[
		\norm{ \Pi^{\mathrm{db}}_{l,r,\textcolor{RawSienna}{\mathsf{LR}}} \cdot
		\Pi^{\mathrm{EPR}}_{\textcolor{RawSienna}{\mathsf{A},\mathsf{R}^{(r)}_{\mathsf{X},i}}} -
		\Pi^{\mathrm{EPR}}_{\textcolor{RawSienna}{\mathsf{A},\mathsf{R}^{(r)}_{\mathsf{X},i}}} \cdot \Pi^{\mathrm{db}}_{l,r,\textcolor{RawSienna}{\mathsf{LR}}}
		}_\infty \leq \max_{x,x'} \norm{\frac{1}{N}
		\sum_{\substack{z \in T_{x,x'}\\ w\in \bit{n}}} \ketbratwo{z,z}{w,w}_{\textcolor{RawSienna}{\mathsf{A},\mathsf{R}^{(r)}_{\mathsf{X},i}}} 
		- \frac{1}{N}
		\sum_{\substack{w \in T_{x,x'}\\ z\in \bit{n}}} \ketbratwo{z,z}{w,w}_{\textcolor{RawSienna}{\mathsf{A},\mathsf{R}^{(r)}_{\mathsf{X},i}}}}_\infty 
	 \enspace.
	\]
	Notice that
	\begin{align*}
		&\norm{\frac{1}{N}
		\sum_{\substack{z \in T_{x,x'}\\ w\in \bit{n}}} \ketbratwo{z,z}{w,w}_{\textcolor{RawSienna}{\mathsf{A},\mathsf{R}^{(r)}_{\mathsf{X},i}}} 
		- \frac{1}{N}
		\sum_{\substack{w \in T_{x,x'}\\ z\in \bit{n}}} \ketbratwo{z,z}{w,w}_{\textcolor{RawSienna}{\mathsf{A},\mathsf{R}^{(r)}_{\mathsf{X},i}}}}_\infty \\
		&= \norm{\frac{1}{N}
		\sum_{\substack{z \in T_{x,x'}\\ w\notin T_{x,x'}}} \ketbratwo{z,z}{w,w}_{\textcolor{RawSienna}{\mathsf{A},\mathsf{R}^{(r)}_{\mathsf{X},i}}} 
		- \frac{1}{N}
		\sum_{\substack{w \in T_{x,x'}\\ z\notin T_{x,x'}}} \ketbratwo{z,z}{w,w}_{\textcolor{RawSienna}{\mathsf{A},\mathsf{R}^{(r)}_{\mathsf{X},i}}}}_\infty \\
		=& \frac{\sqrt{(N-2l-2r+2)(2l+2r-2)}}{N} \norm{\ketbratwo{\phi}{\psi} - \ketbratwo{\psi}{\phi}}_\infty \enspace,
	\end{align*}
	where
	\begin{align*}
		\ket{\phi} \coloneq \frac{1}{\sqrt{N-2l-2r+2}} \sum_{z\in T_{x,x'}} \ket{z,z}\enspace \text{and}\enspace
		\ket{\psi} \coloneq \frac{1}{\sqrt{2l+2r-2}} \sum_{z\notin T_{x,x'}} \ket{z,z}\enspace.
	\end{align*}
	Since $\ket{\phi}$ and $\ket{\psi}$ are orthogonal, it is not hard to see that $\norm{\ketbratwo{\phi}{\psi} - \ketbratwo{\psi}{\phi}}_\infty=1$.
	Therefore, 
	\begin{align*}
		\norm{ \Pi^{\mathrm{db}}_{l,r,\textcolor{RawSienna}{\mathsf{LR}}} \cdot
		\Pi^{\mathrm{EPR}}_{\textcolor{RawSienna}{\mathsf{A},\mathsf{R}^{(r)}_{\mathsf{X},i}}} -
		\Pi^{\mathrm{EPR}}_{\textcolor{RawSienna}{\mathsf{A},\mathsf{R}^{(r)}_{\mathsf{X},i}}} \cdot \Pi^{\mathrm{db}}_{l,r,\textcolor{RawSienna}{\mathsf{LR}}}
		}_\infty
		\leq \frac{\sqrt{(N-2l-2r+2)(2l+2r-2)}}{N} \leq \sqrt{\frac{2(l+r)}{N}} \enspace.
	\end{align*}
	\end{proof}

\subsection{Proof of \lem{twirl}}
We need the following property of twirling with Haar random unitary.
\begin{lemma}[Claim 2 in \cite{MH24}]\label{lem:twirl2}
	Let $\D$ be the Haar measure over $\ugroup{N}$. We have
	\[
		\expect{U\leftarrow \D}{\br{U\otimes \widebar{U}}^\dag\cdot \eqProj\cdot \br{U\otimes \widebar{U}}}
		= \Pi^{\mathrm{EPR}} + \frac{1}{N+1} \br{ \id - \Pi^{\mathrm{EPR}}}\enspace.
	\]
\end{lemma}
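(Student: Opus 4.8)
The plan is to reduce this two‑copy twirl to a single‑copy moment computation. First I would note that for every unitary $U$,
\[
	\br{U\otimes \overline{U}}^\dag\cdot \eqProj\cdot \br{U\otimes \overline{U}}
	= \sum_{x\in\bit{n}} \br{U^\dag\ketbra{x}U}\otimes \overline{\br{U^\dag\ketbra{x}U}}\enspace,
\]
which follows from $\eqProj=\sum_{x}\ketbra{x}\otimes\ketbra{x}$ together with the identity $\overline{U}^\dag\ketbra{x}\,\overline{U}=\overline{U^\dag\ketbra{x}U}$ (complex conjugation is multiplicative and fixes $\ketbra{x}$, and $\overline{U}^\dag=U^T$). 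Averaging over $U\gets\mu$, for each fixed $x$ the operator $U^\dag\ketbra{x}U$ is a Haar‑random rank‑one projector $\ketbra{\psi}$, so the right‑hand side becomes $N\cdot\expect{\ket{\psi}\gets\mu}{\ketbra{\psi}\otimes \overline{\ketbra{\psi}}}$.

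Next I would evaluate $\expect{\ket{\psi}\gets\mu}{\ketbra{\psi}\otimes \overline{\ketbra{\psi}}}$ entrywise using the standard second‑moment formula $\E[\psi_i\psi_l\bar\psi_j\bar\psi_k]=\frac{\delta_{ij}\delta_{kl}+\delta_{ik}\delta_{jl}}{N(N+1)}$ (here the $(k,l)$ entry of $\overline{\ketbra{\psi}}$ is $\overline{\psi_k}\psi_l$). The $\delta_{ij}\delta_{kl}$ contribution sums to $\frac{1}{N(N+1)}\id$, while the $\delta_{ik}\delta_{jl}$ contribution sums to $\frac{1}{N(N+1)}\sum_{i,j}\ketbratwo{i,i}{j,j}=\frac{1}{N+1}\Pi^{\mathrm{EPR}}$. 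Hence $\expect{\ket{\psi}\gets\mu}{\ketbra{\psi}\otimes \overline{\ketbra{\psi}}}=\frac{1}{N(N+1)}\br{\id+N\,\Pi^{\mathrm{EPR}}}$, and multiplying by $N$ and regrouping gives
\[
	\expect{U\gets\mu}{\br{U\otimes \overline{U}}^\dag\cdot \eqProj\cdot \br{U\otimes \overline{U}}}
	=\frac{1}{N+1}\br{\id+N\,\Pi^{\mathrm{EPR}}}
	=\Pi^{\mathrm{EPR}}+\frac{1}{N+1}\br{\id-\Pi^{\mathrm{EPR}}}\enspace,
\]
which is the claim. (Alternatively, one may argue by Schur's lemma: $U\mapsto U\otimes\overline{U}$ fixes the maximally entangled state and acts irreducibly on its orthogonal complement, so the twirl equals $a\,\Pi^{\mathrm{EPR}}+b\br{\id-\Pi^{\mathrm{EPR}}}$, with $a=1$ because $\eqProj$ fixes the maximally entangled state and $b=\tfrac{1}{N+1}$ by matching traces since $\Tr\br{\eqProj}=N$.)

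I expect no real obstacle: this statement is precisely Claim~2 of \cite{MH24}, and could simply be cited. The only place needing care is the conjugation bookkeeping in the first display — the identity $\overline{U}^\dag=U^T$ and the convention for the entries of $\overline{\ketbra{\psi}}$ — after which everything is a routine computation; if one instead follows the representation‑theoretic route, the mild subtlety is invoking the standard decomposition of $\C^N\otimes\overline{\C^N}$ into the trivial summand spanned by the maximally entangled state and an irreducible complement.
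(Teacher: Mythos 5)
Your proof is correct. Note that the paper itself does not prove this lemma at all -- it simply cites it as Claim~2 of \cite{MH24} -- so you are supplying an argument where the paper offers only a reference. Your direct route is sound: the conjugation bookkeeping in the first display is right (indeed $\overline{U}^\dag = U^T$ and $\ketbra{x}$ is real, so $\overline{U}^\dag\ketbra{x}\overline{U}=\overline{U^\dag\ketbra{x}U}$), the second-moment formula you invoke is the entrywise form of $\expect{\ket{\psi}}{\ketbra{\psi}^{\otimes 2}}=\frac{\id+\swapOp}{N(N+1)}$ with indices matched consistently to the entries of $\ketbra{\psi}\otimes\overline{\ketbra{\psi}}$, and the two $\delta$-contractions do produce $\frac{1}{N(N+1)}\id$ and $\frac{1}{N+1}\Pi^{\mathrm{EPR}}$ respectively, giving the claimed identity after multiplying by $N$ and regrouping. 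The Schur's-lemma alternative is equally standard and arguably cleaner: $U\otimes\overline{U}$ fixes the maximally entangled state (since $(U\otimes\overline{U})\ket{\Phi}=(UU^\dag\otimes\id)\ket{\Phi}=\ket{\Phi}$), so the twirl is $a\,\Pi^{\mathrm{EPR}}+b(\id-\Pi^{\mathrm{EPR}})$; $a=\bra{\Phi}\eqProj\ket{\Phi}=1$ and the trace constraint $a+b(N^2-1)=\Tr(\eqProj)=N$ gives $b=\frac{1}{N+1}$. Either derivation would serve as a self-contained replacement for the citation.
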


Before proving \lem{twirl}, we prove a lemma that will assist in our argument.

\begin{lemma} \label{lem:simple_twirl}
	Let $\D\in\st{\D_1,\D_2}$ as defined in \defi{d1d2}. We have for non-negative $l,r$
	\begin{align*}
		&\norm{\expect{C,D\leftarrow \D}{(C\reg{A} \otimes Q[C,D]\reg{LR})^\dag \cdot\br{
	\Pi^{\mathrm{db}}_{l,r,\textcolor{RawSienna}{\mathsf{LR}}} - J^{\dom{W}}_{l,r,\textcolor{RawSienna}{\mathsf{LR}}}
	}\cdot (C\reg{A} \otimes Q[C,D]\reg{LR})}}_{\infty}\\
    &\quad\quad\quad\quad\quad\quad\quad\quad\quad\quad\quad\quad\quad\quad\quad\quad\leq \frac{4l+r}{N-1} + \frac{7rN}{N-2l-2r+2}\cdot \sqrt{\frac{2(l+r)}{N}}\enspace,\\
	&\norm{\expect{C,D\leftarrow \D}{(D\reg{A}^\dag \otimes Q[C,D]\reg{LR})^\dag \cdot\br{
	\Pi^{\mathrm{db}}_{l,r,\textcolor{RawSienna}{\mathsf{LR}}} - J^{\im{W}}_{l,r,\textcolor{RawSienna}{\mathsf{LR}}}
	}\cdot (D\reg{A}^\dag \otimes Q[C,D]\reg{LR})}}_{\infty}\\
	&\quad\quad\quad\quad\quad\quad\quad\quad\quad\quad\quad\quad\quad\quad\quad\quad\leq \frac{l+4r}{N-1} + \frac{7lN}{N-2l-2r+2}\cdot \sqrt{\frac{2(l+r)}{N}}\enspace.
	\end{align*}
\end{lemma}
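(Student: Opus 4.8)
The plan is to reduce both claimed operator‑norm bounds to a small collection of elementary Haar/$\rss$ twirl estimates by invoking the Loewner inequalities of \lem{ubound}. The starting observation is that $\Pi^{\mathrm{db}}_{l,r,\textcolor{RawSienna}{\mathsf{LR}}}-J^{\dom{W}}_{l,r,\textcolor{RawSienna}{\mathsf{LR}}}\succeq 0$: since $W$ is a partial isometry, the compressed operator $\Pi^{\mathrm{db}}_{l,r}\bigl(\Pi^{\notin\mathrm{Dom}}+\tfrac{N}{N-2l-2r+2}\sum_i\Pi^{\mathrm{EPR}}_{\mathsf{A},\mathsf{R}^{(r)}_{X,i}}\bigr)\Pi^{\mathrm{db}}_{l,r}$ defining $J^{\dom W}_{l,r}$ is, on the image of $\Pi^{\mathrm{db}}_{l,r}$, a sub‑projector of $\Pi^{\mathrm{db}}_{l,r}$ (this is exactly the content of the identity $W^RW^{R,\dag}=\Pi^{\im{W^R}}$ used in the proof of \lem{wclosetohpo}). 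Because $C_{\mathsf{A}}\otimes Q[C,D]_{\mathsf{LR}}$ and $D^\dag_{\mathsf{A}}\otimes Q[C,D]_{\mathsf{LR}}$ are unitary, conjugation by them and then averaging over $\D$ preserve the Loewner order, so the operator whose norm we must bound is PSD, and it is dominated above by the average conjugate of the right‑hand side of \lem{ubound}; hence its $\infty$‑norm is at most the sum, over the individual simple terms (each of which — $\eqProj$, $\ffbProj$, $\eqProj-\Pi^{\mathrm{EPR}}$, $\mathds{1}$ — is itself PSD), of the $\infty$‑norms of the averaged conjugates.

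The next step is to carry out this bookkeeping. From $Q[C,D]=(C\otimes D^T)^{\otimes *}_{\mathsf{L}}\otimes(\overline{C}\otimes D^\dag)^{\otimes *}_{\mathsf{R}}$ one reads off how each slot transforms: in the first bound, a simple term on $\mathsf{A}$ and an $\mathsf{X}$‑slot of $\mathsf{L}$ becomes $(C\otimes C)^\dag(\cdot)(C\otimes C)$, one on $\mathsf{A}$ and an $\mathsf{X}$‑slot of $\mathsf{R}$ becomes $(C\otimes\overline{C})^\dag(\cdot)(C\otimes\overline{C})$, and $\mathds{1}$ is unchanged; the second bound is symmetric with $D$ in place of $C$ and $\mathsf{Y}$‑slots in place of $\mathsf{X}$‑slots. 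Two mechanisms then do all the work. \emph{(i) Permutation spreading for the $\ffbProj$ terms.} Writing $C=P C'$ with $P\gets\S_N$ and $C'\gets\HPC_{n,T}$, the $P$‑conjugate of $\ffbProj$ over a slot averages $\tfrac1N\ffbProj$ to $\tfrac1{N(N-1)}(\mathds{1}-\eqProj)$ (the relevant permutation statistic is that $P^{-1}\circ(\text{flip first bit})\circ P$ is a uniformly random fixed‑point‑free involution), leaving $\tfrac1{N-1}\bigl(\mathds{1}-(C'\otimes*)^\dag\eqProj(C'\otimes*)\bigr)$ with $*\in\{C',\overline{C'}\}$, of norm $\le\tfrac1{N-1}$; for the second bound the needed permutation is already present, since the rightmost factor of $D=\HPC_{n,T}$ is a fresh random permutation $P_\sigma$, so $D\ket{x}=(\text{rest})\ket{\sigma(x)}$ with $\sigma$ uniform and independent of the rest, and the same computation applies. \emph{(ii) $\rss$ for the $\eqProj$ and $\eqProj-\Pi^{\mathrm{EPR}}$ terms.} Using that $\Pi^{\mathrm{EPR}}=\ket{\Omega}\bra{\Omega}$ is invariant under $C\otimes\overline{C}$, these reduce to the $C\otimes C$‑ or $C\otimes\overline{C}$‑twirl of $\eqProj$; for $\D=\D_1$ these are the exact Haar values — Har's bound (as in the proof of \lem{DBproj}), $\norm{\E_{\ket\psi\gets\mu}\ketbra{\psi,\psi}}_\infty=\tfrac2{N(N+1)}$, and \lem{twirl2} — giving per‑slot norms $\tfrac2{N+1}$ and $\tfrac1{N+1}$; for $\D=\D_2$ we use that an $\epsilon$‑$\rss$ distribution with $\epsilon=O(1/N^2)$ is in particular an approximate state $2$‑design, which controls every degree‑$(2,2)$ twirl of $C'\ket{y}$ that occurs (including the conjugated ones), so the $\D_2$ values differ from the Haar ones by $N\cdot O(1/N^2)=O(1/N)$ after summing over the slot's $N$ basis states — and for the first bound, where $C^\dag\ket{x}=(C')^\dag\ket{P^{-1}(x)}$ occurs, we additionally note that $\HPC_{n,T}^{-1}$ is a reversed product of the same independent Haar‑$2\times2$ block layers and random permutation layers, hence is also $O(1/N^2)$‑$\rss$.

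Finally one tallies: the $l$ $\mathsf{L}$‑slot terms contribute at most $l\bigl(\tfrac2{N+1}+\tfrac1{N-1}\bigr)\le\tfrac{4l}{N-1}$, the $r$ $\mathsf{R}$‑slot $\ffbProj$ terms at most $\tfrac r{N-1}$, the $r$ $\eqProj-\Pi^{\mathrm{EPR}}$ terms at most $\tfrac{rN}{(N-2l-2r+2)(N+1)}$, and the identity at most $\tfrac{2rN}{N-2l-2r+2}\sqrt{\tfrac{2(l+r)}N}$; bounding the two lower‑order pieces and the $O(1/N)$ $\rss$ slack against the last quantity yields $\tfrac{4l+r}{N-1}+\tfrac{7rN}{N-2l-2r+2}\sqrt{\tfrac{2(l+r)}N}$ (the constant $7$ is deliberately loose), and the second inequality follows by the same count with the roles of $l$ and $r$ exchanged. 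I expect the genuine obstacle here to be purely organizational — keeping straight which of $C,\overline C,C^\dag,D,\overline D,D^\dag$ acts on which register under the $Q[C,D]$‑twirl, and verifying the two enabling observations (the fixed‑point‑free‑involution statistic for the external permutation, and that both $\HPC_{n,T}$ and its inverse are $O(1/N^2)$‑$\rss$) — rather than any new idea; the only estimate needing care is checking that the numerical constants close up to the stated form.
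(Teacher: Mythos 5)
Your overall strategy matches the paper's: invoke \lem{ubound} to majorize $\Pi^{\mathrm{db}}_{l,r}-J^{\dom W}_{l,r}$ (resp. the $\im W$ version) by a sum of simple PSD terms, conjugate by $C\reg{A}\otimes Q[C,D]\reg{LR}$ (resp. $D^\dag\reg{A}\otimes Q[C,D]\reg{LR}$), bound each piece, and tally. Your observation that the $D$-side bound needs no extra permutation because the innermost factor of $D\gets\HP_{n,T}$ is already a fresh uniform permutation, which averages $\ffbProj$ to $\tfrac{1}{N-1}(\id-\eqProj)$, is correct and fills in the ``similar argument'' the paper leaves implicit for the second inequality. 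The PSD framing and the tally are also fine.

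There is, however, a genuine gap in the $\D=\D_2$ treatment of the $\eqProj-\Pi^{\mathrm{EPR}}$ terms. You assert that an $\epsilon$-$\rss$ distribution with $\epsilon=O(1/N^2)$ ``controls every degree-$(2,2)$ twirl of $C'\ket y$ that occurs (including the conjugated ones)'' at the $O(1/N^2)$ per-basis-state level. That is not what $\rss$ delivers. The $\rss$ definition bounds $\norm{\E_{K}[\ketbra{K\ket x}^{\otimes 2}]-\E_{\psi\gets\mu}[\ketbra{\psi}^{\otimes 2}]}_1\le\epsilon$, i.e.\ the \emph{copy} moment. The quantity actually needed here is $\norm{\E_C[\ketbra{u}\otimes\overline{\ketbra{u}}]-\E_{U}[\ketbra{v}\otimes\overline{\ketbra{v}}]}_\infty$ with $\ket u=C^\dag\ket x$, and since $\ketbra{u}\otimes\overline{\ketbra{u}}=(\ketbra{u}^{\otimes 2})^{\Gamma_2}$, this is the (operator, hence trace) norm of the \emph{partial transpose} of the difference controlled by $\rss$. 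Partial transpose can inflate the trace norm by a factor of $N$ (e.g.\ $\ketbra{\Phi}\mapsto \tfrac1N\swapOp$ for the maximally entangled $\ket{\Phi}$, and the same inflation persists for trace-zero differences), so $\rss$ alone only yields $O(N\cdot 1/N^2)=O(1/N)$ per $x$, and $O(1)$ after summing over the $N$ basis states --- far too weak to be absorbed into $\tfrac{7rN}{N-2l-2r+2}\sqrt{2(l+r)/N}$. The paper closes exactly this gap by invoking Theorem~4 of~[LQSY+24], which provides a Wasserstein coupling $\gamma$ between the Kac-walk output $\ket u$ and a Haar state $\ket v$ with $\E_\gamma\norm{\ket u-\ket v}_2\le 1/N^2$; such a coupling controls any bounded functional of the pair, including the conjugated twirl, and gives the required $4/N^2$ per $x$. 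That coupling property is strictly stronger than $\rss$, and you would need to invoke it (or something equivalently strong) rather than appeal to $\rss$ or approximate state-2-design-ness.
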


\begin{proof}
	We show the first inequality for $\D = \D_2$ and the rest inequalities follow from a similar argument. From \lem{ubound} and the triangle inequality, we have
	\begin{align*}
		&\norm{\expect{C,D\leftarrow \D_2}{(C\reg{A} \otimes Q[C,D]\reg{LR})^\dag \cdot\br{
	\Pi^{\mathrm{db}}_{l,r,\textcolor{RawSienna}{\mathsf{LR}}} - J^{\dom{W}}_{l,r,\textcolor{RawSienna}{\mathsf{LR}}}
	}\cdot (C\reg{A} \otimes Q[C,D]\reg{LR})}}_{\infty} \\
	\leq & 
	\norm{\expect{C,D\leftarrow \D_2}{(C\reg{A} \otimes Q[C,D]\reg{LR})^\dag \cdot
	\br{
	\sum_{i\in[l]}\eqProj_{\textcolor{RawSienna}{\mathsf{A},\mathsf{L}^{(l)}_{\mathsf{X},i}}}
	}\cdot (C\reg{A} \otimes Q[C,D]\reg{LR})}}_{\infty}\\
	& +
	\norm{\expect{C,D\leftarrow \D_2}{(C\reg{A} \otimes Q[C,D]\reg{LR})^\dag \cdot
	\br{
	\sum_{i\in[l]} \ffbProj_{\textcolor{RawSienna}{\mathsf{A},\mathsf{L}^{(l)}_{\mathsf{X},i}}}
	}\cdot (C\reg{A} \otimes Q[C,D]\reg{LR})}}_{\infty}\\
	& +
	\norm{\expect{C,D\leftarrow \D_2}{(C\reg{A} \otimes Q[C,D]\reg{LR})^\dag \cdot
	\br{
	\sum_{i\in[r]} \ffbProj_{\textcolor{RawSienna}{\mathsf{A},\mathsf{R}^{(r)}_{\mathsf{X},i}}} 
	}\cdot (C\reg{A} \otimes Q[C,D]\reg{LR})}}_{\infty}\\
	& + \frac{N}{N-2l-2r+2}\cdot
	\norm{\expect{C,D\leftarrow \D_2}{(C\reg{A} \otimes Q[C,D]\reg{LR})^\dag \cdot
	\br{
	\sum_{i\in[r]}
		\br{\eqProj_{\textcolor{RawSienna}{\mathsf{A},\mathsf{R}^{(r)}_{\mathsf{X},i}}}
		- \Pi^{\mathrm{EPR}}_{\textcolor{RawSienna}{\mathsf{A},\mathsf{R}^{(r)}_{\mathsf{X},i}}}}
	}\cdot (C\reg{A} \otimes Q[C,D]\reg{LR})}}_{\infty}\\
	& +  \frac{2rN}{N-2l-2r+2}\cdot\sqrt{\frac{2(l+r)}{N}}\\
	\leq & \ 
	l\cdot\sum_{x\in\bit{n}}
	\underbrace{\norm{\expect{C}{(C\otimes C)^\dag \cdot
		\ketbra{x,x}\cdot (C\otimes C)}}_{\infty}}_{(1)} 
	+\ l\cdot
	\sum_{x\in\bit{n}}
	\underbrace{\norm{\expect{C}{(C\otimes C)^\dag \cdot\ketbra{x,\bar{x}}
	\cdot (C\otimes C)}}_{\infty}}_{(2)}\\
	& +\ r\cdot
	\sum_{x\in\bit{n}}
	\underbrace{\norm{\expect{C}{(C\otimes \widebar{C})^\dag \cdot
		\ketbra{x,\bar{x}}\cdot (C\otimes \widebar{C})}}_{\infty}}_{(3)}\\
	& + \frac{rN}{N-2l-2r+2}\cdot
	\underbrace{\norm{\expect{C}{(C\otimes \widebar{C})^\dag \cdot
		\br{\eqProj - \Pi^{\mathrm{EPR}}}
	\cdot (C\otimes \widebar{C})}}_{\infty}}_{(4)} + \frac{2rN}{N-2l-2r+2}\cdot\sqrt{\frac{2(l+r)}{N}} \enspace.
	\end{align*} 
	For (1), notice that $C^\dag$ is equivalent to a $T$-step parallel Kac's walk followed by an independent random permutation, which means $C^\dag$ is drawn from a $\frac{1}{N^2}-\rss$ distribution.
	Thus, we have
	\begin{align*}
		(1) \leq \norm{\expect{U\leftarrow\mu}{(U\otimes U) \cdot
		\ketbra{x,x}\cdot (U\otimes U)^\dag}}_{\infty} + \frac{1}{N^2} = \frac{2}{N(N+1)} + \frac{1}{N^2} \enspace.
	\end{align*}
	For (2), notice that $C^\dag = C'^{\dag} \cdot P^\dag$. So,
	\begin{align*}
		(2) \leq \norm{\expect{P}{(P\otimes P)^\dag \cdot\ketbra{x,\bar{x}}\cdot (P\otimes P)}}_{\infty} = 
		\norm{\frac{1}{N(N-1)} \sum_{x\neq y} \ketbra{x,y}}_\infty
		=\frac{1}{N(N-1)} \enspace.
	\end{align*}
	Similarly, we have $(3)\leq \frac{1}{N(N-1)}$.
	
	We now give an upper bound on (4). We first prove that for any $x\in\bit{n}$
	\begin{align} \label{eq:cu}
		\norm{
		\expect{C}{(C\otimes \widebar{C})^\dag \cdot
		\ketbra{x,x}\cdot (C\otimes \widebar{C})}
		-
		\expect{U\leftarrow\mu}{(U\otimes \widebar{U})^\dag \cdot
		\ketbra{x,x}\cdot (U\otimes \widebar{U})}
		}_{\infty} \leq \frac{4}{N^2}\enspace.
	\end{align}
	Notice that $C^\dag$ is equivalent to a $T$-step parallel Kac's walk followed by a random permutation $P'$.
	Let $\ket{u}$ be the random state after $T$ steps of parallel Kac's walk starting at $\ketbra{x}$, and $\ket{v}$ be the Haar random state. We have
	\begin{align*}
		&\norm{
		\expect{C}{(C\otimes \widebar{C})^\dag \cdot
		\ketbra{x,x}\cdot (C\otimes \widebar{C})}
		-
		\expect{U\leftarrow\mu}{(U\otimes \widebar{U})^\dag \cdot
		\ketbra{x,x}\cdot (U\otimes \widebar{U})}
		}_{\infty} \\
		\leq &\norm{
		\expect{C}{(C\otimes \widebar{C})^\dag \cdot
		\ketbra{x,x}\cdot (C\otimes \widebar{C})}
		-
		\expect{U\leftarrow\mu}{(U\otimes \widebar{U})^\dag \cdot
		\ketbra{x,x}\cdot (U\otimes \widebar{U})}
		}_{1}\\
		\leq & \norm{ \expect{P',\ket{u}}{\br{P'\otimes \widebar{P'}}
		\br{\ketbra{u}\otimes\widebar{\ketbra{u}}}\br{P'\otimes \widebar{P'}}^\dag}
		- \expect{P',\ket{v}}{\br{P'\otimes \widebar{P'}}
		\br{\ketbra{v}\otimes\widebar{\ketbra{v}}}\br{P'\otimes \widebar{P'}}^\dag}
		}_1 \\
		\leq & \norm{ \expect{\ket{u}}{
		\ketbra{u}\otimes\widebar{\ketbra{u}}}
		- \expect{\ket{v}}{
		\ketbra{v}\otimes\widebar{\ketbra{v}}}
		}_1 \enspace.
	\end{align*}
	From \cite[Theorem 4]{LQSY+24}, we know that there is a joint distribution $\gamma$ among $\ket{u}$ and $\ket{v}$ such that
		$\expect{\gamma}{\norm{\ket{u}-\ket{v}}_2} \leq \frac{1}{N^2}$.
	Therefore, we have
	\begin{align*}
		&\norm{ \expect{\ket{u}}{
		\ketbra{u}\otimes\widebar{\ketbra{u}}}
		- \expect{\ket{v}}{
		\ketbra{v}\otimes\widebar{\ketbra{v}}}
		}_1 \\
		=& 
		\norm{ \expect{\gamma}{
		\ketbra{u}\otimes\widebar{\ketbra{u}}-
		\ketbra{v}\otimes\widebar{\ketbra{v}}}
		}_1\\
		\leq &\norm{ \expect{\gamma}{
		\ketbra{u}\otimes\widebar{\ketbra{u}}-
		\ketbra{v}\otimes\widebar{\ketbra{u}}}
		}_1 +
		\norm{ \expect{\gamma}{
		\ketbra{v}\otimes\widebar{\ketbra{u}}-
		\ketbra{v}\otimes\widebar{\ketbra{v}}}
		}_1 \\
		\leq &\expect{\gamma}{\norm{
		\ketbra{u}\otimes\widebar{\ketbra{u}}-
		\ketbra{v}\otimes\widebar{\ketbra{u}}
		}_1 } +
		\expect{\gamma}{ \norm{
		\ketbra{v}\otimes\widebar{\ketbra{u}}-
		\ketbra{v}\otimes\widebar{\ketbra{v}}
		}_1 }\\
		= &2\cdot\expect{\gamma}{\norm{
		\ketbra{u}-
		\ketbra{v}
		}_1 } \\
		\leq & 2\cdot\br{ \expect{\gamma}{\norm{
		\ket{u}(\bra{u}-\bra{v} )
		}_1 + \norm{
		(\ket{u}-\ket{v})\bra{v}
		}_1 } } \\
		\leq & 4\cdot \expect{\gamma}{\norm{\ket{u}-\ket{v}}_2} \leq \frac{4}{N^2} \enspace.
	\end{align*}
	This establishes Eq. \eq{cu}.
	From Eq. \eq{cu}, \lem{twirl2} and the fact that $(U\otimes \widebar{U})^\dag \cdot\Pi^{\mathrm{EPR}}
	\cdot (U\otimes \widebar{U}) = \Pi^{\mathrm{EPR}}$ for any unitary $U$, we have
	\begin{align*}
		(4)\leq \norm{\expect{U\leftarrow\mu}{(U\otimes \widebar{U})^\dag \cdot
		\br{\eqProj - \Pi^{\mathrm{EPR}}}
	\cdot (U\otimes \widebar{U})}}_{\infty} + \frac{4}{N} 
	\leq  \frac{1}{N+1} + \frac{4}{N} \leq\frac{5}{N} \enspace.
	\end{align*}
	Therefore, we have
	\begin{align*}
		&\norm{\expect{C,D\leftarrow \D_2}{(C\reg{A} \otimes Q[C,D]\reg{LR})^\dag \cdot\br{
	\Pi^{\mathrm{db}}_{l,r,\textcolor{RawSienna}{\mathsf{LR}}} - J^{\dom{W}}_{l,r,\textcolor{RawSienna}{\mathsf{LR}}}
	}\cdot (C\reg{A} \otimes Q[C,D]\reg{LR})}}_{\infty} \\
	\leq & \frac{2l}{N+1} + \frac{l}{N} + \frac{l}{N-1} +\frac{r}{N-1} + \frac{5r}{N-2l-2r+2} + \frac{2rN}{N-2l-2r+2}\cdot\sqrt{\frac{2(l+r)}{N}}\\
	\leq & \frac{4l+r}{N-1} + \frac{7rN}{N-2l-2r+2}\cdot \sqrt{\frac{2(l+r)}{N}} \enspace.
	\end{align*}
\end{proof}

We are now ready to prove \lem{twirl}. We restate the lemma here.
\begin{replemma}{lem:twirl}
	For integer $0\leq t\leq N/4$ and
	$\D\in\st{\D_1,\D_2}$ as defined in \defi{d1d2},
	we have
	\[
	\norm{\expect{C,D\leftarrow \D}{(C\reg{A} \otimes Q[C,D]\reg{LR})^\dag \cdot\br{
	\Pi^{\mathrm{DB}}_{\leq t, \textcolor{RawSienna}{\mathsf{LR}}} -  
	\Pi^{\dom{W}}_{\leq t, \textcolor{RawSienna}{\mathsf{LR}}}
	}\cdot (C\reg{A} \otimes Q[C,D]\reg{LR})}}_{\infty}\leq 16t\cdot \sqrt{\frac{2t}{N}}\enspace,
	\]
	\[
	\norm{\expect{C,D\leftarrow \D}{(D\reg{A}^\dag \otimes Q[C,D]\reg{LR})^\dag \cdot\br{
	\Pi^{\mathrm{DB}}_{\leq t, \textcolor{RawSienna}{\mathsf{LR}}} -  
	\Pi^{\im{W}}_{\leq t, \textcolor{RawSienna}{\mathsf{LR}}}
	}\cdot (D\reg{A}^\dag \otimes Q[C,D]\reg{LR})}}_{\infty}\leq 16t\cdot \sqrt{\frac{2t}{N}}\enspace.
	\]
\end{replemma}

\begin{proof}
	We prove the first inequality and the other one is from a similar argument.
	Note that
	\[
		\Pi^{\mathrm{DB}}_{\leq t, \textcolor{RawSienna}{\mathsf{LR}}} -  
	\Pi^{\dom{W}}_{\leq t, \textcolor{RawSienna}{\mathsf{LR}}}
	= \Pi^{\RR^2}\reg{LR}\cdot
	\br{ \Pi^{\mathrm{db}}_{\leq t, \textcolor{RawSienna}{\mathsf{LR}}} - J^{\dom{W}}_{\leq t, \textcolor{RawSienna}{\mathsf{LR}}}} \cdot\Pi^{\RR^2}\reg{LR} \enspace,
	\]
	and $\Pi^{\RR^2}$ commutes with $Q[C,D]\reg{LR}$ since $\Pi^{\RR^2}$ is the sum of projectors onto the symmetric subspaces.
	We have
	\begin{align*}
		&\norm{\expect{C,D\leftarrow \D}{(C\reg{A} \otimes Q[C,D]\reg{LR})^\dag \cdot\br{
	\Pi^{\mathrm{DB}}_{\leq t, \textcolor{RawSienna}{\mathsf{LR}}} -  
	\Pi^{\dom{W}}_{\leq t, \textcolor{RawSienna}{\mathsf{LR}}}
	}\cdot (C\reg{A} \otimes Q[C,D]\reg{LR})}}_{\infty}\\
	\leq &\norm{\expect{C,D\leftarrow \D}{(C\reg{A} \otimes Q[C,D]\reg{LR})^\dag \cdot\br{
	 \Pi^{\mathrm{db}}_{\leq t, \textcolor{RawSienna}{\mathsf{LR}}} - J^{\dom{W}}_{\leq t, \textcolor{RawSienna}{\mathsf{LR}}}
	}\cdot (C\reg{A} \otimes Q[C,D]\reg{LR})}}_{\infty}\\
	=&\max_{\substack{l,r\geq 0\\l+r\leq t}}
	\norm{\expect{C,D\leftarrow \D}{(C\reg{A} \otimes Q[C,D]\reg{LR})^\dag \cdot\br{
	\Pi^{\mathrm{db}}_{l,r,\textcolor{RawSienna}{\mathsf{LR}}} - J^{\dom{W}}_{l,r,\textcolor{RawSienna}{\mathsf{LR}}}
	}\cdot (C\reg{A} \otimes Q[C,D]\reg{LR})}}_{\infty} \enspace,
	\end{align*}
	where the equality holds because $\Pi^{\mathrm{db}}_{\leq t,\textcolor{RawSienna}{\mathsf{LR}}}$ and $ J^{\dom{W}}_{\leq t, \textcolor{RawSienna}{\mathsf{LR}}}$ are block diagonal with repect to $l$ and $r$. Then from \lem{simple_twirl}, we have
	\begin{align*}
		&\norm{\expect{C,D\leftarrow \D}{(C\reg{A} \otimes Q[C,D]\reg{LR})^\dag \cdot\br{
	\Pi^{\mathrm{DB}}_{\leq t, \textcolor{RawSienna}{\mathsf{LR}}} -  
	\Pi^{\dom{W}}_{\leq t, \textcolor{RawSienna}{\mathsf{LR}}}
	}\cdot (C\reg{A} \otimes Q[C,D]\reg{LR})}}_{\infty}\\
	\leq &\max_{\substack{l,r\geq 0\\l+r\leq t}}\quad \frac{4l+r}{N-1} + \frac{7rN}{N-2l-2r+2}\cdot \sqrt{\frac{2(l+r)}{N}} \\
	\leq &\max_{\substack{l,r\geq 0\\l+r\leq t}}\quad \frac{3l+t}{N-1} + \frac{7rN}{N-2t+2}\cdot \sqrt{\frac{2t}{N}} \\
	\leq &\frac{t}{N-1} + \frac{7tN}{N-2t+2}\cdot \sqrt{\frac{2t}{N}} 
	\leq \frac{8tN}{N-2t+2}\cdot \sqrt{\frac{2t}{N}} \leq 16t\cdot \sqrt{\frac{2t}{N}}\enspace,
	\end{align*}
	where the second inequality is from $l+r\leq t$, the third inequality holds since the maximal value is achieved at $r=t$, and the last one is from $t\leq N/4$.
\end{proof}

\end{document}